\documentclass{article}
\usepackage{graphicx}
\usepackage{amsmath,amssymb,amsthm}
\usepackage{booktabs}

\usepackage[a4paper,
            left=2cm,
            right=2cm,
            top=2cm,
            bottom=2cm]{geometry}

\usepackage{tikz}
\usetikzlibrary{positioning,arrows.meta}
\usetikzlibrary{matrix}
\usetikzlibrary{shapes.gates.logic.US}
\usetikzlibrary{decorations.pathmorphing,calc}
\usepackage{xstring}
\usepackage{authblk}
\usepackage{longtable}
\usepackage{array,longtable,booktabs,seqsplit}

\newtheorem{theorem}{Theorem}[section]
\newtheorem{lemma}[theorem]{Lemma}
\newtheorem{proposition}[theorem]{Proposition}
\newtheorem{corollary}[theorem]{Corollary}
\newtheorem{definition}[theorem]{Definition}
\theoremstyle{remark}
\newtheorem{remark}[theorem]{Remark}
\newtheorem{example}[theorem]{Example}
\newtheorem{conjecture}{Conjecture}[section]
\newtheorem{question}[conjecture]{Question}
\theoremstyle{plain}

\title{
Computing Functions Through the Orbits of Dynamical Systems:\\ The Case of Cellular Automata}
\author[1]{Niccol\`o Castronuovo}
\author[2]{Alberto Dennunzio}
\author[1]{Luciano Margara}
\affil[1]{Department of Computer Science and Engineering, University of Bologna, Campus of Cesena, Italy}
\affil[2]{Department of Informatics, Systems and Communication, University of Milano-Bicocca, Italy}

\date{}
\begin{document}
\maketitle

\begin{abstract}

Computing a function traditionally involves constructing a formula, circuit, or program specifically designed for that function. But could a single, fixed computational structure generate an entire universe of functions simply by evolving over time? More generally, to what extent can the expressive power of a dynamical system be measured by the functions that emerge along its orbits?

We investigate this question in a setting where computation is driven by the repeated application of one elementary operation, uniformly distributed in space and unchanged over time. Using cellular automata as a natural model of such homogeneous dynamics, we consider systems whose states represent Boolean functions rather than Boolean values. Starting from a simple configuration containing only input variables and constants, the system evolves autonomously, generating new functions at different spatial positions and times. Unlike conventional function synthesis, no function-specific circuit, program, or sequence of operations is constructed: the computational architecture remains fixed, and individual functions are identified by their location in space-time.

We ask which elementary rules can generate every Boolean function of a given number of variables along a single orbit, and whether this form of functional universality can persist as the number of variables increases. We develop algebraic and dynamical criteria that distinguish universal from non-universal behavior, and obtain a complete classification of the elementary-rule symmetry classes for three-variable functions. Particularly intriguing is Rule 30, for which simple initial configurations generate all Boolean functions of up to five variables, suggesting the possibility of universality for arbitrarily many variables.

Our results point toward a view of computation in which functional diversity arises not from increasingly elaborate computational architectures, but from the space-time evolution of a single, remarkably simple dynamical law.
\end{abstract}

\tableofcontents

\clearpage

\section{Introduction} \label{sec:introduction}
\paragraph{From computational structures to dynamical orbits.}
Computing a function typically involves constructing a computational
object specifically designed to realize it. Boolean functions, for
instance, can be represented by logical formulas, algebraic expressions,
or circuits, each encoding the desired computation in its own structure.
Although these representations may differ considerably, they share a
fundamental characteristic: the organization of the computation depends
on the function being computed. Different functions generally require
different expressions, different circuits, or different arrangements
of elementary operations.

In this paper, we explore a different perspective. Rather than
constructing a separate computational structure for each function,
we investigate whether a fixed dynamical system can generate many
different functions, possibly all functions in a given class, through
its autonomous evolution.

The underlying idea is simple. Consider a dynamical system governed
by a fixed transformation, starting from a prescribed initial state.
The transformation is applied repeatedly, producing an orbit in the
state space. Suppose that certain observable components of the states
along this orbit can be interpreted as functions. We can then ask
which functions appear during the evolution, and whether the orbit
is sufficiently rich to realize an entire prescribed family.

Under this viewpoint, computing different functions no longer
necessarily requires designing different computational structures.
Instead, functions may be identified by where they occur along the
evolution of a single system. The computational process is encoded
in the dynamics, while the selection of a particular function is
reduced to identifying an appropriate observation of the orbit.

This suggests a general question:

\begin{quote}
\emph{To what extent can the functions generated along the orbit
of a fixed dynamical system replace function-specific computational
constructions?}
\end{quote}

The question is especially intriguing when the system itself is
simple, its dynamics is completely uniform, and its initial state
contains only elementary computational resources.

\paragraph{Uniform computation through cellular automata.}
Cellular automata provide a particularly natural setting for
investigating this idea. A cellular automaton consists of a
collection of identical cells evolving synchronously according
to the same local rule. Every cell applies the same operation
to a fixed neighborhood, at every time step. The computational
architecture is therefore spatially homogeneous and temporally
invariant: neither the rule nor the pattern of local interactions
changes during the evolution.

This uniformity distinguishes cellular automata from conventional
Boolean circuits. In a circuit, the choice and arrangement of gates
are normally adapted to the function being computed. In a cellular
automaton, the arrangement of elementary operations is fixed once
the local rule and the underlying geometry have been chosen.
Nevertheless, the repeated interaction of identical local
components can produce highly complex global behavior.

We ask whether this dynamical complexity can also translate into
\emph{functional expressiveness}: can the evolution of a cellular
automaton generate a sufficiently rich collection of distinct
functions, despite using the same elementary operation everywhere
and at every time?

To formulate this question, we consider a symbolic interpretation
of cellular automata in which the states of individual cells
represent Boolean functions rather than Boolean values.

\paragraph{From Boolean states to Boolean functions.}
Consider, for simplicity, an elementary cellular automaton,
whose local rule is a Boolean function
\[
    \varphi:\{0,1\}^{3}\longrightarrow\{0,1\}.
\]
In an ordinary evolution, each cell contains a Boolean value,
and its next state is obtained by applying $\varphi$ to the
values of three neighboring cells.

Now suppose that the initial configuration contains Boolean
variables and constants instead of only fixed Boolean values.
For example, some cells may initially contain the variables
$x$, $y$, and $z$, while the others contain the constants
$0$ or $1$.

Applying the local rule symbolically, rather than evaluating
the variables, produces new Boolean expressions. At every
subsequent time step, each cell represents a Boolean function
of the original variables. The same rule $\varphi$ is still
applied everywhere, but its arguments are now functions
rather than individual Boolean values.

More precisely, if $g_1,g_2,g_3$ are Boolean functions of
$k$ variables, the induced operation is
\[
    \overline{\varphi}(g_1,g_2,g_3)(u)
    =
    \varphi\bigl(g_1(u),g_2(u),g_3(u)\bigr),
    \qquad u\in\{0,1\}^{k}.
\]
This pointwise extension, which we call the \emph{lift}
of the local rule, preserves the original spatial architecture
and its update mechanism. Only the interpretation of the
cell states changes.

Starting from a configuration containing the $k$ coordinate
projections and the two constant functions, the lifted cellular
automaton generates a space-time diagram of Boolean functions.
If $P_i^t$ denotes the function at position $i$ and time $t$,
then
\[
    P_i^{t+1}
    =
    \overline{\varphi}
    \bigl(P_{i-1}^{t},P_i^{t},P_{i+1}^{t}\bigr).
\]
Every $P_i^t$ is therefore a function computed by repeated
applications of the same elementary rule within a fixed,
uniform architecture.

The construction can be understood in two equivalent ways.
One may regard the cells as containing symbolic Boolean
functions, or interpret each cell as computing an output
function of the variables placed in the initial configuration.
The first viewpoint gives a dynamical system on a space of
functions; the second emphasizes the computational meaning
of its space-time evolution.

Importantly, these are functions, not merely their values
on a particular input. Once a function $P_i^t$ has been
generated, it can be evaluated on any Boolean input.
Thus the lifted evolution describes the generation of
computations themselves, rather than a single evaluation
of a prescribed computation.

\paragraph{Functions as space-time observations.}
The central feature of this construction is that a single
evolution can generate many different functions.

Once the local rule, the geometry, and the initial
configuration have been fixed, the entire evolution is
determined. No further choice of operations, connections,
or intermediate computations is possible. Nevertheless,
different cells at different times may represent different
Boolean functions.

A target function $f$ can therefore be associated with a
space-time coordinate $(i,t)$ such that
\[
    P_i^t=f.
\]
In this sense, the space-time diagram acts as a catalogue
of computations produced by a single dynamical law.

\begin{figure}[htbp]
\centering
\begin{tikzpicture}[
    scale=0.5,
    >=Stealth,
    gridline/.style={thin,black},
    dottedend/.style={
        line width=0.6pt,
        dotted
    },
    formula/.style={font=\normalsize},
    wave/.style={
        thick,
        decorate,
        decoration={
            snake,
            amplitude=1.5mm,
            segment length=7mm
        }
    }
]


\foreach \i in {0,...,8} {
    \draw[dottedend] (\i,-0.9) -- (\i,0);
    \draw[gridline] (\i,0) -- (\i,16);
}

\foreach \j in {0,...,16} {
    \draw[dottedend] (-0.9,\j) -- (0,\j);
    \draw[gridline] (0,\j) -- (8,\j);
    \draw[dottedend] (8,\j) -- (8.9,\j);
}


\foreach \pos/\var in {
    1.5/x,
    4.5/y,
    6.5/z
} {
    \draw[wave] (\pos,19) -- (\pos,16.3);
    \draw[
        -{Stealth[length=2.2mm,width=1.4mm]},
        thick
    ] (\pos,16.3) -- (\pos,15.4);

    \node[formula,above] at (\pos,19) {$\var$};
}


\foreach \pos/\var in {
    0.5/\gamma_1,
    2.5/\gamma_2,
    3.5/\gamma_3,
    5.5/\gamma_4,
    7.5/\gamma_5
} {
     \node[formula,below] at (\pos,16) {$\var$};
}


\draw[->,thick] (10,16) -- (10,0);
\node[formula,rotate=90] at (10.8,8) {time};


\fill (1.5,13.5) circle (2.2pt);
\draw[->,thick]
    (1.5,13.5)
    to[out=180,in=0] (-3.6,14);
\node[formula,left] at (-3.6,14) {$f_1(x,y,z)$};

\fill (6.5,10.5) circle (2.2pt);
\draw[->,thick]
    (6.5,10.5)
    to[out=0,in=180] (10.8,11);
\node[formula,right] at (10.8,11) {$f_2(x,y,z)$};

\fill (3.5,6.5) circle (2.2pt);
\draw[->,thick]
    (3.5,6.5)
    to[out=180,in=0] (-3.6,7);
\node[formula,left] at (-3.6,7) {$f_3(x,y,z)$};

\fill (5.5,2.5) circle (2.2pt);
\draw[->,thick]
    (5.5,2.5)
    to[out=0,in=180] (10.8,3);
\node[formula,right] at (10.8,3) {$f_k(x,y,z)$};

\end{tikzpicture}
\caption{Schematic representation of the space--time evolution of a cellular automaton whose initial row contains three variable entries, $x$, $y$, and $z$, together with fixed constants $\gamma_1,\dots,\gamma_5$. As the automaton evolves in time, each cell may be interpreted as defining a Boolean function of the variables $x$, $y$, and $z$. The highlighted cells illustrate selected functions $f_1(x,y,z)$, $f_2(x,y,z)$, $f_3(x,y,z)$, \dots, $f_k(x,y,z)$ extracted from the evolution.}
\label{fig:ca-functions-scheme}
\end{figure}

Figure~\ref{fig:ca-functions-scheme} illustrates this
interpretation. The initial configuration contains input
variables and constants. As the system evolves, different
Boolean functions appear at different positions and times.
They are not generated by separately designed circuits,
but arise as observations of the same orbit.

This leads to a natural notion of functional completeness.
For a fixed number $k$ of variables, let
\[
    \mathcal{B}_k
    =
    \{f:\{0,1\}^{k}\longrightarrow\{0,1\}\}
\]
be the set of all Boolean functions of $k$ variables.
A lifted evolution is \emph{$k$-complete} if
\[
    \{P_i^t : i\in\mathbb{Z},\ t\geq 0\}
    =
    \mathcal{B}_k.
\]
For finite rings, the spatial index ranges over the
cells of the ring.

Equivalently, every Boolean function of $k$ variables
must occur somewhere in the space-time diagram of one
and the same evolution.

The condition is demanding. There are
$2^{2^k}$ Boolean functions of $k$ variables, and
all of them must emerge from a single elementary rule,
without changing the computational architecture or
resetting the system for different target functions.

We are particularly interested in simple initial
configurations, in which each input variable appears
exactly once, with constants occupying the remaining
cells. Such configurations prevent the computational
resources from being replicated arbitrarily in the
initial state and place stronger constraints on what
the dynamics must accomplish.

\paragraph{Beyond classical functional completeness.}
Our notion of completeness is related to, but fundamentally
different from, classical functional completeness in
Boolean logic.

A functionally complete basis, such as NAND, allows
every Boolean function to be expressed through suitable
compositions of the available operations. However, the
structure of these compositions is freely chosen for
each target function. Functional completeness therefore
guarantees the existence of an appropriate circuit or
expression, not that every function will appear along
the evolution of a fixed computational architecture.

In our setting, this freedom is absent. Once the local
rule and initial configuration have been selected,
every subsequent composition is prescribed by the
geometry and timing of the cellular automaton.
The question is not whether the elementary operation
can express arbitrary Boolean functions when composed
without restrictions, but whether the particular
compositions imposed by the dynamics are rich enough
to generate them all.

Consequently, algebraic functional completeness is at
most a necessary ingredient. A rule may be sufficiently
expressive under unrestricted composition and still fail
to generate all Boolean functions through its lifted
dynamics.

The notion also differs from Turing universality.
A computationally universal cellular automaton may
simulate arbitrary computations by encoding suitable
programs and inputs into its initial configuration.
Here, instead, we fix a single initial configuration
for the entire family of target functions and ask
whether they all appear along its evolution. No
function-specific program is supplied to the system,
and the sought property concerns the symbolic functions
visible at individual space-time locations rather than
the simulation of arbitrary computational processes.

Thus the form of universality investigated here concerns
the \emph{functional expressiveness of a single orbit},
under a prescribed, spatially homogeneous computational
architecture.

\paragraph{An illustrative evolution.}
To make the construction concrete, consider the lifted
version of elementary Rule 110 on a ring of five cells,
initialized with the three projections and the two
constant functions.

\begin{figure}[ht]
\centering
\begin{tikzpicture}[
    every node/.style={font=\small},
    cell/.style={
        draw,
        minimum width=9mm,
        minimum height=6.5mm,
        inner sep=0pt,
        align=center
    },
    rowlabel/.style={font=\small\bfseries}
]

\node[anchor=west,font=\small] (topline) at (0,1.4)
{
};

\matrix (M) [
    matrix of nodes,
    nodes={cell},
    ampersand replacement=\&,
    row sep=2mm,
    column sep=1mm,
    below=6mm of topline.south west,
    anchor=north west
] {
240 \& 204 \& 170 \& 0   \& 255 \\
60  \& 110 \& 170 \& 255 \& 255 \\
82  \& 198 \& 213 \& 85  \& 195 \\
148 \& 151 \& 145 \& 150 \& 147 \\
7   \& 7   \& 7   \& 7   \& 7   \\
};

\node[rowlabel,left=4mm of M-1-1] {$t=0$};
\node[rowlabel,left=4mm of M-2-1] {$t=1$};
\node[rowlabel,left=4mm of M-3-1] {$t=2$};
\node[rowlabel,left=4mm of M-4-1] {$t=3$};
\node[rowlabel,left=4mm of M-5-1] {$t=4$};

\node[anchor=west,font=\footnotesize,right=7mm of M-1-5]
    {$P^0$};

\node[anchor=west,font=\footnotesize,right=7mm of M-2-5]
    {$P^1=\overline{F}_{\varphi_{110}}(P^0)$};

\node[anchor=west,font=\footnotesize,right=7mm of M-3-5]
    {$P^2=\overline{F}_{\varphi_{110}}(P^1)$};

\node[anchor=west,font=\footnotesize,right=7mm of M-4-5]
    {$P^3=\overline{F}_{\varphi_{110}}(P^2)$};

\node[anchor=west,font=\footnotesize,right=7mm of M-5-5]
    {$P^4=\overline{F}_{\varphi_{110}}(P^3)$};

\end{tikzpicture}

\caption{
A sample lifted evolution generated by elementary Rule~110 on a ring of length $5$.
Each cell contains the Wolfram number of a Boolean function in $\mathcal{B}_3$.
The initial configuration consists only of the three projections ($240$, $204$, and $170$) and the two constant functions ($0$ and $255$).
Already after one and two iterations, several new Boolean functions appear, including
$60$, $110$, $82$, $198$, $213$, $85$, and $195$.
}
\label{fig:rule110example}
\end{figure}

As illustrated in Figure~\ref{fig:rule110example},
successive applications of the same local rule quickly
produce Boolean functions that were not present in the
initial configuration. Each entry in the figure is
represented by its Wolfram number, which uniquely
identifies the corresponding function of three variables.

The example shows how functional diversity can arise
without modifying the computational structure.
Different functions appear because the same operation
is repeatedly applied according to the fixed geometry
of the ring.

This particular orbit is not complete: after a short
transient, the system reaches a uniform configuration
and then a fixed point. Its significance is therefore
illustrative rather than universal. It shows the
mechanism of function generation while also suggesting
the main difficulty: understanding which local rules
and initial configurations sustain enough dynamical
diversity to generate an entire family of functions.

\paragraph{Main questions and contributions.}
The preceding viewpoint leads to several fundamental
questions. Which elementary local rules admit an
initial configuration whose orbit generates every
Boolean function of a given number of variables?
How does this property depend on the number of
variables, the initial configuration, and the spatial
geometry? Can a single rule generate all Boolean
functions of $k$ variables for arbitrarily large $k$?
And what algebraic or dynamical mechanisms prevent
such completeness?

We address these questions through a combination
of theoretical analysis and systematic computation.

First, we establish a precise connection between
lifted cellular automata and ordinary Boolean
cellular automata: evaluating a lifted configuration
at each possible Boolean input yields a product
dynamics conjugate to the lifted evolution. This
connection allows questions about the generation
of Boolean functions to be investigated through
the evolution of ordinary Boolean configurations.

Second, we develop several mechanisms that can
prevent functional completeness. These include
algebraic restrictions inherited from Boolean
function classes, irreversible merging of distinct
input evaluations, and spatial barriers that
prevent information carried by different variables
from interacting. Together, these mechanisms
provide tools for proving that particular evolutions,
or even entire local rules, cannot generate all
Boolean functions of a prescribed arity.

Third, we obtain a complete classification of
three-variable universality for elementary cellular
automata. After excluding monotone and affine
rules, we resolve the question for the remaining
73 symmetry classes: 32 are 3-universal and
41 are not. The negative results apply to arbitrary
admissible initial configurations, rather than only
to the finite families examined computationally.
We also show that universality at one arity need
not extend to the next: Rule 156 is 3-universal
but not 4-universal.

Particularly promising behavior emerges from
Rule 30. We computationally verify that a simple
family of initial configurations generates every
Boolean function of $k$ variables for
$1\leq k\leq 5$. These results motivate the
conjecture that the same construction is complete
for every positive integer $k$. We complement
this evidence with structural results on information
propagation and the dynamics along space-time
diagonals, while leaving the general conjecture open.

These findings reveal a substantial difference
between the expressive power of a Boolean operation
under arbitrary composition and the functions that
can actually emerge from its repeated application
in a fixed architecture. They also suggest that
the ability to generate complete families of functions
may be governed by subtle interactions between
algebraic properties, information propagation,
and the geometry of dynamical evolution.

\paragraph{Scope and perspective.}
Although our results concern lifted cellular automata,
the underlying question extends naturally to other
dynamical systems whose observable states can be
interpreted as functions.

The essential ingredients are a fixed evolution law,
a prescribed encoding of elementary inputs, and a
family of observations extracted along a single orbit.
The resulting collection of functions provides a
measure of the computational expressiveness of that
orbit.

From this perspective, the goal is not to replace
conventional computation with a particular cellular
automaton, but to investigate a complementary
principle: \emph{functional diversity may arise
from dynamical evolution rather than from
function-specific computational design}.

Understanding when this is possible, and what
fundamental limitations constrain it, is the
central motivation of the present work.

\paragraph{Relation to other notions of universality.}
The question studied here belongs to a broad tradition of extracting computation from a fixed dynamical law, but the word \emph{universality} has several inequivalent meanings. In Turing-universal and more general computationally universal dynamical systems, the initial condition typically encodes a program and its input, and the system simulates the corresponding computation~\cite{Bennett1995,Moore1991}; universality for symbolic dynamical systems has also been formulated through undecidability of model checking, while other approaches distinguish transient from repeatable simulations of Boolean circuits~\cite{DelvenneKurkaBlondel2006,Theyssier2022}. None of these definitions requires that every Boolean truth table of a prescribed arity occur, as a cell state, along one fixed evolution. There is a particularly close connection with \emph{Boolean networks} and automata networks: classical Boolean-network models study the dynamics of interacting Boolean units~\cite{Kauffman1969}, and Goles et al. investigate which Boolean functions can be implemented by totalistic automata networks, allowing suitable networks to be constructed for the functions under consideration~\cite{GolesEtAl2021}.
Bridoux, Castillo-Ramirez and Gadouleau study automata networks that can simulate every transformation of a given size through a suitable, target-dependent sequence of updates of one register at a time~\cite{BridouxCastilloRamirezGadouleau2020}.
Closer to cellular automata, Prévost and Martin construct Boolean functions
of $9$ variables from the iterates of cellular automaton rules of $5$
variables, and study which cryptographic properties are preserved by this
extension~\cite{PrevostMartin2024}; their aim is to obtain individual
functions with good cryptographic properties, not to realize all functions
of a given arity along one orbit. Ishida, Inokuchi and Kawahara describe
binary cellular automata on monoids by formulae of propositional logic
instead of local functions, and show that the composition of transition
functions corresponds to a multiplication of
formulae~\cite{IshidaInokuchiKawahara2016}; this is close to the viewpoint
of Section~\ref{sec:generic}, where the value of a cell at time $t$ is
regarded as a Boolean function of the initial cells.
Other forms of dynamical expressiveness include universal differential equations, in which a fixed differential law approximates prescribed continuous functions through suitable initial conditions~\cite{BournezPouly2020}, and \emph{physical universality}, where the initial state of the environment is chosen to implement a transformation on a finite region~\cite{Janzing2010,SaloTorma2015}. In Boolean logic, Post's functional completeness concerns arbitrary superpositions of available operations~\cite{Post1941}; universal circuits instead realize different functions by fixing suitable control inputs~\cite{Valiant1976}, while universal Boolean functions use substitutions for their arguments~\cite{PreparataMuller1970}. Related geometric restrictions appear in the study of universal reversible gates acting on wires arranged in a line, where compositions and shifts are selected to implement the desired transformation~\cite{Salo2022}. Finally, reservoir computing exploits the transient evolution of a fixed dynamical system but generally uses task-dependent inputs or trained readouts~\cite{MaassNatschlaegerMarkram2002}. Our notion is different from all these models: once a local rule and a seed have been fixed, neither the dynamics nor the initial configuration is redesigned for a target function; the only choice is the space--time site at which its truth table is read. The distinction is especially sharp for \emph{simple} seeds, where every input projection occurs at a single initial position. Indeed, for unrestricted seeds, separate finite computations can be placed in disjoint initial blocks, whereas this construction is no longer available under the simplicity requirement. Thus our strongest completeness question concerns the Boolean expressiveness of one autonomous orbit under a prescribed synchronous, spatially homogeneous geometry, rather than Turing universality or unrestricted circuit synthesis.

\paragraph{Organization of the paper.}
Section~2 introduces the formal framework of
Boolean functions, lifted cellular automata,
initial configurations, and functional completeness.
Section~3 develops general obstructions to
completeness based on algebraic, dynamical, and
spatial constraints. Section~4 presents the
systematic computational investigation, while
Section~5 discusses the principal conjectures
and open questions.

Section~6 analyzes information propagation and
space-time constraints for Rule 30. Section~7
relates completeness to generic space-time functions
and their substitutions. Section~8 establishes
exact universality and non-universality results
for elementary rules, including the complete
classification for three variables. Section~9
concludes with the broader implications and
remaining problems.

\section{Boolean functions, numbering, and lifted dynamics}
\label{sec:defs}

Throughout the paper $\mathbb B=\{0,1\}$. Denote by 
\[
 \mathcal B_k=\{g:\mathbb B^k\longrightarrow\mathbb B\}, \qquad
 \mathcal B=\bigcup_{k\ge1}\mathcal B_k
 \]
the sets of Boolean functions of $k$ variables and of an arbitrary number
of variables, respectively, and recall that  
\[
 |\mathcal B_k|=2^{2^k}.
\]

\subsection{Truth-table numbering and distinguished functions}
\label{ss:numbering}
For $u=(u_1,\ldots,u_k)\in\mathbb B^k$ set
\[
 \operatorname{idx}_k(u)=\sum_{j=1}^k u_j2^{k-j}.
\]
Writing $u^{(k)}_\ell$ for the unique input of index $\ell$, define
\begin{equation}\label{eq:number}
 \operatorname{num}_k(g)=\sum_{\ell=0}^{2^k-1}
             g(u^{(k)}_\ell)2^\ell.
\end{equation}
This is a bijection between $\mathcal B_k$ and
$\{0,\ldots,2^{2^k}-1\}$. Its inverse is denoted $\operatorname{fun}_k$.
For $k=3$ this is Wolfram's numbering of elementary rules~\cite{Wolfram1983,Wolfram2002}: the bits are
read in the order $111,110,101,100,011,010,001,000$.

We denote the two constant functions by $\mathtt 0,\mathtt 1$, and the
projections by $\mathtt e_j^k(u)=u_j$ ($1\le j\le k$). Thus
\[
 \operatorname{num}_k(\mathtt 0)=0,\quad
 \operatorname{num}_k(\mathtt 1)=2^{2^k}-1,\quad
 (\operatorname{num}_3(\mathtt e_1^3),\operatorname{num}_3(\mathtt e_2^3),\operatorname{num}_3(\mathtt e_3^3))
   =(240,204,170),
\]
where the last expression abbreviates the ordered triple of numbers.
For $k=4$ the four projection numbers are
$65280,61680,52428,43690$. 

Every $g\in\mathcal B_k$ also has a unique algebraic normal form (ANF)
\[
 g(x_1,\ldots,x_k)=\bigoplus_{S\subseteq\{1,\ldots,k\}}
 c_S\prod_{j\in S}x_j,\qquad c_S\in\mathbb B.
\]

A Boolean function of three variables is called an \emph{elementary rule}.
For $0\le N\le255$ we write $\varphi_N=\operatorname{fun}_3(N)$, and we refer
to $\varphi_N$, as well as to the cellular automata defined by it in
Section~\ref{sec:setting}, as \emph{Rule~$N$}. For example, the algebraic
normal forms of Rules~$110$ and $30$ are
\[
\varphi_{110}(x,y,z)=y+z+yz+xyz \pmod 2,
\qquad
\varphi_{30}(x,y,z)=x+y+z+yz \pmod 2.
\]
These expressions may equivalently be written using logical operators,
recalling that addition modulo $2$ is XOR and multiplication is AND.

\subsection{Cellular automata}
\label{sec:setting}
Let $\mathcal A$ be a finite alphabet. A \emph{configuration} is an element
$x=(x_i)_{i\in\mathbb Z}$ of $\mathcal A^{\mathbb Z}$, and its coordinates are
called \emph{cells}. A \emph{local rule} of arity $n$ is a map
$f:\mathcal A^n\to\mathcal A$. To accommodate even as well as odd arity, put
$\ell_n=\lceil n/2\rceil-1$ and $\rho_n=\lfloor n/2\rfloor$. The
\emph{cellular automaton} with local rule $f$ is the global map
$F_f:\mathcal A^{\mathbb Z}\to\mathcal A^{\mathbb Z}$ defined by
\[
 (F_f(x))_i=f(x_{i-\ell_n},\ldots,x_{i+\rho_n}),\qquad i\in\mathbb Z;
\]
the cells $i-\ell_n,\ldots,i+\rho_n$ form the \emph{window} of $i$.

The \emph{shift} $\tau:\mathcal A^{\mathbb Z}\to\mathcal A^{\mathbb Z}$ is
defined by $(\tau x)_i=x_{i+1}$. Every cellular automaton commutes with the
shift, that is, $F_f\circ\tau=\tau\circ F_f$. The configurations $\tau^d x$,
$d\in\mathbb Z$, are called the \emph{translates} of $x$.

The \emph{orbit} of $x$ is the sequence $(x^t)_{t\ge0}$ with
$x^t=F_f^{\,t}(x)$. A pair $(i,t)\in\mathbb Z\times\mathbb N$ is called a
\emph{space--time site}, and the array $(x^t_i)$ is the
\emph{space--time diagram} of $x$. By induction on $t$, the value $x^t_i$
depends only on the cells $x_j$ with
$j\in[\,i-t\ell_n,\;i+t\rho_n\,]$; this interval is called the
\emph{backward light cone} of $(i,t)$.

When $\mathcal A=\mathbb B$ and the local rule is a Boolean function
$\varphi:\mathbb B^n\to\mathbb B$, we write $F_\varphi$ for the corresponding
\emph{ordinary} cellular automaton.
\begin{equation}\label{eq:ordinaryCA}
 (F_\varphi(x))_i=\varphi(x_{i-\ell_n},\ldots,x_{i+\rho_n}),
 \qquad x\in\mathbb B^{\mathbb Z}.
\end{equation}

\begin{example}
\label{ex:offsets}
For $n=3$, $\ell_3=\rho_3=1$: the window of $i$ is $i-1,i,i+1$ and the
written cell is central,
\[
\begin{array}{r|ccccc}
\text{position} & i-2 & i-1 & i & i+1 & i+2\\\hline
\text{row }t    & x_{i-2} & \boxed{x_{i-1}} & \boxed{x_{i}} & \boxed{x_{i+1}} & x_{i+2}\\
\text{row }t+1  &         &                 & \boxed{y_{i}} &                 &
\end{array}
\quad y_i=\varphi(x_{i-1},x_i,x_{i+1}).
\]
For $n=4$, $\ell_4=1$ and $\rho_4=2$: the window sticks out one cell to the
right and the written cell sits under the \emph{second} of the four,
\[
\begin{array}{r|cccccc}
\text{position} & i-2 & i-1 & i & i+1 & i+2 & i+3\\\hline
\text{row }t    & x_{i-2} & \boxed{x_{i-1}} & \boxed{x_{i}} & \boxed{x_{i+1}} & \boxed{x_{i+2}} & x_{i+3}\\
\text{row }t+1  &         &                 & \boxed{y_{i}} &                 &                 &
\end{array}
\]
\end{example}
\begin{definition}[Lift]
\label{def:lifted}
For $k\ge1$ the lifted local rule
$\overline\varphi:\mathcal B_k^n\to\mathcal B_k$ is pointwise composition:
\[
 \overline\varphi(g_1,\ldots,g_n)(u)
   =\varphi(g_1(u),\ldots,g_n(u)),\qquad u\in\mathbb B^k.
\]
The corresponding global map is $\overline F_\varphi=F_{\overline\varphi}$,
the cellular automaton on the alphabet $\mathcal B_k$ with local rule
$\overline\varphi$; we call it the \emph{lifted cellular automaton} of
$\varphi$. For a symbolic configuration $P$ we write
$P^t=\overline F_\varphi^{\,t}(P)$, so that $P^t_i$ is the function at the
space--time site $(i,t)$.
\end{definition}

The lifted rule does not change the neighborhood geometry: only the
state alphabet changes, from two letters to $2^{2^k}$ Boolean functions.
For example $\overline\varphi(\mathtt e_1^3,\mathtt e_2^3,\mathtt e_3^3)=\varphi$; in particular
$\overline\varphi_{110}(240,204,170)=110$.

\subsection{Three classes of configurations}
\label{ss:classes}

Three classes of configurations will matter, in increasing order of
tractability. Throughout this subsection, $\mathcal A$ is a finite alphabet
and $f:\mathcal A^n\to\mathcal A$ is a local rule, as in
Section~\ref{sec:setting}.

\begin{definition}[Quiescent letter]
\label{def:quiescent}
A letter $\gamma\in \mathcal A$ is \emph{quiescent} for $f$ if
\[
f(\gamma,\gamma,\dots,\gamma)=\gamma ,
\]
that is, if the constant configuration equal to $\gamma$ is a fixed point of
$F_f$.
\end{definition}

\begin{definition}[Configuration classes]
\label{def:classes}
We adopt the following definitions.
\begin{enumerate}
\item[(a)] \emph{Bi-infinite configurations}: the full space
$\mathcal A^{\mathbb Z}$, on which $F_f$ acts as in Section~\ref{sec:setting}.
\item[(b)] \emph{Finite configurations with background $\gamma$}: it is the set
\[
\mathrm{Fin}_\gamma=\bigl\{\,x\in \mathcal A^{\mathbb Z}\;:\;x_i=\gamma
\text{ for all but finitely many }i\,\bigr\}.
\]
\item[(c)] \emph{Cyclic configurations of length $N$}: configurations in   
\[
\mathrm{Cyc}(N)=\mathcal A^{\mathbb Z_N},
\qquad \mathbb Z_N=\mathbb Z/N\mathbb Z ,
\]
where all indices are read modulo $N$: the cell $N-1$ is a neighbor of the
cell $0.$ We will refer to $\mathcal A^{\mathbb Z_N}$ as the \textit{ring}.
\end{enumerate}
\end{definition}

\begin{remark}[Cyclic $=$ spatially periodic]
\label{rem:cycper}
A cyclic configuration of length $N$ is identified with the $N$-periodic
configuration $x\in\mathcal A^{\mathbb Z}$ given by $x_i=x_{i\bmod N}$. Since
$F_f$ commutes with $\tau$, it maps $N$-periodic configurations to
$N$-periodic ones, and under this identification the actions (a) and (c)
of Definition~\ref{def:classes} agree; on $\mathrm{Cyc}(N)$ the shift $\tau$
acts as a rotation. We write a cyclic configuration as a word
$x_0x_1\cdots x_{N-1}$ whose two ends are glued.
\end{remark}

If we assume that $N\geq n,$ where $n$ is the arity of $f,$ then the window
$i-\ell_n,\dots,i+\rho_n$ consists of $n$ \emph{distinct} cells of
$\mathbb Z_N$.  For $N<n$ the induced map still makes sense, but some cell is
read more than once and the induced map is no longer the restriction of a
rule of arity $n$; we exclude this degenerate case throughout.

The reason class (c) matters is that it is the only one of the three with a
finite phase space, which is what makes simulation conclusive rather than
merely suggestive.

\begin{remark}[Two elementary facts]
\label{lem:fininv}\label{lem:eventper}
(i) If $\gamma$ is quiescent for $f$ and $x_i=\gamma$ outside an interval
$[p,q]$, then $F_f(x)_i=\gamma$ outside $[p-\rho_n,\,q+\ell_n]$, because the
window of every cell outside this interval contains only the letter
$\gamma$. In particular $F_f(\mathrm{Fin}_\gamma)\subseteq\mathrm{Fin}_\gamma$.

(ii) Since $\mathrm{Cyc}(N)$ is finite, every orbit in it is eventually
periodic: by the pigeonhole principle there are integers $T\ge0$ and
$\Theta\ge1$ with $T+\Theta\le|\mathcal A|^N$ such that
$F_f^{\,t+\Theta}(x)=F_f^{\,t}(x)$ for every $t\ge T$. The least such $T$ and
$\Theta$ are called the \emph{transient} and the \emph{period} of the orbit
of $x$, and the whole space--time diagram is determined by its first
$T+\Theta$ rows.
\end{remark}



\subsection{Evaluation and product conjugacy}
\label{ss:evaluation}
For $u\in\mathbb B^k$, define the Boolean evaluation of a symbolic
configuration by $(\pi_uP)_i=P_i(u)$.
\begin{lemma}[Evaluation identity]\label{lem:eval}
For all $u\in\mathbb B^k$ one has
$\pi_u\circ\overline F_\varphi=F_\varphi\circ\pi_u$.
\end{lemma}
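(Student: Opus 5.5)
The identity can be checked cell by cell, directly from the definitions. Fix $u\in\mathbb B^k$ and a symbolic configuration $P\in\mathcal B_k^{\mathbb Z}$. Both sides of $\pi_u\circ\overline F_\varphi=F_\varphi\circ\pi_u$ are maps $\mathcal B_k^{\mathbb Z}\to\mathbb B^{\mathbb Z}$. It therefore suffices to show that, for every $i\in\mathbb Z$,
\[
\bigl(\pi_u(\overline F_\varphi P)\bigr)_i=\bigl(F_\varphi(\pi_u P)\bigr)_i .
\]

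The computation has three steps. First, by the definition of $\pi_u$, the left-hand side is $(\overline F_\varphi P)_i(u)$. Second, by Definition~\ref{def:lifted} and the definition of the global map in Section~\ref{sec:setting}, this equals $\overline\varphi(P_{i-\ell_n},\ldots,P_{i+\rho_n})(u)$. By pointwise composition this is $\varphi\bigl(P_{i-\ell_n}(u),\ldots,P_{i+\rho_n}(u)\bigr)$. Third, since $P_j(u)=(\pi_uP)_j$ for each $j$, this last expression is $\varphi\bigl((\pi_uP)_{i-\ell_n},\ldots,(\pi_uP)_{i+\rho_n}\bigr)$. By \eqref{eq:ordinaryCA} that is exactly $(F_\varphi(\pi_uP))_i$.

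The two sides use the same window offsets $\ell_n,\rho_n$, because lifting changes only the alphabet and not the neighborhood geometry. The same calculation therefore holds verbatim on $\mathrm{Fin}_\gamma$ and on $\mathrm{Cyc}(N)$, with indices read modulo $N$ via Remark~\ref{rem:cycper}. By induction on $t$ the identity extends to iterates, $\pi_u\circ\overline F_\varphi^{\,t}=F_\varphi^{\,t}\circ\pi_u$, and hence $P^t_i(u)=(F_\varphi^{\,t}(\pi_uP))_i$.

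There is no genuine obstacle in this argument. The only point requiring care is to apply the evaluation at $u$ after the pointwise definition of $\overline\varphi$, so that evaluation commutes with substituting the arguments.
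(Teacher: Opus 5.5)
Your proof is correct and is the paper's argument: the paper's one-line proof notes that at position $i$ both sides equal $\varphi(P_{i-\ell_n}(u),\ldots,P_{i+\rho_n}(u))$, and your chain of equalities unfolds exactly this. Your further remarks, that the cyclic case goes through verbatim and that the identity extends to iterates by induction, are also correct; the paper uses them implicitly afterwards, for instance in Proposition~\ref{prop:conj}.
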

\begin{proof}
At position $i$ both sides are
$\varphi(P_{i-\ell_n}(u),\ldots,P_{i+\rho_n}(u))$.
\end{proof}
\begin{proposition}[Product conjugacy]\label{prop:conj}
The map
\[
 \Pi:\mathcal B_k^{\mathbb Z}\longrightarrow
       \prod_{u\in\mathbb B^k}\mathbb B^{\mathbb Z},\qquad
 \Pi(P)=(\pi_uP)_{u\in\mathbb B^k},
\]
is a homeomorphism for the product topologies, and
\[
 \Pi\circ\overline F_\varphi
 =\left(\prod_{u\in\mathbb B^k}F_\varphi\right)\circ\Pi.
\]
The same assertion holds in the cyclic case, replacing $\mathbb Z$ by
$\mathbb Z_N$.
\end{proposition}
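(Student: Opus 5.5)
The plan is to prove the two assertions of Proposition~\ref{prop:conj} separately. The conjugacy identity will follow from Lemma~\ref{lem:eval}, and the topological statement will come from identifying $\mathcal B_k$ with $\mathbb B^{\mathbb B^k}$ through truth tables.

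\textbf{Bijectivity.} A function $g\in\mathcal B_k$ is the same thing as its truth table $(g(u))_{u\in\mathbb B^k}\in\mathbb B^{\mathbb B^k}$. Hence the map $P\mapsto (P_i(u))_{i\in\mathbb Z,\,u\in\mathbb B^k}$ is a bijection
\[
\mathcal B_k^{\mathbb Z}\longrightarrow \mathbb B^{\mathbb Z\times\mathbb B^k}.
\]
The codomain of $\Pi$, namely $\prod_u\mathbb B^{\mathbb Z}$, is canonically the same set $\mathbb B^{\mathbb B^k\times\mathbb Z}$, obtained by reordering the two index sets. Under these identifications $\Pi$ is just this reindexing, so it is a bijection. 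Its inverse sends a family $(x^{(u)})_u$ to the configuration $P$ with $P_i(u)=x^{(u)}_i$.

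\textbf{Homeomorphism.} Both spaces carry product topologies of discrete factors. The alphabet $\mathcal B_k$ is finite and discrete, so $\mathcal B_k$ with the discrete topology equals $\mathbb B^{\mathbb B^k}$ with its product topology, since a finite product of discrete spaces is discrete. Products are associative and commutative up to canonical homeomorphism, so the reindexing is a homeomorphism. If I want a direct check instead, it suffices to test subbasic sets. The preimage under $\Pi$ of the set $\{x^{(u)}_i=b\}$ is the finite union of cylinders $\{P_i=g\}$ over those $g$ with $g(u)=b$, hence it is open. Conversely, each cylinder $\{P_i=g\}$ is the image of the finite intersection $\bigcap_u\{x^{(u)}_i=g(u)\}$. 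Thus both $\Pi$ and $\Pi^{-1}$ are continuous.

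\textbf{Conjugacy.} I compare the two sides coordinatewise. The $u$-component of $\Pi(\overline F_\varphi P)$ is $\pi_u\overline F_\varphi P$. The $u$-component of $(\prod_uF_\varphi)(\Pi P)$ is $F_\varphi\pi_uP$. These agree by Lemma~\ref{lem:eval}.

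\textbf{Cyclic case.} The argument is identical with $\mathbb Z$ replaced by $\mathbb Z_N$. In fact it is simpler, because all spaces involved are finite and discrete, so any bijection between them is automatically a homeomorphism.

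There is no genuine obstacle. The only point needing care is the topological bookkeeping, namely checking that the cylinder sets correspond under $\Pi$, and this works precisely because $\mathbb B^k$ is finite.
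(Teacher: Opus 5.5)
Your proof is correct and follows essentially the same route as the paper. The paper also reconstructs $P$ from the family via $P_i(u)=x^{(u)}_i$, obtains the intertwining from Lemma~\ref{lem:eval}, and justifies continuity in both directions by noting that each output coordinate depends on finitely many input coordinates. Your cylinder-set verification is a more explicit version of that one-line remark.
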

\begin{proof}
Given the family of Boolean configurations $(z^{(u)})_u$, reconstruct
$P_i$ uniquely by $P_i(u)=z^{(u)}_i$. The intertwining follows from
Lemma~\ref{lem:eval}. Both maps are continuous, since each output
coordinate depends on finitely many input coordinates.
\end{proof}

This conjugacy is the key to the problem, but does \emph{not} solve it.
We do not start from arbitrary points of the product system: our initial
conditions are highly correlated, since they are obtained by evaluating
the same sparse symbolic seed. Moreover, we demand all Boolean output
patterns at a \emph{common space-time site}.

\begin{corollary}[Transient and period on rings]
\label{cor:ringperiod}
Let $P\in\mathcal B_k^{\mathbb Z_N}$ and, for $u\in\mathbb B^k$, let
$T_u$ and $\Theta_u$ be the transient and the period of the orbit of
$\pi_uP$ under $F_\varphi$. Then the orbit of $P$ under
$\overline F_\varphi$ has transient $\max_u T_u$ and period
$\operatorname{lcm}_u \Theta_u$.
\end{corollary}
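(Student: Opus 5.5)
The plan is to transfer the question entirely to the Boolean side through the product conjugacy of Proposition~\ref{prop:conj}, restricted to the ring $\mathbb Z_N$. By Lemma~\ref{lem:eval}, $\pi_u(P^t)=F_\varphi^{\,t}(\pi_uP)$ for every $t\ge0$ and every $u\in\mathbb B^k$. Since $\Pi$ is a bijection, two symbolic configurations coincide exactly when all their evaluations coincide. Hence, for any $s,t\ge0$,
\[
P^{s}=P^{t}\iff F_\varphi^{\,s}(\pi_uP)=F_\varphi^{\,t}(\pi_uP)\ \text{ for all } u\in\mathbb B^k .
\]
I will use this equivalence as the only bridge between the two dynamics. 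Everything after it is elementary arithmetic on eventually periodic sequences, and Remark~\ref{lem:eventper}(ii) guarantees that each orbit involved is eventually periodic.

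First I record the standard characterization for a single orbit in $\mathrm{Cyc}(N)$ with transient $T$ and period $\Theta$: $F^{s}(x)=F^{t}(x)$ with $s\ne t$ holds iff $\min(s,t)\ge T$ and $\Theta\mid(s-t)$. The direction ``if'' is immediate from the definition. For ``only if'', suppose $F^s(x)=F^t(x)$ with $s<t$. Then the orbit repeats with period $t-s$ from time $s$ on. By minimality of $T$ this forces $s\ge T$. Next, the set of eventual periods is closed under subtraction, since both $t-s$ and $\Theta$ work from time $\max(s,T)$. So $\Theta$, being the least element, divides $t-s$. I will state this as a small auxiliary fact before the main argument.

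Put $T^*=\max_u T_u$ and $\Theta^*=\operatorname{lcm}_u\Theta_u$. For $t\ge T^*$ we have $t\ge T_u$ and $\Theta_u\mid\Theta^*$ for every $u$, so each evaluation returns after $\Theta^*$ steps. By the displayed equivalence, $P^{t+\Theta^*}=P^t$. This shows that the transient of $P$ is at most $T^*$ and that its period divides $\Theta^*$. For the reverse inequalities, let $T$ and $\Theta$ be the transient and period of the orbit of $P$. Then $P^{T+\Theta}=P^T$, and the equivalence gives $F_\varphi^{\,T+\Theta}(\pi_uP)=F_\varphi^{\,T}(\pi_uP)$ for every $u$. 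The auxiliary fact now yields $T\ge T_u$ and $\Theta_u\mid\Theta$ for all $u$. Therefore $T\ge T^*$ and $\Theta^*\mid\Theta$. Combining with the first part, $T=T^*$ and $\Theta=\Theta^*$.

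The only step needing care is the auxiliary fact, specifically that minimality of the transient forces $s\ge T$ and that the minimal period divides every eventual period. The rest follows directly from Proposition~\ref{prop:conj}.
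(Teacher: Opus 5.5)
Your proof is correct and follows essentially the same route as the paper. Both use the single-orbit characterization ($x^s=x^t$ with $s<t$ iff $s\ge T$ and $\Theta\mid t-s$) together with Proposition~\ref{prop:conj}, which makes equality of symbolic rows equivalent to equality of all evaluated rows; the paper then reads off the transient and period directly as the least admissible $s$ and the least admissible $t-s$, whereas you split this into two inequalities, a purely cosmetic difference.
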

\begin{proof}
For a single orbit $(x^t)_{t\ge0}$ with transient $T$ and period $\Theta$
one has $x^s=x^t$ with $s<t$ if and only if $s\ge T$ and $\Theta\mid t-s$,
since the configurations $x^0,\dots,x^{T+\Theta-1}$ are pairwise distinct.
By Proposition~\ref{prop:conj},
$\overline F_\varphi^{\,s}(P)=\overline F_\varphi^{\,t}(P)$ if and only if
$F_\varphi^{\,s}(\pi_uP)=F_\varphi^{\,t}(\pi_uP)$ for every $u$, that is,
if and only if $s\ge\max_uT_u$ and $\operatorname{lcm}_u \Theta_u\mid t-s$.
The least such $s$ and the least such $t-s$ are the transient and the
period.
\end{proof}

\subsection{Seeds and universality}
\label{ss:seeds}

The initial conditions of interest are particular \emph{configurations} of
Boolean functions.  We first specify their allowed cell states and then
distinguish the finite-background and cyclic settings.

\begin{definition}[Seeds]
\label{def:seed}
Fix $k\ge1$ and let
\[
\Sigma_k=\{\mathtt 0,\mathtt 1,
\mathtt e_1^k,\ldots,\mathtt e_k^k\}.
\]
A \emph{seed} is a symbolic configuration of one of
the following types:
\begin{enumerate}
\item[(a)] A finite configuration $P\in\Sigma_k^{\mathbb Z}\cap \mathrm{Fin}_\gamma$ with $\gamma\in\{\mathtt 0,\mathtt 1\}.$  We call it a \emph{seed on the line}.

\item[(b)] A configuration
$P\in\Sigma_k^{\mathbb Z_N}$, with $N\ge n$.
We call it a \emph{cyclic seed}.
\end{enumerate}
A seed is \emph{simple} if each projection
$\mathtt e_1^k,\ldots,\mathtt e_k^k$ occurs
exactly once (on the line or around the ring,
respectively). There is no restriction on the
number or positions of the constant symbols.
\end{definition}

\begin{definition}[The four families]
\label{def:families}
We denote by
\[
 \mathsf{Lin}_k,\qquad \mathsf{Lin}_k^\star
\]
respectively the seeds on the line with constant background ($\mathtt 0$ or $\mathtt 1$) and the simple
seeds among them. For $N\ge n$ we denote by
\[
 \mathsf{Cyc}_k(N),\qquad \mathsf{Cyc}_k^\star(N)
\]
respectively the seeds on $\mathbb Z_N$ and the simple seeds among them.

Within these families, a \emph{block seed} has its projections in the
prescribed order $\mathtt e_1^k,\ldots,\mathtt e_k^k$ at consecutive sites; a
\emph{$d$-spread seed} has them in that order at sites separated by $d$
steps. 
\end{definition}

When not differently specified, we will always assume that the constant background of a configuration in $\mathsf{Lin}_k$ or in $\mathsf{Lin}_k^{\star}$ consists of zeros. 

\begin{definition}[The configuration defined by a word]
\label{def:cw}
Let $w=w_0\cdots w_{L-1}$ be a word over the symbolic alphabet
$\mathcal B_k$. Define its configuration on the $\gamma$ background
by $[w]_\gamma\in\mathcal B_k^{\mathbb Z}$, where
\[
 ([w]_\gamma)_i=
 \begin{cases}
  w_i,&0\le i<L,\\
  \gamma,&\text{otherwise}.
 \end{cases}
\]
The same definition applies in the cyclic case, assuming $N\geq L,$ reading the positions modulo $N.$

\end{definition}

Notice that if the letters of $w$ belong to
$\{\mathtt 0,\mathtt 1,\mathtt e_1^k,\ldots,\mathtt e_k^k\}$, these configurations are
seeds. In word notation, $\mathtt 0,\mathtt 1$ stand for the constant functions and
$\mathtt a,\mathtt b,\mathtt c,\ldots$ for $\mathtt e_1^k,\mathtt e_2^k,\mathtt e_3^k,\ldots$; 
$[\mathtt{1abcd1}]_{\mathtt 0}$ represents
$\mathtt 1,\mathtt e_1^4,\mathtt e_2^4,\mathtt e_3^4,\mathtt e_4^4,\mathtt 1$ on the
$\mathtt 0$ background. We will often omit the square brackets when the background is clear
from the context, thus identifying a word with the corresponding seed.

Under the product conjugacy of Proposition~\ref{prop:conj}, a simple
seed determines $2^k$ ordinary Boolean configurations. They agree at
all constant-symbol sites and may differ only at the $k$ projection
sites: evaluation at $u=(u_1,\ldots,u_k)$ replaces $\mathtt e_j^k$ by $u_j$.

\begin{definition}[Realized functions and completeness]
\label{def:realized}
For a seed $P$, write
\[
 \mathcal R_\varphi(P)=
 \big\{(\overline F_\varphi^t(P))_i:t\ge0,\ i\in\mathbb Z\big\}
 \subseteq\mathcal B_k.
\]
On a ring the site index lies in $\mathbb Z_N$. The seed is
\emph{$k$-complete} with respect to the lifted cellular automaton $\overline{F}_{\varphi}$  if $\mathcal R_\varphi(P)=\mathcal B_k$. The lifted cellular automaton $\overline{F}_{\varphi}$ is said to be 
\begin{itemize}
\item \emph{$k$-universal} if it admits a
$k$-complete seed in $\mathsf{Lin}_k$;
\item \emph{strongly $k$-universal} if it
admits a $k$-complete seed in $\mathsf{Lin}_k^\star$;
\item \emph{cyclic $k$-universal} if it admits a
$k$-complete seed in $\mathsf{Cyc}_k(N)$
for some $N\ge n$;
\item \emph{strongly cyclic $k$-universal}
if it admits a $k$-complete seed in
$\mathsf{Cyc}_k^\star(N)$ for some $N\ge n$.
\end{itemize}
\end{definition}

Equivalently, by Proposition~\ref{prop:conj}, completeness says that
\begin{equation}\label{eq:joint-surj}
 (i,t)\longmapsto
 \big(F_\varphi^t(\pi_uP)_i\big)_{u\in\mathbb B^k}
\end{equation}
is onto $\mathbb B^{\mathbb B^k}$. This makes explicit the separation
between \emph{individual computation} and \emph{simultaneous realisation}:
all $2^k$ evaluated evolutions must display every one of the
$2^{2^k}$ possible joint patterns.

\begin{proposition}[Line--cyclic equivalence]
\label{prop:finite-cyclic-equivalence}
Fix $k\geq1$ and a Boolean local rule $\varphi$.
The following are equivalent:
\begin{enumerate}
\item[(i)] $\overline{F}_{\varphi}$ is $k$-universal;
\item[(ii)] $\overline{F}_{\varphi}$ is cyclic $k$-universal;
\end{enumerate}
Moreover, if $\overline{F}_{\varphi}$ is strongly $k$-universal it is also strongly cyclic $k$-universal. 
\end{proposition}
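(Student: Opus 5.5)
The plan is to compare the two evolutions through backward light cones: in both directions, the seeds will be chosen so that they agree on a long interval, and this interval will contain the backward light cones of finitely many witness sites. Two facts make this work. First, $\mathcal B_k$ is finite, so completeness is always witnessed by finitely many space--time sites $(i_f,t_f)$, $f\in\mathcal B_k$. Second, as recalled in Section~\ref{sec:setting}, the value $P^t_i$ depends only on the initial cells in $[\,i-t\ell_n,\,i+t\rho_n\,]$. This holds verbatim for the lifted automaton, which is itself a cellular automaton on the alphabet $\mathcal B_k$. Hence, if two configurations (one on $\mathbb Z$, one $N$-periodic, identified with a ring by Remark~\ref{rem:cycper}) agree on that interval, they have the same function at $(i,t)$.

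\textbf{(i)$\Rightarrow$(ii), and the strong statement.} Let $P\in\mathsf{Lin}_k$ be $k$-complete with background $\gamma$, and let its non-background cells lie in $[p,q]$. Put $T=\max_f t_f$. The only delicate point is that $\gamma$ need not be quiescent for $\varphi$, so the witness sites $i_f$ might lie far away from the support. To handle this, note that at time $t$ every site outside $[p-t\rho_n-1,\,q+t\ell_n+1]$ has a backward light cone consisting only of $\gamma$'s. Each such site therefore carries the same function, namely the $t$-th iterate of the constant configuration $\gamma$. That function also appears at the site $q+t\ell_n+1$. So we may assume every $i_f$ lies in $[p-T\rho_n-1,\,q+T\ell_n+1]$. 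The light cones of all witnesses are then contained in $J=[\,p-M,\,q+M\,]$ with $M=T(\ell_n+\rho_n)+1$.

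Now take $N=(q-p+1)+2M+1$ and the cyclic seed obtained by writing the cells of $P$ on $[p-M,q+M]$ and one further $\gamma$ around a ring of length $N$. The associated $N$-periodic configuration agrees with $P$ on $J$. Hence, by the light-cone principle, every $f\in\mathcal B_k$ is realized on the ring, and the cyclic seed is $k$-complete. It uses only letters of $\Sigma_k$, and the only cells added are constants. Consequently, if $P$ is simple, so is the cyclic seed, and the same construction gives the ``moreover'' part.

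\textbf{(ii)$\Rightarrow$(i).} Let $Q\in\mathsf{Cyc}_k(N)$ be $k$-complete with witnesses $(i_f,t_f)$, where $i_f\in\{0,\dots,N-1\}$, and put $T=\max_f t_f$. Choose $R$ with $RN\ge N+2T(\ell_n+\rho_n)$, and let $w$ be the word obtained by concatenating $2R+1$ copies of $Q_0\cdots Q_{N-1}$. The seed $[w]_{\mathtt 0}$ lies in $\mathsf{Lin}_k$; it is generally not simple, but (i) does not require simplicity. It agrees with the $N$-periodic configuration of $Q$ on the $2R+1$ copies. In particular, for each witness $i_f$ translated into the middle copy, the whole backward light cone lies in this region. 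So every $f\in\mathcal B_k$ occurs in the orbit of $[w]_{\mathtt 0}$, which proves (i).

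The main obstacle I expect is the bookkeeping in (i)$\Rightarrow$(ii): confining the witnesses to a bounded window despite a possibly non-quiescent background, and then sizing the ring so that no light cone wraps around. By contrast, (ii)$\Rightarrow$(i) is immediate once repetition of the ring word is allowed. This asymmetry explains why the strong version is claimed only in one direction: unrolling a ring into a line seed duplicates the projections and destroys simplicity.
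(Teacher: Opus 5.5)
Your proof is correct and follows the paper's argument: in both directions you make the line configuration and the periodic configuration agree on an interval containing the finitely many witness light cones, and you note that the ring construction adds only constants, which gives the strong statement. Confining the witnesses near the support (your ``delicate point'' about a non-quiescent background) is harmless but unnecessary, since finitely many finite light cones always lie in one bounded interval, and that is all the paper uses.
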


\begin{proof}
Suppose first that $P$ is a complete seed on the line.
For every $g\in\mathcal B_k$, choose a space--time site $(i_g,t_g)$
at which $g$ occurs.

Let the neighborhood offsets be $\ell_n,\rho_n$.  The value at
$(i_g,t_g)$ depends only on the initial cells in the finite interval
\[
I_g=[\,i_g-t_g\ell_n,\;i_g+t_g\rho_n\,].
\]
Since $\mathcal B_k$ is finite, the union of these intervals is finite.
Choose a finite interval $I$ containing all the $I_g$ and also all
non-background cells of $P$.

Embed $I$ into a sufficiently large ring and fill the remaining cells
of the ring with the background symbol of $P$. 
The backward light cone of each selected space--time site is therefore
identical to the corresponding backward light cone on the infinite
line.  Hence every $g\in\mathcal B_k$ occurs on the ring as well, and
the resulting cyclic seed is complete.

Conversely, let $C$ be a complete cyclic seed on $\mathbb Z_N$, and
let $\widetilde C\in\mathcal B_k^{\mathbb Z}$ be its periodic lift.
For every $g\in\mathcal B_k$, choose a site $i_g\in\mathbb Z_N$ and a
time $t_g$ at which $g$ occurs, and choose an integer representative
of $i_g$.

Again, the value at $(i_g,t_g)$ depends only on
\[
I_g=[\,i_g-t_g\ell_n,\;i_g+t_g\rho_n\,]
\]
in the initial periodic configuration $\widetilde C$.
Let $I$ be a finite interval containing all these backward light cones,
and define a seed $P$ on the line by
\[
P_i=
\begin{cases}
\widetilde C_i,&i\in I,\\
\mathtt 0,&i\notin I.
\end{cases}
\]
For every chosen witness $(i_g,t_g)$, the initial data on its complete
backward light cone are unchanged.  Therefore the same function $g$
occurs at $(i_g,t_g)$ in the evolution of $P$.

Thus $P$ is a complete seed on the line.

The second assertion has a similar proof. 
\end{proof}

We point out another property of $k$-universality: its \textit{monotonicity} with respect to $k$. If a given lifted cellular automaton is $k$-universal it is also $h$-universal for every $1\leq h\leq k$ (also in the strong and cyclic case). It is in fact sufficient to substitute, in a given $k$-complete seed, all the unnecessary projections with zeros. 

The main goal of this paper is to determine whether there exist universal rules. As a first step toward this question, in the next section we investigate natural obstructions that may prevent a seed from being complete for a given lifted rule.

\section{A unified theory of obstructions}
\label{sec:obstructions}

\subsection{The invariant-alphabet principle}
\begin{proposition}[Invariant-alphabet obstruction]\label{prop:inv}
Let $K\subseteq\mathcal B_k$ satisfy
$\overline\varphi(K^n)\subseteq K$. If a symbolic configuration $P$
has $\overline F_\varphi^s(P)\in K^{\mathbb Z}$ for some $s\ge0$, then
\[
 \mathcal R_\varphi(P)\subseteq
 K\cup\mathcal R_{<s}(P),\qquad
 \mathcal R_{<s}(P):=
 \{(\overline F_\varphi^t(P))_i:0\le t<s,\ i\in\mathbb Z\}.
\]
The same conclusion holds on rings. Thus if the displayed union is a
proper subset of $\mathcal B_k$, the seed is incomplete.
\end{proposition}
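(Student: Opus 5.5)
The argument is a direct induction on time, using only the definition of the lifted global map and the closure hypothesis $\overline\varphi(K^n)\subseteq K$. The claim to establish is that $K^{\mathbb Z}$ is forward invariant under $\overline F_\varphi$. Indeed, if $Q\in K^{\mathbb Z}$, then for every $i$ the window $(Q_{i-\ell_n},\ldots,Q_{i+\rho_n})$ lies in $K^n$. Hence $(\overline F_\varphi Q)_i=\overline\varphi(Q_{i-\ell_n},\ldots,Q_{i+\rho_n})\in K$, and so $\overline F_\varphi Q\in K^{\mathbb Z}$.

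Applying this to $Q=\overline F_\varphi^{\,s}(P)$, an induction on $t\ge s$ gives $\overline F_\varphi^{\,t}(P)\in K^{\mathbb Z}$ for every $t\ge s$. Every cell value at a site $(i,t)$ with $t\ge s$ therefore lies in $K$. The cell values at sites with $t<s$ lie in $\mathcal R_{<s}(P)$ by definition. Since
\[
\mathcal R_\varphi(P)=\mathcal R_{<s}(P)\cup\{(\overline F_\varphi^{\,t}(P))_i:t\ge s,\ i\in\mathbb Z\},
\]
the stated inclusion follows.

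On rings, the same computation applies with indices read in $\mathbb Z_N$. The standing assumption $N\ge n$ ensures that each window is an $n$-tuple of cells, so it is again an element of $K^n$. Alternatively, one may pass to the periodic lift via Remark~\ref{rem:cycper}.

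The final sentence of the proposition is then immediate. If $K\cup\mathcal R_{<s}(P)\subsetneq\mathcal B_k$, then $\mathcal R_\varphi(P)\ne\mathcal B_k$, so by Definition~\ref{def:realized} the seed is not $k$-complete.

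There is no genuine obstacle here. The only point requiring care is to verify invariance cellwise through the window, rather than merely for tuples of $K$ in some aggregate sense, and to note that the ring case needs the non-degeneracy $N\ge n$ assumed throughout the paper.
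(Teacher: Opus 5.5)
Your proposal is correct and follows the paper's proof: forward invariance of $K^{\mathbb Z}$ under $\overline F_\varphi$ from time $s$ on, with the rows $t<s$ lying in $\mathcal R_{<s}(P)$ by definition. One small remark: the ring case does not actually need $N\ge n$, since a window that reads some cell twice is still an $n$-tuple of elements of $K$, so the closure hypothesis applies to it anyway.
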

\begin{proof}
The hypothesis propagates inductively by invariance, so every cell at
time $t\ge s$ lies in $K$. All earlier cells belong to
$\mathcal R_{<s}(P)$ by definition.
\end{proof}

This proposition provides a common language for the next two mechanisms.
For clone obstructions the orbit begins inside $K$; for evaluation
collapse it may enter $K$ only after a transient. Neither mechanism
exhausts the constraints of a predetermined space-time geometry.

\subsection{Algebraic confinement: clones}
A \emph{Boolean clone} is a family $C\subseteq\mathcal B$ containing all
projections and closed under superposition. Write
$C_k=C\cap\mathcal B_k$. The five maximal proper Boolean clones are
\[
\begin{array}{c|l}
 T_0&g(0,\ldots,0)=0\\
 T_1&g(1,\ldots,1)=1\\
 M&g\text{ monotone}\\
 D&g(\neg x)=\neg g(x)\text{ (self-dual)}\\
 L&g\text{ affine over }\mathbb F_2.
\end{array}
\]
Post's functional-completeness theorem says that a set of Boolean
operations, with projections freely available, generates all Boolean
functions exactly when it is contained in none of these five clones~\cite{Post1941,Lau2006}.
We use only this classical criterion and the closure property.

\begin{theorem}[Clone criterion]\label{thm:clonecrit}
Let $C$ be a Boolean clone and $\varphi\in\mathcal B_n$.
If $\varphi\in C$, then for every $k$,
$\overline\varphi(C_k^n)\subseteq C_k$. Conversely, if $k\ge n$
and $\overline\varphi(C_k^n)\subseteq C_k$, then $\varphi\in C$.
\end{theorem}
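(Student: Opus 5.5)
The forward direction follows from the closure of the clone $C$ under superposition. Take $g_1,\ldots,g_n\in C_k$. By Definition~\ref{def:lifted}, $\overline\varphi(g_1,\ldots,g_n)$ is the function $u\mapsto\varphi(g_1(u),\ldots,g_n(u))$, which is exactly the superposition $\varphi\circ(g_1,\ldots,g_n)$. Since $\varphi\in C$, each $g_j\in C$, and $C$ is closed under superposition, this composite lies in $C$. It is a function of $k$ variables, so it lies in $C_k=C\cap\mathcal B_k$. Nothing beyond the definition of a clone is needed here.

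For the converse, the plan is to feed projections into the lifted rule. Assume $k\ge n$ and $\overline\varphi(C_k^n)\subseteq C_k$. Every clone contains all projections, so $\mathtt e_1^k,\ldots,\mathtt e_n^k\in C_k$, and these are available because $k\ge n$. Applying the hypothesis gives
\[
h:=\overline\varphi(\mathtt e_1^k,\ldots,\mathtt e_n^k)\in C_k,
\qquad
h(u_1,\ldots,u_k)=\varphi(u_1,\ldots,u_n).
\]
So $h$ is $\varphi$ with $k-n$ fictitious variables appended.

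The only remaining step is to recover $\varphi\in\mathcal B_n$ from $h$, again using closure under superposition. Set
\[
\varphi=h\circ(\mathtt e_1^n,\ldots,\mathtt e_n^n,\mathtt e_1^n,\ldots,\mathtt e_1^n),
\]
substituting $\mathtt e_1^n$, or any projection, into the last $k-n$ slots. Since $h\in C$ and all projections lie in $C$, the superposition lies in $C$, so $\varphi\in C_n\subseteq C$. When $k=n$ this step is trivial because $h=\varphi$.

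I do not expect a real obstacle. The one point needing care is that $C$ is a union over arities and ``closed under superposition'' includes identification and addition of dummy variables. This is exactly what lets the argument move between $\mathcal B_k$ and $\mathcal B_n$. It also explains why the hypothesis $k\ge n$ is needed: with fewer than $n$ distinct projections one cannot isolate $\varphi$ itself.
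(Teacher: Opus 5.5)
Your proof is correct and follows the paper's argument essentially verbatim: closure under superposition for the forward direction, and for the converse, feeding $\mathtt e_1^k,\ldots,\mathtt e_n^k$ into $\overline\varphi$ and then recovering $\varphi$ by substituting $\mathtt e_1^n$ into the $k-n$ dummy slots. Your closing remark only shows why this particular argument needs $k\ge n$; that the hypothesis cannot be dropped in general is established by the majority counterexample in Remark~\ref{rem:clonecrit-arity}.
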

\begin{proof}
The first assertion is closure under superposition. Conversely, feed
the $k$-ary projections $\mathtt e_1^k,\ldots,\mathtt e_n^k$ into the lifted rule.
The result is $h(x_1,\ldots,x_k)=\varphi(x_1,\ldots,x_n)$, thus 
\[
\varphi
=
h(e^n_1,\dots,e^n_n,
  \underbrace{e^n_1,\dots,e^n_1}_{k-n\text{ times}}).
\]
All the functions appearing on the right belong to $C$, and $C$ is closed
under superposition.  Hence $\varphi\in C$.
\end{proof}
\begin{remark}[Why $k\geq n$ is needed in the converse]
\label{rem:clonecrit-arity}
The restriction $k\geq n$ in the converse of
Theorem~\ref{thm:clonecrit} cannot in general be removed.

For example, let $C$ be the clone consisting only of projections, let
$n=3$, $k=2$, and let
\[
\varphi(x,y,z)=\operatorname{Maj}(x,y,z)
\]
be the ternary majority function.  Then $\varphi\notin C$.  Nevertheless,
$K_2=\{\mathtt e^2_1,\mathtt e^2_2\}$, and the majority of any three members of $C_2$ is
again one of $\mathtt e^2_1,\mathtt e^2_2$.  Hence
\[
\overline\varphi\bigl(C_2^{\,3}\bigr)\subseteq C_2
\]
although $\varphi\notin C$.
\end{remark}
\begin{corollary}[Clone trap]\label{cor:clonetrap}
If $\varphi\in C$ and every initial cell is in $C_k$, then
$\mathcal R_\varphi(P)\subseteq C_k$. In particular, for
$\varphi\in T_0$ and a seed using only $\mathtt 0$ and projections,
\[
 |\mathcal R_\varphi(P)|\le 2^{2^k-1}<2^{2^k}.
\]
\end{corollary}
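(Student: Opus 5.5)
The plan is to combine the first half of Theorem~\ref{thm:clonecrit} with the invariant-alphabet principle, Proposition~\ref{prop:inv}, taken with $s=0$, and then to count the functions in $T_0\cap\mathcal B_k$.

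For the general claim, put $K=C_k$. Since $\varphi\in C$, Theorem~\ref{thm:clonecrit} gives $\overline\varphi(C_k^n)\subseteq C_k$, so $K$ is invariant. By hypothesis every initial cell lies in $C_k$, so $P=\overline F_\varphi^{\,0}(P)\in K^{\mathbb Z}$, or $K^{\mathbb Z_N}$ on a ring. Proposition~\ref{prop:inv} with $s=0$ then gives $\mathcal R_\varphi(P)\subseteq K\cup\mathcal R_{<0}(P)$. Since $\mathcal R_{<0}(P)$ is empty, this is exactly $\mathcal R_\varphi(P)\subseteq C_k$. Equivalently, a direct induction on $t$ works, using $P^{t+1}_i=\overline\varphi(P^t_{i-\ell_n},\ldots,P^t_{i+\rho_n})$.

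For the particular case $C=T_0$, two facts need checking.
\begin{itemize}
\item The seed's cells lie in $(T_0)_k$. The constant $\mathtt 0$ satisfies $\mathtt 0(0,\ldots,0)=0$, and each projection satisfies $\mathtt e_j^k(0,\ldots,0)=0$. In particular the background $\mathtt 0$ is in $T_0$, so the hypothesis covers all of $\mathbb Z$ and not only the finitely many non-background cells.
\item The count. A function in $(T_0)_k$ has its value at the all-zero input fixed to $0$, while its values on the remaining $2^k-1$ inputs are free. Hence $|(T_0)_k|=2^{2^k-1}$, which is strictly less than $2^{2^k}$.
\end{itemize}

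No step here is a real obstacle. The only point needing care is to note that the seed, including the background, uses only $\mathtt 0$ and projections and never the constant $\mathtt 1$, which lies outside $T_0$. This is precisely why the statement restricts the seed to $\mathtt 0$ and projections.
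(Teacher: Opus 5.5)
Your proposal is correct and follows the paper's proof. The paper also applies Proposition~\ref{prop:inv} with $s=0$ and $K=C_k$, with invariance coming from the first half of Theorem~\ref{thm:clonecrit}, and then notes that exactly half of the truth tables in $\mathcal B_k$ vanish at $(0,\ldots,0)$. Your explicit check that the background $\mathtt 0$ lies in $(T_0)_k$ is a detail the paper leaves implicit.
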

\begin{proof}
Apply Proposition~\ref{prop:inv} with $s=0$ and $K=C_k$.
For $T_0$, exactly half the Boolean truth tables take value $0$ at
$(0,\ldots,0)$.
\end{proof}

\begin{example}[Rule 110 is trapped in $T_0$ if the seed has no $\mathtt 1$'s]\label{ex:110T0} Since
$\varphi_{110}=y+z+yz+xyz$ over $\mathbb F_2,$ it satisfies
$\varphi_{110}(0,0,0)=0$, so $\varphi_{110}\in T_0$ (while it is in none of
$T_1,M,D,L$). A constant $\mathtt 1$ in a complete seed is therefore
\emph{necessary}. 

For $k=3$ and seeds using only $\mathtt 0$ and projections,
exhaustive computation gives
\[
|\mathcal R(\mathtt{a0b0c})|=|\mathcal R(\mathtt{a00b00c})|=128,
\]
so every realized function fixes $0$, and the bound of
Corollary~\ref{cor:clonetrap} is attained.
For the block seed $\mathtt{abc}$ the computation produces only $26$
functions up to time $3000$, the last of them at time $16$; we have no
certificate that no further function appears later.
\end{example}

\begin{example}[A $T_1$-trap]
\label{ex:T1}
Let
$\varphi_{137}(x,y,z)=\lnot\varphi_{110}(\lnot x,\lnot y,\lnot z)
=1+x+y+z+xy+xz+xyz$. Then $\varphi_{137}(1,1,1)=1$, so
$\varphi_{137}\in T_1$, and $\varphi_{137}$ lies in none of
$T_0,M,D,L$. Consider the simple ring
$P=\mathtt{a1b1c11111}\in\mathsf{Cyc}^\star_3(10)$. Its cells are
$\mathtt 1$ and projections, all in $(T_1)_3$, hence
$\mathcal R_{\varphi_{137}}(P)\subseteq(T_1)_3$, a set with $128$
elements. Since the phase space is finite, the realized set can be
computed exactly (Remark~\ref{lem:eventper}(ii)): the orbit has transient $9$
and period $75$, and $|\mathcal R_{\varphi_{137}}(P)|=119$.
\end{example}

\begin{example}[A $D$-trap]
\label{ex:D}
Let $\varphi_{43}(x,y,z)=\operatorname{Maj}(\lnot x,\lnot y,z)
=1+x+y+xy+xz+yz$. Since $\operatorname{Maj}$ is self-dual, so is
$\varphi_{43}$; moreover $\varphi_{43}(0,0,0)=1$,
$\varphi_{43}(1,1,1)=0$, and $\varphi_{43}$ is neither monotone nor
affine, so $D$ is the only maximal clone containing $\varphi_{43}$.
The clone $D$ contains no constant; hence a seed all of whose cells lie
in $D_k$ consists of projections only, and must be a ring. Let
$P=\mathtt{aababc}\in\mathsf{Cyc}_3(6)$ (a non-simple seed). Then
$\mathcal R_{\varphi_{43}}(P)\subseteq D_3$, and $|D_3|=2^{2^{2}}=16$,
since a self-dual function is determined by its values on the $2^{k-1}$
inputs with $u_1=0$. The orbit has transient $1$ and period $6$, and all
$16$ self-dual functions occur by time $4$; thus
$\mathcal R_{\varphi_{43}}(P)=D_3$. For the only simple ring without
constants, $\mathtt{abc}\in\mathsf{Cyc}^\star_3(3)$, the realized set
consists of the three projections and of the three functions
$\varphi_{43}(x_1,x_2,x_3)$, $\varphi_{43}(x_3,x_1,x_2)$,
$\varphi_{43}(x_2,x_3,x_1)$.
\end{example}

\begin{remark}[Clone traps for seeds on the line]
\label{rem:fin-clones}
The background symbol $\mathtt 0$ lies neither in $T_1$ nor in $D$.
Hence, for $P\in\mathsf{Lin}_k$, Corollary~\ref{cor:clonetrap} can only
be applied with $C\in\{T_0,M,L\}$; the $T_1$- and $D$-traps at time $0$
occur only on rings, as in Examples~\ref{ex:T1} and~\ref{ex:D}.
Moreover, a seed containing both constants has a cell outside $(T_0)_k$,
a cell outside $(T_1)_k$ and a cell outside $D_k$, so for such seeds
only the clones $M$ and $L$ remain.
\end{remark}

With both
constants available, the clone generated by $\varphi$ and the constants
is all of $\mathcal B$ precisely when $\varphi$ is neither monotone nor
affine. For ternary Boolean rules there are $225$ such rules: the
monotone and affine sets have sizes $20$ and $16$, with intersection
of size $5$. Both rules 30 and 110 pass this necessary algebraic test.

\begin{example}[An $M$-trap]
\label{ex:M}
Let $\varphi_{168}(x,y,z)=(x\vee y)\wedge z=xz+yz+xyz$, which is
monotone; it also lies in $T_0$ and $T_1$, but not in $D$ or $L$. The
seed $P=\mathtt{10abc1}\in\mathsf{Lin}^\star_3$ contains both
constants, so by Remark~\ref{rem:fin-clones} the only applicable clone
trap is $M$: all cells of $P$ are monotone, hence
$\mathcal R_{\varphi_{168}}(P)\subseteq M_3$, a set with $20$ elements.
Here the inclusion is strict. The configuration at time $6$ is a
translate of the configuration at time $5$; since the global map
commutes with the shift, every later configuration is a translate of the
one at time $5$, so no new function appears after time $5$. Computing
the first six configurations one obtains
\[
 \mathcal R_{\varphi_{168}}(P)=\bigl\{\mathtt 0,\ \mathtt 1,\ x_1,\ x_2,\ x_3,\
 x_1x_2,\ x_1x_3,\ x_1x_2x_3,\ x_2\vee x_3,\
 (x_1\vee x_2)x_3,\ x_1(x_2\vee x_3),\
 \operatorname{Maj}(x_1,x_2,x_3)\bigr\},
\]
that is, $12$ of the $20$ monotone functions of three variables.
\end{example}

\begin{example}[An $L$-trap]
\label{ex:L}
Let $\varphi_{150}(x,y,z)=x+y+z$, which is affine but not monotone (it
also lies in $T_0$, $T_1$ and $D$). The seed
$P=\mathtt{10abc}\in\mathsf{Lin}^\star_3$ contains both constants, so
the only applicable clone trap is $L$:
\[
 \mathcal R_{\varphi_{150}}(P)\subseteq L_3
 =\{c_0+c_1x_1+c_2x_2+c_3x_3:\ c_0,c_1,c_2,c_3\in\mathbb B\},
 \qquad |L_3|=2^4=16 .
\]
All $16$ affine functions occur by time $5$, hence
$\mathcal R_{\varphi_{150}}(P)=L_3$.
\end{example}

\begin{example}[A clone trap entered after a transient]
\label{ex:latetrap}
Proposition~\ref{prop:inv} also applies when the orbit enters an
invariant set only at a positive time. The rule
$\varphi_{30}(x,y,z)=x\oplus(y\vee z)$ lies in $T_0$ and in no other
maximal clone. Consider the simple ring
$P=\mathtt{ab1c01001}\in\mathsf{Cyc}^\star_3(9)$. Since $P$ contains
$\mathtt 1\notin T_0$, Corollary~\ref{cor:clonetrap} does not apply. The
configuration at time $1$ is
\[
 \overline F_{\varphi_{30}}(P)=
 \bigl(\lnot(x_1\vee x_2),\ \lnot x_1,\ \lnot x_2,\ \lnot x_3,\
 \lnot x_3,\ \mathtt 1,\ \mathtt 1,\ \mathtt 1,\ \mathtt 1\bigr).
\]
All these symbols take the value $1$ at $u=(0,0,0)$, so by
Lemma~\ref{lem:eval} the evaluation $\pi_{(0,0,0)}$ of the configuration
at time $1$ is the all-ones ring; since $\varphi_{30}(1,1,1)=0$, the
evaluation at time $2$ is the all-zeros ring, i.e.\
$\overline F^{\,2}_{\varphi_{30}}(P)\in(T_0)_3^{\mathbb Z_9}$.
Proposition~\ref{prop:inv} with $K=(T_0)_3$ and $s=2$ gives
\[
 \mathcal R_{\varphi_{30}}(P)\subseteq(T_0)_3\cup\mathcal R_{<2}(P)
 =(T_0)_3\cup\{\mathtt 1,\ \lnot x_1,\ \lnot x_2,\ \lnot x_3,\
 \lnot x_1\wedge\lnot x_2\},
\]
a set with $133$ elements. The orbit has transient $5$ and period
$1368$, and computing it one finds that the inclusion is an equality:
$|\mathcal R_{\varphi_{30}}(P)|=133$. In this example the eight
evaluated configurations $\pi_uP$ remain pairwise distinct at all times,
so the evaluation collapse of Section~\ref{ss:collapse} plays no role.
\end{example}

\subsection{Dynamical confinement: collapse of evaluations}
\label{ss:collapse}

The clone obstruction concerns closure under Boolean composition.  We now
consider a different mechanism: two distinct evaluations of a symbolic
configuration may evolve into the same ordinary Boolean configuration.
Once this happens, no symbol appearing at a later time can distinguish
the two inputs.  We first describe the resulting invariant alphabet;
we then show how actual collisions produce an obstruction to completeness.

\begin{proposition}[Evaluation equivalences]\label{prop:Esim}
Let $\sim$ be an equivalence relation on $\mathbb B^k$, and put
\[
 E_\sim=\{g\in\mathcal B_k:\ u\sim v\Longrightarrow g(u)=g(v)\}.
\]
Then $\overline\varphi(E_\sim^n)\subseteq E_\sim$ for every Boolean
rule $\varphi:\mathbb B^n\to\mathbb B$, and
\[
 |E_\sim|=2^{|\mathbb B^k/\sim|}.
\]
Moreover, $\mathtt e_j^k\in E_\sim$ if and only if $u_j=v_j$ whenever $u\sim v$.
\end{proposition}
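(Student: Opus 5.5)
The plan is to verify the three assertions of Proposition~\ref{prop:Esim} directly from the definitions, using only the pointwise nature of the lift (Definition~\ref{def:lifted}). No dynamics is needed: this is a purely algebraic statement about the alphabet $E_\sim$.

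\textbf{Invariance.} Take $g_1,\ldots,g_n\in E_\sim$ and $u\sim v$. By definition of $E_\sim$ we have $g_m(u)=g_m(v)$ for each $m$, hence
\[
\overline\varphi(g_1,\ldots,g_n)(u)=\varphi\bigl(g_1(u),\ldots,g_n(u)\bigr)=\varphi\bigl(g_1(v),\ldots,g_n(v)\bigr)=\overline\varphi(g_1,\ldots,g_n)(v).
\]
So $\overline\varphi(g_1,\ldots,g_n)\in E_\sim$. The rule $\varphi$ is arbitrary here, which is why the conclusion holds for every Boolean rule, in contrast with the clone criterion of Theorem~\ref{thm:clonecrit}.

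\textbf{Cardinality.} Let $q:\mathbb B^k\to\mathbb B^k/\sim$ be the quotient map. I will show that $h\mapsto h\circ q$ is a bijection from $\{h:\mathbb B^k/\sim\to\mathbb B\}$ onto $E_\sim$.
\begin{itemize}
\item Each $h\circ q$ is constant on classes, so it lies in $E_\sim$.
\item Injectivity follows from the surjectivity of $q$.
\item Surjectivity holds because any $g\in E_\sim$ is constant on each class, so $h([u])=g(u)$ is well defined and satisfies $g=h\circ q$.
\end{itemize}
Hence $|E_\sim|=2^{|\mathbb B^k/\sim|}$.

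\textbf{Projections.} This step is immediate from the definition, since $\mathtt e_j^k(u)=u_j$. The condition $\mathtt e_j^k\in E_\sim$ reads exactly ``$u\sim v\Rightarrow u_j=v_j$''.

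The only step requiring any care is well-definedness in the cardinality count, and even that is routine. I expect no real obstacle.
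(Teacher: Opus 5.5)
Your proposal is correct and follows the paper's proof essentially verbatim. The paper uses the same pointwise computation for invariance, the same count via one Boolean value per equivalence class (which you make explicit as the bijection $h\mapsto h\circ q$), and the same immediate observation $\mathtt e_j^k(u)=u_j$ for the projection criterion.
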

\begin{proof}
If $g_1,\ldots,g_n\in E_\sim$ and $u\sim v$, then
$g_h(u)=g_h(v)$ for every $h$.  Hence
\[
 \overline\varphi(g_1,\ldots,g_n)(u)
 =\varphi(g_1(u),\ldots,g_n(u))
 =\varphi(g_1(v),\ldots,g_n(v))
 =\overline\varphi(g_1,\ldots,g_n)(v).
\]
Each function in $E_\sim$ is uniquely specified by one Boolean value
on each equivalence class.  Finally, $\mathtt e_j^k(u)=u_j$.
\end{proof}

For a symbolic configuration $P$, recall that $\pi_uP$ denotes its
ordinary Boolean evaluation at $u\in\mathbb B^k$.

\begin{definition}[Eventual evaluation collapse]\label{def:collapse}
For a fixed symbolic configuration $P$, write $u\sim_P v$ if
\[
 F_\varphi^t(\pi_uP)=F_\varphi^t(\pi_vP)
 \quad\text{for some }t\geq0.
\]
We say that the evaluations at $u$ and $v$ \emph{collapse} at time $t$
when this equality holds, and \emph{eventually collapse} when it holds
at some finite time.
\end{definition}

The relation $\sim_P$ is indeed an equivalence relation. Reflexivity
and symmetry are immediate. For transitivity, suppose $u\sim_P v$
with equality from time $t_1$ onward, and $v\sim_P w$ with equality
from time $t_2$ onward. At time $T=\max\{t_1,t_2\}$, determinism gives
\[
 F_\varphi^T(\pi_uP)=F_\varphi^T(\pi_vP)
                         =F_\varphi^T(\pi_wP),
\]
so $u\sim_P w$. This is the meaning of taking the maximum of
pairwise merging times.

Because $\mathbb B^k$ is finite, there is a single time $s_0\geq0$
such that every pair $u\sim_P v$ has merged by time $s_0$.
Indeed, choose one merging time for each such pair and take their
maximum. Thus every symbol at every time $t\geq s_0$ belongs to
$E_{\sim_P}$, by the evaluation identity (Lemma~\ref{lem:eval}).
More generally, we have the following bound.

\begin{proposition}[Collapse bound]\label{prop:collapsebound}
Let $P$ be a symbolic configuration, $s_0\ge0$, and let $\approx$ be an
equivalence relation on $\mathbb B^k$ such that
$F_\varphi^{s_0}(\pi_uP)=F_\varphi^{s_0}(\pi_vP)$ whenever $u\approx v$.
Then
\begin{equation}\label{eq:collapsebound}
 \mathcal R_\varphi(P)\subseteq E_\approx\cup\mathcal R_{<s_0}(P),
 \qquad
 |\mathcal R_\varphi(P)|\leq
 2^{|\mathbb B^k/\approx|}+|\mathcal R_{<s_0}(P)\setminus E_\approx|.
\end{equation}
In particular, $P$ is incomplete whenever the right-hand side is
strictly less than $2^{2^k}$.
\end{proposition}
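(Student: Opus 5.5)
The plan is to reduce the statement to the invariant-alphabet principle, Proposition~\ref{prop:inv}, applied with $K=E_\approx$ and $s=s_0$. Two facts are needed: that $E_\approx$ is invariant under $\overline\varphi$, and that the configuration at time $s_0$ lies entirely in $E_\approx^{\mathbb Z}$ (or $E_\approx^{\mathbb Z_N}$ on rings).

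The first fact is Proposition~\ref{prop:Esim}. For every equivalence relation, and in particular for $\approx$, we have $\overline\varphi(E_\approx^n)\subseteq E_\approx$. The same proposition also gives $|E_\approx|=2^{|\mathbb B^k/\approx|}$, which we will use in the counting step.

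For the second fact, fix a site $i$ and suppose $u\approx v$. By iterating the evaluation identity, Lemma~\ref{lem:eval}, we get $\pi_u\circ\overline F_\varphi^{\,s_0}=F_\varphi^{s_0}\circ\pi_u$. Hence
\[
 (\overline F_\varphi^{\,s_0}(P))_i(u)=F_\varphi^{s_0}(\pi_uP)_i=F_\varphi^{s_0}(\pi_vP)_i=(\overline F_\varphi^{\,s_0}(P))_i(v),
\]
where the middle equality is the hypothesis on $\approx$. So every cell at time $s_0$ is constant on $\approx$-classes, that is, it belongs to $E_\approx$. Proposition~\ref{prop:inv} then yields the inclusion $\mathcal R_\varphi(P)\subseteq E_\approx\cup\mathcal R_{<s_0}(P)$. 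The ring case is identical, since both cited results hold there.

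For the cardinality bound, write the union as the disjoint union of $E_\approx$ and $\mathcal R_{<s_0}(P)\setminus E_\approx$. This gives
\[
 |\mathcal R_\varphi(P)|\le 2^{|\mathbb B^k/\approx|}+|\mathcal R_{<s_0}(P)\setminus E_\approx|.
\]
If the right-hand side is less than $2^{2^k}=|\mathcal B_k|$, then $\mathcal R_\varphi(P)\neq\mathcal B_k$, so $P$ is incomplete.

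None of these steps is a real obstacle; the only point needing care is the iterated evaluation identity, which follows by a straightforward induction on $s_0$.
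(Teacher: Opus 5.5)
Your proposal is correct and follows essentially the same route as the paper: you use the evaluation identity to show every symbol at time $s_0$ lies in $E_\approx$, combine the invariance of $E_\approx$ (Proposition~\ref{prop:Esim}) with Proposition~\ref{prop:inv} to get the inclusion, and count via $|A\cup B|=|A|+|B\setminus A|$ with $|E_\approx|=2^{|\mathbb B^k/\approx|}$. The paper's proof consists of exactly these steps.
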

\begin{proof}
Let $g$ be the symbol at site $i$ at time $s_0$, and let $u\approx v$.
By Lemma~\ref{lem:eval},
$g(u)=F_\varphi^{s_0}(\pi_uP)_i=F_\varphi^{s_0}(\pi_vP)_i=g(v)$, so every
symbol at time $s_0$ lies in $E_\approx$. Since $E_\approx$ is invariant
(Proposition~\ref{prop:Esim}), Proposition~\ref{prop:inv} gives the
inclusion. The bound follows from $|A\cup B|=|A|+|B\setminus A|$ with
$A=E_\approx$, $B=\mathcal R_{<s_0}(P)$, and $|E_\approx|=2^{|\mathbb B^k/\approx|}$.
\end{proof}

\begin{remark}
The hypothesis holds whenever $\approx$ is the equivalence relation
generated by pairs whose evaluations have merged by time $s_0$: two
evaluations that agree at some time agree at all later times, and
equality of configurations is transitive. Taking $\approx=\sim_P$ and
$s_0$ a common merging time gives the sharpest bound, since
$E_{\sim_P}\subseteq E_\approx$ for every such $\approx$; but the
proposition can be applied as soon as a single collapse has been
observed, without knowing $\sim_P$.
\end{remark}

The term $|\mathcal R_{<s_0}(P)\setminus E_{\approx}|$ cannot generally be replaced by the number of initial
projections: if $s_0>1$, new functions outside $E_{\approx}$ may
appear at intermediate times $1,\ldots,s_0-1$.

\begin{corollary}[One-step collapse bound]\label{cor:one-step}
Let $P$ be a seed and $\approx$ an equivalence relation on $\mathbb B^k$
such that $F_\varphi(\pi_uP)=F_\varphi(\pi_vP)$ whenever $u\approx v$.
Let $J=\{j\in\{1,\ldots,k\}:\mathtt e_j^k\notin E_\approx\}$. Then
\[
 |\mathcal R_\varphi(P)|\leq 2^{|\mathbb B^k/\approx|}+|J|.
\]
\end{corollary}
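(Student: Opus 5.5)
This is the case $s_0=1$ of Proposition~\ref{prop:collapsebound}. The only work is to bound the correction term $|\mathcal R_{<1}(P)\setminus E_\approx|$. The hypothesis $F_\varphi(\pi_uP)=F_\varphi(\pi_vP)$ for $u\approx v$ is exactly the hypothesis of that proposition with $s_0=1$. We therefore obtain
\[
 |\mathcal R_\varphi(P)|\le 2^{|\mathbb B^k/\approx|}+|\mathcal R_{<1}(P)\setminus E_\approx|,
\]
where $\mathcal R_{<1}(P)$ is simply the set of symbols occurring in the initial configuration $P$.

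Since $P$ is a seed, every cell of $P$ lies in $\Sigma_k=\{\mathtt 0,\mathtt 1,\mathtt e_1^k,\dots,\mathtt e_k^k\}$. So $\mathcal R_{<1}(P)\subseteq\Sigma_k$, and the correction term is at most $|\Sigma_k\setminus E_\approx|$.

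The two constant functions belong to $E_\sim$ for every equivalence relation $\sim$, since a constant takes the same value on any pair $u\sim v$. Hence they never contribute to the correction term. Among the projections, the last clause of Proposition~\ref{prop:Esim} says precisely which ones fall outside $E_\approx$: $\mathtt e_j^k\notin E_\approx$ if and only if some pair $u\approx v$ has $u_j\ne v_j$, i.e.\ if and only if $j\in J$. Therefore
\[
 \mathcal R_{<1}(P)\setminus E_\approx\subseteq\{\mathtt e_j^k: j\in J\},
\]
which has cardinality at most $|J|$. Substituting this into the bound above gives the claim.

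There is no real obstacle here. The only points needing care are that the seed hypothesis is what restricts the time-$0$ alphabet to $\Sigma_k$, and that $s_0=1$ is what makes $\mathcal R_{<s_0}(P)$ consist of time-$0$ symbols only. This is exactly where the remark after Proposition~\ref{prop:collapsebound} warns the argument would fail for $s_0>1$. The inclusion may also be strict, since a projection indexed by $J$ need not actually occur in $P$; this only improves the bound.
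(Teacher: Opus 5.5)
Your proposal is correct and follows the paper's own argument: apply Proposition~\ref{prop:collapsebound} with $s_0=1$, and note that the time-$0$ symbols are constants, which lie in $E_\approx$, and projections, of which only those indexed by $J$ lie outside $E_\approx$. Your appeal to Proposition~\ref{prop:Esim} to identify those projections is harmless but redundant, since $J$ is defined as exactly that set of indices.
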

\begin{proof}
Apply Proposition~\ref{prop:collapsebound} with $s_0=1$. The symbols at
time $0$ are constants, which lie in $E_\approx$, and projections, of
which exactly those indexed by $J$ lie outside $E_\approx$.
\end{proof}

\begin{example}[One-step collapse]
\label{ex:col1}
Consider Rule~110 and $P_1=\mathtt{1abc}\in\mathsf{Lin}^\star_3$.
Since $\mathtt 1\notin T_0$ and $\varphi_{110}$ lies in no other maximal
clone, no clone trap applies. Let $u=(0,1,1)$ and $v=(1,0,1)$. Then
$\pi_uP_1$ and $\pi_vP_1$ are the Boolean configurations defined by the
words $1011$ and $1101$ on the zero background
(Definition~\ref{def:cw} over the alphabet $\mathbb B$). They differ only
at sites $1,2$, so their images can differ only at sites $0,\ldots,3$.
At sites $-1,\ldots,4$ they read $010110$ and $011010$, and the
neighborhoods of sites $0,\ldots,3$ are $(010,101,011,110)$ and
$(011,110,101,010)$, respectively; $\varphi_{110}$ takes the value $1$ on
all of them. Hence
$F_{\varphi_{110}}(\pi_uP_1)=F_{\varphi_{110}}(\pi_vP_1)$: the
evaluations at $u$ and $v$ collapse at time $1$.
 
Let $\approx$ be the equivalence relation on $\mathbb B^3$ identifying
only $u$ and $v$. By Proposition~\ref{prop:Esim}, $E_\approx$ is
invariant and $|E_\approx|=2^7=128$; by Lemma~\ref{lem:eval}, every
symbol at time $1$ lies in $E_\approx$. Proposition~\ref{prop:inv} with
$s=1$ gives
$\mathcal R_{\varphi_{110}}(P_1)\subseteq E_\approx\cup
\{\mathtt 0,\mathtt 1,\mathtt e^3_1,\mathtt e^3_2,\mathtt e^3_3\}$.
Since $u,v$ differ exactly in the first two coordinates, only
$\mathtt e^3_1,\mathtt e^3_2$ lie outside $E_\approx$, so
$|\mathcal R_{\varphi_{110}}(P_1)|\le 2^7+2=130$, which is the bound of
Corollary~\ref{cor:one-step} with $|J|=2$. The bound is attained (all
$130$ functions occur by time $50$), hence
$\mathcal R_{\varphi_{110}}(P_1)=E_\approx\cup\{\mathtt e^3_1,\mathtt e^3_2\}$.
\end{example}

\begin{example}[Collapse need not occur in one step]
\label{ex:notonestep}
For Rule~110 on the zero background, the configurations defined by
the words $100111$ and $111001$ have different first images but
identical second images.
In fact, on the interval of sites $-3,\ldots,7$, their first images are,
respectively,
\[
 00110110100\qquad\text{and}\qquad00110101100,
\]
whereas their second images are both
\[
 01111111100.
\]
Outside the displayed interval both second images are zero.
\end{example}

\begin{figure}[ht]
\centering
\begin{tikzpicture}[x=0.42cm,y=0.5cm,
  cell/.style={draw,minimum width=0.42cm,minimum height=0.5cm,
               inner sep=0pt,font=\scriptsize}]
\node[font=\small] at (5.5,1.0) {$c_{\mathtt{100111}}$};
\node[cell,fill=white] at (0.5,0) {0};
\node[cell,fill=white] at (1.5,0) {0};
\node[cell,fill=white] at (2.5,0) {0};
\node[cell,fill=white] at (3.5,0) {1};
\node[cell,fill=gray!35] at (4.5,0) {0};
\node[cell,fill=gray!35] at (5.5,0) {0};
\node[cell,fill=gray!35] at (6.5,0) {1};
\node[cell,fill=gray!35] at (7.5,0) {1};
\node[cell,fill=white] at (8.5,0) {1};
\node[cell,fill=white] at (9.5,0) {0};
\node[cell,fill=white] at (10.5,0) {0};
\node[cell,fill=white] at (0.5,-1) {0};
\node[cell,fill=white] at (1.5,-1) {0};
\node[cell,fill=white] at (2.5,-1) {1};
\node[cell,fill=white] at (3.5,-1) {1};
\node[cell,fill=white] at (4.5,-1) {0};
\node[cell,fill=white] at (5.5,-1) {1};
\node[cell,fill=gray!35] at (6.5,-1) {1};
\node[cell,fill=gray!35] at (7.5,-1) {0};
\node[cell,fill=white] at (8.5,-1) {1};
\node[cell,fill=white] at (9.5,-1) {0};
\node[cell,fill=white] at (10.5,-1) {0};
\node[cell,fill=white] at (0.5,-2) {0};
\node[cell,fill=white] at (1.5,-2) {1};
\node[cell,fill=white] at (2.5,-2) {1};
\node[cell,fill=white] at (3.5,-2) {1};
\node[cell,fill=white] at (4.5,-2) {1};
\node[cell,fill=white] at (5.5,-2) {1};
\node[cell,fill=white] at (6.5,-2) {1};
\node[cell,fill=white] at (7.5,-2) {1};
\node[cell,fill=white] at (8.5,-2) {1};
\node[cell,fill=white] at (9.5,-2) {0};
\node[cell,fill=white] at (10.5,-2) {0};
\node[font=\tiny] at (0.5,-2.75) {$-3$};
\node[font=\tiny] at (1.5,-2.75) {$-2$};
\node[font=\tiny] at (2.5,-2.75) {$-1$};
\node[font=\tiny] at (3.5,-2.75) {$0$};
\node[font=\tiny] at (4.5,-2.75) {$1$};
\node[font=\tiny] at (5.5,-2.75) {$2$};
\node[font=\tiny] at (6.5,-2.75) {$3$};
\node[font=\tiny] at (7.5,-2.75) {$4$};
\node[font=\tiny] at (8.5,-2.75) {$5$};
\node[font=\tiny] at (9.5,-2.75) {$6$};
\node[font=\tiny] at (10.5,-2.75) {$7$};
\node[font=\small] at (18.0,1.0) {$c_{\mathtt{111001}}$};
\node[cell,fill=white] at (13.0,0) {0};
\node[cell,fill=white] at (14.0,0) {0};
\node[cell,fill=white] at (15.0,0) {0};
\node[cell,fill=white] at (16.0,0) {1};
\node[cell,fill=gray!35] at (17.0,0) {1};
\node[cell,fill=gray!35] at (18.0,0) {1};
\node[cell,fill=gray!35] at (19.0,0) {0};
\node[cell,fill=gray!35] at (20.0,0) {0};
\node[cell,fill=white] at (21.0,0) {1};
\node[cell,fill=white] at (22.0,0) {0};
\node[cell,fill=white] at (23.0,0) {0};
\node[cell,fill=white] at (13.0,-1) {0};
\node[cell,fill=white] at (14.0,-1) {0};
\node[cell,fill=white] at (15.0,-1) {1};
\node[cell,fill=white] at (16.0,-1) {1};
\node[cell,fill=white] at (17.0,-1) {0};
\node[cell,fill=white] at (18.0,-1) {1};
\node[cell,fill=gray!35] at (19.0,-1) {0};
\node[cell,fill=gray!35] at (20.0,-1) {1};
\node[cell,fill=white] at (21.0,-1) {1};
\node[cell,fill=white] at (22.0,-1) {0};
\node[cell,fill=white] at (23.0,-1) {0};
\node[cell,fill=white] at (13.0,-2) {0};
\node[cell,fill=white] at (14.0,-2) {1};
\node[cell,fill=white] at (15.0,-2) {1};
\node[cell,fill=white] at (16.0,-2) {1};
\node[cell,fill=white] at (17.0,-2) {1};
\node[cell,fill=white] at (18.0,-2) {1};
\node[cell,fill=white] at (19.0,-2) {1};
\node[cell,fill=white] at (20.0,-2) {1};
\node[cell,fill=white] at (21.0,-2) {1};
\node[cell,fill=white] at (22.0,-2) {0};
\node[cell,fill=white] at (23.0,-2) {0};
\node[font=\tiny] at (13.0,-2.75) {$-3$};
\node[font=\tiny] at (14.0,-2.75) {$-2$};
\node[font=\tiny] at (15.0,-2.75) {$-1$};
\node[font=\tiny] at (16.0,-2.75) {$0$};
\node[font=\tiny] at (17.0,-2.75) {$1$};
\node[font=\tiny] at (18.0,-2.75) {$2$};
\node[font=\tiny] at (19.0,-2.75) {$3$};
\node[font=\tiny] at (20.0,-2.75) {$4$};
\node[font=\tiny] at (21.0,-2.75) {$5$};
\node[font=\tiny] at (22.0,-2.75) {$6$};
\node[font=\tiny] at (23.0,-2.75) {$7$};
\node[font=\small,anchor=east] at (-0.2,0) {$t=0$};
\node[font=\small,anchor=east] at (-0.2,-1) {$t=1$};
\node[font=\small,anchor=east] at (-0.2,-2) {$t=2$};
\end{tikzpicture}
\caption{The collapse of Example~\ref{ex:notonestep}. Rule~110 applied
to $c_{\mathtt{100111}}$ (left) and $c_{\mathtt{111001}}$ (right), sites
$-3,\dots,7$; all other cells are $0$ at the times shown. Shaded cells
are those where the two evolutions differ. The first images differ at
sites $3$ and $4$, the second images coincide.}
\label{fig:collapse}
\end{figure}

\begin{example}[A collapse at a later time]
\label{ex:col7}
For Rule~110 and $P_4=\mathtt{1000abc}\in\mathsf{Lin}^\star_3$, let
$u=(0,0,1)$ and $v=(1,1,1)$, so that $\pi_uP_4$ and $\pi_vP_4$ are
defined by the words $1000001$ and $1000111$. A direct computation shows
that
$F^{\,t}_{\varphi_{110}}(\pi_uP_4)\neq F^{\,t}_{\varphi_{110}}(\pi_vP_4)$
for $0\le t\le 6$, while equality holds for $t=7$. With $\approx$
identifying only $u$ and $v$, Propositions~\ref{prop:Esim}
and~\ref{prop:inv} (with $s=7$) give
\[
 \mathcal R_{\varphi_{110}}(P_4)\subseteq E_\approx\cup\mathcal R_{<7}(P_4).
\]
Here $\mathcal R_{<7}(P_4)\setminus E_\approx$ has $8$ elements: the
projections $\mathtt e^3_1,\mathtt e^3_2$ and six functions appearing
for the first time at times $1\le t\le 6$. Thus
$|\mathcal R_{\varphi_{110}}(P_4)|\le 128+8=136$. The bound is attained
(all $136$ functions occur by time $2394$), so
$\mathcal R_{\varphi_{110}}(P_4)=E_\approx\cup\mathcal R_{<7}(P_4)$.
Note that $|\mathcal R_{\varphi_{110}}(P_4)|=136>2^7+2$: the second term
in \eqref{eq:collapsebound} cannot be replaced by the number of
projections lying outside $E_\approx$.
\end{example}

\subsection{Non-injectivity of rule 110 and an obstruction for block seeds}
\label{ss:rule110collision}

The preceding results apply to any Boolean rule once a collision has
been found. The next theorem determines the exact range of injectivity
on finite binary words for Rule~110 and supplies the collisions
needed for a concrete seed obstruction.

For $w=w_0\cdots w_{\ell-1}\in\mathbb B^\ell$, write $c_w$ for the
bi-infinite Boolean configuration with $(c_w)_i=w_i$ for
$0\leq i<\ell$ and $(c_w)_i=0$ otherwise (as in
Definition~\ref{def:cw}, applied to the Boolean alphabet).

\begin{theorem}[Non-injectivity range of rule 110]\label{thm:inj110}
The map
\[
 \mathbb B^\ell\longrightarrow\mathbb B^{\mathbb Z},
 \qquad w\longmapsto F_{\varphi_{110}}(c_w)
\]
is injective if and only if $\ell\leq3$.
For each $\ell\geq4$, the following pairs have distinct words but
identical first images:
\[
\begin{array}{llll}
 \ell=4: &w=1011, &w'=1101,\\[2pt]
 \ell=5: &w=10011,&w'=11101,\\[2pt]
 \ell\geq6:&w=1011\,0^{\ell-5}1,&w'=1101\,0^{\ell-5}1.
\end{array}
\]
In every pair both words begin and end with $1$.
\end{theorem}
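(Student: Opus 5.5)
The plan has two halves: verify the listed collisions for $\ell\ge4$, then prove injectivity for $\ell\le3$. Throughout I use the Rule~110 table: $\varphi_{110}$ is $0$ exactly on $000$, $100$ and $111$, and $1$ on the other five neighborhoods. Equivalently, the new value at a cell with neighborhood $(x,y,z)$ is $y\oplus z\oplus yz\oplus xyz$.

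\emph{The collisions.} For $\ell=4$ the computation is exactly Example~\ref{ex:col1}. There the words $1011$ and $1101$ differ only at sites $1,2$, and all affected neighborhoods map to $1$. For $\ell=5$ I compute directly. The words $10011$ and $11101$ differ at sites $1,2,3$, so only sites $0,\dots,4$ of the images can differ. I read off the neighborhoods at sites $-1,\dots,5$ and check that the two images agree; this is a finite table lookup. For $\ell\ge6$ I use locality. The pair $1011\,0^{\ell-5}1$ and $1101\,0^{\ell-5}1$ differs only at sites $1,2$. The neighborhoods of sites $0,1,2,3$ involve only sites $-1,\dots,4$. For $\ell\ge6$, site $4$ is $0$ and site $-1$ is background $0$. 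So these neighborhoods are the same as in the $\ell=4$ case, namely $010,101,011,110$ versus $011,110,101,010$, all mapping to $1$. All other sites have identical neighborhoods in the two words. The trailing $1$ only guarantees that both words have length exactly $\ell$ and end in $1$. The remark that both words begin and end with $1$ is immediate by inspection.

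\emph{Injectivity for $\ell\le3$.} Given $w\in\mathbb B^\ell$, I would recover $w$ from $F(c_w)$ by scanning from the right. Let $j$ be the last index with $w_j=1$. Then site $j$ has neighborhood $(w_{j-1},1,0)$, which maps to $1$. Site $j+1$ has neighborhood $(1,0,0)$, which maps to $0$. All sites beyond $j+1$ see $000$ and map to $0$. Hence the rightmost $1$ of $F(c_w)$ sits exactly at site $j$. This gives $j$, and $F(c_w)=0$ iff $w=0$.

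Next, the value at site $j-1$ is $\varphi(w_{j-2},w_{j-1},1)$, which equals $1$ unless $w_{j-2}=w_{j-1}=1$. Moreover, if $w_{j-1}=0$ then the neighborhood is $(w_{j-2},0,1)$, which maps to $1$. So the digit is ambiguous only when the image at site $j-1$ is $1$. For $\ell\le3$ this leaves just a handful of ambiguities. Rather than build a general decoder, I would simply tabulate all $2+4+8=14$ words and their images on sites $-1,\dots,\ell$, and confirm that the images are distinct. Since the map respects word length $\ell$, separate injectivity for each fixed $\ell\le3$ is what is required. This is a finite check of at most $8$ images per length.

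\emph{Main obstacle.} There is no real conceptual difficulty. The statement is essentially a finite verification plus a locality argument. The only delicate point is the $\ell\ge6$ family: I must be sure that the zero block $0^{\ell-5}$, which is nonempty since $\ell\ge6$, isolates the differing region. In particular the trailing $1$ must not lie in the neighborhood of any site whose neighborhood differs between the two words. This requires the differing sites $1,2$ to lie at distance at least $2$ from position $\ell-1$, which holds because $\ell-1\ge5$.
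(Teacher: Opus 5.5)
Your proposal is correct and follows the paper's proof. The paper likewise checks all $2^\ell$ words for $\ell\le3$. For the listed pairs it compares the neighborhoods of the few sites whose outputs can differ, and the $\ell\ge6$ case reduces to the $\ell=4$ computation because sites $-1$ and $4$ are zero.

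One small slip in your last paragraph: the trailing $1$ must avoid sites $-1,\dots,4$, so its distance from the differing site $2$ must be at least $3$, not $2$. At distance exactly $2$, i.e.\ $\ell=5$, it would sit in the neighborhood of site $3$. Your bound $\ell-1\ge5$ and your explicit neighborhood check already give the correct condition.
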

\begin{proof}
For $\ell=1,2,3$, injectivity follows by checking the
$2^\ell$ words; the zero background is included in computing the
images at sites adjacent to the word.

For $\ell=4$ and $\ell\geq6$, the two words differ only at sites
$1$ and $2$. Their images can therefore differ only at sites
$0,1,2,3$. At input sites $-1,0,1,2,3,4$ the two configurations read
\[
  (0,1,0,1,1,0)\qquad\text{and}\qquad(0,1,1,0,1,0).
\]
The four relevant neighborhoods are, respectively,
\[
 (010,101,011,110)\qquad\text{and}\qquad(011,110,101,010).
\]
Rule~110 outputs $1$ on all four triples; the images coincide
there and elsewhere.

For $\ell=5$, the words $10011$ and $11101$ differ at sites
$1,2,3$, so only output sites $0,\ldots,4$ need to be checked.
With a zero on either side, the neighborhoods yield the same
output word $10111$ at these sites. Thus their complete images
coincide.
\end{proof}

\begin{corollary}[Block-seed obstruction]\label{cor:imposs110}
Let $k\geq4$ and let $P$ be a simple seed on the line for lifted rule
$110$ containing the consecutive block $\mathtt e^k_1\cdots \mathtt e^k_k$.
Assume that the cell immediately to the left of the block is
$\mathtt 0$, and, when $k=4$, that the cell immediately to its right
is also $\mathtt 0$. Then $P$ is incomplete. More precisely,
\[
 |\mathcal R_{\varphi_{110}}(P)|\leq 2^{2^k-1}+2
 <2^{2^k}.
\]
\end{corollary}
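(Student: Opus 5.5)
The plan is to apply the One-step collapse bound (Corollary~\ref{cor:one-step}) to a single pair of inputs $u\neq v$. These inputs are chosen so that their evaluations $\pi_uP$ and $\pi_vP$ have identical images after one step of Rule~110. The collisions come from Theorem~\ref{thm:inj110}. Let the block occupy sites $p,\ldots,p+k-1$, with $P_{p-1}=\mathtt 0$.

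For $k\ge5$ we use the pair $1011\,0^{\ell-5}1$ and $1101\,0^{\ell-5}1$ with $\ell=k+1$. We place the word on sites $p-1,\ldots,p+k-1$, so $\ell=k+1\ge6$. Its leading $1$ would sit at site $p-1$, but that cell is $\mathtt 0$. So we shift the idea instead: take the pair $w=1011\,0^{m}1$, $w'=1101\,0^m1$ and locate its first letter at site $p$. The required input is then $u=(1,0,1,1,0,\ldots,0,u_k)$ and $v=(1,1,0,1,0,\ldots)$ with the same tails, and one only needs $u,v$ to differ in coordinates $2,3$.

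The proof of Theorem~\ref{thm:inj110} shows the collision is \emph{local}. Two configurations that read $010110$ versus $011010$ on six consecutive sites, and agree elsewhere, have equal images. So the cleanest choice is as follows. Take $u$ and $v$ to agree everywhere except at block positions $2,3$. Set $u_1=1$, $(u_2,u_3)=(0,1)$, $(v_2,v_3)=(1,0)$, and $u_4=v_4=1$. Require the cell at block position $5$ to be $0$: either $u_5=v_5=0$ when $k\ge5$, or the right neighbour $\mathtt 0$ when $k=4$, which is exactly the extra hypothesis. Together with $P_{p-1}=\mathtt 0$, the six sites $p-1,\ldots,p+4$ then read $010110$ and $011010$. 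The remaining coordinates are arbitrary but common to $u$ and $v$. All other cells of $P$ are constants or projections $\mathtt e_j^k$ with $j\neq 2,3$, since the seed is simple. Hence $\pi_uP$ and $\pi_vP$ differ only at sites $p+1,p+2$. Their images can differ only at sites $p,\ldots,p+3$, and the four neighbourhood computations of Theorem~\ref{thm:inj110} give output $1$ in both cases. Therefore $F_{\varphi_{110}}(\pi_uP)=F_{\varphi_{110}}(\pi_vP)$.

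Now let $\approx$ identify only $u$ and $v$. Then $|\mathbb B^k/\approx|=2^k-1$. The projections outside $E_\approx$ are exactly those where $u_j\neq v_j$, namely $j=2,3$, so $|J|=2$. Corollary~\ref{cor:one-step} yields $|\mathcal R_{\varphi_{110}}(P)|\le 2^{2^k-1}+2$. This is $<2^{2^k}$ because $2^{2^k-1}>2$ for $k\ge 2$.

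The main point to verify carefully is the locality claim. Only sites $p+1,p+2$ differ between the two evaluations, so the neighbourhoods of sites $p,\ldots,p+3$ are exactly the triples listed in Theorem~\ref{thm:inj110}; all other output sites have identical neighbourhoods. The role of the hypotheses is precisely to pin down the outer cells $p-1$ and $p+4$ to $0$. For $k\ge5$ the cell $p+4$ is the projection $\mathtt e_5^k$, and we control it by choosing $u_5=v_5=0$. This is why the right-hand hypothesis is needed only when $k=4$.
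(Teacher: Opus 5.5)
Your proof is correct and follows the paper's argument. You use the same pair $u,v$ differing only in coordinates $2,3$; the paper sets every coordinate beyond the fourth to $0$, whereas you allow arbitrary common values subject to $u_5=v_5=0$, which is harmless because those cells lie outside the windows of sites $p,\ldots,p+3$. The rest is the same four-neighbourhood check from Theorem~\ref{thm:inj110} and Corollary~\ref{cor:one-step} with $|J|=2$.

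Delete the two abortive openings, since your final choice alone does the job. These are the length-$(k+1)$ word placed at $p-1$, and the choice ending in $u_k=1$, which for $k=5$ would force $u_5=1$ and destroy the collision.
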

\begin{proof}
Take
\[
 u=(1,0,1,1,0^{k-4}),\qquad
 v=(1,1,0,1,0^{k-4}).
\]
The evaluated Boolean configurations $\pi_uP$ and $\pi_vP$
differ only at the cells occupied by $\mathtt e^k_2$ and $\mathtt e^k_3$.
Near those cells they have, respectively, the patterns
$010110$ and $011010$, with the first and last displayed bits
at the sites immediately outside the four-letter subblock.
Equivalently, the four-letter subblocks are $1011$ and $1101$,
with a zero immediately before and after. The required right zero 
comes from $\mathtt e^k_5,$ where $u_5=v_5=0,$ when $k\geq5$, and from the hypothesis when $k=4$.
By the local calculation in Theorem~\ref{thm:inj110},
$F_{110}(\pi_uP)=F_{110}(\pi_vP)$, even if the two configurations
have other common nonzero cells elsewhere.

Let $\approx$ identify only the pair $u,v$. Then $|\mathbb B^k/\mathord\approx|=2^k-1$.
At all positive times every realized symbol lies in $E_\approx$.
At time zero only $\mathtt e^k_2$ and $\mathtt e^k_3$ fail to lie in $E_\approx$.
Corollary~\ref{cor:one-step}, applied to $\approx$, gives
\[
 |\mathcal R_{\varphi_{110}}(P)|\leq
 |E_\approx|+2=2^{2^k-1}+2.
\]
\end{proof}

\begin{example}[A block seed]
\label{ex:block}
Let $k=4$ and $P=\mathtt{10abcd}\in\mathsf{Lin}^\star_4$. The cell
immediately to the left of the block is $\mathtt 0$, and so is the cell
immediately to its right (background). Corollary~\ref{cor:imposs110}
applies: for $u=(1,0,1,1)$ and $v=(1,1,0,1)$ the evaluations are defined
by the words $101011$ and $101101$, they collapse at time $1$, and
$|\mathcal R_{\varphi_{110}}(P)|\le 2^{15}+2=32770$. A simulation up to
time $2^{14}$ produces $21737$ distinct functions, the last of them at
time $13210$; hence
$21737\le|\mathcal R_{\varphi_{110}}(P)|\le 32770$. The corollary does
not apply to the seed $\mathtt{1abcd1}$ of the table below, whose block
is preceded by $\mathtt 1$.
\end{example}

The hypothesis on the context cannot simply be dropped, since the collapse
of $1011$ and $1101$ requires a $\mathtt 0$ on both sides; empirically, however, block seeds collapse in
other ways as well.  For $k=4$ every block seed we tested collapses, with
between $7$ and $13$ surviving classes out of $16$:
\begin{center}
\begin{tabular}{lrrrrrr}
\toprule
seed & \texttt{1abcd1} & \texttt{1abcd} & \texttt{abcd1} &
\texttt{11abcd1} & \texttt{1abcd11} & \texttt{101abcd1}\\
\midrule
classes & 11 & 13 & 13 & 12 & 13 & 7\\
$|\mathcal R_{\le 2^{14}}|$ & 2054 & 5553 & 5727 & 4108 & 8069 & 217\\
\bottomrule
\end{tabular}
\end{center}
and for $k=5$, block seeds preceded by $\mathtt 0$ retain between $20$ and
$26$ classes out of $32$.  Over all simple seeds on the line of Rule~110 with
$k=4$ that we tested, including $d$-spread ones, the largest realized set
has $22871$ elements, against a possible $65536$.

\subsection{Spatial confinement: walls in the light cone}
\label{ss:walls}

The obstructions of the previous subsections confine \emph{every} cell of
a row to an invariant alphabet. We now describe an obstruction in which
only part of the light cone is confined, the rest being separated from it
by a wall whose content is known for all times.

Throughout this subsection $\varphi$ has arity $n\ge2$,
$\gamma\in\{\mathtt 0,\mathtt 1\}$ is a constant symbol quiescent for
$\overline\varphi$ (equivalently, its value is quiescent for $\varphi$),
$P\in\mathsf{Lin}_k$ has background $\gamma$, and
$r\in\mathbb Z$ satisfies $P_i=\gamma$ for all $i>r$. We write
$P^t=\overline F^{\,t}_\varphi(P)$ and
\[
E_t=r+\ell_n t .
\]
By Remark~\ref{lem:fininv}(i) and induction on $t$, $P^t_i=\gamma$ for every
$i>E_t$.

\begin{definition}[Light-cone block]
\label{def:lc-block}
For $m\ge1$ and $t\ge0$ put
$\beta^m_t=(P^t_{E_t-d})_{0\le d<m}\in\mathcal B_k^m$, and define
$\Phi_m:\mathcal B_k^m\to\mathcal B_k^m$ by
\[
\Phi_m(\beta)_d=\overline\varphi\bigl(\beta_d,\beta_{d-1},\ldots,\beta_{d-n+1}\bigr),
\qquad 0\le d<m,
\]
with the convention $\beta_e=\gamma$ for $e<0$.
\end{definition}

\begin{lemma}[Autonomy of the light-cone block]
\label{lem:lc-block}
For every $m\ge1$ and $t\ge0$ one has $\beta^m_{t+1}=\Phi_m(\beta^m_t)$.
Consequently there are integers $\tau\ge0$ and $p\ge1$ with
$\tau+p\le|\mathcal B_k|^m$ and $\beta^m_{\tau+p}=\beta^m_\tau$, and
$\{\beta^m_t:t\ge0\}=\{\beta^m_0,\ldots,\beta^m_{\tau+p-1}\}$.
\end{lemma}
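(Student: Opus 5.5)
The plan is to verify the recursion $\beta^m_{t+1}=\Phi_m(\beta^m_t)$ directly from the definition of the global map, and then to get the eventual periodicity by the pigeonhole argument of Remark~\ref{lem:eventper}(ii), applied to the finite set $\mathcal B_k^m$.

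\textbf{Step 1: the recursion.} Fix $t\ge0$ and $0\le d<m$. The $d$-th entry of $\beta^m_{t+1}$ is $P^{t+1}_i$ with
\[
i=E_{t+1}-d=E_t+\ell_n-d .
\]
By the definition of $\overline F_\varphi$, this value is $\overline\varphi$ applied to the window $P^t_{i-\ell_n},\dots,P^t_{i+\rho_n}$. These cells are $P^t_{E_t-d+j}$ for $j=0,\dots,n-1$, read from left to right, because $\ell_n+\rho_n=n-1$. Writing $e=d-j$, the $j$-th argument is $P^t_{E_t-e}$, with $e$ running through $d,d-1,\dots,d-n+1$.

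There are two cases for the index $e$.
\begin{itemize}
\item If $0\le e\le d<m$, then $P^t_{E_t-e}=(\beta^m_t)_e$.
\item If $e<0$, the site $E_t-e$ lies strictly to the right of $E_t$. Then $P^t_{E_t-e}=\gamma$ by the observation preceding Definition~\ref{def:lc-block}, namely Remark~\ref{lem:fininv}(i) iterated using the quiescence of $\gamma$. This is exactly the convention $\beta_e=\gamma$ in $\Phi_m$.
\end{itemize}
Hence $(\beta^m_{t+1})_d=\overline\varphi(\beta_d,\ldots,\beta_{d-n+1})$ with $\beta=\beta^m_t$, that is, $\beta^m_{t+1}=\Phi_m(\beta^m_t)$.

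The main point needing care is this index bookkeeping, namely that the window of the rewritten cell $E_{t+1}-d$ never reaches indices $e\ge m$. This holds because $e\le d$: the window of any cell in the block extends only to the left by at most $\ell_n$ relative to the shifted edge. Consequently the block evolves autonomously and needs no information from further left. The case $\ell_n=0$ is covered by the same computation.

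\textbf{Step 2: eventual periodicity.} By Step 1, $(\beta^m_t)_{t\ge0}$ is the orbit of $\beta^m_0$ under the self-map $\Phi_m$ of the finite set $\mathcal B_k^m$, which has $|\mathcal B_k|^m$ elements. Among $\beta^m_0,\dots,\beta^m_{|\mathcal B_k|^m}$ two terms must coincide. Take the first repetition $\beta^m_{\tau+p}=\beta^m_\tau$ with $p\ge1$. Then the terms $\beta^m_0,\dots,\beta^m_{\tau+p-1}$ are pairwise distinct, so $\tau+p\le|\mathcal B_k|^m$.

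Determinism then gives $\beta^m_{t+p}=\beta^m_t$ for all $t\ge\tau$. By induction on $t$, every $\beta^m_t$ therefore equals some $\beta^m_s$ with $s<\tau+p$. This yields the stated equality of sets.
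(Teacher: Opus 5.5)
Your proof is correct and follows the paper's argument. The paper uses the same index computation to show that the window of $E_{t+1}-d$ consists of the cells at distances $d,d-1,\ldots,d-n+1$ from $E_t$, with negative distances carrying $\gamma$, and then applies the pigeonhole argument of Remark~\ref{lem:eventper}(ii) to the finite set $\mathcal B_k^m$. Your explicit choice of the first repetition makes the bound $\tau+p\le|\mathcal B_k|^m$ slightly more transparent than the paper's brief appeal to that remark.
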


\begin{proof}
Let $0\le d<m$ and $i=E_{t+1}-d$. The window of $i$ consists of the cells
$i-\ell_n+(j-1)$, $j=1,\ldots,n$, and
\[
E_t-\bigl(i-\ell_n+j-1\bigr)=E_t-E_{t+1}+d+\ell_n-j+1=d-(j-1).
\]
Hence $P^{t+1}_i=\overline\varphi(P^t_{E_t-d},P^t_{E_t-d+1},\ldots,P^t_{E_t-d+n-1})$,
and $P^t_{E_t-e}=\gamma$ for $e<0$. This is $\Phi_m(\beta^m_t)_d$. The
second assertion follows from the first as in Remark~\ref{lem:eventper}(ii),
since $\mathcal B_k^m$ is finite.
\end{proof}

Thus the right end of the light cone, seen in coordinates moving with
$E_t$, evolves autonomously: no information enters it from the left.

\begin{theorem}[Wall certificate]
\label{thm:wall}
Let $K\subseteq\mathcal B_k$ satisfy $\overline\varphi(K^n)\subseteq K$ and
$\gamma\in K$. Let $w\ge0$, put $m=w+n-1$, and let $\tau,p$ be as in
Lemma~\ref{lem:lc-block}. Assume
\begin{enumerate}
\item[(W1)] $\beta^m_t(d)\in K$ for $0\le t<\tau+p$ and $w\le d<m$
(the \emph{wall});
\item[(W2)] $P_i\in K$ for every $i\le E_0-m$.
\end{enumerate}
Then $P^t_i\in K$ for all $t\ge0$ and $i\le E_t-w$, and
\[
\mathcal R_\varphi(P)\subseteq K\cup S_w,\qquad
S_w=\{\beta^m_t(d):\ 0\le t<\tau+p,\ 0\le d<w\}.
\]
\end{theorem}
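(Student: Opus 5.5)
We prove the first assertion by induction on $t$, in the moving coordinates $d=E_t-i$. The claim to establish is that $P^t_{E_t-d}\in K$ for every $d\ge w$.

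\emph{Base case} ($t=0$). For $w\le d<m$ this is (W1) with $t=0$. For $d\ge m$ we have $i=E_0-d\le E_0-m$, so it is (W2).

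\emph{Wall for all times.} First we extend (W1) from $0\le t<\tau+p$ to every $t\ge0$. By Lemma~\ref{lem:lc-block} the sequence $\beta^m_t$ is eventually periodic: $\beta^m_{t+p}=\beta^m_t$ for $t\ge\tau$. Hence every block $\beta^m_t$ coincides with some $\beta^m_{t'}$ with $t'<\tau+p$, and therefore $\beta^m_t(d)\in K$ for all $t\ge0$ and $w\le d<m$.

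\emph{Inductive step.} Assume the claim at time $t$ and take $i=E_{t+1}-d$ with $d\ge w$.
\begin{itemize}
\item If $d<m$, the cell lies in the wall, which we have just shown is in $K$ at every time.
\item If $d\ge m$, the computation in the proof of Lemma~\ref{lem:lc-block} shows that the window of $i$ at time $t$ consists of the cells $E_t-e$ with $e=d,d-1,\ldots,d-n+1$. Each of these satisfies $e\ge d-n+1\ge m-n+1=w$, so all $n$ arguments lie in $K$ by the induction hypothesis. Invariance $\overline\varphi(K^n)\subseteq K$ then gives $P^{t+1}_i\in K$.
\end{itemize}
This closes the induction. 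The choice $m=w+n-1$ is precisely what makes the window of a cell just beyond the block reach back no further than depth $w$.

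\emph{The inclusion for $\mathcal R_\varphi(P)$.} Every cell with $i>E_t$ equals $\gamma\in K$, by Remark~\ref{lem:fininv}(i). Every cell with $i\le E_t-w$ lies in $K$ by the first assertion. The remaining cells are $\beta^m_t(d)$ with $0\le d<w$, and by Lemma~\ref{lem:lc-block} these values range only over times $t<\tau+p$, so they belong to $S_w$.

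The only mildly delicate point is the index bookkeeping: checking that, in moving coordinates, the window offsets are exactly $d,\ldots,d-n+1$, which is already done in the lemma, and that the convention $\beta_e=\gamma$ for $e<0$ is consistent with $\gamma\in K$.
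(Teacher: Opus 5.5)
Your proposal is correct and follows the paper's proof essentially step for step. You extend (W1) to all times via Lemma~\ref{lem:lc-block}, then induct on $t$ in the comoving coordinate $d=E_t-i$, using (W1)/(W2) for the base case and the wall/invariance dichotomy $d<m$ versus $d\ge m$ for the step. The inclusion then follows by splitting cells into $i>E_t$, $E_t-w<i\le E_t$, and $i\le E_t-w$.
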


\begin{proof}
By Lemma~\ref{lem:lc-block}, (W1) holds for every $t\ge0$. We prove by
induction on $t$ that $P^t_i\in K$ whenever $d=E_t-i\ge w$. For $t=0$ this
is (W1) if $d<m$ and (W2) if $d\ge m$. Assume it for $t$ and let
$d=E_{t+1}-i\ge w$. If $d<m$, then $P^{t+1}_i=\beta^m_{t+1}(d)\in K$ by (W1).
If $d\ge m$, the computation in the proof of Lemma~\ref{lem:lc-block} shows
that the $n$ cells of the window of $i$ lie at distances
$d,d-1,\ldots,d-n+1\ge m-n+1=w$ from $E_t$; they belong to $K$ by the
induction hypothesis, and so does $P^{t+1}_i$ by invariance.

For the inclusion, a cell $(i,t)$ with $i>E_t$ carries $\gamma\in K$, a
cell with $E_t-w<i\le E_t$ carries $\beta^m_t(E_t-i)\in S_w$, and all other
cells carry symbols of $K$.
\end{proof}

\begin{remark}
\label{rem:wall}
\begin{enumerate}
\item[(i)] \emph{Effectivity.} Hypothesis (W2) concerns $P$ only, and (W1)
concerns the finitely many states $\beta^m_0,\ldots,\beta^m_{\tau+p-1}$,
which are obtained by iterating $\Phi_m$ until the first repetition. Both
are therefore decidable, and $S_w$ is computable.
\item[(ii)] \emph{The left end.} Let $q\in\mathbb Z$ with $P_i=\gamma$ for
$i<q$, and put $L_t=q-\rho_n t$. The same computation shows that
$(P^t_{L_t+e})_{0\le e<m}$ evolves by
$\beta\mapsto\bigl(\overline\varphi(\beta_{e-n+1},\ldots,\beta_e)\bigr)_e$,
and Theorem~\ref{thm:wall} holds with the roles of the two ends exchanged,
for every $n$. Since both inclusions hold, $\mathcal R_\varphi(P)$ is
contained in their intersection.
\item[(iii)] \emph{Certificates at time $s$.} Applying the theorem to
$P^s\in\mathsf{Lin}_k$ gives
$\mathcal R_\varphi(P)\subseteq\mathcal R_{<s}(P)\cup K\cup S_w(P^s)$.
\item[(iv)] \emph{The standard choice of $K$.} For $V\subseteq\{1,\ldots,k\}$
let $\mathcal B_k[V]$ be the set of $g\in\mathcal B_k$ that depend
essentially only on variables with index in $V$. Then
$\mathcal B_k[V]=E_{\approx_V}$, where $u\approx_V v$ if and only if $u_j=v_j$
for all $j\in V$; hence $\mathcal B_k[V]$ is invariant by
Proposition~\ref{prop:Esim}, it contains the constants, and
$|\mathcal B_k[V]|=2^{2^{|V|}}$. With $K=\mathcal B_k[V]$, hypothesis (W2)
says that no cell at distance $\ge m$ from $E_0$ depends on a variable
outside $V$, and the theorem gives
\[
|\mathcal R_\varphi(P)|\le 2^{2^{|V|}}+|S_w\setminus\mathcal B_k[V]|.
\]
\item[(v)] \emph{Comparison with Proposition~\ref{prop:inv}.} In
Theorem~\ref{thm:wall} the set $K$ need not contain the row: the cells at
distance $<w$ from $E_t$ may lie outside $K$ at every time. The confinement
is spatial rather than alphabetic, and the next example shows that it can
succeed where no invariant alphabet does.
\end{enumerate}
\end{remark}

\begin{proposition}[An incomplete seed without invariant-alphabet obstruction]
\label{prop:rule30-wall}
Let $k=3$, $\varphi=\varphi_{30}$, and let $P\in\mathsf{Lin}^\star_3$ be the
seed $\mathtt{ab0001c}$ on background $\mathtt 0$, with $P_0=\mathtt e_1^3$,
$P_1=\mathtt e_2^3$, $P_5=\mathtt 1$, $P_6=\mathtt e_3^3$ and
$P_i=\mathtt 0$ otherwise. Then:
\begin{enumerate}
\item[(i)] $\mathcal R_\varphi(P)=\mathcal B_3[\{1,2\}]\cup\{\mathtt e_3^3,\lnot\mathtt e_3^3\}$,
so $|\mathcal R_\varphi(P)|=18$; moreover
$\mathcal R_\varphi(P)=\mathcal R_{\le8}(P)$;
\item[(ii)] for every $s\ge0$, the only $\overline\varphi$-invariant subset of
$\mathcal B_3$ containing all symbols of $\overline F^{\,s}_\varphi(P)$
(background included) is $\mathcal B_3$ itself.
\end{enumerate}
In particular $P$ is incomplete, but for no $s\ge0$ and no admissible $K$
does Proposition~\ref{prop:inv} prove it.
\end{proposition}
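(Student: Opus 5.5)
Part (i) will be proved with the wall certificate, Theorem~\ref{thm:wall}, applied at the right end of the light cone with $K=\mathcal B_3[\{1,2\}]$. The seed has $r=6$, $n=3$, $\ell_3=1$, so $E_t=6+t$.

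\textbf{Right-end computation.} Rule 30 reads $\varphi(x,y,z)=x\oplus(y\vee z)$. Since $\Phi_m$ reads $(\beta_d,\beta_{d-1},\beta_{d-2})$ as $(x,y,z)$, we get
\[
\Phi_m(\beta)_d=\beta_d\oplus(\beta_{d-1}\vee\beta_{d-2}).
\]
The initial block is $\beta_0=(\mathtt e_3^3,\mathtt 1,\mathtt 0,\mathtt 0,\mathtt 0,\mathtt e_2^3,\mathtt e_1^3,\ldots)$. I expect $\beta_t(0)=\mathtt e_3^3$ for every $t$. Then $\beta_t(1)$ is $\mathtt 1\oplus\mathtt e_3^3$ and next $\lnot\mathtt e_3^3\oplus\mathtt e_3^3=\mathtt 1$, so it alternates between $\lnot\mathtt e_3^3$ and $\mathtt 1$. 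Each cell at $d\ge2$ is its old value XORed with $\beta_{d-1}\vee\beta_{d-2}$. Where the two cells to the right are $\mathtt e_3^3$ and $\lnot\mathtt e_3^3$ (or $\mathtt 1$), this OR is the constant $\mathtt 1$, so $\mathtt e_3^3$ never leaks further left.

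I will choose $w=2$ and $m=4$, check (W1) by iterating $\Phi_4$ symbolically until the first repetition (period $2$ after a short transient), and check (W2): every cell at distance $\ge4$ from $E_0=6$, namely $i\le2$, lies in $\mathcal B_3[\{1,2\}]$. The theorem then gives
\[
\mathcal R\subseteq\mathcal B_3[\{1,2\}]\cup S_2,\qquad S_2=\{\mathtt e_3^3,\lnot\mathtt e_3^3,\mathtt 1\}.
\]
The left end can be handled the same way via Remark~\ref{rem:wall}(ii), but it is not needed. For the reverse inclusion I will exhibit all $16$ functions of $x_1,x_2$ together with $\mathtt e_3^3$ and $\lnot\mathtt e_3^3$ by direct computation of rows $0,\ldots,8$; this also gives $\mathcal R=\mathcal R_{\le8}$.

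\textbf{Part (ii).} Fix $s$ and let $K'$ be invariant and contain every symbol of $P^s$, including $\mathtt 0$. The first step is to show that $\mathtt 0,\mathtt 1\in K'$. If $\mathtt 1$ is not at that row, then $\varphi(0,0,0)=0$ and $\lnot$ of a symbol is available via $\overline\varphi(g,\mathtt 1,\cdot)$ once $\mathtt 1$ is obtained; to get $\mathtt 1$, use a row symbol $g$ with $\overline\varphi(g,g,g)=g\oplus g=\mathtt 0$, or find a nonconstant $g$ with $\overline\varphi(\mathtt 0,g,\lnot g)=\mathtt 1$.

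The second step is to show that $K'$ contains a function depending on $x_3$, namely $\mathtt e_3^3$ or its negation, which sits at $(E_s,s)$ for every $s$. It also contains a function in $\mathcal B_3[\{1,2\}]$ that is not affine, present in every row: I expect some row symbol to be nonlinear in $x_1,x_2$, with the periodic part of the orbit verified computationally.

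The third step applies Post's theorem to conclude that the clone generated by $\varphi_{30}$, the constants, and these functions is everything. The care needed is that $K'$ is closed only under $\overline\varphi$ applied to its elements, not under arbitrary superposition with projections. So I will instead argue directly: $\overline\varphi$ with constants yields $g\oplus h$, $\lnot g$, and $g\vee h$ (via $\overline\varphi(\mathtt 0,g,h)$), and these suffice to get $\lnot$, $\vee$, and $\oplus$ on $K'$. Starting from $\mathtt e_3^3$ and a set of functions in $x_1,x_2$ that generate all of $\mathcal B_3[\{1,2\}]$, this generates all of $\mathcal B_3$.

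\textbf{Main obstacle.} The hard part is the uniformity of (ii) over all $s$: I must certify that every row $P^s$ contains both an $x_3$-dependent symbol and enough $\{1,2\}$-functions to generate $\mathcal B_3[\{1,2\}]$. I would handle it by using the eventual periodicity of the left and right blocks and of the bulk, checking finitely many rows, plus an invariance argument for the right-end block.
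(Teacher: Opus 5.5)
Your part (i) is the paper's argument: the wall certificate with $K=\mathcal B_3[\{1,2\}]$, $w=2$, $m=4$, followed by direct computation of rows $0,\ldots,8$ for the reverse inclusion. (In fact $\beta^4_2=\beta^4_0$, so $\tau=0$ and $p=2$.)

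Part (ii), however, has a genuine gap. Once $\mathtt 0,\mathtt 1\in K'$, closure under $\overline\varphi$ is exactly closure under $\lnot,\vee,\wedge,\oplus$. So $K'=\mathcal B_3$ holds if and only if the symbols of row $s$ separate the points of $\mathbb B^3$, and your plan never isolates this criterion.

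\begin{itemize}
\item The Post-type test you propose, a non-affine symbol of $\mathcal B_3[\{1,2\}]$ in every row, is irrelevant here, because projections are not freely available in $K'$. For example, $x_1x_2$ and $x_3$ generate only a $16$-element subalgebra.
\item Your sufficient condition, that functions generating $\mathcal B_3[\{1,2\}]$ occur in every row, is left unverified for all $s$.
\item Your route to uniformity fails. The support of $P^s$ is $[-s,6+s]$, and its edges carry $\mathtt e^3_1$ and $\mathtt e^3_3$ forever. Hence no row is a translate of an earlier one, and there is no periodicity of the bulk to exploit.
\item Your fallback for producing $\mathtt 1$ is wrong. $\overline\varphi(g,g,g)=\mathtt 0$, and $\overline\varphi(\mathtt 0,g,\lnot g)$ presupposes $\lnot g$. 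Since $\varphi_{30}\in T_0$, no $\mathtt 1$ can be manufactured from symbols in $T_0$. The fallback is also unnecessary: your own computation gives $\beta^4_s\in\{\beta^4_0,\beta^4_1\}$, and both blocks contain $\mathtt 1$.
\end{itemize}

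The missing idea is left permutivity, which gives point separation uniformly in $s$ with no computation.

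\begin{itemize}
\item For $u\ne v$, the configurations $\pi_uP$ and $\pi_vP$ differ at finitely many sites. By Theorem~\ref{thm:leftperm} (Corollary~\ref{cor:nocollapse}), $\pi_uP^s\ne\pi_vP^s$ for every $s$.
\item By Lemma~\ref{lem:eval}, some symbol $g$ of row $s$ satisfies $g(u)\ne g(v)$. One of $g,\lnot g$ is then a $\lambda_{u,v}\in K'$ with $\lambda_{u,v}(u)=1$ and $\lambda_{u,v}(v)=0$.
\item Hence $\chi_u=\bigwedge_{v\ne u}\lambda_{u,v}\in K'$ is the indicator of $\{u\}$. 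Every $h\in\mathcal B_3$ is $\mathtt 0$ or a disjunction of such indicators, so $K'=\mathcal B_3$.
\end{itemize}

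This also rescues your end-block idea, though the argument above needs no such check. The same corollary places the separating cells at comoving distance $6-p\in\{0,5,6\}$ from $E_s$. These cells lie inside the autonomous right-end block of width $7$, whose orbit is eventually periodic, so a finite check becomes possible once point separation is identified as the criterion.
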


\begin{proof}
Write $\mathtt e=\mathtt e_3^3$. Here $n=3$, $\ell_3=1$,
$\varphi(x,y,z)=x\oplus(y\lor z)$, $r=6$ and $E_t=6+t$.

(i) Apply Theorem~\ref{thm:wall} with $w=2$, $m=4$ and
$K=\mathcal B_3[\{1,2\}]$. We have $\beta^4_0=(\mathtt e,\mathtt 1,\mathtt 0,\mathtt 0)$
and $\Phi_4(\beta)_d=\overline\varphi(\beta_d,\beta_{d-1},\beta_{d-2})$, so
\begin{align*}
\beta^4_1&=\bigl(\overline\varphi(\mathtt e,\mathtt 0,\mathtt 0),\
\overline\varphi(\mathtt 1,\mathtt e,\mathtt 0),\
\overline\varphi(\mathtt 0,\mathtt 1,\mathtt e),\
\overline\varphi(\mathtt 0,\mathtt 0,\mathtt 1)\bigr)
=(\mathtt e,\lnot\mathtt e,\mathtt 1,\mathtt 1),\\
\beta^4_2&=\bigl(\overline\varphi(\mathtt e,\mathtt 0,\mathtt 0),\
\overline\varphi(\lnot\mathtt e,\mathtt e,\mathtt 0),\
\overline\varphi(\mathtt 1,\lnot\mathtt e,\mathtt e),\
\overline\varphi(\mathtt 1,\mathtt 1,\lnot\mathtt e)\bigr)
=(\mathtt e,\mathtt 1,\mathtt 0,\mathtt 0)=\beta^4_0 ,
\end{align*}
hence $\tau=0$ and $p=2$. The wall entries $\beta^4_t(2),\beta^4_t(3)$ are
constants, so (W1) holds. The cells with $i\le E_0-4=2$ are
$P_2=\mathtt 0$, $P_1=\mathtt e_2^3$, $P_0=\mathtt e_1^3$ and the background,
all in $K$, so (W2) holds. Finally
$S_2=\{\mathtt e,\mathtt 1,\lnot\mathtt e\}$, and the theorem gives
$\mathcal R_\varphi(P)\subseteq\mathcal B_3[\{1,2\}]\cup\{\mathtt e,\lnot\mathtt e\}$,
a set of $2^{2^2}+2=18$ elements. A direct computation of the rows
$0,\ldots,8$ gives $|\mathcal R_{\le8}(P)|=18$, so equality holds.

(ii) Let $s\ge0$, let $G_s$ be the set of symbols of $P^s$ (background
included), and let $K\supseteq G_s$ be invariant. By the computation in
(i), $\beta^4_s\in\{\beta^4_0,\beta^4_1\}$, and both contain $\mathtt 1$;
together with the background, $\mathtt 0,\mathtt 1\in G_s$. For
$g,h\in K$ we get $\lnot g=\overline\varphi(g,\mathtt 1,g)\in K$ and
$g\lor h=\overline\varphi(\mathtt 0,g,h)\in K$, so $K$ is closed under
negation, disjunction and conjunction.

Let $u\ne v$ in $\mathbb B^3$. We claim $\pi_uP^s\ne\pi_vP^s$. For $s=0$
this holds because $u_j\ne v_j$ for some $j$ and $\mathtt e_j^3$ occurs in
$P$. If two Boolean configurations $\xi\ne\eta$ differ at finitely many sites,
the rightmost one being $p$, then $F_\varphi(\xi)$ and $F_\varphi(\eta)$ agree
to the right of $p+1$ and differ at $p+1$, because there the two windows
differ only in their first entry and $x\mapsto x\oplus(y\lor z)$ is a
bijection for fixed $(y,z)$. Since $\pi_uP$ and $\pi_vP$ differ at finitely
many sites, the claim follows by induction on $s$, using
Lemma~\ref{lem:eval}. Again by Lemma~\ref{lem:eval}, some
$g\in G_s$ satisfies $g(u)\ne g(v)$, and one of $g,\lnot g$, call it
$\lambda_{u,v}$, satisfies $\lambda_{u,v}(u)=1$ and $\lambda_{u,v}(v)=0$.
Then $\chi_u=\bigwedge_{v\ne u}\lambda_{u,v}\in K$ is the indicator function
of $\{u\}$, and every $h\in\mathcal B_3$ equals
$\bigvee_{u:\,h(u)=1}\chi_u$ (or $\mathtt 0$). Hence $K=\mathcal B_3$.

For the last assertion, every set $K$ admissible in
Proposition~\ref{prop:inv} at time $s$ contains $G_s$, so $K=\mathcal B_3$
and $K\cup\mathcal R_{<s}(P)=\mathcal B_3$.
\end{proof}

Proposition~\ref{prop:rule30-wall} shows that the converse of
Proposition~\ref{prop:inv} fails: a seed may be incomplete although at
every time the smallest invariant alphabet containing the row is all of
$\mathcal B_k$. The
obstruction is spatial (Figure~\ref{fig:rule30-wall}): the variable $u_3$
is confined to the two rightmost cells of the light cone, which evolve
autonomously with period $2$, and is separated from the rest of the
configuration by a wall of constants of width $n-1=2$.

\begin{figure}[ht]
\centering
\includegraphics[width=\textwidth]{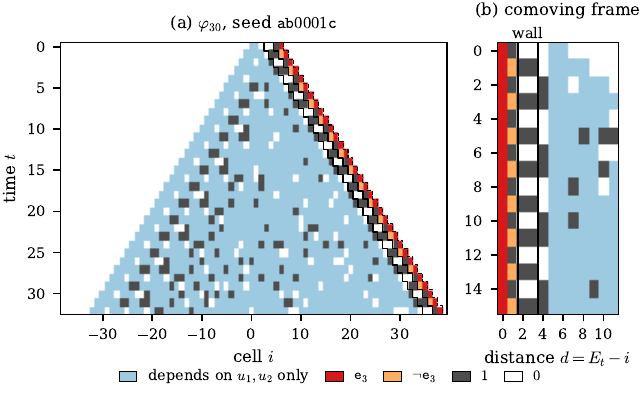}
\caption{The seed $\mathtt{ab0001c}$ under lifted rule~$30$
(Proposition~\ref{prop:rule30-wall}). (a) Space-time diagram, $0\le t\le32$:
cells coloured by the variables they depend on. The dashed line is the edge
$E_t=6+t$ of the light cone, and the two solid lines bound the wall
$E_t-3\le i\le E_t-2$. (b) The same diagram in the comoving coordinate
$d=E_t-i$: the block $0\le d\le3$ is periodic with period $2$, the wall
$d\in\{2,3\}$ contains only constants, and no cell depends on both $u_3$
and $(u_1,u_2)$.}
\label{fig:rule30-wall}
\end{figure}

\section{Systematic computations}
\label{sec:systematic}

\subsection{Reductions}
\label{ss:reductions}

The survey rests on a few elementary symmetries, which we state
formally since they are used to reduce the search space.

For a permutation $\varpi$ of $\{1,\ldots,k\}$ and $g\in\mathcal B_k$ put
$(g^\varpi)(u)=g(u_{\varpi(1)},\ldots,u_{\varpi(k)})$; for $u\in\mathbb B^k$ write
$\bar u=(\lnot u_1,\ldots,\lnot u_k)$ and put
$\sigma(g)(u)=\lnot g(\bar u)$. For a rule $\varphi$ of arity $n$ let
$\varphi^{*}(x)=\lnot\varphi(\bar x)$ (the \emph{complementary} rule) and
$\varphi^{m}(x_1,\ldots,x_n)=\varphi(x_n,\ldots,x_1)$ (the \emph{mirror}
rule). For a symbolic configuration $P$ we write $P^\varpi$ and $\sigma P$
for the cellwise application of these maps, and $\widetilde P$ for the
reflected configuration $\widetilde P_i=P_{-i}$.

\begin{lemma}[Symmetries of the realized set]
\label{lem:symmetries}
Let $\varphi$ be a Boolean rule of arity $n$ and $P$ a symbolic
configuration (finite or cyclic).
\begin{enumerate}
\item[(i)] $\mathcal R_\varphi(P^\varpi)=\{g^\varpi:g\in\mathcal R_\varphi(P)\}$;
moreover $(\mathtt e^k_j)^\varpi=\mathtt e^k_{\varpi(j)}$ and constants are fixed.
\item[(ii)] $\mathcal R_{\varphi^{*}}(\sigma P)=\sigma(\mathcal R_\varphi(P))$;
moreover $\sigma(\mathtt e^k_j)=\mathtt e^k_j$, $\sigma(\mathtt 0)=\mathtt 1$ and
$\sigma(\mathtt 1)=\mathtt 0$.
\item[(iii)] If $n$ is odd, $\mathcal R_{\varphi^{m}}(\widetilde P)=\mathcal R_\varphi(P)$.
\item[(iv)] $\mathcal R_\varphi(\tau^d P)=\mathcal R_\varphi(P)$ for every $d\in\mathbb Z$ (on $\mathbb Z_N$ the shift $\tau$ acts as a rotation).
\end{enumerate}
In particular $|\mathcal R|$, and completeness, are invariant under all
these operations.
\end{lemma}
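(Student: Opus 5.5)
The plan is to prove all four items the same way. For each operation I will show that it intertwines the lifted global maps cellwise. From this, induction on $t$ carries the operation through the whole orbit, and the realized sets correspond as claimed. The base step is always a pointwise identity for the lifted local rule $\overline\varphi$ of Definition~\ref{def:lifted}. The inductive step uses $P^{t+1}_i=\overline\varphi(P^t_{i-\ell_n},\ldots,P^t_{i+\rho_n})$ together with the fact that the operation acts cellwise, or as a reindexing of cells.

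For (i), I check $\overline\varphi(g_1,\ldots,g_n)^\varpi=\overline\varphi(g_1^\varpi,\ldots,g_n^\varpi)$. Both sides evaluated at $u$ equal $\varphi(g_1(u_{\varpi(1)},\ldots),\ldots)$. Hence $(\overline F^t_\varphi(P^\varpi))_i=((\overline F^t_\varphi P)_i)^\varpi$, and taking the union over $(i,t)$ gives the claim. The facts about projections and constants are one-line evaluations, with the convention $g^\varpi$ as defined. For (ii), I verify
\[
\sigma\bigl(\overline\varphi(g_1,\ldots,g_n)\bigr)(u)=\lnot\varphi(g_1(\bar u),\ldots,g_n(\bar u))=\varphi^*\bigl(\sigma g_1(u),\ldots,\sigma g_n(u)\bigr),
\]
since $g_j(\bar u)=\lnot\sigma g_j(u)$. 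So $\sigma\circ\overline F_\varphi=\overline F_{\varphi^*}\circ\sigma$ cellwise. The identity $\sigma(\mathtt e_j)(u)=\lnot\lnot u_j$ then gives the statement about projections, and the constants are swapped. One remark is needed here: on the line the background $\gamma$ becomes $\sigma\gamma$, which is still a constant. The realized set is defined for any symbolic configuration, so nothing breaks.

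For (iii), with $n$ odd we have $\ell_n=\rho_n=(n-1)/2$, so the window of $i$ is symmetric, $[i-\ell_n,i+\ell_n]$. Then
\[
(\overline F_{\varphi^m}\widetilde P)_i=\overline\varphi(P_{-i-\ell_n},\ldots,P_{-i+\ell_n})=(\overline F_\varphi P)_{-i}.
\]
The argument order is reversed by the mirror and restored by the reflection. Inductively $\overline F^t_{\varphi^m}(\widetilde P)=\widetilde{\overline F^t_\varphi(P)}$, and reflection does not change the set of cell values. On $\mathbb Z_N$ the map $i\mapsto -i$ is a bijection, so the same argument works there. For (iv), I use the commutation $\overline F_\varphi\circ\tau=\tau\circ\overline F_\varphi$ from Section~\ref{sec:setting}, applied to the alphabet $\mathcal B_k$. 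This gives $\overline F^t(\tau^dP)=\tau^d\overline F^t(P)$, and a translate has the same set of entries.

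The final sentence follows because each map in question ($g\mapsto g^\varpi$, $\sigma$, the identity) is a bijection of $\mathcal B_k$ onto itself. So $|\mathcal R|$ is preserved, and $\mathcal R=\mathcal B_k$ iff its image is $\mathcal B_k$. These operations also preserve simplicity, and they preserve being a seed in the respective families. I expect no real obstacle. The only delicate point is the index bookkeeping in (iii): the parity assumption on $n$ is exactly what makes the reflected window coincide with the window of $-i$. For even $n$ the offsets $\ell_n\neq\rho_n$ would introduce an extra shift, which is why (iii) is stated only for odd $n$.
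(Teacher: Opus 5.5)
Your proposal is correct and follows the same route as the paper. In each case you show that the operation intertwines the lifted global maps: $g\mapsto g^\varpi$ commutes with pointwise composition, $\sigma\circ\overline F_\varphi=\overline F_{\varphi^*}\circ\sigma$, reflection conjugates $\overline F_\varphi$ to $\overline F_{\varphi^m}$ when $\ell_n=\rho_n$, and the global map commutes with the shift. You then induct on $t$ and use that the operations are bijections of $\mathcal B_k$ or reindexings of cells. Your explicit window computation in (iii) and your remarks on the background and the cyclic case only spell out details that the paper leaves implicit.
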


\begin{proof}
(i) Renaming the variables is a bijection of $\mathcal B_k$ and commutes
with pointwise composition:
$\overline\varphi(g_1,\ldots,g_n)^\varpi=\overline\varphi(g_1^\varpi,\ldots,g_n^\varpi)$.
By induction on $t$, $\overline F^{\,t}_\varphi(P^\varpi)=(\overline F^{\,t}_\varphi P)^\varpi$.

(ii) For every $u$,
\[
\sigma\bigl(\overline\varphi(g_1,\ldots,g_n)\bigr)(u)
=\lnot\varphi\bigl(g_1(\bar u),\ldots,g_n(\bar u)\bigr)
=\varphi^{*}\bigl(\sigma(g_1)(u),\ldots,\sigma(g_n)(u)\bigr),
\]
because $g_h(\bar u)=\lnot\sigma(g_h)(u)$. Hence
$\sigma\circ\overline F_\varphi=\overline F_{\varphi^{*}}\circ\sigma$ and
$\overline F^{\,t}_{\varphi^*}(\sigma P)=\sigma(\overline F^{\,t}_\varphi P)$.
Finally $\sigma(\mathtt e^k_j)(u)=\lnot(\lnot u_j)=u_j$.

(iii) For odd $n$ one has $\ell_n=\rho_n$, so reflecting the lattice
transforms the window of $i$ into the window of $-i$ read backwards:
$\overline F_{\varphi^m}(\widetilde P)=\widetilde{\overline F_\varphi(P)}$.

(iv) The global maps commute with the shift.
\end{proof}

Note that (ii) exchanges the two constants and keeps the projections:
a seed on the line on background $\gamma$ is mapped to a seed on the line on
background $\lnot\gamma$. Note also that in (iii) the order of the
projections is reversed, and is restored by (i).

Consequently it suffices to consider:
\begin{itemize}
\item one rule in each class $\{\varphi,\varphi^m,\varphi^*,(\varphi^m)^*\}$; when $n=3,$
of the $88$ classes of elementary rules~\cite{Wolfram1983,Wolfram2002}, $15$ consist of monotone or
affine rules and are excluded by Corollary~\ref{cor:clonetrap} (constants
and projections are both monotone and affine), leaving $73$ classes corresponding to the $225$ rules that are neither monotone nor affine
(Section~\ref{sec:obstructions});
\item seeds on the line with both backgrounds $\mathtt 0$ and $\mathtt 1$
(by (ii) this covers every rule of each class);
\item words in which the projections occur in the order
$\mathtt e^k_1,\ldots,\mathtt e^k_k$ (by (i)), whose first and last letters differ from the
background (by (iv));
\item for rings, words up to rotation and renaming of the variables.
\end{itemize}

In the survey we fix $n=3$: every seed is run with the least-numbered rule
of each of these $73$ classes, and by Lemma~\ref{lem:symmetries} the
results hold for all $225$ elementary rules outside $M\cup L$.

\begin{remark}[Backgrounds]
\label{rem:backgrounds}
For an arbitrary rule the constant configuration $\gamma$ is mapped to
the constant configuration $\varphi(\gamma,\ldots,\gamma)$, so the set of
configurations that differ from \emph{some} constant configuration in
finitely many cells is invariant, even when no constant is quiescent.
The survey also records seeds whose background is only eventually
quiescent or oscillates with period $2$, but these are reported
separately, and the tables restricted to quiescent backgrounds are the
ones relevant to Definition~\ref{def:classes}.
\end{remark}

\subsection{Classification of a seed}
\label{ss:certificates}

Each seed is simulated up to a horizon $H$ and receives one of three
verdicts:
\begin{itemize}
\item \textsf{complete}: all $2^{2^k}$ functions occur by time $H$;
\item \textsf{certified}: one of the obstructions listed below proves that
the seed is incomplete;
\item \textsf{open}: neither of the above happens by time $H$.
\end{itemize}
A certified seed is incomplete, but we do not try to determine how many
functions it realizes. The obstructions are those of
Section~\ref{sec:obstructions}, together with periodicity:
\begin{enumerate}
\item[(O1)] \emph{Clone trap}: for some maximal clone $C\ni\varphi$, all
symbols of a row, background included, lie in $C_k$
(Theorem~\ref{thm:clonecrit} and Proposition~\ref{prop:inv}). At time $0$
this is Corollary~\ref{cor:clonetrap}, and such seeds are not simulated.
\item[(O2)] \emph{Evaluation collapse}: the bound of
Proposition~\ref{prop:collapsebound} is smaller than $2^{2^k}$.
\item[(O3)] \emph{Wall}: Theorem~\ref{thm:wall}, applied to a row at
either end of the light cone (Remark~\ref{rem:wall}(ii),(iii)) with
$K=\mathcal B_k[V]$ as in Remark~\ref{rem:wall}(iv), gives a set smaller
than $\mathcal B_k$. This applies to seeds on the line with a quiescent
background.
\item[(O4)] \emph{Periodicity}: some row is a translate of an earlier one
(on a ring: equal to it), and not all functions have occurred.
\end{enumerate}
The last obstruction rests on the following observation.

\begin{lemma}[Periodicity up to translation]
\label{lem:periodic-certificate}
If $\overline F^{\,t}_\varphi(P)$ is a translate of
$\overline F^{\,s}_\varphi(P)$ for some $s<t$ (on a ring: equal to it), then
$\mathcal R_\varphi(P)=\mathcal R_{<t}(P)$.
\end{lemma}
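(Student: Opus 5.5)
We use the fact that the global map commutes with the shift (Section~\ref{sec:setting}, and rotation on rings as in Remark~\ref{rem:cycper}). Write $P^u=\overline F^{\,u}_\varphi(P)$. Suppose $P^t=\tau^d P^s$ for some $d\in\mathbb Z$ with $s<t$; on a ring we take $d=0$. The goal is to show that every row after time $t$ is a translate of some row at a time in $[s,t)$. A translate has the same set of symbols, so no new function can appear after time $t$.

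\textbf{Step 1 (an eventually periodic relation).} We show by induction on $j\ge0$ that $P^{t+j}=\tau^d P^{s+j}$. The case $j=0$ is the hypothesis. For the inductive step,
\[
P^{t+j+1}=\overline F_\varphi(P^{t+j})=\overline F_\varphi(\tau^dP^{s+j})=\tau^d\,\overline F_\varphi(P^{s+j})=\tau^dP^{s+j+1}.
\]

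\textbf{Step 2 (reduction to one period).} Put $p=t-s\ge1$. For any $u\ge t$, write $u=s+r+qp$ with $0\le r<p$ and $q\ge1$. Iterating Step~1 $q$ times gives $P^{u}=\tau^{qd}P^{s+r}$, and $s+r<t$. For the iteration, apply Step~1 to the time $s+r+(q-1)p\ge s$, whose index exceeds $s$ by $r+(q-1)p=j$. Since $\tau^{qd}$ merely relabels positions, we get $\{P^u_i:i\}=\{P^{s+r}_i:i\}$. On $\mathbb Z_N$ the same holds, whether $\tau$ is a rotation or $d=0$.

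\textbf{Step 3 (conclusion).} Every symbol occurring at a time $u\ge t$ therefore already occurs at some time in $[s,t)$. Hence $\mathcal R_\varphi(P)\subseteq\mathcal R_{<t}(P)$. The reverse inclusion is immediate from the definitions.

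There is no real obstacle here. The only points needing care are the indexing in Step~2 and checking that shift-commutation also holds on rings, which follows from Remark~\ref{rem:cycper}.
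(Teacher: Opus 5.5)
Your proposal is correct and follows essentially the same route as the paper: shift-commutation propagates $P^{t}=\tau^dP^{s}$ to $P^{t+m}=\tau^dP^{s+m}$ for all $m\ge0$, and hence every later row is a translate of one of the rows $s,\ldots,t-1$, which has the same set of symbols. The only difference is that you make the paper's ``by induction'' step explicit through the division $u=s+r+qp$, which yields $P^u=\tau^{qd}P^{s+r}$.
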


\begin{proof}
If $\overline F^{\,t}P=\tau^d\overline F^{\,s}P$ then, the global map
commuting with the shift $\tau$,
$\overline F^{\,t+m}P=\tau^d\overline F^{\,s+m}P$ for every $m\ge0$, and
a row and its translate contain the same symbols.
By induction every row is a translate of one of the rows
$s,\ldots,t-1$.
\end{proof}

The simulation of a seed stops as soon as one obstruction applies.
On rings the phase space is finite, so with a
sufficient horizon no ring is open (Remark~\ref{lem:eventper}(ii)); in our
computations none was.



\subsection{Results}

The complete classification of the surveyed seeds is reported in
Tables~\ref{tab:survey-q}--\ref{tab:complete-all} and in
Figure~\ref{fig:survey} (Appendix~\ref{app:survey}); here we summarize the
main findings.

\subsubsection{$k=3$, seeds on the line of length $\le9$.} Among the $73$
classes, exactly six are universal: the classes of the rules $22,30,45,73,106,110$. 

\subsubsection{$k=3$, rings.} For $k=3$, the strongly cyclic universal rules on rings of length
$N\le12$ are $30$ (from $N=7$), $45$ (from $N=10$), $22$, $54$,
$110$ (from $N=11$), and $122$, $126$ (from $N=12$).

\subsubsection{$k=4$, seeds on the line of length $\le7$.} Only the rules
$30$ and $45$ are universal within the horizon $2^{14}$. The class
of $106$ admits one beyond it: the seed
$\mathtt{01ab0cd}$ on background $\mathtt 1$, which is only eventually
quiescent for $\varphi_{106}$, completes at $t=36546$. The eight open
seeds of rule $106$ on background $\mathtt 0$ with the largest realized
sets are still open at $t=2^{16}$, with $|\mathcal R|$ still growing.

\subsubsection{$k=4$, rings of length $\le9$.} No class admits a complete
ring. This is consistent with the complete length-$60$ Rule~$110$ seed
reported below in \eqref{eq:ring60}.

\subsubsection{Rule~110 on seeds on the line.}
We analyze the particular family
\begin{equation}\label{eq:110jseed}
 P_j=\mathtt 1\,\mathtt 0^{j-1}\mathtt a\mathtt b \mathtt c\in\mathsf{Lin}^\star_3 ,
 \qquad j\ge1.
\end{equation}
For $1\le j\le400$ and horizon $10^5$, $349$ seeds are complete, the
largest completion time being $7608$ (for $j=8$); $38$ are certified
incomplete, all by evaluation collapse; and $13$ remain open, among them
$j=15$ with $|\mathcal R_{\le10^5}|=255$. The outcome for each $j$ is
shown in Figure~\ref{fig:Pj}. The collapse certificate is found at time
$1$ for $j=1$, at times $8,32,64$ for $j=4,6,9$, and only at times
$\ge1024$ for the other $34$ certified seeds: these failures are not
horizon effects, but they are not of the one-step type of
Corollary~\ref{cor:one-step} either.

\begin{figure}[ht]
\centering
\includegraphics[width=\textwidth]{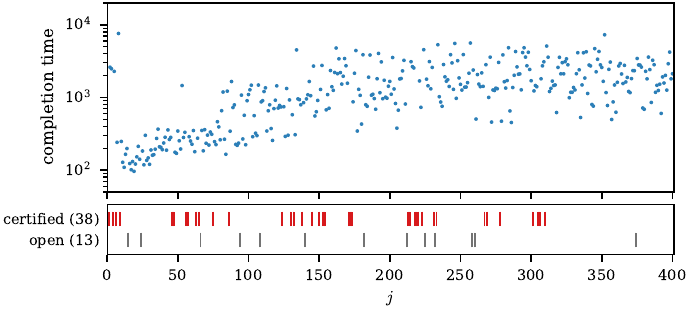}
\caption{The seeds $P_j=\mathtt 1\,\mathtt 0^{j-1}\mathtt{abc}$ of
\eqref{eq:110jseed} under rule $110$, $1\le j\le400$, horizon $H=10^5$.
Top: completion time of the $349$ complete seeds. Bottom: the $38$
certified seeds (all by evaluation collapse) and the $13$ open ones.}
\label{fig:Pj}
\end{figure}

The suggestion that the successful $j$ have an asymptotic positive
density remains a conjecture.

The following theorem proves a particular restriction for these seeds.
\begin{theorem}
\label{thm:strip}
Let $P\in\mathsf{Lin}^\star_3$ be the rule-110 seed with $\mathtt 1$ at
position $-j$, $j>0,$ the block $\mathtt a\mathtt b\mathtt c$ at $0,1,2$ and $\mathtt 0$ elsewhere.
Then every symbol occurring at a site $(i,t)$ with $i>-j$ lies in
$(T_0)_3$. Consequently the $128$ functions outside $T_0$ can only occur
in the strip $-j-t\leq i\leq -j$.
\end{theorem}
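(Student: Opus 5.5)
The plan is to reduce the claim, via the evaluation identity (Lemma~\ref{lem:eval}), to a statement about a single ordinary Boolean orbit. A symbol $g\in\mathcal B_3$ lies in $(T_0)_3$ exactly when $g(0,0,0)=0$. Let $u=(0,0,0)$ and put $x^t=F_{\varphi_{110}}^{\,t}(\pi_uP)$. Lemma~\ref{lem:eval} gives $(\overline F^{\,t}_{\varphi_{110}}P)_i(u)=x^t_i$. So the theorem is equivalent to showing $x^t_i=0$ for every $t\ge0$ and every $i>-j$.

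The configuration $\pi_uP$ is easy to describe: evaluating at $u$ replaces $\mathtt a,\mathtt b,\mathtt c$ by $0$. Hence $\pi_uP$ is the Boolean configuration with a single $1$ at site $-j$ and $0$ elsewhere.

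The key step is an induction on $t$ for the statement ``$x^t_i=0$ for all $i>-j$''.
\begin{itemize}
\item For $t=0$ this holds by the description of $\pi_uP$.
\item Assume it holds at time $t$, and take $i>-j$. Then $i\ge -j+1$, so $i>-j$ and $i+1>-j$, and the window of $i$ at time $t$ is $(x^t_{i-1},0,0)$.
\item From the algebraic normal form $\varphi_{110}=y+z+yz+xyz$ we get $\varphi_{110}(x,0,0)=0$ for both values of $x$.
\item Therefore $x^{t+1}_i=0$, which completes the induction.
\end{itemize}
The only point to check is that the leftmost site of the region, $i=-j+1$, is also covered. It is: its left neighbour $-j$ may carry a $1$, but that enters only as the first argument $x$, on which $\varphi_{110}(\cdot,0,0)$ does not depend.

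For the consequence, the functions outside $T_0$ are exactly those $g$ with $g(0,0,0)=1$. These are half of $\mathcal B_3$, namely $128$ functions. By the previous step, such a $g$ can only occur at sites $i\le -j$.

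For the left bound of the strip I use Remark~\ref{lem:fininv}(i), since $0$ is quiescent for $\varphi_{110}$. The support of $x^0$ is $\{-j\}$, and with $\rho_3=1$ the support of $x^t$ is contained in $[-j-t,\,-j+t]$. In particular $x^t_i=0$ for $i<-j-t$, so every symbol at such a site also lies in $(T_0)_3$. Together the two bounds confine non-$T_0$ functions to $-j-t\le i\le -j$.

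I expect no real obstacle in this argument. The one thing to get right is the direction of propagation: it must be the left neighbour that is irrelevant when the other two entries of the window are $0$. This is exactly the fact that $\varphi_{110}$ maps both $100$ and $000$ to $0$.
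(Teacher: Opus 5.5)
Your proof is correct and follows the paper's argument: you evaluate at $(0,0,0)$, note that the resulting orbit starts from a single $1$ at site $-j$, and propagate zeros to the right by induction using $\varphi_{110}(x,0,0)=0$. The only difference is cosmetic: you obtain the left edge $i\ge -j-t$ from the light-cone bound of Remark~\ref{lem:fininv}(i), whereas the paper folds it into the same induction using $\varphi_{110}(0,0,0)=0$.
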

\begin{proof}
Put $z^t=F^{\,t}_{\varphi_{110}}(\pi_{(0,0,0)}P)$. Since the projections
vanish at $(0,0,0)$, $z^0$ has a single $1$, at site $-j$. We claim that
$z^t_i=0$ whenever $i>-j$ or $i<-j-t$. This holds for $t=0$. Assume it
holds for $t$. If $i>-j$, then $z^t_i=z^t_{i+1}=0$, and
$\varphi_{110}(p,0,0)=0$ for $p\in\mathbb B$, so $z^{t+1}_i=0$. If
$i\leq -j-t-1$, the whole window of $i$ lies to the left of $-j-t$, so
$z^{t+1}_i=\varphi_{110}(0,0,0)=0$. This proves the claim.
By Lemma~\ref{lem:eval}, the symbol $g$ at site $(i,t)$ satisfies
$g(0,0,0)=z^t_i$. Hence $g\in T_0$ unless $-j-t\le i\le -j$.
\end{proof}
\begin{example}
\label{ex:strip}
For $j=7$ the seed $\mathtt 1\,\mathtt 0^6\,x_1x_2x_3$, with the constant
$\mathtt 1$ at site $-7$, is complete for Rule~110: all $256$
functions occur by time $241$, and not before. By
Theorem~\ref{thm:strip}, the $128$ functions $g$ with $g(0,0,0)=1$ are
all produced inside the strip $-7-t\le i\le -7$.
\end{example}

\subsubsection{Rule~110 on rings.} Rule 110 is strongly cyclic \(3\)-universal; the least witnessing ring length is \(N=11\). For instance $\mathtt{a0000010bc1}$, complete at
$t=295$.  Rule~110 has no complete simple ring of length $12$.

The least $N$ for which some
$P\in\mathsf{Cyc}^\star_3(N)$ of the form $\mathtt a\mathtt b\mathtt c\,\mathtt b_1\cdots \mathtt b_{N-3}$
with $\mathtt b_i\in\{\mathtt 0,\mathtt 1\}$ is complete is $N=14$; a witness is $\mathtt a\mathtt b\mathtt c\,\mathtt 0\mathtt 0\mathtt 0 \mathtt 1\mathtt 0\mathtt 0\mathtt 0\mathtt 0\mathtt 0\mathtt 0 \mathtt 0$. For this particular family, the maxima of $|\mathcal R|$ are
\[
\begin{array}{c|rrrrrrrrrrrr}
 N&3&4&5&6&7&8&9&10&11&12&13&14\\\hline
 \text{maximum}&8&12&19&30&44&80&124&215&250&212&210&256
\end{array}
\]
for $N=3,\ldots,14$ in the recorded exhaustive search. This non-monotonicity illustrates how strongly
cyclic length influences the available space-time patterns.

Figure~\ref{fig:ringorbits} shows how the realized set builds up along
three complete ring orbits. Transient and period follow from the eight
evaluated Boolean orbits by Corollary~\ref{cor:ringperiod}. In (a) the
binary periods are $7$ and $110$, so $\Theta=770$, and the transient is
attained at $u=(0,0,0)$. In (b) all eight binary orbits are purely
periodic, with periods $935$ (at $u=(0,0,0)$), $2$ (at $u=(1,1,1)$) and
$979$ (at the remaining six inputs), so $T=0$ and
$\Theta=\operatorname{lcm}(2,935,979)=166430$. In (c) the binary periods are
$7$ and $91$, and the transient is attained at $u=(0,0,1)$.
The three orbits illustrate two facts. First, functions realized during
the transient need not recur: $9$ functions in (a) and $42$ in (c)
occur only before time $T$. Second, the period gives no useful
indication of the time needed to realize the symbols: in (b) all $256$
functions occur by time $371$, although $\Theta=166430$. By
Lemma~\ref{lem:symmetries}(ii), the orbit in (b) is equivalent to that of
$\mathtt{abc00000000}$ under rule $75=\varphi_{45}^{*}$.
 
\begin{figure}[ht]
\centering
\includegraphics[width=\textwidth]{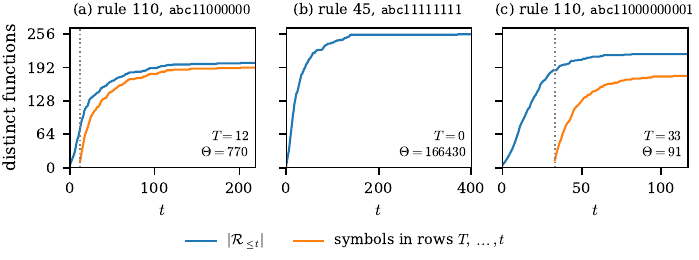}
\caption{Growth of the realized set along three ring orbits, $k=3$.
Blue: the number $|\mathcal R_{\le t}|$ of distinct symbols in rows
$0,\dots,t$. Orange: the number of distinct symbols in rows
$T,\dots,t$, that is, of functions realized in the periodic part of the
orbit. The dotted line marks the transient $T$; for $T=0$ the two
counts coincide. Each orbit is computed exactly
(Remark~\ref{lem:eventper}(ii)); the horizontal axis stops shortly after the
last time a new symbol appears, after which both counts are constant.
(a)~Rule~110, $\mathtt{abc11000000}\in\mathsf{Cyc}^\star_3(11)$:
$T=12$, $\Theta=770$, $|\mathcal R_{\varphi_{110}}(P)|=201$, of which $192$
occur in the periodic part.
(b)~Rule $45$, $\mathtt{abc11111111}\in\mathsf{Cyc}^\star_3(11)$:
$T=0$, $\Theta=166430$; the seed is complete, the last function appearing at
$t=371$.
(c)~Rule~110, $\mathtt{abc11000000001}\in\mathsf{Cyc}^\star_3(14)$:
$T=33$, $\Theta=91$, $|\mathcal R_{\varphi_{110}}(P)|=218$, of which $176$
occur in the periodic part.}
\label{fig:ringorbits}
\end{figure}

For $k=4$, our computations found the following complete cyclic length-$60$ seed for Rule~$110$:
\begin{equation}\label{eq:ring60}
 \mathtt{a0b0c0d10100000111100000100110001011100011100001101101010110}.
\end{equation}
It realizes all $65536$ functions in $\mathcal B_4$, with first
completion time $t=641712$.

\subsubsection{Rule 30 on seeds on the line.}
For $k=4$ (on background $\mathtt 0$), the recorded complete simple seeds of length $6$ with ordered
projections are defined by the words
\[
 \mathtt{abc1d1},\quad\mathtt{abc11d},\quad
 \mathtt{a1bcd1},\quad\mathtt{1abcd1},\quad\mathtt{ab1cd1},
\]
with recorded first completion times $1083,1083,1133,1193,1507$.
The finite search reports no successful length-$5$ seed, and $46$,
$254$, $1035$ successes of lengths $7$, $8$, $9$ respectively;
the last two counts use a shorter time horizon and are therefore
\emph{lower bounds} on the number of complete seeds in those families.
Likewise, the absence of a length-$5$ success must not be used as a
minimality theorem without an orbit-closure or other exclusion proof.

For the particularly simple family
\begin{equation}\label{eq:30family}
 Q_k=\mathtt 1\,\mathtt e^k_1\mathtt e^k_2\cdots \mathtt e^k_k\,\mathtt 1
 \quad\text{on the $\mathtt 0$ background},
\end{equation}
completeness has been observed for $1\le k\le5$. 
The $k=5$ computation
reports all $2^{32}=4294967296$ symbols at a time of order
$4\cdot10^5$.

The reported first completion times for $Q_k$ are
\[
\begin{array}{c|rrrrr}
 k&1&2&3&4&5\\\hline
 |\mathcal B_k|&4&16&256&65536&2^{32}\\
 t_{\mathrm{obs}}&1&11&44&1193&\text{about }3.9\cdot10^5
\end{array}
\]

A coupon-collector-inspired approximation, based on the number of
space--time sites rather than on a proof of independence, describes the
observed coverage curves well. Up to time $t$ the space--time diagram of
$Q_k$ has of the order of $t^2$ non-background sites; if these behaved
like independent uniform samples from $\mathcal B_k$, the number of
missing functions would be about $N e^{-\alpha t^2/N}$, $N=|\mathcal B_k|$,
for some $\alpha>0$, and completion would occur at
$t\approx\sqrt{N\ln N/\alpha}$. Figure~\ref{fig:coverage} compares this
with the data for $k=3,4$, with the single fitted value $\alpha=0.62$.
For $k=4$ the agreement is close until only a handful of functions are
missing; for $k=3$ it is rougher. This is an
empirical guide, not a consequence of an independent-uniform-sampling
theorem for rule~30, and its extrapolation to $k\ge6$ is speculative.
 
\begin{figure}[ht]
\centering
\includegraphics[width=\textwidth]{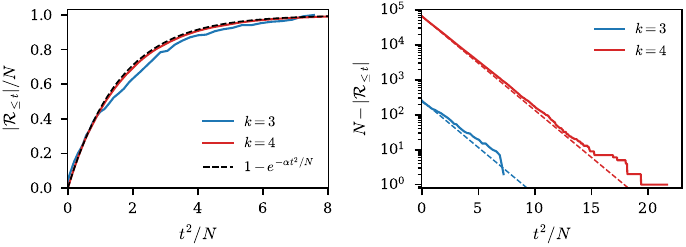}
\caption{Coverage curves of the rule-$30$ seeds $Q_3$ and $Q_4$ of
\eqref{eq:30family}, against $t^2/N$ with $N=|\mathcal B_k|$. Left:
fraction of $\mathcal B_k$ realized up to time $t$. Right: number of
functions not yet realized, on a logarithmic scale; $Q_3$ is complete at
$t=44$ and $Q_4$ at $t=1193$ ($t^2/N\approx7.6$ and $21.7$). Dashed:
the heuristic $1-e^{-\alpha t^2/N}$ and $Ne^{-\alpha t^2/N}$ with the
fitted value $\alpha=0.62$.}
\label{fig:coverage}
\end{figure}


A second seed of the same length, $\mathtt{abcd1e1}$, had produced
$1677721600=\frac{25}{64}2^{32}$ distinct functions after approximately
$4.3\cdot10^{11}$ space--time cells.
On a fixed sample of $2^{24}$ functions, the fraction realized by $\mathtt{abcd1e1}$ stays at $25/64$ (Figure~\ref{fig:k5}), whereas $Q_5$ realizes every sampled function by time $346237$.


\begin{figure}[ht]
\centering
\includegraphics[width=\textwidth]{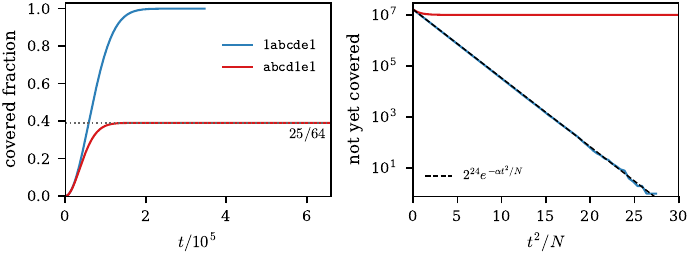}
\caption{Rule $30$, $k=5$: the seeds $Q_5=\mathtt{1abcde1}$ and
$\mathtt{abcd1e1}$ on the $\mathtt 0$ background. Coverage is estimated on
a fixed pseudo-random sample of $2^{24}$ of the $N=2^{32}$ functions. Left: fraction of the
sample realized up to time $t$; the dotted line is $25/64$. Right: number
of sampled functions not yet realized, against $t^2/N$, with the heuristic
$2^{24}e^{-\alpha t^2/N}$, $\alpha=0.62$, of Figure~\ref{fig:coverage}.
The run of $\mathtt{abcd1e1}$ covers $4.3\cdot10^{11}$ space--time cells.}
\label{fig:k5}
\end{figure}

\subsubsection{Which obstructions are effective.}
For $k=3$, among the $523264$ simple seeds on the line of length $\le9$ (both
backgrounds), evaluation collapse certifies $51.7\%$, walls $20.4\%$
($14.9\%$ already at time $0$), periodicity $10.7\%$, and clone traps
$1.9\%$ ($0.4\%$ at time $0$). For quiescent backgrounds
(Table~\ref{tab:survey-q}) walls and collapse are about equally effective,
with $37.5\%$ and $36.1\%$ of the seeds. On rings, where walls are not
available, evaluation collapse certifies $70.9\%$ of the simple rings of
length $\le12$ for $k=3$, and $83.0\%$ of those of length $\le9$ for $k=4$
(see also Figure~\ref{fig:survey}). Collapse also accounts for all the
certified seeds of the family $P_j$ above. These proportions refer to the
first obstruction found, in the order of Section~\ref{ss:certificates}.

\section{Main conjectures and questions}
\label{sec:conj}

\begin{conjecture}[Main conjecture]\label{conj:main}
For every $k\ge1$, the seed on the line $Q_k$ of
\eqref{eq:30family} is complete for rule 30:
\[
 \mathcal R_{\varphi_{30}}(Q_k)=\mathcal B_k.
\]
\end{conjecture}

This is the main theoretical question addressed in the remainder of the paper. Notice that here the obstructions studied in
Section~\ref{sec:obstructions} do not apply. The constant $\mathtt 1$ allows the seed to escape the
$0$-preserving clone, and Rule 30 is neither monotone nor affine. Moreover, 
as shown in  Corollary~\ref{cor:nocollapse}, no two evaluated configurations
merge.

\begin{conjecture}[Many complete Rule-110 seeds]\label{conj:goodj}
There exist infinitely many $j$ for which $P_j$ of
\eqref{eq:110jseed} is complete. The computations further suggest,
without establishing, a positive asymptotic density.
\end{conjecture}

\begin{question}[Strong universality of elementary rules]
The $3$-universal elementary rules are classified in
Table~\ref{tab:k3-universality}. Which of the classes marked U are
strongly $3$-universal, or strongly cyclic $3$-universal?
\end{question}

\begin{question}[Completeness in every arity]
\label{q:all-k-completeness}
Strong universality in every arity.
Does there exist a Boolean local rule \(\varphi\) which is strongly \(k\)-universal for every \(k\ge1\)?
Does there exist one which is strongly cyclic \(k\)-universal for every \(k\ge1\)?
Do either of these properties hold for Rule~30 or Rule~110?
\end{question}

\begin{question}[Beyond clones and collapse]
Can one find an invariant or a space-time restriction that rules out
completeness of a simple seed even when the local rule with constants
is functionally complete and its evaluated configurations never
merge? In particular, can the apparent saturation observed for
$\mathtt{abcd1e1}$ under Rule 30 be proved permanent or shown to be
only a long transient?
\end{question}

\begin{question}[Simple versus unrestricted completeness]
Does there exist a Boolean local rule and an integer \(k\) which is \(k\)-universal but not strongly \(k\)-universal?
\end{question}

\begin{question}[Decidability]
Given \(\varphi\) and \(k\), is \(k\)-universality decidable? Is strong \(k\)-universality decidable?
\end{question}

\begin{question}[Input encoding]
Does there exist a finite context $C$ and a position $p$ such that the
evolutions of $C$ with $0$ and with $1$ at $p$ differ, after a transient,
exactly by one glider of a fixed type?  A positive answer, combined with a
bounded-program version of Cook's construction~\cite{Cook2004}, would turn the universality
of Rule~110 into a proof of completeness for all $k$; a random search over
$400$ contexts of length $60$ found none that stays localized for $800$
steps.
\end{question}

\begin{question}[Simple seeds: rings versus the infinite line]
Can the converse of the finite-to-cyclic implication fail?
More precisely, do there exist a Boolean local rule $\varphi$ and an
integer $k$ for which a complete cyclic simple seed exists, but no complete
simple seed on the line exists?
By Proposition~\ref{prop:finite-cyclic-equivalence}, the distinction
disappears when the simplicity requirement is removed.
\end{question}

\begin{question}[Generic composition]
What conditions on the sequence of generic symbols $\Phi_t$ ensure
that a \emph{single} simple seed realizes all substitution targets?
Can the triangular diagonal dynamics of Rule 30 be used to build
such sites constructively?
\end{question}

\begin{question}[Quantitative completeness]
If Conjecture~\ref{conj:main} holds, what is the growth rate of the
first completion time as $k\to\infty$? Can the coupon-collector-like
experimental law be supported by rigorous bounds despite the strong
deterministic correlations between space-time sites?
\end{question}

\begin{question}[Dynamical characterisation of $k$-universality]
Can $k$-universality be characterised in terms of dynamical or
topological properties of the ordinary cellular automaton
$F_\varphi$, together with the distinguished set of initial
configurations induced by
\[
\Sigma_k=\{\mathtt0,\mathtt1,\mathtt e_1^k,\ldots,\mathtt e_k^k\}?
\]
In particular, are there natural conditions such as permutivity,
surjectivity, mixing, or specification which, together with functional
completeness, imply $k$-universality?
\end{question}

\medskip
\noindent
The lifted construction therefore supplies a simple setting in which
algebraic expressiveness, preservation of information, and geometric
constraints on computation can be studied separately. The
conjectured completeness of the Rule-30 seeds would show that a
particularly rigid uniform architecture can nevertheless generate
all Boolean functions of arbitrarily many variables.

\section{Information propagation and the space-time constraint for Rule 30}
\label{sec:structure}

In this section we establish structural properties of lifted Rule~30 that
are relevant to the main conjecture. In particular, we study information
propagation and the dynamics along rightward diagonals, obtaining a lower
bound on how far into the space--time diagram one must look before
completeness can occur.

\subsection{Left permutivity prevents evaluation collapse}
\begin{definition}
A Boolean rule $\varphi\in\mathcal B_n$ is \emph{left-permutive}
if, for each fixed $(x_2,\ldots,x_n)$, the function
$x_1\mapsto\varphi(x_1,x_2,\ldots,x_n)$ is a permutation of $\mathbb B$. Right-permutivity is defined analogously. 
\end{definition}
Rule 30 is left-permutive because
$\varphi_{30}(x,y,z)=x\oplus(y\vee z)$.
One-sided injectivity and permutivity properties of one-dimensional
cellular automata have been studied extensively; see, for example,
\cite{Jen1986,Jen1990}.
\begin{theorem}[Rightmost-difference rigidity]\label{thm:leftperm}
Let $\varphi\in \mathcal B_n$ be left-permutive, and let $x,y\in\mathbb B^{\mathbb Z}$
differ at finitely many sites. If $p$ is their rightmost position of
disagreement, then the rightmost disagreement of $F_\varphi(x)$
and $F_\varphi(y)$ is exactly $p+\ell_n$.
\end{theorem}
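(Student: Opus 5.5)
The plan is a direct local computation with the window convention of Section~\ref{sec:setting}: the output at site $i$ reads cells $i-\ell_n,\ldots,i+\rho_n$, and the first argument of $\varphi$ is the leftmost cell $x_{i-\ell_n}$. First, I will show that $F_\varphi(x)$ and $F_\varphi(y)$ agree at every site $i>p+\ell_n$. For such $i$, the leftmost cell of the window satisfies $i-\ell_n>p$. Hence every cell in the window lies strictly to the right of $p$, where $x$ and $y$ coincide. The two windows are therefore identical, and so are the outputs.

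Second, I will show that the images differ at $i=p+\ell_n$. Here the window consists of the cells $p,p+1,\ldots,p+n-1$. All cells except the first lie to the right of $p$, so $x$ and $y$ agree there; write $(a_2,\ldots,a_n)$ for this common tuple. The first entries are $x_p\neq y_p$. By left-permutivity, $x_1\mapsto\varphi(x_1,a_2,\ldots,a_n)$ is a bijection of $\mathbb B$, so
\[
\varphi(x_p,a_2,\ldots,a_n)\neq\varphi(y_p,a_2,\ldots,a_n).
\]
Combining the two steps, the images differ at $p+\ell_n$ and nowhere to its right.

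It remains to note that the images differ at only finitely many sites, so that the "rightmost disagreement" is well defined. This follows because disagreements can occur only at sites whose window meets the finite set where $x\neq y$. I would also check the edge case $n=1$, where $\ell_1=0$ and the window is the single cell $p$; the argument goes through unchanged.

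There is no real obstacle. The only point needing care is the index bookkeeping, namely that the first argument of $\varphi$ is exactly the cell at offset $-\ell_n$. This is the same computation as in the proof of Proposition~\ref{prop:rule30-wall}(ii), where $n=3$ and $\ell_3=1$.
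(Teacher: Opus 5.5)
Your proof is correct and follows the paper's argument exactly: agreement of the windows for $i>p+\ell_n$, and a difference only in the first argument at $i=p+\ell_n$, which left-permutivity turns into a difference of the outputs. Your extra remarks, on the finiteness of the disagreement set and the window $p,\ldots,p+n-1$ obtained from $\ell_n+\rho_n=n-1$, only make explicit what the paper leaves implicit.
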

\begin{proof}
For $i>p+\ell_n$ the two neighborhoods agree. At
$i=p+\ell_n$ the first input differs while all the others agree;
left permutivity makes the outputs different.
\end{proof}
\begin{corollary}[No collapse for simple seeds on the line]\label{cor:nocollapse}
For a left-permutive rule, the evaluations of any simple seed on the line
are pairwise distinct at every time. More precisely, if evaluations
$u,v$ differ at rightmost projection position $p$, their rightmost
disagreement at time $t$ occurs at $p+t\ell_n$.
\end{corollary}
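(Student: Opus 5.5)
The plan is to derive the corollary from Theorem~\ref{thm:leftperm} by induction on $t$, applied to the evaluated Boolean configurations $\pi_uP$ and $\pi_vP$. Let $P$ be a simple seed on the line and $u\neq v$ in $\mathbb B^k$. As noted after Definition~\ref{def:cw}, $\pi_uP$ and $\pi_vP$ agree at every constant-symbol site, including the background. They can differ only at the $k$ projection sites, and the site of $\mathtt e_j^k$ carries $u_j$ and $v_j$ respectively.

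Since $P$ is simple, each projection occurs at exactly one site. Hence the set of disagreement sites of $\pi_uP$ and $\pi_vP$ is exactly the set of sites of $\mathtt e_j^k$ with $u_j\neq v_j$. This set is finite, and it is nonempty because $u\ne v$. Simplicity matters here: it guarantees that the differing coordinates actually produce a disagreement, and that the rightmost one is a well-defined projection position $p$, namely the rightmost site carrying some $\mathtt e_j^k$ with $u_j\neq v_j$. This is the $p$ of the statement.

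For the induction, put $x^t=F_\varphi^t(\pi_uP)$ and $y^t=F_\varphi^t(\pi_vP)$. I claim that $x^t$ and $y^t$ differ at finitely many sites and that their rightmost disagreement is $p+t\ell_n$. The case $t=0$ was handled above. For the step, $x^{t+1}$ and $y^{t+1}$ can differ only within $\rho_n$ or $\ell_n$ cells of a disagreement of $x^t,y^t$, so they again differ at finitely many sites. Theorem~\ref{thm:leftperm} then places their rightmost disagreement exactly at $p+t\ell_n+\ell_n$. In particular $x^t\neq y^t$ for every $t$.

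By Lemma~\ref{lem:eval}, $x^t=\pi_u\overline F_\varphi^{\,t}(P)$, so the evaluations of the lifted orbit are pairwise distinct at every time. Equivalently, $\sim_P$ of Definition~\ref{def:collapse} is the identity relation.

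There is no real obstacle. The only points needing care are the finiteness of the difference set, which is used to apply the theorem at each step, and the role of simplicity in identifying $p$. For a non-simple seed the same argument still gives distinctness whenever the evaluations differ initially, but $p$ would no longer be a single projection position.
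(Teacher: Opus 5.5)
Your proof is correct and is the paper's argument, which simply applies Theorem~\ref{thm:leftperm} repeatedly. Your induction on $t$ spells out the two facts that argument relies on: simplicity makes the initial disagreement set finite and nonempty with rightmost element $p$, and finiteness of the disagreement set is preserved at each step, so the theorem can be reapplied.
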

\begin{proof}
Apply Theorem~\ref{thm:leftperm} repeatedly.
\end{proof}

This shows why a collapse proof cannot refute the main conjecture:
Rule 30 retains the distinctions between all $2^k$ Boolean inputs.
But absence of collapse alone does not ensure that all joint
patterns are ever attained.

\subsection{A triangular dynamics along Rule-30 diagonals}
Diagonal and nested structures in ordinary Rule~30 have been studied
previously by Rowland~\cite{Rowland2006}.  More recently, Nersissian
developed a detailed analysis of the slope-one diagonals of the classical
single-seed Rule~30 orbit, including Newton-support descriptions,
period-doubling mechanisms, and Fibonacci bounds on their
periods~\cite{Nersissian2026Periods,Nersissian2026Bases}.

The diagonals considered below are function-valued and arise from the
lifted seeds $Q_k$.  After evaluation at any
$u\in\mathbb B^k$, they become ordinary Rule~30 diagonals with finite
initial data depending on $u$, rather than the single-site initial
condition considered in the cited works.  We therefore record the lifted
form of the diagonal recurrence and include the proofs needed below,
in particular to make explicit that the resulting period bounds hold
uniformly over all evaluations.

For $Q_k$ of \eqref{eq:30family}, place its nonzero cells at
$0,\ldots,k+1$ and define the symbols on rightward diagonals by
\[
 d_s(t)=(\overline F_{\varphi_{30}}^t(Q_k))_{k+1+t-s},
 \qquad s\ge0,
\]
with exterior diagonals $d_{-1}=d_{-2}=\mathtt 0$.

The following recurrence is the function-valued counterpart of the
Rule~30 diagonal recurrence; compare
\cite{Nersissian2026Periods,Nersissian2026Bases}.

\begin{proposition}[Triangular recurrence]\label{prop:diagonals}
For $t\ge1$ and $s\ge0$,
\[
 d_s(t)=d_s(t-1)\oplus
       \big(d_{s-1}(t-1)\vee d_{s-2}(t-1)\big).
\]
Each diagonal sequence is periodic from time $0$, with a period
$\Theta_s$ dividing $2^s$.
\end{proposition}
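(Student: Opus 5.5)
The plan is to prove the two assertions separately: the recurrence by direct index bookkeeping in the lifted space--time diagram, and the period bound by induction on $s$, working with function-valued sequences and pointwise XOR in $\mathcal B_k$ throughout. For $\varphi_{30}$ we have $n=3$ and $\ell_3=\rho_3=1$. The nonzero cells of $Q_k$ occupy $0,\ldots,k+1$, so by Remark~\ref{lem:fininv}(i) and induction, $P^{t}_i=\mathtt 0$ for $i>k+1+t$. Here $P^t=\overline F^{\,t}_{\varphi_{30}}(Q_k)$, and $k+1+t$ is the edge $E_t$ of Section~\ref{ss:walls} with $r=k+1$.

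\emph{Recurrence.} Fix $t\ge1$ and $s\ge0$, and let $i=k+1+t-s$, so that $d_s(t)=P^t_i$. The window of $i$ at time $t-1$ consists of the sites $i-1,i,i+1$. A site $j$ at time $t-1$ lies on diagonal $s'=k+1+(t-1)-j$. Substituting $j=i-1,i,i+1$ gives $s'=s,\ s-1,\ s-2$ respectively. Hence
\[
d_s(t)=\overline\varphi_{30}\bigl(d_s(t-1),d_{s-1}(t-1),d_{s-2}(t-1)\bigr)=d_s(t-1)\oplus\bigl(d_{s-1}(t-1)\vee d_{s-2}(t-1)\bigr),
\]
using $\varphi_{30}(x,y,z)=x\oplus(y\vee z)$ lifted pointwise. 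When $s'\in\{-1,-2\}$, the site $j$ satisfies $j>k+1+(t-1)=E_{t-1}$. By the remark above it carries $\mathtt 0$, which is consistent with the convention $d_{-1}=d_{-2}=\mathtt 0$. This step is the computation of Lemma~\ref{lem:lc-block} in diagonal coordinates; the only care needed is in the index shifts.

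\emph{Periodicity.} The claim to prove by strong induction on $s$ is that $d_s(t+2^s)=d_s(t)$ for all $t\ge0$. For the base cases, note that $d_{-1},d_{-2}$ are constant, so $d_0(t)=d_0(t-1)\oplus\mathtt 0$ and $d_0$ is constant, with period $1=2^0$. For the step with $s\ge1$, assume $d_{s-1}$ and $d_{s-2}$ are periodic from time $0$ with periods dividing $2^{s-1}$; for $s=1$ this holds trivially for $d_{-1}$. Then $c(t)=d_{s-1}(t)\vee d_{s-2}(t)$ is periodic from time $0$ with period dividing $2^{s-1}$. Unrolling the recurrence gives $d_s(t)=d_s(0)\oplus\bigoplus_{r<t}c(r)$. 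Let $S=\bigoplus_{r=0}^{2^{s-1}-1}c(r)\in\mathcal B_k$. Periodicity of $c$ gives $d_s(t+2^{s-1})=d_s(t)\oplus S$. Applying this twice yields $d_s(t+2^s)=d_s(t)\oplus S\oplus S=d_s(t)$. Therefore the least period $\Theta_s$ of $d_s$ divides $2^s$, and the sequence is periodic from time $0$ with no transient.

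I do not expect a serious obstacle. The proof uses no property of $Q_k$ beyond the vanishing of cells to the right of the light-cone edge, so the bound holds uniformly over all evaluations $\pi_u$, as the surrounding text requires. The point to get right is that the doubling argument uses only the additive structure of $\oplus$ in the first argument (left permutivity in XOR form), so it applies to the function-valued sequences directly rather than input by input.
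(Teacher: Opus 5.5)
Your proof is correct and follows the paper's argument essentially step for step: the same index bookkeeping for the three predecessors of the cell at $k+1+t-s$, and the same induction on $s$, in which the XOR $c_s$ of one period of the forcing $d_{s-1}\vee d_{s-2}$ appears twice and cancels, giving $d_s(t+2^s)=d_s(t)$. One small point in your favour: you check explicitly that the sites which would lie on diagonals $-1,-2$ are beyond the light-cone edge $E_{t-1}$ and therefore carry $\mathtt 0$, whereas the paper leaves this implicit in its convention $d_{-1}=d_{-2}=\mathtt 0$.
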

\begin{proof}
Fix $s\geq0$ and $t\geq1$, and consider the cell defining $d_s(t)$,
namely the cell at position
\[
i=k+1+t-s
\]
at time $t$.  Since Rule~30 has neighborhood $(i-1,i,i+1)$, its three
predecessors at time $t-1$ are the cells at positions
\[
k+t-s,\qquad k+1+t-s,\qquad k+2+t-s.
\]
By the definition of the diagonal sequences, these are respectively
\[
d_s(t-1),\qquad d_{s-1}(t-1),\qquad d_{s-2}(t-1).
\]
Indeed, at time $t-1$ the cell belonging to diagonal $r$ is at position
\[
k+1+(t-1)-r=k+t-r.
\]
Taking $r=s,s-1,s-2$ gives exactly the three positions above.

Since
\[
\varphi_{30}(x,y,z)=x\oplus(y\vee z),
\]
we therefore obtain
\[
d_s(t)
=
d_s(t-1)\oplus
\bigl(d_{s-1}(t-1)\vee d_{s-2}(t-1)\bigr).
\]

It remains to prove the periodicity statement.  We prove by induction on $s\ge0$ that $d_s$ is periodic from time $0$
with period dividing $2^s$. The exterior diagonals
$d_{-1}=d_{-2}=\mathtt 0$ are constant, hence $1$-periodic.
Since $d_0(0)=\mathtt 1$ (the rightmost nonzero cell of $Q_k$ is the
final constant $\mathtt 1$), the recurrence gives $d_0(t)=d_0(t-1)$, so
$d_0\equiv\mathtt 1$ and $\Theta_0=1$.

Now let $s\ge1$ and assume the claim for every $0\le r<s$. Then
$d_{s-1}$ and $d_{s-2}$ are both $q$-periodic from time $0$, with
$q=2^{s-1}$: for $r\ge0$ because their periods divide $2^{r}\mid 2^{s-1}$,
and for $r<0$ because they are constant.
Define the forcing sequence
\[
a_s(t)=d_{s-1}(t)\vee d_{s-2}(t).
\]
Then $a_s$ is $q$-periodic, and the recurrence can be written as
\[
d_s(t+1)=d_s(t)\oplus a_s(t).
\]
Iterating this identity over one complete period of the forcing gives
\[
d_s(t+q)
=
d_s(t)\oplus
\bigoplus_{j=0}^{q-1} a_s(t+j).
\]
Because $a_s$ is $q$-periodic, the XOR on the right is independent of
$t$; denote it by
\[
c_s:=\bigoplus_{j=0}^{q-1}a_s(j)\in\mathcal B_k.
\]
Hence
\[
d_s(t+q)=d_s(t)\oplus c_s.
\]
Applying this identity twice gives
\[
d_s(t+2q)
=d_s(t)\oplus c_s\oplus c_s
=d_s(t).
\]
Thus $d_s$ is periodic from time $0$, with period dividing
$2q=2^s$.  In either case $d_s$ is periodic from
time $0$ and
\[
\Theta_s\mid 2^s.
\]
This completes the induction.
\end{proof}

The preceding bound $\Theta_s\mid 2^s$ can be sharpened substantially.

For the classical Rule~30 orbit generated by a single occupied cell,
Nersissian obtained the corresponding Fibonacci bound by controlling the
Newton support, and formulated it in terms of the
integer-valued polynomial lift and its interpolation
order~\cite{Nersissian2026Periods,Nersissian2026Bases}.
The next proposition is the version needed for the lifted seeds $Q_k$.
Its proof is included for completeness, since the evaluated initial data
depend on $u\in\mathbb B^k$ and we need a single period bound that holds
simultaneously for all evaluations.

\begin{proposition}[Fibonacci period bound]
\label{prop:fib-period}
Let $F_0=0$, $F_1=1$, and $F_{m+2}=F_{m+1}+F_m$ be the Fibonacci
sequence.  For every $s\geq0$ put
\[
 q_s:=2^{\lceil\log_2 F_{s+2}\rceil}.
\]
Then the Rule-$30$ diagonal $d_s$ of Proposition~\ref{prop:diagonals}
is purely periodic, and its least period $\Theta_s$ satisfies
\[
 \Theta_s\mid q_s.
\]
\end{proposition}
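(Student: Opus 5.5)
The plan is to work evaluation by evaluation and to measure each binary diagonal by its \emph{Newton degree}. Fix $u\in\mathbb B^k$ and write $\delta_s(t)=d_s(t)(u)\in\mathbb B$. By Lemma~\ref{lem:eval}, the recurrence of Proposition~\ref{prop:diagonals} holds pointwise:
\[
\delta_s(t+1)=\delta_s(t)\oplus\alpha_s(t),\qquad \alpha_s=\delta_{s-1}\vee\delta_{s-2}=\delta_{s-1}\oplus\delta_{s-2}\oplus\delta_{s-1}\delta_{s-2}.
\]
Every sequence $\delta:\mathbb N\to\mathbb B$ has a unique expansion
\[
\delta(t)=\bigoplus_{j\ge0}c_j\binom{t}{j}\bmod 2 .
\]
Define $\deg\delta$ as the largest $j$ with $c_j=1$, and put $\deg 0=-\infty$. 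The target is the bound $\deg\delta_s\le F_{s+2}-1$, uniformly in $u$. From it the period statement will follow.

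\textbf{Step 1: three basis lemmas.}
\begin{enumerate}
\item[(a)] \emph{Periodicity criterion.} By Lucas' theorem, $\binom{t}{j}\bmod2$ has period $2^m$ whenever $j<2^m$. Hence a sequence with $\deg\delta<2^m$ is purely periodic with period dividing $2^m$. Only this direction is needed.
\item[(b)] \emph{Summation.} Since $\sum_{r<t}\binom rj=\binom t{j+1}$, the relation $\delta(t+1)=\delta(t)\oplus\alpha(t)$ gives
\[
\delta(t)=\delta(0)\oplus\bigoplus_j c_j(\alpha)\binom t{j+1}.
\]
Therefore $\deg\delta\le\max(0,\deg\alpha+1)$.
\item[(c)] \emph{Products.} The product $\binom ti\binom tj$ is an integer-valued polynomial in $t$ of degree $i+j$. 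It is therefore an integer combination of $\binom tm$ with $m\le i+j$, and reducing mod $2$ gives $\deg(xy)\le\deg x+\deg y$. Clearly also $\deg(x\oplus y)\le\max(\deg x,\deg y)$.
\end{enumerate}

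\textbf{Step 2: induction on $s$.} Put $D_s=\deg\delta_s$ and $E_s=F_{s+2}-1$. Then $E_{-2}=-1$, $E_{-1}=E_0=0$, and $E_s=E_{s-1}+E_{s-2}+1$.

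For the base cases, $\delta_{-1}=\delta_{-2}=0$, and $\delta_0\equiv1$ by the proof of Proposition~\ref{prop:diagonals}, so $D_0=0$. For $s=1$, $\alpha_1=\delta_0$ is constant, so $D_1\le1=E_1$.

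For $s\ge2$, Step 1(c) gives
\[
\deg\alpha_s\le\max\bigl(D_{s-1},\,D_{s-2},\,D_{s-1}+D_{s-2}\bigr)\le E_{s-1}+E_{s-2},
\]
since both $E$'s are nonnegative for $s\ge2$. Step 1(b) then yields $D_s\le E_{s-1}+E_{s-2}+1=E_s$.

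\textbf{Step 3: conclusion.} Let $m=\lceil\log_2F_{s+2}\rceil$. Then $2^m\ge F_{s+2}>E_s\ge D_s$, so by Step 1(a) each $\delta_s$ is purely periodic with period dividing $q_s=2^m$. This holds for every $u$. A symbol in $\mathcal B_k$ is determined by its $2^k$ evaluations, so $d_s(t+q_s)=d_s(t)$ for all $t\ge0$. Hence $d_s$ is purely periodic and its least period divides $q_s$.

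The main obstacle is the product lemma in Step 1(c), together with the bookkeeping of the $-\infty$ and $0$ degrees in the small cases. Concretely, one must check that reduction mod $2$ of the integer expansion of $\binom ti\binom tj$ does not introduce terms of index above $i+j$. This holds because the integer expansion already has no such terms and the map to $\mathbb B$ is coefficientwise. The small cases are why $s=0,1$ are handled separately above.
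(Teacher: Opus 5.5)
Your proof is correct and follows essentially the same route as the paper. Both arguments bound the Newton-basis degree by $F_{s+2}-1$ via $D_s\le D_{s-1}+D_{s-2}+1$, then use the $2$-power periodicity of $\binom{t}{j}\bmod 2$ for $j<q_s$, uniformly in $u$. The only difference is where the degree is tracked. The paper lifts the recurrence to integer-valued polynomials $A_s(t)$ and bounds their ordinary degree, which makes degree subadditivity under products automatic. You work with the Newton degree directly over $\mathbb F_2$, and pay for this with your product lemma 1(c), which invokes the same integer Newton expansion.
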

The proof is given in Appendix~\ref{app:proofs}.

\begin{corollary}[Fibonacci right-edge counting bound]
\label{cor:edge}
Let
\[
 M_S:=
 \sum_{s=0}^{S}
 2^{\lceil\log_2F_{s+2}\rceil}.
\]
Among the first $S+1$ Rule-$30$ diagonals
$d_0,\ldots,d_S$ one sees at most $M_S$ distinct Boolean functions:
\[
 \left|
 \bigcup_{s=0}^{S}\{d_s(t):t\geq0\}
 \right|
 \leq
 M_S.
\]
Moreover,
\[
 M_S<2(F_{S+4}-2).
\]
Consequently, whenever
\[
 F_{S+4}\leq2^{\,2^k-1}+2,
\]
the first $S+1$ diagonals alone cannot be complete.

In particular, the depth required to see all the functions in the first $S+1$ diagonals of the the space-time diagram grows at least exponentially in $k$:
\[
 S\geq
 \frac{2^k}{\log_2\!\left(\frac{1+\sqrt5}{2}\right)}
 +O(1)
 =
 1.440420\ldots\,2^k+O(1).
\]
\end{corollary}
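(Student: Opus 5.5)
The plan is to combine the period bound of Proposition~\ref{prop:fib-period} with an elementary Fibonacci estimate. First, the counting bound. By Proposition~\ref{prop:fib-period} each diagonal $d_s$ is purely periodic with least period $\Theta_s\mid q_s$. Hence the set $\{d_s(t):t\ge0\}$ equals $\{d_s(0),\ldots,d_s(\Theta_s-1)\}$, which has at most $\Theta_s\le q_s=2^{\lceil\log_2F_{s+2}\rceil}$ elements. A union bound over $s=0,\ldots,S$ then gives the first inequality, with $M_S$ as defined.

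Next, the estimate $M_S<2(F_{S+4}-2)$. For every $x\ge1$ we have $2^{\lceil\log_2 x\rceil}<2x$. Strictness holds because $\lceil\log_2x\rceil<\log_2x+1$. Thus $q_s<2F_{s+2}$ for every $s$. Summing and using the classical identity $\sum_{s=0}^{S}F_{s+2}=\sum_{m=1}^{S+2}F_m-F_1=F_{S+4}-1-1=F_{S+4}-2$ gives $M_S<2(F_{S+4}-2)$. The identity $\sum_{m=1}^{N}F_m=F_{N+2}-1$ follows by induction. This is the only step needing care with indices, and I would check it on $S=0$: there $M_0=q_0=1$, since $F_2=1$, and $2(F_4-2)=2$.

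For the non-completeness consequence, suppose $F_{S+4}\le2^{2^k-1}+2$. Then
\[
M_S<2\bigl(2^{2^k-1}+2-2\bigr)=2^{2^k}=|\mathcal B_k|,
\]
so the first $S+1$ diagonals cannot contain all of $\mathcal B_k$. For the asymptotic statement I would argue by contraposition. If the first $S+1$ diagonals are complete, then $F_{S+4}>2^{2^k-1}+2>2^{2^k-1}$. Binet's formula gives $F_m\le\phi^{m-1}$ with $\phi=(1+\sqrt5)/2$, which is also a trivial induction. Hence $(S+3)\log_2\phi>2^k-1$, that is, $S>(2^k-1)/\log_2\phi-3=2^k/\log_2\phi+O(1)$. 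Finally, $1/\log_2\phi=1.440420\ldots$.

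I expect no real obstacle, since all the dynamical content sits in Proposition~\ref{prop:fib-period}. The points to watch are the strict inequality $2^{\lceil\log_2x\rceil}<2x$, the off-by-two in the Fibonacci sum, and the direction of the final bound. That bound is a lower bound on $S$, and it follows from the upper estimate $F_m\le\phi^{m-1}$.
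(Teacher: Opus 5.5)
Your proof is correct and follows the same route as the paper. The period bound of Proposition~\ref{prop:fib-period} gives at most $q_s$ symbols per diagonal. Then $2^{\lceil\log_2 m\rceil}<2m$ together with $\sum_{j\le n}F_j=F_{n+2}-1$ gives $M_S<2(F_{S+4}-2)$, and the growth of the Fibonacci numbers yields the lower bound on $S$. Your use of the explicit inequality $F_m\le\phi^{m-1}$ instead of Binet asymptotics is a harmless variant, and it makes the $O(1)$ term explicit: $S>(2^k-1)/\log_2\phi-3$.
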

The proof is given in Appendix~\ref{app:proofs}.

The bound of Proposition~\ref{prop:fib-period} is not sharp.
Figure~\ref{fig:diagonals} reports the observed periods of
$d_0,\dots,d_{21}$ for $k=3,4,5$: they always divide $q_s$, as
they must, but grow roughly like $2^{0.53\,s}$ (least-squares slopes of $\log_2\Theta_s$
between $0.52$ and $0.54$), against $\phi^{s}\approx2^{0.694\,s}$ for $q_s$. They also depend on $k$,
through the initial values $d_s(0)$. Moreover, a diagonal may contain
fewer distinct symbols than its period (for $k=4$, $d_7$ has period
$32$ and $30$ distinct symbols).
 
\begin{figure}[ht]
\centering
\includegraphics[width=0.62\textwidth]{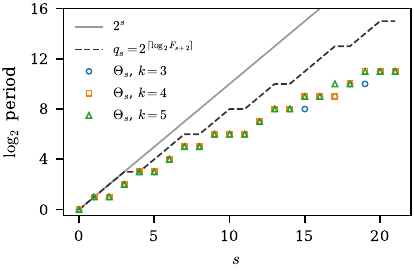}
\caption{Observed periods $\Theta_s$ of the diagonal sequences $d_s$ of $Q_k$
under Rule~30, for $k=3,4,5$ and $0\le s\le21$, compared with the
bounds $2^s$ of Proposition~\ref{prop:diagonals} and $q_s$ of
Proposition~\ref{prop:fib-period}. Periods were computed from the first
$16385$ rows; the observation window contains at least eight full periods of each sequence.}
\label{fig:diagonals}
\end{figure}

\section{The generic space-time symbol}
\label{sec:generic}
Let $\Phi_t$ be the Boolean function computed at site $0$ at time
$t$ when each initial site $j$ holds a separate Boolean variable $z_j$.
For elementary rules, $\Phi_t$ depends on
$z_{-t},\ldots,z_t$, and obeys the substitution recurrence
\[
 \Phi_0=z_0,\qquad
 \Phi_t=\varphi(\Phi_{t-1}^{(-1)},\Phi_{t-1}^{(0)},
                     \Phi_{t-1}^{(+1)}),
\]
where the superscript denotes shifting the initial variable indices.
For every symbolic configuration $P$,
\begin{equation}\label{eq:generic}
 (\overline F_\varphi^t(P))_i
 =\Phi_t(P_{i-t},\ldots,P_{i+t}).
\end{equation}
Here the right-hand side means pointwise Boolean substitution.
Thus a seed's entire realized set consists of substitution instances
of a single sequence of functions $\Phi_0,\Phi_1,\ldots$.

The generic-symbol formulation also separates two notions of universality.
\textit{Requiring the functions $\Phi_t$ themselves to contain all Boolean
functions of any number of variables is unnecessarily strong}.

Rule 30 immediately shows why this is not the right notion. Since Rule 30 is left-permutive, the generic symbol \(\Phi_t\) depends essentially on the leftmost variable in its light cone,
$z_{-t}.$
Hence \(\Phi_t\) cannot, for instance, be a constant function. Nevertheless, the lifted Rule 30 certainly produces constant functions in the complete seeds verified above.

The
natural object is therefore the substitution family
\[
S_t^{(k)}
=
\{\Phi_t(h_{-t},\ldots,h_t):h_j\in\Sigma_k\},
\qquad
\Sigma_k=\{\mathtt0,\mathtt1,\mathtt e_1^k,\ldots,\mathtt e_k^k\}.
\]
Universality of these substitution families,
\[
\bigcup_{t\ge0}S_t^{(k)}=\mathcal B_k,
\]
is equivalent to the existence of some finite complete seed of arity $k$, as shown in the following proposition. 

\begin{proposition}[Generic substitution universality]
\label{prop:generic-universality}
For $k\geq1$ let
\[
\Sigma_k=
\{\mathtt 0,\mathtt 1,\mathtt e_1^k,\ldots,\mathtt e_k^k\}
\]
and define
\[
S_t^{(k)}
=
\left\{
\Phi_t(h_{-t},\ldots,h_t):
h_j\in\Sigma_k
\right\}.
\]
Then \(\bigcup_{t\ge0}S_t^{(k)}=\mathcal B_k\) if and only if \(\overline F_\varphi\) is \(k\)-universal.
\end{proposition}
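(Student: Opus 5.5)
The argument is a direct translation in both directions through the generic-symbol identity \eqref{eq:generic}, which expresses every cell of the lifted orbit as $\Phi_t$ applied to a window of $2t+1$ initial cells. Throughout, $\varphi$ is elementary ($n=3$, $\ell_3=\rho_3=1$); the same proof works for general arity with the window $[i-t\ell_n,i+t\rho_n]$.

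For the direction ``$k$-universal $\Rightarrow$ substitution universal'', let $P\in\mathsf{Lin}_k$ be a $k$-complete seed. For each $g\in\mathcal B_k$ pick a site $(i_g,t_g)$ with $(\overline F^{t_g}_\varphi(P))_{i_g}=g$. By \eqref{eq:generic},
\[
g=\Phi_{t_g}(P_{i_g-t_g},\ldots,P_{i_g+t_g}).
\]
Every cell of a seed lies in $\Sigma_k$, the background included, since it is $\mathtt 0$ or $\mathtt 1$. So setting $h_j=P_{i_g+j}$ shows $g\in S^{(k)}_{t_g}$. This direction is immediate.

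For the converse, assume $\bigcup_t S^{(k)}_t=\mathcal B_k$. For each $g$ choose $t_g$ and a word $h^{(g)}=(h^{(g)}_{-t_g},\ldots,h^{(g)}_{t_g})\in\Sigma_k^{2t_g+1}$ with $\Phi_{t_g}(h^{(g)})=g$. There are finitely many $g$, namely $2^{2^k}$. Build one seed on the line, with background $\mathtt 0$, by placing these words in pairwise disjoint blocks. Block $g$ occupies an interval $[c_g-t_g,c_g+t_g]$, and the centres $c_g$ are chosen so that distinct intervals are disjoint. Every other cell is filled with $\mathtt 0$. The result is a finite configuration over $\Sigma_k$, hence a seed in $\mathsf{Lin}_k$. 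Simplicity is not required: projections may repeat across blocks.

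Now apply \eqref{eq:generic} at the site $(c_g,t_g)$. The backward light cone of this site is exactly $[c_g-t_g,c_g+t_g]$, and it contains precisely the word $h^{(g)}$. Therefore
\[
(\overline F^{t_g}_\varphi(P))_{c_g}=\Phi_{t_g}(h^{(g)})=g.
\]
Because each cell's value depends only on its backward light cone, the neighbouring blocks do not interfere, even though their own light cones overlap at later times. Hence $\mathcal R_\varphi(P)=\mathcal B_k$, and $\overline F_\varphi$ is $k$-universal.

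There is no real obstacle, only a point of rigor: \eqref{eq:generic} must hold for arbitrary symbolic configurations, with pointwise substitution. If needed, I would prove it by induction on $t$ from the recurrence defining $\Phi_t$. The key fact is that substituting functions into $\varphi$ commutes with the lift, because the lift is pointwise (Definition~\ref{def:lifted}). One could alternatively route the converse through a ring and Proposition~\ref{prop:finite-cyclic-equivalence}, but the direct line construction is simpler.
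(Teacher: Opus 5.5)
Your proposal is correct and is essentially the paper's proof: the forward direction reads off $g=\Phi_{t_g}(P_{i_g-t_g},\ldots,P_{i_g+t_g})$ from \eqref{eq:generic}, using that every seed cell (background included) lies in $\Sigma_k$. The converse places the finitely many witness words on pairwise disjoint backward light cones $[c_g-t_g,c_g+t_g]$ over a $\mathtt 0$ background, exactly as the paper does. Your side remark that \eqref{eq:generic} follows by induction from the pointwise nature of the lift is a harmless addition; the paper takes that identity as given.
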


\begin{proof}
Suppose first that $P\in\mathsf{Lin}_k$ is complete.  If
$g\in\mathcal B_k$, then for some site $i$ and time $t$,
\[
g=(\overline F_\varphi^t(P))_i.
\]
By \eqref{eq:generic},
\[
g=\Phi_t(P_{i-t},\ldots,P_{i+t}),
\]
and every entry of the seed belongs to $\Sigma_k$.  Hence
$g\in S_t^{(k)}$.

Conversely, assume that
\[
\bigcup_{t\geq0}S_t^{(k)}=\mathcal B_k.
\]
For every $g\in\mathcal B_k$, choose $t_g$ and a list
\[
w_g=(h^{(g)}_{-t_g},\ldots,h^{(g)}_{t_g})
\in\Sigma_k^{2t_g+1}
\]
such that
\[
g=\Phi_{t_g}(w_g).
\]

Since $\mathcal B_k$ is finite, only finitely many witnessing words
$w_g$ are required.  For each $g\in\mathcal B_k$, choose a center
$c_g\in\mathbb Z$ so that the intervals
\[
I_g=[c_g-t_g,c_g+t_g]
\]
are pairwise disjoint, and define a single configuration $P$ by placing
the word $w_g$ on $I_g$ and filling all remaining sites with
$\mathtt 0$.

The interval $I_g$ is the backward light cone of the space--time site $(c_g,t_g)$ (Section~\ref{sec:setting}), so the state at that site depends only on the initial cells in $I_g$.  Since the
restriction of $P$ to $I_g$ is exactly $w_g$, equation
\eqref{eq:generic} gives
\[
(\overline F_\varphi^{t_g}(P))_{c_g}
=
\Phi_{t_g}(w_g)
=
g.
\]
The other witnessing blocks lie outside $I_g$ and therefore cannot
affect this value.  Thus every $g\in\mathcal B_k$ occurs in the
space-time diagram of $P$, which is therefore complete.
\end{proof}

The situation is subtler for \emph{simple} seeds.  The preceding
construction may use the same projection at several different initial
sites, whereas a simple seed contains each projection exactly once.
Thus generic-substitution universality characterises \(k\)-universality, but does not by itself imply strong \(k\)-universality, the
problem considered, e.g., in Conjecture~\ref{conj:main}.

We illustrate an application of this framework to the Boolean NAND rule
$\varphi_{63}(x,y,z)=\neg(x\wedge y)$. It is functionally complete in the sense of Post~\cite{Post1941,Lau2006}, and it lies in none of the five maximal clones;
yet the restricted generic compositions use at most three neighboring
initial sites. Indeed, for $m\ge1$,
\[
 \Phi_{2m+1}=\operatorname{NAND}(z_{-m-1},z_{-m}),\qquad
 \Phi_{2m}=z_{-m}\wedge(z_{-m-1}\vee z_{-m+1}),
\]
with the appropriate base case $\Phi_0=z_0$.
With only $\mathtt 0,\mathtt 1,\mathtt a,\mathtt b,\mathtt c$ available for substitution,
a recorded exhaustive check gives just $20$ possible functions,
far fewer than $256$. Thus Rule~63 is not \(3\)-universal, despite the functional completeness of its local Boolean operation.

\section{Exact $3$-universality and non-$3$-universality results
for elementary rules}
\label{sec:exact-results}

\subsection{An explicit formula for Rule~180}
\label{ss:rule180}

Throughout this subsection $\varphi=\varphi_{180}$, i.e.\
$\varphi(x,y,z)=x\oplus(y\wedge\bar z)$, and $F=F_{180}$.
For integers $m,t\ge0$ we write $m\sqsubseteq t$ if every binary digit of
$m$ is at most the corresponding digit of $t$, and $[\,m\sqsubseteq t\,]\in\{0,1\}$
for the corresponding indicator; we set $[\,-1\sqsubseteq t\,]=0$.
By Lucas' theorem $[\,m\sqsubseteq t\,]\equiv\binom tm \pmod 2$, so Pascal's rule gives
\begin{equation}\label{eq:pascal2}
[\,n\sqsubseteq t+1\,]=[\,n\sqsubseteq t\,]\oplus[\,n-1\sqsubseteq t\,]\qquad(n\ge0).
\end{equation}

\paragraph{The mechanism.}
In the diagonal coordinates $d=i-t$ Rule~180 becomes
$(Gy)_d=y_d\oplus y_{d+1}\bar y_{d+2}$. A cell changes exactly when it reads
$10$ immediately to its right, so information travels only from right to left.
Fix a cell and split the cells to its right into even and odd positions.
If every odd cell equals $1$, the odd cells never change: an odd cell could
change only if the odd cell two steps to its right were $0$. Every even cell
then obeys $w_m\leftarrow w_m\oplus w_{m+1}$, where $w_m=\bar y_{2m}$.
Hence the even cells evolve as Pascal's triangle modulo~$2$, and the state at
time $t$ is an affine function of the initial even cells, with coefficients
$\binom tm\bmod 2$ (Figure~\ref{fig:rule180}(a)).
An odd cell equal to $0$ acts as a wall: the cells to its right are invisible
to the even cells on its left (Figure~\ref{fig:rule180}(b)). The wall is not
static, since odd cells on its left may switch off, but this never affects the
even columns. Lemma~\ref{lem:r180} states that this picture is exact: the state
of a cell is Pascal's triangle modulo~$2$, truncated at the first odd zero.
In the lifted setting the letters in odd positions act as switches. The
constant $1$ is always open and the constant $0$ is always closed. A projection
$\mathtt e^k_j$ is closed exactly on the hyperplane $x_j=0$. Freezing the first
switch that is not the constant $1$ therefore makes the value affine, which is
the content of Theorem~\ref{thm:r180}.

\begin{figure}[ht]
\centering
\includegraphics[width=\textwidth]{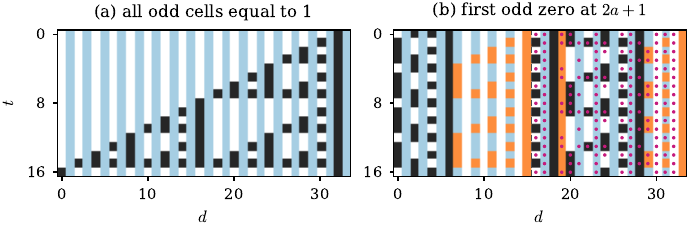}
\caption{Rule~$180$ in diagonal coordinates: the cell in column $d$ and row $t$
is $(G^ty)_d$, with time running downwards. Odd columns: light blue $=1$,
orange $=0$. Even columns are drawn in terms of $w=\bar y$: black $=0$ in $y$,
white $=1$ in $y$.
(a)~All initial cells equal $1$ except $y_{32}=0$. The odd columns stay equal
to $1$, and in row $t$ the black cells of column $2m$ are those with
$\binom{t}{16-m}$ odd, i.e.\ Pascal's triangle modulo~$2$.
(b)~A random initial configuration with $y_1=y_3=\dots=y_{13}=1$ and wall
$y_{15}=0$ ($a=7$, dashed line). Dots mark the cells that change when every
initial cell to the right of the dashed line is replaced by another random
value. No even column to the left of the line is affected, although odd cells
there may switch off (orange).}
\label{fig:rule180}
\end{figure}

\begin{lemma}[Rule~180 along diagonals]\label{lem:r180}
For $x\in\mathbb B^{\mathbb Z}$, $t\ge0$ and $i\in\mathbb Z$ let
\[
a=a_t(x,i)=\min\bigl(\{\,0\le j<t:\ x_{i-t+2j+1}=0\,\}\cup\{t\}\bigr).
\]
Then
\[
F^t(x)_i\;=\;1\oplus\bigoplus_{0\le m\le a}[\,m\sqsubseteq t\,]\;\overline{x_{i-t+2m}} .
\]
\end{lemma}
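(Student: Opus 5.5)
The plan is to pass to the diagonal coordinates of the mechanism paragraph and prove the statement by induction on $t$, writing one step of the iteration as $F^{t+1}(x)_i=F^t(F(x))_i$. First I rewrite the claim in comoving form. Fix $i$ and $t$ and set $y_d=x_{i-t+d}$ for $0\le d\le 2t$; this is exactly the backward light cone of $(i,t)$. In these coordinates $F^t(x)_i=(G^ty)_0$ with $(Gy)_d=y_d\oplus y_{d+1}\bar y_{d+2}$, since the window of site $i-t+d$ read one step later becomes $(d,d+1,d+2)$. With $w_m=\bar y_{2m}$, the statement becomes
\[
(G^ty)_0=1\oplus\bigoplus_{0\le m\le a}[\,m\sqsubseteq t\,]\,w_m,\qquad a=\min\bigl(\{j<t:y_{2j+1}=0\}\cup\{t\}\bigr).
\]
The base case $t=0$ is immediate: $a=0$ and $[0\sqsubseteq0]=1$ give $(G^0y)_0=y_0$.

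For the inductive step, take $a=a_{t+1}(y)$ and put $y'=Gy$, so that $(G^{t+1}y)_0=(G^ty')_0$. The induction hypothesis then applies to $y'$, with its own wall index $a'=a_t(y')$. The elementary local computations are as follows.
\begin{enumerate}
\item[(i)] For $j<a-1$ both $y_{2j+1}$ and $y_{2j+3}$ equal $1$, hence $y'_{2j+1}=1\oplus y_{2j+2}\cdot 0=1$. Thus the odd switches strictly before the wall stay open.
\item[(ii)] For $m<a$ we have $y_{2m+1}=1$, hence $y'_{2m}=y_{2m}\oplus\bar y_{2m+2}$, that is, $w'_m=w_m\oplus w_{m+1}$. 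This is the Pascal step.
\item[(iii)] At the wall itself, when $a\le t$ and $y_{2a+1}=0$, we get $y'_{2a-1}=1\oplus y_{2a}=w_a$ and $y'_{2a}=y_{2a}$, so $w'_a=w_a$.
\end{enumerate}

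From these facts $a'\in\{a-1,a\}$ when $a\le t$. Precisely, $a'=a-1$ iff $w_a=0$, and otherwise $a'\ge a$. I then check that $a'$ is not larger than $a$: either $a=t$, so that $a'\le t$ is forced by the shorter window, or the cell $y'_{2a+1}$ has to be examined. I expect the cases $a'>a$ to be handled by the observation that any extra terms carry a vanishing coefficient. Substituting (ii) into the hypothesis for $y'$ gives
\[
1\oplus\bigoplus_{m\le a'}[\,m\sqsubseteq t\,](w_m\oplus w_{m+1}).
\]
Regrouping by $w_n$ and applying \eqref{eq:pascal2} turns this into $1\oplus\bigoplus_{n\le a'}[\,n\sqsubseteq t+1\,]w_n$, plus a leftover boundary term $[\,a'\sqsubseteq t\,]w_{a'+1}$. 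It remains to show that this boundary term exactly compensates the difference between truncating at $a'$ and truncating at $a$.
\begin{enumerate}
\item[(1)] If $a'=a-1$, then $w_a=0$, and both the boundary term and the missing $n=a$ term vanish.
\item[(2)] If $a'=a$, I use (iii), $w'_a=w_a$, in place of (ii) for the last index. This replaces the boundary term by the correct coefficient $[\,a\sqsubseteq t\,]\oplus[\,a-1\sqsubseteq t\,]=[\,a\sqsubseteq t+1\,]$.
\item[(3)] The case $a=t+1$ (no wall in the window) is pure Pascal, using $[\,t+1\sqsubseteq t\,]=0$.
\end{enumerate}

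The main obstacle is this boundary bookkeeping at the wall. The danger is that the wall moves: $y'_{2a-1}$ may close, and cells beyond the old wall may open. To handle it I would first isolate (i)--(iii) as a small lemma and then run the three cases separately. I would check each case on the recurrence \eqref{eq:pascal2} rather than on the closed formula, which keeps the argument to a few lines per case. If the case $a'>a$ turns out not to be absorbed as described, the fallback is to strengthen the induction hypothesis so that it tracks both $a_t(y)$ and the first odd zero of $G^ty$.
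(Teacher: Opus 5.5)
Your setup is the same as the paper's: the diagonal coordinates, the map $G$, the variables $w_m=\bar y_{2m}$, and induction through $y'=Gy$. Your facts (i)--(iii) are correct, and so are your cases (2) and (3). The gap is the case $a'>a$, which you leave open, and both things you say about it are wrong.

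First, the claim $a'\in\{a-1,a\}$ for $a\le t$ is false, and you cannot check that $a'$ is not larger than $a$. Take $t=2$ and $y_1=1$, $y_2=0$, $y_3=0$, $y_4=1$, $y_5=0$. Then $a=a_3(y)=1$. On the other hand $y'_1=1$ and $y'_3=y_3\oplus y_4\bar y_5=1$, so $a'=a_2(y')=2$. Second, the extra term $m=2$ in the induction hypothesis for $y'$ has coefficient $[\,2\sqsubseteq 2\,]=1$. So it is not removed by a vanishing coefficient. It vanishes only because $w'_2=\overline{y_4\oplus y_5\bar y_6}=0$.

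The missing idea is that odd cells beyond the wall can only open in neutral pairs. Suppose $a\le t$ and $a'>a$. Then $y'_{2j+1}=1$ for $a\le j<a'$.
\begin{itemize}
\item At $j=a$ we have $y_{2a+1}=0$, so $1=y'_{2a+1}=y_{2a+2}\bar y_{2a+3}$. This forces $y_{2a+2}=1$ and $y_{2a+3}=0$.
\item Repeating the argument at $j=a+1,\dots,a'-1$ gives $y_{2m}=1$ and $y_{2m+1}=0$ for $a<m\le a'$.
\item Hence $y'_{2m}=1\oplus 0\cdot\bar y_{2m+2}=1$, that is, $w'_m=0$ for $a<m\le a'$.
\end{itemize}
Every term with $a<m\le a'$ in the hypothesis for $y'$ therefore vanishes, whatever its binomial coefficient. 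Your case (2) computation, with (ii) for $m<a$ and (iii) at $m=a$, then covers every $a'\ge a$ unchanged.

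With this fact no strengthening of the induction hypothesis is needed. It is exactly the third ingredient of the paper's proof, alongside your (ii) and the wall-movement bound $a'\ge a-1$.

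A minor point: split the cases as $a=t+1$ versus $a\le t$ rather than by $a'$. When $a=t+1$ you have $a'=t=a-1$ without $w_a=0$. When $a=0$, the equivalence ``$a'=a-1$ iff $w_a=0$'' fails, since $y'_{2a-1}$ is then out of range.
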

\begin{proof}
\emph{Diagonal coordinates.}
Let $G:\mathbb B^{\mathbb N}\to\mathbb B^{\mathbb N}$,
$(Gy)_d=y_d\oplus y_{d+1}\bar y_{d+2}$. For fixed $s$, the sequence
$u_d(t)=F^t(x)_{s+d+t}$ satisfies $u(t+1)=G\,u(t)$. Taking $s=i-t$ gives
$F^t(x)_i=(G^ty)_0$ with $y_k=x_{i-t+k}$. In these coordinates a cell
changes exactly when it reads $10$ immediately to its right.

\emph{Claim.} Put $w_m=\bar y_{2m}$ for $m\ge0$ and let
$a_t(y)=\min(\{j<t:\ y_{2j+1}=0\}\cup\{t\})$ (the \emph{wall}).
Note that $[\,m\sqsubseteq t\,]=0$ whenever $m>t$. Then
\begin{equation}\label{eq:star}
(G^ty)_0=1\oplus\bigoplus_{0\le m\le a_t(y)}[\,m\sqsubseteq t\,]\,w_m .
\end{equation}
In words: the even cells evolve as Pascal's triangle modulo $2$, truncated at
the first odd zero. For $t=0$, \eqref{eq:star} reads $(G^0y)_0=y_0$.

\emph{Induction step.} Let $z=Gy$, $a=a_{t+1}(y)$, $b=a_t(z)$ and
$w'_m=\bar z_{2m}$. By the induction hypothesis applied to $z$,
\[
(G^{t+1}y)_0=(G^tz)_0
=1\oplus\bigoplus_{0\le m\le b}[\,m\sqsubseteq t\,]w'_m .
\]
We use three facts.
\begin{itemize}
\item[(P)] \emph{Pascal before the wall.} If $0\le m<a$, then $y_{2m+1}=1$,
  so $z_{2m}=y_{2m}\oplus\bar y_{2m+2}$, i.e.\ $w'_m=w_m\oplus w_{m+1}$.
\item[(W)] \emph{The wall moves only when this is harmless.} For
  $0\le j\le a-2$ we have $y_{2j+1}=y_{2j+3}=1$, hence $z_{2j+1}=1$, and so
  $b\ge a-1$.
  If $a=0$, then $b\ge a$ trivially.
  If $1\le a\le t$ and $w_a=1$, then $y_{2a-1}=1$, $y_{2a}=0$ and
  $y_{2a+1}=0$, so
  $z_{2a-1}=1\oplus y_{2a}\bar y_{2a+1}=1$, and again $b\ge a$.
  Hence $b=a-1$ implies $w_a=0$ or $a=t+1$, and in both cases
  $[\,a\sqsubseteq t\,]w_a=0$.
\item[(N)] \emph{Beyond the wall, cells switch on in neutral pairs.} Suppose
  $a\le b$. Then $a\le t$, so $y_{2a+1}=0$ by the definition of $a$, and
  $w'_a=w_a$. For every $a\le j<b$ we have $z_{2j+1}=1$ by the definition
  of $b$. Hence, whenever $y_{2j+1}=0$,
  \[
  1=z_{2j+1}=y_{2j+2}\bar y_{2j+3},
  \]
  so $y_{2j+2}=1$ and $y_{2j+3}=0$. Iterating this argument from $j=a$ gives
  $y_{2m}=1$ and $y_{2m+1}=0$ for $a<m\le b$. Therefore $z_{2m}=1$, i.e.\
  $w'_m=0$, for $a<m\le b$.
\end{itemize}
By (P), (W) and (N), in every case
\[
\bigoplus_{0\le m\le b}[\,m\sqsubseteq t\,]w'_m
=\bigoplus_{0\le m<a}[\,m\sqsubseteq t\,](w_m\oplus w_{m+1})\;\oplus\;[\,a\sqsubseteq t\,]w_a .
\]
For each $0\le n\le a$, the coefficient of $w_n$ on the right is
$[\,n\sqsubseteq t\,]\oplus[\,n-1\sqsubseteq t\,]=[\,n\sqsubseteq t+1\,]$
by Pascal's rule~\eqref{eq:pascal2} (recall $[\,-1\sqsubseteq t\,]=0$).
This is \eqref{eq:star} for $t+1$.
\end{proof}

\begin{theorem}\label{thm:r180}
Let $k\ge1$ and let $C\in\mathcal B_k^{\mathbb Z}$ be any configuration whose
cells are constants or projections $\mathtt e^k_j$. 
For $t\ge0$ and $i\in\mathbb Z$ put $g=\overline F^{\,t}_{180}(C)_i$,
$\ell_r=C_{i-t+r}$ and
\[
J=\min\bigl(\{j<t:\ \ell_{2j+1}\ne 1\}\cup\{t\}\bigr),\qquad
A=1\oplus\bigoplus_{m\le J}[\,m\sqsubseteq t\,]\,\bar\ell_{2m}.
\]
Then $A$ is affine. If $J=t$ or $\ell_{2J+1}=0$, then $g=A$. If
$\ell_{2J+1}=\mathtt e^k_j$, then $g(u)=A(u)$ for every $u$ with $u_j=0$.
In all cases there exist an affine $A$, an index $j$ and some $B\in\mathcal B_k$ with
$g=A\oplus x_j B$.
\end{theorem}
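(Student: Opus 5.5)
The plan is to evaluate pointwise. By Lemma~\ref{lem:eval}, for every $u\in\mathbb B^k$ we have $g(u)=F^t(\pi_uC)_i$. Lemma~\ref{lem:r180} applies to the Boolean configuration $x=\pi_uC$ and gives
\[
g(u)=1\oplus\bigoplus_{0\le m\le a(u)}[\,m\sqsubseteq t\,]\,\overline{\ell_{2m}(u)}, \qquad a(u)=a_t(\pi_uC,i).
\]
Here $a(u)$ is the first $j<t$ with $\ell_{2j+1}(u)=0$, or $t$ if there is none. The whole proof reduces to comparing $a(u)$ with the symbolic index $J$.

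First, $A$ is affine. Each $\ell_{2m}$ is a constant or a projection, so $\bar\ell_{2m}$ is affine, and $A$ is a finite XOR of affine functions together with the constant $1$.

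Next, compare $a(u)$ with $J$. For $j<J$ we have $\ell_{2j+1}=\mathtt 1$, so $\ell_{2j+1}(u)=1$ for every $u$, and hence $a(u)\ge J$ always. There are three cases.
\begin{itemize}
\item If $J=t$, then $a(u)=t=J$ for all $u$.
\item If $\ell_{2J+1}=\mathtt 0$, then $a(u)=J$ for all $u$.
\item If $\ell_{2J+1}=\mathtt e^k_j$, then every $u$ with $u_j=0$ has $a(u)=J$.
\end{itemize}
Whenever $a(u)=J$, the displayed formula reads exactly $g(u)=A(u)$. This gives the first two assertions of the theorem.

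Finally, the decomposition. Put $h=g\oplus A$. In the first two cases $h=\mathtt 0$; take $B=\mathtt 0$ and any $j$. In the third case $h$ vanishes on the hyperplane $u_j=0$, so $h=x_j\cdot h$, and we take $B=h$ (or its restriction to $x_j=1$). Then $g=A\oplus x_jB$.

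The only delicate point is checking that the wall index $a(u)$ never drops below $J$, and that it equals $J$ precisely in the stated situations. This follows from the definition of $J$ as the first odd-position letter different from the constant $\mathtt 1$. Note that $J\le t$ guarantees the range $m\le J$ matches the lemma's range $m\le a$.
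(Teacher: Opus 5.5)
Your proposal is correct and follows essentially the same route as the paper. You evaluate pointwise via the evaluation identity, note that the wall index $a_t(\pi_uC,i)$ equals $J$ whenever $J=t$ or $\ell_{2J+1}(u)=0$, apply Lemma~\ref{lem:r180}, and obtain $g=A\oplus x_j(g\oplus A)$ because $g\oplus A$ vanishes on $x_j=0$. You also spell out two points the paper leaves implicit: why $A$ is affine, and why $a(u)\ge J$ for every $u$.
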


\begin{proof}
By Proposition~\ref{prop:conj}, $g(u)=F^t(\pi_uC)_i$, and the cells of $\pi_uC$
are $\ell_r(u)$. Since $\ell_{2j+1}=1$ for $j<J$, we have
$a_t(\pi_uC,i)=J$ for every $u$ such that $J=t$ or $\ell_{2J+1}(u)=0$.
Lemma~\ref{lem:r180} then gives $g(u)=A(u)$ for all such $u$.
If $g=A$, the last assertion holds with $B=0$. In the remaining case put
$h=g\oplus A$. Since $h$ vanishes on the hyperplane $x_j=0$, we have
$h=x_j\wedge h$ pointwise, hence $g=A\oplus x_jB$ with $B=g\oplus A$.
\end{proof}

The configurations covered are seeds on the line with either constant background and cyclic seeds via their periodic lift.

\begin{corollary}\label{cor:r180}
Let $k\ge3$. The function $\bar x_1\wedge\cdots\wedge\bar x_k$ never occurs in
the lifted evolution of Rule~180 from any configuration of constants and
projections. For $k=3$ this is the function with Wolfram number $1$.
Hence Rule~180 has no $k$-complete seed, in any of the four families, for any $k\ge3$.
\end{corollary}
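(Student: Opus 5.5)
By Theorem~\ref{thm:r180}, every symbol $g$ occurring in the lifted evolution from a configuration of constants and projections can be written as $g=A\oplus x_jB$ with $A$ affine, $j\in\{1,\ldots,k\}$ and $B\in\mathcal B_k$. The plan is to show that $f=\bar x_1\wedge\cdots\wedge\bar x_k$ has no such representation when $k\ge3$. The argument uses the restriction of the representation to the hyperplane $x_j=0$.

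Suppose $f=A\oplus x_jB$. Restricting to $x_j=0$ gives $f|_{x_j=0}=A|_{x_j=0}$. This restriction is an affine function of the remaining $k-1\ge2$ variables. However, $f|_{x_j=0}$ is the conjunction $\bigwedge_{i\ne j}\bar x_i$ of $k-1$ negated variables. Its ANF is $\prod_{i\ne j}(1\oplus x_i)$, which contains the monomial $\prod_{i\ne j}x_i$ of degree $k-1\ge2$. So it is not affine, a contradiction. Equivalently, one can count supports: an affine function of $m\ge2$ variables has support of size $0$, $2^m$ or $2^{m-1}$, never $1$. Hence $f$ never occurs at any site $(i,t)$. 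The hypothesis $k\ge3$ is exactly what makes $k-1\ge2$; for $k=2$ the function $\bar x_{j'}$ would be affine and the argument would fail.

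It remains to check that all four seed families are covered. Seeds on the line with background $\mathtt 0$ or $\mathtt 1$ are configurations of constants and projections, so Theorem~\ref{thm:r180} applies to them directly. For cyclic seeds, the periodic lift has the same symbols at every space--time site (Remark~\ref{rem:cycper}), and it is again a configuration of constants and projections. Hence $\mathcal R_{\varphi_{180}}(P)\ne\mathcal B_k$ for every seed $P$ in $\mathsf{Lin}_k$, $\mathsf{Lin}_k^\star$, $\mathsf{Cyc}_k(N)$ and $\mathsf{Cyc}_k^\star(N)$.

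For $k=3$, the Wolfram number of $\bar x_1\bar x_2\bar x_3$ is $2^{\operatorname{idx}(000)}=2^0=1$. The only real work is already contained in Theorem~\ref{thm:r180}. The remaining point to handle with care is that the index $j$ in the representation depends on the site, which is harmless because the contradiction above holds for every $j$.
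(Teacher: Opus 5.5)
Your proposal is correct and follows essentially the paper's proof: both invoke Theorem~\ref{thm:r180} and observe that the restriction of $\bar x_1\wedge\cdots\wedge\bar x_k$ to any hyperplane $x_j=0$ has degree $k-1\ge2$, so it cannot be affine. The paper also notes separately that the function itself is not affine, which covers the case $g=A$; your use of the uniform form $g=A\oplus x_jB$, with $B=\mathtt 0$ allowed, already absorbs that case.
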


\begin{proof}
The function is not affine. Its restriction to the hyperplane $x_j=0$ is
$\bigwedge_{m\ne j}\bar x_m$, which has degree $k-1\ge2$ and so is not affine either.
The claim follows from Theorem~\ref{thm:r180}.
\end{proof}

\begin{remark}
For $k=3$, $g$ restricted to $x_j=0$ is affine if and only if the ANF of $g$
does not contain the monomial formed by the other two variables. So the
condition of Theorem~\ref{thm:r180} excludes exactly the $32$ functions whose
ANF contains $x_1x_2$, $x_1x_3$ and $x_2x_3$. For $k=2$ the condition is empty, consistently with the existence of
complete seeds: for instance the cyclic seed $\mathtt{a1110b0111}$ in
$\mathsf{Cyc}^\star_2(10)$ is complete for Rule~$180$, all $16$ functions
being realized by time $t=7$. 
\end{remark}

\subsection{Certificates from closed-form solutions}
\label{ss:closed-form}

\paragraph{The problem.}
By Proposition~\ref{prop:generic-universality}, the lifted rule
$\overline F_\varphi$ is $k$-universal if and only if the set
\[
U^{(k)}_\varphi=\bigcup_{n\ge0}S^{(k)}_n,\qquad
S^{(k)}_n=\{\Phi_n(h_{-n},\ldots,h_n):\ h_i\in\Sigma_k\},
\]
is equal to $\mathcal B_k$. The set $U^{(k)}_\varphi$ contains every
function realized from any seed. Since the union runs over infinitely many
times, computing $S^{(k)}_n$ for finitely many values of $n$ cannot exclude
a function that might appear later. For several elementary rules this
difficulty can be overcome: their iterates admit an explicit
\emph{solution formula}, valid for every $n$ and every
$x\in\mathbb B^{\mathbb Z}$, and many such formulae are collected
in~\cite[Appendix~B]{Fuks2023}. In this subsection we show how such a
formula yields a finite computation of $U^{(k)}_\varphi$.

\paragraph{Streaming evaluation: an example.}
The solution formulae are built from local terms, running products, and
sums of running products. Expressions of this kind can be evaluated
without remembering the whole initial word. Consider
\[
T_r=\sum_{s=1}^r(-1)^s\prod_{i=1}^s x_{-i}.
\]
Suppose that we read the cells $x_{-1},x_{-2},\ldots$ in this order. After
reading $s$ cells it suffices to remember the two numbers
\[
P_s=\prod_{i=1}^s x_{-i},\qquad A_s=\sum_{j=1}^s(-1)^jP_j
\qquad(P_0=1,\ A_0=0),
\]
because the next cell $y=x_{-(s+1)}$ gives
\[
P_{s+1}=P_s\,y,\qquad A_{s+1}=A_s+(-1)^{s+1}P_{s+1},
\]
and at the end $T_r=A_r$. The cells already read can be forgotten. The
update depends on the step only through the parity of $s+1$; apart from
this parity, the same update rule serves for every length $r$. We call
this a \emph{streaming evaluation}.

The example also shows why a finite computation is possible. Once $P_s$
vanishes, it remains $0$ and $A_s$ no longer changes; as long as $P_s=1$,
the sum $A_s$ alternates between $-1$ and $0$. Hence the pair $(P_s,A_s)$
takes only finitely many values, although $r$ is unbounded. This property
is not automatic, since a fixed number of integer registers could in
principle take infinitely many values. In our certificates finiteness is
not assumed: it is established by an exhaustive search
(Proposition~\ref{prop:closed-form}).

\paragraph{Streaming descriptions.}
Throughout, $\varphi$ is an elementary rule and $F=F_\varphi$.

\begin{definition}[Streaming description]
\label{def:streaming}
A \emph{streaming description} of $\varphi$ consists of the following
data.
\begin{enumerate}
\item[(i)] Integers $a,b\ge1$ and $d\ge0$. The procedure described below
  first reads $a$ cells, called the \emph{initial block}, and then reads
  $b$ cells at each \emph{step}; the $b$ cells read at step $s$ form
  \emph{block~$s$}. The integer $d$ is a \emph{reading margin}.
\item[(ii)] A \emph{time map} $n(r)=\alpha r+\beta$, with integers
  $\alpha\ge1$ and $\beta\ge0$. A procedure that performs $r$ steps
  evaluates $F^{n(r)}(x)_j$, the value at site $j$ after $n(r)$
  iterations of $F$.
\item[(iii)] For every $r\ge0$, pairwise distinct integers
  \[
  e^{(r)}_1,\ldots,e^{(r)}_a,\qquad
  e^{(r)}_{s,1},\ldots,e^{(r)}_{s,b}\quad(1\le s\le r),
  \]
  all in $[-n(r)-d,\,n(r)+d]$. They are the positions, relative to the
  output site, of the cells of the initial block and of block~$s$ when the
  procedure computes $F^{n(r)}$.
\item[(iv)] A finite set $C$ of \emph{classes}, an initial class
  $c_0\in C$ and a map $\varsigma:C\to C$. The class of step $s$ is
  $c(s)=\varsigma^s(c_0)$.
\item[(v)] A set $Q$ of \emph{states} and three maps
  \[
  \iota:\mathbb B^a\to Q,\qquad
  \delta:C\times Q\times\mathbb B^b\to Q,\qquad
  \omega:C\times Q\to\mathbb Z,
  \]
  called initialization, update and output.
\end{enumerate}
These data must satisfy the following condition. For
$x\in\mathbb B^{\mathbb Z}$, $j\in\mathbb Z$ and $r\ge0$ define
\[
q_0=\iota\bigl(x_{j+e^{(r)}_1},\ldots,x_{j+e^{(r)}_a}\bigr),\qquad
q_s=\delta\bigl(c(s),\,q_{s-1},\,
x_{j+e^{(r)}_{s,1}},\ldots,x_{j+e^{(r)}_{s,b}}\bigr)\quad(1\le s\le r).
\]
Then
\begin{equation}\label{eq:streaming}
F^{n(r)}(x)_j=\omega\bigl(c(r),q_r\bigr)
\qquad\text{for all }x,\ j,\ r.
\end{equation}
A \emph{reduction} of the description is a map $\kappa:Q\to Q$ such that
\[
\omega(c,\kappa q)=\omega(c,q),\qquad
\kappa\,\delta(c,\kappa q,z)=\kappa\,\delta(c,q,z)
\qquad\text{for all }c\in C,\ q\in Q,\ z\in\mathbb B^b.
\]
The identity map is always a reduction.
\end{definition}

In words: to compute $F^{n(r)}(x)_j$, the procedure reads the initial
block and computes the state $q_0$; at step $s=1,\ldots,r$ it reads
block~$s$ and replaces $q_{s-1}$ by $q_s$; finally $\omega$ returns the
value. The state is the memory of the evaluation procedure, not a state
of the cellular automaton. Three features of the definition deserve a
comment.

First, the positions read may depend on $r$ (for Rule~27 below, reading
starts near $-n/2$), but the maps $\iota$, $\delta$ and $\omega$ depend
neither on $r$ nor on $x$. Information that depends only on the step
number, such as a sign $(-1)^s$, is carried by the class $c(s)$ and not by
the state; this will be essential when we run several copies of the
procedure in parallel.

Second, $F^n(x)_j$ depends only on $x_{j-n},\ldots,x_{j+n}$. The margin
$d$ allows the procedure to read a few cells outside this interval, which
is convenient at small times; by~\eqref{eq:streaming} these cells cannot
affect the output.

Third, a reduction forgets information that can no longer influence the
present or any future output, for instance cells that enter the formula
only through a running product that has already vanished. The two
conditions on $\kappa$ guarantee that no output changes. Reductions only
make the search below smaller.

\paragraph{From one binary word to all Boolean inputs.}
Fix $k\ge1$ and identify each letter $h\in\Sigma_k$ with its truth table
$(h(u))_{u\in\mathbb B^k}$. Evaluating a symbolic word $w$ at an input
$u$ gives a binary word $w(u)$, and by~\eqref{eq:generic} and
Proposition~\ref{prop:conj}, $\Phi_n(w)(u)=F^n(x^u)_0$ for every
configuration $x^u$ that agrees with $w(u)$ on $[-n,n]$. We can therefore
compute the function $\Phi_n(w)$ by running the same streaming procedure
on the $2^k$ binary words $w(u)$, one copy for each input $u$. The copies
are not independent. They read the same positions, so they are in the same
class at every step, and at each position all copies receive the bits of
one and the same letter: if this letter is $\mathtt e^3_2$, for instance,
the copy for $u\in\mathbb B^3$ receives the bit $u_2$. The following graph
records exactly these joint choices.

\begin{definition}[Product graph]
\label{def:product-graph}
Fix $k\ge1$ and a streaming description with a reduction $\kappa$. The
\emph{product graph} has vertices
$v=\bigl(c,(q_u)_{u\in\mathbb B^k}\bigr)\in C\times Q^{\mathbb B^k}$: a
class shared by all copies and one state for each input. Its
\emph{initial vertices} are obtained by choosing one symbolic initial
block $(h_1,\ldots,h_a)\in\Sigma_k^a$:
\[
\Bigl(c_0,\bigl(\kappa\,\iota(h_1(u),\ldots,h_a(u))\bigr)_{u}\Bigr).
\]
From a vertex $v=(c,(q_u)_u)$, every symbolic block
$(h_1,\ldots,h_b)\in\Sigma_k^b$ defines an edge to
\[
\Bigl(\varsigma(c),\bigl(\kappa\,\delta(\varsigma(c),q_u,
h_1(u),\ldots,h_b(u))\bigr)_u\Bigr).
\]
We denote by $V$ the set of vertices reachable from the initial vertices
by paths of any finite length. The \emph{output} of $v=(c,(q_u)_u)$ is the
function $g_v:\mathbb B^k\to\mathbb Z$, $g_v(u)=\omega(c,q_u)$.
\end{definition}

A path of length $r$ chooses one symbolic initial block and $r$ symbolic
blocks. Placing these letters at the positions $e^{(r)}_i$ and
$e^{(r)}_{s,i}$ gives a symbolic word $w$. For each input $u$, the state
at $u$ of the final vertex is the state reached by the procedure when it
evaluates $F^{n(r)}(x)_0$ on the binary word $w(u)$; consequently the
output of the final vertex is $\Phi_{n(r)}(w)$
(Proposition~\ref{prop:closed-form}(a) below). Since the graph contains
paths of every length, it contains these evaluations for all values of
$r$, hence for all times $n(r)$, at once. If $V$ is finite, exploring $V$
therefore gives information valid for every time.

\begin{proposition}[Closed-form certificate]
\label{prop:closed-form}
Let $D_1,\ldots,D_m$ be streaming descriptions of $\varphi$, with
reductions and time maps $n_1,\ldots,n_m$, and let $E\subset\mathbb N$
be a finite set such that every $n\ge0$ belongs to $E$ or to the range of
some $n_i$. Fix $k\ge1$ and let $V_i$ be the set of reachable vertices of
the product graph of $D_i$.
\begin{enumerate}
\item[(a)] For every $i$ and every $r\ge0$, $S^{(k)}_{n_i(r)}$ is the set
  of outputs $g_v$ of the vertices $v$ reached from an initial vertex by a
  path of exactly $r$ edges. In particular every output $g_v$,
  $v\in V_i$, is a Boolean function.
\item[(b)] One has
  \[
  U^{(k)}_\varphi=\bigcup_{n\in E}S^{(k)}_n\;\cup\;
  \bigcup_{i=1}^m\{g_v:\ v\in V_i\}.
  \]
\item[(c)] If every $V_i$ is finite, a breadth-first search computes
  every $V_i$, and hence $U^{(k)}_\varphi$, exactly.
\end{enumerate}
\end{proposition}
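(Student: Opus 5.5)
The plan is to prove (a) by induction on path length, derive (b) from (a) together with the covering hypothesis on $E$ and the time maps, and reduce (c) to the fact that breadth-first search on a finite reachable set terminates.

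For (a), fix $i$ and $r$. A path of $r$ edges from an initial vertex is a choice of one symbolic initial block $(h_1,\ldots,h_a)$ and $r$ symbolic blocks. Place these letters at the positions $e^{(r)}_1,\ldots,e^{(r)}_a$ and $e^{(r)}_{s,1},\ldots,e^{(r)}_{s,b}$, which are pairwise distinct and lie in $[-n(r)-d,\,n(r)+d]$. Fill the remaining sites of $[-n(r),n(r)]$ with $\mathtt 0$, obtaining a word $w\in\Sigma_k^{2n(r)+1}$; the letters in the margin beyond $\pm n(r)$ are irrelevant. For each $u$, compare the unreduced procedure run on a binary configuration $x^u$ extending $w(u)$ with the $u$-coordinate of the path.

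I will show by induction on $s$ that the $u$-coordinate after $s$ edges equals $\kappa q_s$, where $q_s$ is the unreduced state. The base case is $\kappa\iota(\cdot)$ by definition. For the step, use $\kappa\delta(c,\kappa q,z)=\kappa\delta(c,q,z)$. The class after $s$ edges is $\varsigma^s(c_0)=c(s)$ for all copies simultaneously. Then $\omega(c(r),\kappa q_r)=\omega(c(r),q_r)=F^{n(r)}(x^u)_0=\Phi_{n(r)}(w)(u)$ by \eqref{eq:streaming} and \eqref{eq:generic}. Hence every such output lies in $S^{(k)}_{n(r)}$.

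For the converse inclusion, take any $w=(h_{-n},\ldots,h_n)$ with $n=n(r)$. Read off the letters at the prescribed positions, using $\mathtt 0$ for margin positions outside $[-n,n]$; this defines a path. The output of that path agrees with $\Phi_n(w)$, because by \eqref{eq:streaming} the output depends only on $F^n(x^u)_0$, and that value depends only on the cells in $[-n,n]$.

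The point needing care, and the one I expect to be the main obstacle, is that cells of $[-n,n]$ not read by the procedure could in principle differ between $w$ and the reconstructed word. This is resolved by \eqref{eq:streaming} itself: since it holds for all $x$, the value $F^n(x)_0$ depends only on the read cells. So we may overwrite unread cells arbitrarily, for instance with the letters of $w$. Every output $g_v$, $v\in V_i$, is then $\{0,1\}$-valued, being some $\Phi_{n(r)}(w)$.

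For (b), $U^{(k)}_\varphi=\bigcup_n S^{(k)}_n$. Split the indices $n$ into those in $E$ and those equal to some $n_i(r)$; by (a), the latter give exactly $\bigcup_r\{g_v : v \text{ at distance exactly } r\}=\{g_v:v\in V_i\}$. For (c), breadth-first search from the finitely many initial vertices (there are at most $|\Sigma_k|^a$ of them) with finite out-degree $|\Sigma_k|^b$ terminates when no new vertex appears, which happens if $V_i$ is finite. The sets $S^{(k)}_n$ for $n\in E$ are finite direct computations by \eqref{eq:generic}, so $U^{(k)}_\varphi$ is computed exactly.
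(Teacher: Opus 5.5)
Your proof is correct and follows essentially the paper's route: the induction showing that the $u$-coordinate after $s$ edges is $\kappa q^u_s$, the identification of length-$r$ paths with words via the positions $e^{(r)}$, the use of \eqref{eq:streaming}, \eqref{eq:generic} and the product conjugacy for (a), then the union over $r$ and over the descriptions for (b), and the BFS closure argument for (c).

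The only point to tighten is that in both directions the configuration $x^u$ must agree with the path's letters at every read position. This includes the margin cells outside $[-n(r),n(r)]$, which is why the paper works with words on $[-n(r)-d,n(r)+d]$ and then restricts. Once you do this, the obstacle you flag disappears, because $x^u$ can simply be taken to agree with $w(u)$ on all of $[-n(r),n(r)]$.
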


\begin{proof}
Fix one description $D$ with reduction $\kappa$, an integer $r\ge0$, and
put $n=n(r)$ and $I=[-n-d,n+d]$.

\emph{Paths and words.} A path of length $r$ is determined by the letters
it chooses: a symbolic initial block $h_1,\ldots,h_a$ and symbolic blocks
$\beta^{(1)},\ldots,\beta^{(r)}\in\Sigma_k^b$. Since the positions of
Definition~\ref{def:streaming}(iii) are pairwise distinct and lie in $I$,
there is a word $w'\in\Sigma_k^I$ carrying these letters at these
positions, and conversely every $w'\in\Sigma_k^I$ carries such a choice of
letters. Let $v(w')$ denote the end vertex of the corresponding path.

\emph{Effect of the reduction.} For $u\in\mathbb B^k$ let $q^u_0,\ldots,q^u_r$
be the states of the unreduced procedure on the binary word $w'(u)$. Its
class after $s$ edges is $c(s)$, and we claim that its state component at
$u$ is $\kappa q^u_s$. This holds for $s=0$ by the definition of the
initial vertices. If it holds for $s-1$, the second identity for $\kappa$
gives
$\kappa\,\delta(c(s),\kappa q^u_{s-1},z)=\kappa\,\delta(c(s),q^u_{s-1},z)
=\kappa q^u_s$. Hence $v(w')=\bigl(c(r),(\kappa q^u_r)_u\bigr)$ and, by the
first identity, $g_{v(w')}(u)=\omega(c(r),q^u_r)$.

\emph{Outputs.} Let $x^u\in\mathbb B^{\mathbb Z}$ agree with $w'(u)$ on $I$,
and let $w$ be the restriction of $w'$ to $[-n,n]$.
By~\eqref{eq:streaming} with $j=0$, by~\eqref{eq:generic} and by
Proposition~\ref{prop:conj},
\[
g_{v(w')}(u)=\omega(c(r),q^u_r)=F^n(x^u)_0=\Phi_n(w)(u).
\]
Thus $g_{v(w')}=\Phi_n(w)$, which is Boolean. Every word of
$\Sigma_k^{2n+1}$ is the restriction of some $w'\in\Sigma_k^I$, and every
vertex reached by a path of exactly $r$ edges is $v(w')$ for some $w'$.
This proves~(a).

Taking the union over $r$ in~(a) gives
$\bigcup_{r\ge0}S^{(k)}_{n(r)}=\{g_v:\ v\in V\}$. Taking then the union
over the descriptions $D_1,\ldots,D_m$ and over the times in $E$ covers
every $n\ge0$, which proves~(b).

For~(c), let $R_t$ be the set of vertices reachable by paths of at most
$t$ edges, so that $R_{t+1}$ is $R_t$ together with the successors of its
elements. If $R_{t+1}=R_t$ for some $t$, then $R_t$ is closed under all
edges and therefore $R_t=V$. If $V$ is finite, the increasing sequence
$(R_t)$ must become constant. A breadth-first search computes
$R_0,R_1,\ldots$ and stops at the first round that finds no new vertex; at
that point it has visited exactly $V$.
\end{proof}

\begin{remark}[What is proved and what is computed]
\label{rem:closed-form-validity}
The proof of Theorem~\ref{thm:closed-form-nonuniv} below combines two
ingredients. The solution formulae are theorems
of~\cite[Appendix~B]{Fuks2023}. For each description, identity
\eqref{eq:streaming} follows from the formula by induction on the number
of steps: the state stores running products and partial sums of the
formula, and each update multiplies every product by its next factor and
adds the next term to every sum. The breadth-first search is a finite
computation, and Proposition~\ref{prop:closed-form} turns its result into
a statement about all times.
The number of vertices depends on
the precise choice of the state and of the reduction; the resulting set of
functions does not.
The search is a finite computation determined by the descriptions above.
\end{remark}

\begin{theorem}
\label{thm:closed-form-nonuniv}
For $k=3$,
\[
|U^{(3)}_{13}|=216,\quad |U^{(3)}_{27}|=225,\quad
|U^{(3)}_{28}|=248,\quad |U^{(3)}_{50}|=90,\quad
|U^{(3)}_{78}|=240,\quad |U^{(3)}_{172}|=178 .
\]
Hence the lifted rules $13$, $27$, $28$, $50$, $78$, $172$, and all rules
in their symmetry classes, are not $3$-universal; by
Proposition~\ref{prop:finite-cyclic-equivalence} they admit no complete
seed of any kind (on the line or cyclic, simple or not). None of them
realizes $x_1\oplus x_2\oplus x_3$ (function $150$). The excluded
functions are listed in Table~\ref{tab:closed-form-excluded}.
\end{theorem}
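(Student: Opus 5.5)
The plan is to apply Proposition~\ref{prop:closed-form} once for each of the six rules $\varphi\in\{13,27,28,50,78,172\}$. For each rule I will take the solution formula for $F_\varphi^n(x)_0$ from \cite[Appendix~B]{Fuks2023}. These formulas are built from local terms, running products such as $\prod_{i\le s}x_{-i}$, and alternating sums of such products, possibly split according to the parity of $n$ (for Rule~27 also with reading starting near $-n/2$). From each formula I will extract one or two streaming descriptions $D_i$ with time maps $n_i(r)=\alpha r+\beta$, typically $\alpha\in\{1,2\}$. Together with a finite exceptional set $E$ of small times, these must cover all of $\mathbb N$. Concretely, the state $q$ stores the current values of the running products and partial sums, and the class $c(s)$ stores the sign $(-1)^s$ or the phase of $s$ modulo $\alpha$. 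The update $\delta$ multiplies each product by the newly read cells and adds the next term. Identity~\eqref{eq:streaming} then follows by induction on $r$ from the closed formula, as explained in Remark~\ref{rem:closed-form-validity}.

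Next I will choose a reduction $\kappa$ that forgets everything multiplied by a product that has already vanished, sets such a product to $0$, and freezes the corresponding sums. I must check the two identities in Definition~\ref{def:streaming}. Both hold because a zero product stays zero under every update and contributes nothing further to $\omega$. With $\kappa$ in place, I will build the product graph of Definition~\ref{def:product-graph} for $k=3$, with letters in $\Sigma_3=\{\mathtt 0,\mathtt 1,\mathtt a,\mathtt b,\mathtt c\}$ and eight coupled copies of the state, and run the breadth-first search of Proposition~\ref{prop:closed-form}(c).

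The search terminates only if the reachable set $V_i$ is finite. I expect this to follow once the reduced states are bounded, since each sum then takes values in a bounded range such as $\{-1,0\}$ or $\{0,1\}$. Finiteness is not assumed in advance: it is certified by the search stabilizing, $R_{t+1}=R_t$. By part~(b), the union of the outputs $g_v$ over $V_i$ with the directly computed sets $S^{(3)}_n$, $n\in E$, equals $U^{(3)}_\varphi$ exactly. This yields the six cardinalities $216,225,228,90,240,178$ stated above, each below $256$, and in particular shows that function $150$ is absent. The list of excluded functions is read off from the same computation.

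Non-universality then follows from Proposition~\ref{prop:generic-universality}, since $U^{(3)}_\varphi\neq\mathcal B_3$. It extends to the whole symmetry class by Lemma~\ref{lem:symmetries}: mirror and complement act bijectively on $\mathcal B_3$, so they preserve $|U|$. It extends to the cyclic and simple variants by Proposition~\ref{prop:finite-cyclic-equivalence} together with the inclusions between the four families.

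The main obstacle is the first step. I must turn each formula from \cite{Fuks2023} into a correct streaming description with $r$-independent maps, including a correct handling of parity cases and of small $n$ (which go into $E$). I must also choose a reduction strong enough that $V_i$ is actually finite, because a fixed set of integer registers could otherwise grow without bound. I would first handle the running products and alternating sums as in the $T_r$ example, and then add margin cells $d$ to deal with boundary terms.
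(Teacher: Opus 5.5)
Your proposal is correct and follows essentially the same route as the paper. The paper's proof uses streaming descriptions extracted from the solution formulae of \cite{Fuks2023}, with classes carrying signs and parities, and reductions that discard components multiplied by vanished products. It then runs a breadth-first search on the $k=3$ product graph, whose termination certifies finiteness (Proposition~\ref{prop:closed-form}), and concludes with Proposition~\ref{prop:generic-universality}, Lemma~\ref{lem:symmetries} and Proposition~\ref{prop:finite-cyclic-equivalence}, as you do.

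There is one numerical slip: the count for Rule~28 is $248$, not $228$. The paper obtains it by applying the Rule~156 formula to $y=F_{28}(x)$ via $F^n_{28}=F^{\,n-1}_{156}\circ F_{28}$, with $E=\{0,1\}$.
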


\begin{proof}
For each rule we proceed in the same way. We first rewrite the solution
formula of~\cite[Appendix~B]{Fuks2023} so that every running product is
defined recursively, each one being obtained from the previous one by one
more factor. We then describe the streaming procedure: which cells it
reads and in which order, which quantities its state contains, how it
updates them when it reads a new block, and how it computes the output.
Finally we report the result of the search of
Proposition~\ref{prop:closed-form} for $k=3$. Unless stated otherwise we
take $E=\{0\}$, and $S^{(3)}_0=\Sigma_3$. All positions are relative to
the output site $0$; we write $\bar x=1-x$ and $\operatorname{ev}(n)=1$ if
$n$ is even, $\operatorname{ev}(n)=0$ if $n$ is odd. Empty products equal
$1$ and empty sums equal $0$. All products and sums are integer
expressions, and their final value is always $0$ or $1$. Every search
terminates, so the sets $V$ are finite and the results are exact; the
sizes of the searches are collected in Table~\ref{tab:closed-form-summary}.

\medskip\noindent\textbf{Rule 13 (a detailed example).}
For $s\ge0$ we introduce the running products
\[
P_s=x_0\,x_{-1}\cdots x_{-s},\qquad
N_s=\bar x_1\bar x_0\,\bar x_{-1}\cdots\bar x_{-s},\qquad
R_s=x_{-1}\cdots x_{-s}.
\]
They satisfy $P_0=x_0$, $N_0=\bar x_1\bar x_0$, $R_0=1$ and, for $s\ge1$,
\[
P_s=P_{s-1}\,x_{-s},\qquad N_s=N_{s-1}\,\bar x_{-s},\qquad
R_s=R_{s-1}\,x_{-s}.
\]
With $T=x_2\bar x_1\bar x_0$, the solution formula of Rule~13 reads
\[
F^n_{13}(x)_0=x_0+\sum_{s=1}^n(-1)^sP_s-\sum_{s=1}^n(-1)^sN_s
+T\sum_{s=2}^n(-1)^sR_s .
\]
Every term of the three sums is, up to sign, one of the running products,
and passing from one term to the next costs one multiplication by a single
new cell.

\emph{Reading order.} To evaluate $F^n_{13}(x)_0$, the procedure first reads
the initial block $(x_2,x_1,x_0)$. Then, at each step $s=1,\ldots,n$, it
reads the single cell $x_{-s}$, moving to the left. Thus $a=3$, $b=1$ and
$n(r)=r$. Since $x_2$ and $x_1$ are read also when $n=0$ or $n=1$, we take
$d=2$.

\emph{State.} After step $s$ the procedure stores nine numbers: the three
cells $x_2,x_1,x_0$, the three products $P_s,N_s,R_s$, and the partial
sums
\[
A_s=\sum_{j=1}^s(-1)^jP_j,\qquad
B_s=-\sum_{j=1}^s(-1)^jN_j,\qquad
C_s=\sum_{j=2}^s(-1)^jR_j .
\]
After the initial block the state is
$(x_2,x_1,x_0,\,x_0,\,\bar x_1\bar x_0,\,1,\,0,0,0)$.

\emph{Update.} When the procedure reads $y=x_{-s}$ at step $s$, it first
updates the products, $P_s=P_{s-1}y$, $N_s=N_{s-1}\bar y$,
$R_s=R_{s-1}y$, and then adds their new values to the sums:
\[
A_s=A_{s-1}+(-1)^sP_s,\qquad B_s=B_{s-1}-(-1)^sN_s,\qquad
C_s=C_{s-1}+[s\ge2]\,(-1)^sR_s .
\]

\emph{Output.} After $n$ steps the formula gives
\[
F^n_{13}(x)_0=x_0+A_n+B_n+x_2\bar x_1\bar x_0\,C_n ,
\]
and all the quantities on the right are stored in the state. For instance,
after one step $A_1=-x_0x_{-1}$, $B_1=\bar x_1\bar x_0\bar x_{-1}$ and
$C_1=0$, so the output is $x_0-x_0x_{-1}+\bar x_1\bar x_0\bar x_{-1}$,
which is indeed the local rule $\varphi_{13}(x_{-1},x_0,x_1)$.

\emph{Classes.} The update at step $s$ depends on $s$ only through the sign
$(-1)^s$ and through the test $s\ge2$. The class
$c(s)=(s\bmod2,\,[s\ge2])$ records exactly this information.

\emph{Result.} Each running product stays equal to $1$ only until it meets
its first zero, after which the corresponding sum stops changing; while the
product equals $1$, the sum alternates between two values. Hence every
component of the state takes finitely many values, in agreement with the
outcome of the search. For the lifted rule with $k=3$, the product graph
runs eight copies of this procedure, one for each $u\in\mathbb B^3$. Its
exploration visits $81\,450$ vertices, whose outputs are $216$ distinct
functions. By Proposition~\ref{prop:closed-form}, the remaining $40$
functions never appear, at any time and from any seed.

The streaming descriptions for the remaining five rules, and the outcome
of the corresponding searches, are given in Appendix~\ref{app:closed-form}.
\end{proof}

\begin{remark}
Proposition~\ref{prop:closed-form} is an equality: the sets computed in
Theorem~\ref{thm:closed-form-nonuniv} are exactly the realizable functions,
not merely upper bounds.
\end{remark}

\subsection{Certificates from self-consistent automata}
\label{ss:automata}

For Rules~$33$ and~$104$ no solution formula is available
(\cite[Section~12.1]{Fuks2023} lists both among the rules whose solution is
not known). We replace the formula by a finite automaton whose correctness
for every $n$ is verified by an inductive, finite check.

For $x\in\mathbb B^{\mathbb Z}$ and $n\ge0$, the \emph{outward word}
$\omega_n(x)=(x_0,a_1,\dots,a_n)$ has a first letter $x_0\in\mathbb B$ and
letters $a_r=(x_{-r},x_r)\in\mathbb B^2$. Since $F^n(x)_0$ depends only on
$x_{-n},\dots,x_n$, we write $\Phi_n(\omega_n(x))=F^n(x)_0$. Let
$\widetilde F$ map an outward word of length $n+1$ to the outward word of
length $n$ of $F(x)$:
\[
\widetilde F(x_0,a_1,\dots,a_{n+1})=(y_0,b_1,\dots,b_n),\qquad
y_0=\varphi(x_{-1},x_0,x_1),\quad
b_r=\bigl(\varphi(x_{-r-1},x_{-r},x_{-r+1}),\ \varphi(x_{r-1},x_r,x_{r+1})\bigr).
\]
Then $\Phi_{n+1}(\omega)=\Phi_n(\widetilde F\omega)$ for every outward word
$\omega$ of length $n+1$.

\begin{proposition}[Self-consistent outward automaton]
\label{prop:outward-automaton}
Let $A$ be a deterministic automaton with a finite set of states $Q$,
initial map $\iota:\mathbb B\to Q$, transition map
$\delta:Q\times\mathbb B^2\to Q$ and output map $o:Q\to\mathbb B$, and write
$A(\omega)$ for the output after reading $\omega$. Assume
\begin{enumerate}
\item[(a)] $o(\iota(c))=c$ for $c\in\mathbb B$;
\item[(b)] $A(\omega)=A(\widetilde F\omega)$ for every outward word $\omega$
  of length at least $1$.
\end{enumerate}
Then $A(\omega)=\Phi_n(\omega)$ for every $n\ge0$ and every outward word
$\omega$ of length $n$. Condition~(b) is decidable: it holds if and only if
$o(q)=o(q')$ for every reachable state $(q,m,q')$ of the product automaton
that reads $\omega$ with $A$, stores the last two letters of $\omega$ in
$m$, and runs $A$ on $\widetilde F\omega$ with a delay of one letter.
\end{proposition}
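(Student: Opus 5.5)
The plan is to prove the first assertion by induction on $n$, and then to reduce condition~(b) to a reachability question on a finite automaton.

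\emph{Base case $n=0$.} An outward word of length $0$ is just $(x_0)$, and $\Phi_0(x_0)=x_0$. Condition~(a) gives $A((x_0))=o(\iota(x_0))=x_0$, as required.

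\emph{Inductive step.} Assume $A=\Phi_n$ on outward words of length $n$, and let $\omega$ have length $n+1$. Then $\widetilde F\omega$ has length $n$, and
\[
A(\omega)\overset{(b)}{=}A(\widetilde F\omega)=\Phi_n(\widetilde F\omega)=\Phi_{n+1}(\omega).
\]
The last equality is the identity $\Phi_{n+1}(\omega)=\Phi_n(\widetilde F\omega)$ recorded before the statement. It holds because $F^{n+1}(x)_0=F^n(F(x))_0$, and $\widetilde F\omega$ is exactly $\omega_n(F(x))$.

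\emph{Decidability of (b).} The letter $b_r$ of $\widetilde F\omega$ depends only on $a_{r-1},a_r,a_{r+1}$, with $a_0$ read as $(x_0,x_0)$. Similarly, $y_0$ depends on $x_0$ and $a_1$. Hence $\widetilde F\omega$ can be produced online with a delay of one letter.

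I would define a product automaton with states $(q,m,q')$ as follows.
\begin{itemize}
\item $q$ is the state of $A$ on the prefix of $\omega$ read so far.
\item $m$ is a buffer holding the last two letters of $\omega$ (or $x_0$ and $a_1$ at the start).
\item $q'$ is the state of $A$ on the corresponding prefix of $\widetilde F\omega$, which is one letter shorter.
\end{itemize}
On reading a new letter $a_{r+1}$, the automaton updates $q$ by $\delta$. It computes $b_r$ from the buffer and $a_{r+1}$, advances $q'$ by $\delta(q',b_r)$, and shifts the buffer.

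After reading any outward word $\omega$ of length $\ge1$, the state is $(A\text{-state of }\omega,\ m,\ A\text{-state of }\widetilde F\omega)$. Therefore (b) holds iff $o(q)=o(q')$ at every state reached by a word of length $\ge1$. The initialization must handle the first step: after reading $x_0$ and $a_1$, set $q'=\iota(y_0)$.

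The state space is finite ($|Q|^2$ times the finitely many buffers), so the reachable set is computable by breadth-first search, and the check is finite. The converse direction is immediate: every reachable state comes from some word, so a violated equality gives a counterexample to (b).

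The main obstacle is bookkeeping rather than any real difficulty. One must make $b_r$ exactly computable from the stored letters; in particular $b_1$ needs $x_0$, so the buffer must retain it, or $x_0$ must be kept as part of the buffer's first entry. One must also be careful about the boundary step where $\widetilde F\omega$ is one letter shorter. I would fix the convention that $m$ stores $(a_{r-1},a_r)$ with $a_0:=(x_0,x_0)$, and check the length-$1$ case separately.
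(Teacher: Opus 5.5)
Your proposal is correct and follows essentially the same route as the paper. Like the paper, you prove the first assertion by induction on $n$, with (a) giving the base case and (b) combined with $\Phi_{n+1}(\omega)=\Phi_n(\widetilde F\omega)$ giving the step. You also decide (b) with the same finite product automaton, which feeds $\widetilde F\omega$ to $A$ one letter behind $\omega$ through a two-letter buffer with $a_0=(x_0,x_0)$. Your explicit handling of the initialization ($q'=\iota(y_0)$ after reading $a_1$) and your restriction to words of length at least $1$ only spell out bookkeeping that the paper leaves implicit.
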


\begin{proof}
By induction on $n$. For $n=0$, $\Phi_0(c)=c=A(c)$ by (a). If the claim
holds for $n$, then for $\omega$ of length $n+1$,
$\Phi_{n+1}(\omega)=\Phi_n(\widetilde F\omega)=A(\widetilde F\omega)=A(\omega)$
by (b). For the decidability: the letter $b_r$ of $\widetilde F\omega$
depends only on $a_{r-1},a_r,a_{r+1}$ (with $a_0=(x_0,x_0)$), so
$\widetilde F\omega$ can be fed to $A$ one letter behind $\omega$ using the
memory $m$; the product automaton is finite.
\end{proof}

An automaton satisfying (a) and (b) is a streaming description of
$\varphi$ (Definition~\ref{def:streaming}) with $a=1$, $b=2$, $n(r)=r$,
$d=0$ and a single class. Proposition~\ref{prop:closed-form} therefore
computes $U^{(k)}_\varphi$ by a breadth-first search on the product of
$2^k$ copies of $A$.

\begin{theorem}
\label{thm:33-104}
For $k=3$,
\[
U^{(3)}_{33}=\mathcal B_3\setminus\{22\},\qquad
U^{(3)}_{104}=\mathcal B_3\setminus\{151\}.
\]
Hence Rules~$33$ and~$104$, and all rules in their symmetry classes, are
not $3$-universal; they admit no complete seed of any kind.
\end{theorem}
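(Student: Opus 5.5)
The proof will use Proposition~\ref{prop:outward-automaton} and then Proposition~\ref{prop:closed-form}. For each of Rules~$33$ and~$104$ I will exhibit an explicit finite deterministic outward automaton $A_\varphi=(Q,\iota,\delta,o)$ and certify it. This means checking condition~(a) directly on the two letters of $\mathbb B$, and checking condition~(b) by exploring the finite product automaton that runs $A_\varphi$ on $\omega$ and, one letter behind, on $\widetilde F\omega$. Condition~(b) holds if no reachable product state $(q,m,q')$ has $o(q)\ne o(q')$. Once both checks pass, $A_\varphi(\omega)=\Phi_n(\omega)$ for all $n$ and all outward words. It is then a streaming description with $a=1$, $b=2$, $n(r)=r$, $d=0$ and one class, with the identity as reduction. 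By Proposition~\ref{prop:closed-form}(b),(c) with $E=\emptyset$, a breadth-first search on the product of $2^3=8$ copies of $A_\varphi$ computes $U^{(3)}_\varphi$ exactly. In that search the initial letters range over $\Sigma_3$ and the pair letters over $\Sigma_3^2$, so there are $25$ symbolic edges per vertex. Finiteness of $Q$ makes the reachable set finite, and its outputs form $U^{(3)}_\varphi$. The claim is that these outputs are all functions except $22$, respectively $151$. Non-universality then follows from Proposition~\ref{prop:generic-universality}, extends to the symmetry classes by Lemma~\ref{lem:symmetries}, and gives the absence of complete seeds of every kind by Proposition~\ref{prop:finite-cyclic-equivalence}.

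The main obstacle is constructing $A_\varphi$ in the first place, since no closed formula exists. I would build it by a Myhill--Nerode style refinement on outward words. Start from the partition of words by their value $\Phi_n(\omega)$ and by what one application of $\widetilde F$ does to them. Then iteratively split classes until the induced transition $\delta$ is well defined and condition~(b) holds on the product. In practice I would compute $\Phi_n$ for all outward words up to a moderate length, say $n\le 12$, and form the equivalence ``$\omega\equiv\omega'$ iff $\Phi_{n+j}(\omega\zeta)=\Phi_{n'+j}(\omega'\zeta)$ for all continuations $\zeta$ up to a bounded length''. I would read off a candidate automaton from this and then let the decidable check of Proposition~\ref{prop:outward-automaton} either certify it or produce a counterexample, which I would use to refine the partition further.

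The structural intuition for why a finite automaton should exist is the following. Both Rule~$33$, $\varphi(x,y,z)=\overline{y}\,\overline{(x\vee z)}$ up to its ANF, and Rule~$104$ have local behaviour dominated by short periodic patterns: period-$2$ blinking for Rule~$33$, and isolated-pair survival for Rule~$104$. As a result, the value at the origin depends on the outward word only through a bounded summary, namely parities and the position of the first ``defect'' seen on each side. That is the kind of information a finite state can carry.

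The final step, identifying the unique missing function, is a finite computation. As a sanity check that excluding $22$ and $151$ is plausible, I would verify by hand for Rule~$104$ that function $151$ (the complement of $\operatorname{Maj}$-like $104$) would need a pattern the automaton's states cannot realize on all eight inputs simultaneously. I would also record the vertex counts in a table analogous to Table~\ref{tab:closed-form-summary}.
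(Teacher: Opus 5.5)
Your proposal follows the paper's proof: certify explicit outward automata through conditions (a) and (b) of Proposition~\ref{prop:outward-automaton}, view each as a one-class streaming description, and run the breadth-first search of Proposition~\ref{prop:closed-form} on $8$ copies with the five letters of $\Sigma_3$. The paper actually supplies the certified automata, with $19$ and $88$ states, found by Angluin-style learning rather than your Myhill--Nerode refinement; how they are found is immaterial to the proof. One slip in your heuristic aside, which is not load-bearing: $\varphi_{33}(x,y,z)=\bar y\wedge\overline{x\oplus z}$ (it outputs $1$ on $101$ and $000$), whereas $\bar y\wedge\overline{x\vee z}$ is Rule~$1$.
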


\begin{proof}
For each rule we exhibit an automaton satisfying the hypotheses of
Proposition~\ref{prop:outward-automaton}: $19$ states for Rule~$33$ and
$88$ states for Rule~$104$.
The automata are listed in Appendix~\ref{app:automata}.
How the automata were
found is irrelevant for the proof; we obtained them by Angluin-style
learning~\cite{Angluin1987} from exact values of $\Phi_n$, using the check
of condition~(b) as equivalence test. Condition~(a) holds by inspection. For condition~(b) the
product automaton has $102$ reachable states for Rule~$33$ and $240$ for
Rule~$104$, and in each of them the two outputs agree. By
Proposition~\ref{prop:outward-automaton} each automaton computes $\Phi_n$
for every $n$. The breadth-first search of
Proposition~\ref{prop:closed-form}, with $k=3$ and the five letters of
$\Sigma_3$, visits $63\,162$ and $276\,318$ vertices respectively and
yields $255$ functions in each case; the missing ones are $22$ and $151$.
\end{proof}

\begin{remark}
Function $22$ is ``exactly one of $x_1,x_2,x_3$ equals $1$'', and
function $151$ takes the value $0$ exactly at the three inputs of
weight~$2$. Both missing functions are symmetric. The same method does not apply to Rule~$164$
(Section~\ref{ss:rule164-obstruction}): the automata produced by the learning procedure grow
without stabilizing, in agreement with the binomial parities $L_s$ in its
preimage characterization (see below). 
\end{remark}

\subsection{Preimage obstructions for Rule~164}
\label{ss:rule164-obstruction}

Let $\varphi=\varphi_{164}$ and $F=F_{\varphi}$. Recall that
\[
 \varphi(x,y,z)=(x\wedge z)\vee(\neg x\wedge y\wedge\neg z).
\]
For $t\ge1$, let $H_t:\mathbb B^{2t+1}\to\mathbb B$ be the
$t$-step local map: if $w=(w_{-t},\ldots,w_t)$, then
$H_t(w)=(F^t(c))_0$, where $c$ is any binary configuration extending
$w$ on $[-t,t]$. This is independent of the chosen extension.
For $p\in[-t,t]$ with $w_p=0$, denote by $w^{(p)}$ the word obtained
by replacing $w_p$ with $1$. We use $w^{(p,q)}$ for the analogous
replacement at two distinct positions.

For $1\le s\le t$, put
\[
 I_s=\{-s,-s+2,\ldots,s\},\qquad
 J_s=\{-s+1,-s+3,\ldots,s-1\},
\]
and define the parity
\[
 L_s(w)=\bigoplus_{j=0}^{s}
       \left(\binom{s}{j}\bmod 2\right)w_{-s+2j}.
\]
In particular $L_s(w)=0$ whenever all entries $w_p$, $p\in I_s$,
are $1$, since $\sum_{j=0}^s\binom{s}{j}=2^s$ is even.

We use the following characterization of the $t$-step preimages
of $1$ for Rule~164, established in~\cite[Section~4.3]{Fuks2023}:
for $t\ge1$, $H_t(w)=1$ if and only if at least one of the following
conditions holds:
\begin{align*}
 (A_t)\quad& w_p=1 &&\text{for every }p\in I_t;\\
 (B_t)\quad& w_p=1 &&\text{for every }p\in J_t,
                  \quad L_t(w)=0;\\
 (C_s)\quad& w_{-s-1}=w_{s+1}=0,\quad
                  w_p=1 &&\text{for every }p\in J_s,\\[-1mm]
           & L_s(w)=0,\quad
                  \text{and }w_p=0 &&\text{for some }p\in I_s,
                  \qquad 1\le s<t.
\end{align*}
Here $(C_s)$ is a condition on the length-$(2t+1)$ word $w$,
although only its central $2s+3$ entries are involved.

In~\cite[Section~4.3]{Fuks2023}, $H_t$ is written as the sum of the
indicators of $(A_t)$, $(B_t)$ and $(C_s)$, $1\le s<t$, minus the indicator
of the all-ones word. These conditions are pairwise incompatible, except
that $(A_t)\wedge(B_t)$ holds exactly for the all-ones word: this follows
from the parity comparisons used below (for $s<u$, either
$\pm(s+1)\in J_u$ or $I_s\subseteq J_u$, and similarly against $I_t$,
$J_t$). Hence the sum equals $1$ exactly when at least one condition holds.

\begin{lemma}[Three pivotal zeros for Rule~164]
\label{lem:164-three-zeros}
Let $t\ge1$ and $w\in\mathbb B^{2t+1}$ satisfy $H_t(w)=0$.
Suppose that $p,q,r$ are three distinct positions, all carrying $0$
in $w$, and that
\[
 H_t(w^{(p)})=H_t(w^{(q)})=H_t(w^{(r)})=1.
\]
Then
\[
 H_t(w^{(p,q)})=H_t(w^{(p,r)})=H_t(w^{(q,r)})=0.
\]
\end{lemma}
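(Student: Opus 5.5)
The plan is to reduce everything to the explicit characterization of $H_t^{-1}(1)$ by $(A_t)$, $(B_t)$, $(C_s)$. Each condition $X$ has the same shape:
\begin{itemize}
\item a \emph{ones-set} $S_X$ that must carry only $1$'s ($I_t$ for $A_t$, $J_t$ for $B_t$, $J_s$ for $C_s$);
\item possibly a \emph{zeros-set} $Z_X$ that must carry only $0$'s ($\{-s-1,s+1\}$ for $C_s$);
\item possibly a parity constraint $L_X(w)=0$, with $L_X$ equal to $L_t$ or $L_s$ and supported on $I_t$ or $I_s$;
\item for $C_s$ only, the clause ``some zero in $I_s$''.
\end{itemize}
Raising zeros to ones helps the ones-set constraints, hurts the zeros-set constraint and the last clause, and toggles the parity by the coefficient $\binom{s}{j}\bmod 2$ of each raised position.

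\emph{Step 1 (pivot types).} For each of $p,q,r$, fix a condition $X_p$, $X_q$, $X_r$ satisfied by $w^{(p)}$, $w^{(q)}$, $w^{(r)}$. Since $H_t(w)=0$, $w$ fails $X_p$ while $w^{(p)}$ satisfies it, and the two words differ only at $p$. Hence exactly one of two things happens:
\begin{itemize}
\item[(i)] $p\in S_{X_p}$ and $p$ is the unique zero of $w$ in $S_{X_p}$;
\item[(ii)] $p\notin S_{X_p}$, $w$ already meets every other constraint of $X_p$, $L_{X_p}(w)=1$, and $p$ has odd coefficient in $L_{X_p}$.
\end{itemize}
For $X=C_s$ we also record the extra requirements: in case (i) $p\notin Z_X$, and in case (ii) $w^{(p)}$ still has a zero in $I_s$. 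In either case every zero of $w$ other than $p$ lies outside $S_{X_p}$. This gives three simultaneous constraints on the zero set of $w$, one for each pivot.

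\emph{Step 2 (a pair cannot work).} Suppose, say, $w^{(p,q)}$ satisfies some condition $Y$; the aim is a contradiction. Since $w$ fails $Y$, the same dichotomy applied to the two-point change $\{p,q\}$ leaves three options:
\begin{itemize}
\item $p$, or $q$, or both lie in $S_Y$, and they are the only zeros of $w$ there;
\item neither lies in $S_Y$, and exactly one of them has odd coefficient in $L_Y$;
\item neither lies in $S_Y$, both have odd coefficient in $L_Y$, and the parity of $w$ is already correct.
\end{itemize}
In the second option, raising only that one position already yields $Y$. In the third, $w$ itself satisfies $Y$ unless the $C_s$ ``some zero'' clause fails, which cannot happen since $w$ has more zeros than $w^{(p,q)}$. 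So only the ones-set option survives in a useful form. Now bring in $r$: $r$ is a zero of $w^{(p,q)}$, so by $Y$ it lies outside $S_Y$, and by Step 1 each of $p,q$ lies outside $S_{X_r}$. I will then compare $Y$ with $X_p$, $X_q$, $X_r$ using the nesting facts quoted before the lemma: for $s<u$, either $\pm(s+1)\in J_u$ or $I_s\subseteq J_u$, and similarly against $I_t$ and $J_t$. The goal is to show that the supports forced to be all-ones, together with the zeros at $p,q,r$ (and at $\pm(s+1)$ for $C$-type conditions), make two of the conditions share a word. That contradicts the pairwise incompatibility of the conditions, except for $A_t\wedge B_t$ on the all-ones word, which is excluded because $w$ has zeros.

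\emph{Main obstacle.} The hard part is the mixed cases, where some of $X_p,X_q,X_r$ are of parity type (ii) and $Y$ is a $C_s$ with a different $s$. The ones-sets are then disjoint-parity lattices $I_s$ and $J_u$, and the pivots sit at odd-coefficient positions of Pascal rows mod $2$. I would first dispose of the case where all four conditions are equal: then pivots of type (ii) come in a parity class, raising two restores $L=1$, and raising two of type (i) leaves a zero in $S$. Next I would treat equal index $s$ with different letters ($A$ versus $B$, which live on $I_t$ versus $J_t$). Finally I would treat distinct indices, using the nesting facts to show that the boundary zeros $\pm(s+1)$ of a $C_s$ fall inside the ones-set of any larger-index condition, which pins every pivot into one interval and reduces to the earlier cases.
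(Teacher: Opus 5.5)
Your Step~1 is correct and coincides with the paper's split into repair changes (your type~(i)) and parity changes (your type~(ii)). The gap is that the part which actually proves the lemma is only announced, and the mechanism you announce does not close it. In Step~2, the option \emph{neither $p$ nor $q$ lies in $S_Y$ and exactly one has odd coefficient in $L_Y$} is not eliminated. That $w^{(p)}$ satisfies $Y$ is exactly what $p$ being pivotal allows, so nothing is contradicted. The option is in fact impossible, but only because $q$ is then forced to be a parity pivot of the same $Y$, and that is what the missing argument below proves. Your intended contradiction, that two conditions \emph{share a word}, does not apply as stated. The conditions $X_p,X_q,X_r,Y$ are witnessed by four different words $w^{(p)},w^{(q)},w^{(r)},w^{(p,q)}$, while pairwise incompatibility is a statement about a single word. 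The nesting fact is also misquoted. For $s<u$ the zeros $\pm(s+1)$ lie in $J_u$ only when $s\equiv u\pmod 2$. Otherwise what holds is $I_s\subseteq J_u$, which acts through $L_s$ and the clause \emph{some zero in $I_s$}, so it does not by itself pin every pivot into one interval.

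What is missing is the counting argument that forces all three pivots into a single mechanism.
\begin{itemize}
\item A type-(i) pivot must be the unique zero of minimal modulus in its parity class, since any other zero of that class with modulus at most $|p|$ would also lie in $S_{X_p}$. So there is at most one such pivot per class, and at most two overall.
\item Hence some pivot is of type~(ii), and a parity mechanism, $(B_t)$ or some $(C_s)$, is available for $w$.
\item By the nesting facts, at most one parity mechanism is available for a given $w$.
\item The parity class of its ones-set admits no type-(i) pivot. For $(B_t)$ that class is $J_t$ and contains no zero at all. For $(C_s)$ its nearest zeros are the two points $\pm(s+1)$ of equal modulus, so no single change repairs it.
\end{itemize}
Therefore at least two of $p,q,r$ are parity pivots of that one mechanism, so $w$ has at least two zeros in $I_t$ (resp.\ $I_s$). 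Combined with $w=1$ on $J_t$ (resp.\ on $J_s$, with zeros at $\pm(s+1)$), this prevents every other condition from being created by a single flip. Hence all three pivots are parity pivots of the same mechanism. Flipping any two of them leaves the parity equal to $1$ and leaves a zero in $I_t$ (resp.\ $I_s$). The same blocking arguments then exclude every other condition; for $(C_u)$ with $u>s$, $u\not\equiv s$, one uses that $I_s\subseteq J_u$ still contains the unflipped third pivot. Your three-way case split over equal conditions, equal index, and distinct indices has no substitute for this step.
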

The proof is given in Appendix~\ref{app:proofs}.

\begin{theorem}[An obstruction to strong universality for Rule~164]
\label{thm:164-not-strongly-universal}
For every simple seed $P\in\mathsf{Lin}_3^\star$ and every
$g\in\mathcal R_{\varphi_{164}}(P)$, the following implication holds:
\[
 \begin{gathered}
 g(0,0,0)=0,\qquad
 g(1,0,0)=g(0,1,0)=g(0,0,1)=1
 \\
 \Longrightarrow
 \\
 g(1,1,0)=g(1,0,1)=g(0,1,1)=0.
 \end{gathered}
\]
Consequently $\overline F_{\varphi_{164}}$ is not strongly
$3$-universal, and therefore neither strongly $k$-universal for $k\geq 3.$. In fact, for each $P\in\mathsf{Lin}_3^\star$,
\[
 |\mathcal R_{\varphi_{164}}(P)|\le242.
\]
\end{theorem}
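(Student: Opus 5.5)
The plan is to reduce the statement to Lemma~\ref{lem:164-three-zeros} through the generic-symbol identity \eqref{eq:generic}. Let $P\in\mathsf{Lin}_3^\star$ and let $g=(\overline F^{\,t}_{\varphi_{164}}(P))_i$. For $t=0$ the function $g$ is a constant or a projection, and none of these satisfies the hypothesis $g(0,0,0)=0$, $g(1,0,0)=g(0,1,0)=g(0,0,1)=1$; so we may assume $t\ge1$. By Proposition~\ref{prop:conj}, $g(u)=H_t(w(u))$, where $w=(P_{i-t},\ldots,P_{i+t})$ is the symbolic word in the backward light cone and $w(u)$ is its evaluation at $u$.

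The key observation uses simplicity. Each projection $\mathtt e^3_j$ occurs at most once in $w$, say at position $p_j$ if it occurs. Consequently $w(0,0,0)$ is the binary word $w_0$ obtained by setting every projection to $0$, and $w(e_j)=w_0^{(p_j)}$ for each unit vector $e_j$. Likewise $w(e_j+e_{j'})=w_0^{(p_j,p_{j'})}$.

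Now assume the hypothesis. If some projection $\mathtt e^3_j$ did not occur in $w$, then $w(e_j)=w(0,0,0)$, which would force $g(e_j)=g(0,0,0)$, contradicting $1\neq0$. Hence all three projections occur, at three distinct positions $p_1,p_2,p_3$, and each of these carries $0$ in $w_0$. The hypothesis then reads $H_t(w_0)=0$ and $H_t(w_0^{(p_j)})=1$ for $j=1,2,3$. Lemma~\ref{lem:164-three-zeros} gives $H_t(w_0^{(p_j,p_{j'})})=0$ for all pairs, which is exactly the conclusion $g(1,1,0)=g(1,0,1)=g(0,1,1)=0$. This argument is routine; the real content, a case analysis over the conditions $(A_t)$, $(B_t)$ and $(C_s)$, is already packaged in the lemma.

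For the counting bound, the implication excludes every $g$ whose values satisfy the four hypotheses but violate the conclusion. The values on the four inputs of weight $\le1$ are fixed, the value on $(1,1,1)$ is free, and the values on the three inputs of weight $2$ are not all zero. This gives $2\cdot(2^3-1)=14$ excluded functions, so $|\mathcal R|\le256-14=242$. In particular some function is missing, and Rule~164 is not strongly $3$-universal.

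For $k\ge3$, I would use the monotonicity remark after Proposition~\ref{prop:finite-cyclic-equivalence}: replacing the superfluous projections of a $k$-complete simple seed by $\mathtt 0$ gives a $3$-complete simple seed. This contradicts the previous paragraph, so Rule~164 is not strongly $k$-universal for any $k\ge3$. The only point needing care is this monotonicity step, namely that the reduced seed is still simple and realizes every function in $\mathcal B_3$; I would verify it by noting that evaluation commutes with the substitution.
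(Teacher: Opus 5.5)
Your proposal is correct and follows essentially the same route as the paper. Both arguments evaluate the light-cone word at $(0,0,0)$, use simplicity to force all three projections to occur at distinct zero positions, invoke Lemma~\ref{lem:164-three-zeros}, and count $16-2=14$ excluded functions, the two survivors being $22$ and $150$. Your explicit justification of the step to $k\ge3$, by replacing the surplus projections with $\mathtt 0$, is the monotonicity remark the paper relies on implicitly.
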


\begin{proof}
Let $g=(\overline F_{\varphi_{164}}^t(P))_i$.
At $t=0$, $g$ is a constant or a projection, and the hypothesis
of the implication cannot hold. Suppose $t\ge1$.
Let $w$ be the binary block on $[i-t,i+t]$ obtained by evaluating
$P$ at $(0,0,0)$. Because $P$ is simple, each projection occurs
at most once in that interval. Moreover, the three assumptions
$g(1,0,0)=g(0,1,0)=g(0,0,1)=1$, together with $g(0,0,0)=0$,
force all three projections to occur in the interval.
Denote their distinct positions relative to $i$ by $p,q,r$.
They carry zeros in $w$, and evaluating $P$ at $(1,0,0)$,
$(0,1,0)$, or $(0,0,1)$ changes exactly the corresponding
zero of $w$ to one.
By the evaluation identity (Lemma~\ref{lem:eval}),
\[
 H_t(w)=0,\qquad
 H_t(w^{(p)})=H_t(w^{(q)})=H_t(w^{(r)})=1.
\]
Lemma~\ref{lem:164-three-zeros} now yields
$H_t(w^{(p,q)})=H_t(w^{(p,r)})=H_t(w^{(q,r)})=0$.
These three values are exactly $g(1,1,0)$, $g(1,0,1)$,
and $g(0,1,1)$.

There are $2^4=16$ ternary Boolean functions satisfying the
four conditions on the left of the implication, since their
values at $(1,1,0)$, $(1,0,1)$, $(0,1,1)$, and $(1,1,1)$
are initially unrestricted. By what we have proved, only
the two functions with Wolfram numbers $22$ and $150$ can
possibly occur. Thus at least $14$ functions are absent from
$\mathcal R_{\varphi_{164}}(P)$, and
$|\mathcal R_{\varphi_{164}}(P)|\le256-14=242$.
In particular, no simple seed is $3$-complete.
\end{proof}


\begin{lemma}[Radius form]
\label{lem:164-radius}
Recall that $I_t$ and $J_t$ are the sets of positions $p\in[-t,t]$ with
$p\equiv t$ and $p\equiv t+1\pmod 2$, respectively. Put
$\rho_0=\min\{|p|:p\in I_t,\ v_p=0\}$ ($\rho_0=\infty$ if there is no such
$p$) and $\rho_1=\min\{|p|:p\in J_t,\ v_p=0\}$ ($\rho_1=t+1$ if there is no
such $p$). Then $H_t(v)=1$ if and only if either $\rho_0=\infty$, or
$M=\max(\rho_0,\rho_1)$ satisfies $M\ge2$, $v_{-M}=v_M=0$ when $M\le t$,
and $L_{M-1}(v)=0$.
\end{lemma}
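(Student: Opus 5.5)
The plan is to derive Lemma~\ref{lem:164-radius} directly from the characterization of $H_t(v)=1$ via $(A_t)$, $(B_t)$, $(C_s)$ ($1\le s<t$) quoted from~\cite[Section~4.3]{Fuks2023}. I will show that each of these conditions is equivalent to a case of the radius description. The first step is bookkeeping: $I_s\subseteq[-s,s]$ consists of the positions with parity $s$, and $J_s$ of those with parity $s+1$. Since $I_s\subseteq I_t$ when $s\equiv t$ and $I_s\subseteq J_t$ otherwise, membership conditions on $I_s,J_s$ can be rewritten in terms of the radii $\rho_0,\rho_1$ of the first zeros in $I_t$ and $J_t$.

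\emph{Condition $(A_t)$.} This says exactly that no $p\in I_t$ carries a zero, i.e.\ $\rho_0=\infty$.

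\emph{Condition $(B_t)$.} It requires $J_t$ to be free of zeros, so $\rho_1=t+1$, together with $L_t(v)=0$. If also $\rho_0=\infty$ we are back in $(A_t)$. Otherwise $\rho_0\le t$, hence $M=t+1$; the condition ``$v_{\pm M}=0$ when $M\le t$'' is vacuous, and $L_{M-1}=L_t=0$ matches. It remains to check $M\ge2$, which holds because $t\ge1$.

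\emph{Condition $(C_s)$, $1\le s<t$.} I will prove that it is equivalent to $M=s+1\le t$ together with $v_{\pm(s+1)}=0$, $L_s=0$ and $\rho_0<\infty$. The positions $\pm(s+1)$ have parity $s+1$. The set $J_s$ contains all positions of that parity with $|p|<s+1$, and $I_s$ contains all positions of the opposite parity with $|p|\le s$. Assume $(C_s)$. Then every position of parity $s+1$ with $|p|\le s$ is $1$, while $v_{\pm(s+1)}=0$, so the first zero of that parity class sits at radius exactly $s+1$. The other parity class has a zero at radius $\le s$. Whichever of $I_t,J_t$ carries parity $s+1$ therefore has minimal zero radius $s+1$, and the other has radius $\le s$. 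Hence $M=s+1$, and $\rho_0$ is finite: either $\rho_0=s+1$, or $\rho_0\le s$. Conversely, given $M=s+1$ with the stated conditions, the class attaining the maximum must be the parity-$(s+1)$ class. This uses $v_{\pm M}=0$: those positions have parity $M$, so the maximizing class is the one containing them and is zero-free below radius $M$. The other class then has a zero of radius $<M$, i.e.\ some $p\in I_s$ with $v_p=0$, and $(C_s)$ follows. Note also that $M\ge2$ corresponds to $s\ge1$.

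Combining the three cases, the union over all conditions is exactly the stated disjunction. One subtlety must be handled: for $M\le t$ the radius form requires $v_{-M}=v_M=0$, and this forces $M$ to be realized by the correct class. Some care is also needed at $M=1$, where $\rho_0=0$ or $\rho_1=1$ with $t\ge1$ and no admissible $s=0$; these cases must be excluded, and the requirement $M\ge2$ does this. I expect the main obstacle to be the parity case split, namely matching $I_t,J_t$ against $I_s,J_s$ depending on whether $s\equiv t$, and checking that $M=\max(\rho_0,\rho_1)$ really identifies the unique $s$. I would handle it by phrasing everything in terms of the parity-$(s+1)$ class versus the other class, which avoids splitting on $t-s$.
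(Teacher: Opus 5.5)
Your proposal is correct and follows essentially the same route as the paper's proof. In both, $(A_t)$ is identified with $\rho_0=\infty$, $(B_t)$ with the case $M=t+1$, and $(C_s)$ with $M=s+1\le t$ by comparing the parity class of $s+1$ with the other class, so that $M\ge2$ corresponds exactly to $s\ge1$.

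One small point: the maximizing class has parity $M$ automatically, because $\rho_0\equiv t$ and $\rho_1\equiv t+1$. You therefore do not need $v_{\pm M}=0$ to identify that class.
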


\begin{proof}
Condition $(A_t)$ is $\rho_0=\infty$. If $\rho_0<\infty$ and $\rho_1=t+1$,
then $M=t+1$ and $(B_t)$ reads $L_t(v)=0$. Let $1\le s<t$ and let $Y$ be the
parity class of $s+1$, $Z$ the other one. The positions of $Y$ within radius
$s-1$ form $J_s$, and $Y$ has no positions at radius $s$. Hence the first
three requirements of $(C_s)$ say that the nearest zeros of $Y$ lie exactly
at $\pm(s+1)$, both being zeros, and the last one says that $Z$ has a zero
within radius $s$. Together they say that $Y$ realizes $M=s+1\le t$, that
$v_{\pm M}=0$, and $L_{M-1}=0$. Conversely, if $M\le t$ satisfies the stated
conditions, then $(C_{M-1})$ holds. The case $M=1$ corresponds to no
condition, since $(C_s)$ requires $s\ge1$.
\end{proof}

\begin{theorem}[Rule~164 is not $3$-universal]
\label{thm:164-not-universal}
Let $P$ be any seed for Rule~$164$ (on the line or cyclic, simple or not)
and $g\in\mathcal R_{\varphi_{164}}(P)$. It is impossible that $g(0,0,0)=0$
and $g(u)=1$ for all six inputs $u$ of weight $1$ or $2$. In particular the
functions $126$ and $254$ never occur, and
$U^{(3)}_{164}=\mathcal B_3\setminus\{126,254\}$. Hence
$\overline F_{\varphi_{164}}$ is not $3$-universal, and by monotonicity it
is not $k$-universal for any $k\ge3$.
\end{theorem}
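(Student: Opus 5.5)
The plan is to reduce the statement to a combinatorial property of $H_t$ under flipping \emph{sets} of zeros, generalizing Lemma~\ref{lem:164-three-zeros} to non-simple seeds. Suppose $g=(\overline F^{\,t}_{\varphi_{164}}(P))_i$ with $t\ge1$; the case $t=0$ is trivial, since constants and projections do not have the required values. Let $v$ be the evaluation at $(0,0,0)$ of the window $[i-t,i+t]$, recentred at $0$. Every position carrying a projection $\mathtt e^3_j$ carries a $0$ in $v$. Let $Z_j$ be the set of these positions. By Lemma~\ref{lem:eval}, $g(u)=H_t(v^{(S_u)})$, where $S_u=\bigcup_{j:u_j=1}Z_j$ and $v^{(S)}$ turns the zeros in $S$ into ones. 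Zeros coming from the constant $\mathtt 0$ are never flipped. The hypothesis to refute is therefore
\[
H_t(v)=0,\qquad H_t(v^{(S_u)})=1 \text{ for all six } u \text{ of weight } 1,2 .
\]

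I would work entirely with Lemma~\ref{lem:164-radius}. Flipping zeros only increases $\rho_0$ and $\rho_1$, and $H_t(v^{(S)})$ is governed by three things: whether $\rho_0=\infty$; the value $M=\max(\rho_0,\rho_1)$ together with the two boundary zeros $v_{\pm M}$; and the parity $L_{M-1}$.

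For each $u$, the zeros surviving in $v^{(S_u)}$ are the constant zeros together with $Z_j$ for $u_j=0$. For weight-$2$ inputs this means a single variable's zeros survive; for weight-$1$ inputs, two variables' zeros survive. I would first dispose of constant zeros at small radius. If a constant zero lies in the relevant parity class at minimal radius, it pins $\rho_0$ or $\rho_1$ for every $u$. The analysis then reduces to the parity and boundary conditions at one fixed radius $M$. At fixed $M$, $L_{M-1}(v^{(S_u)})=L_{M-1}(v)\oplus\bigoplus_j u_j c_j$ is affine in $u$, where $c_j$ counts, with odd binomial weight, the positions of $Z_j\cap I_{M-1}$. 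The boundary conditions $v_{\pm M}=0$ are monotone decreasing in $u$. An affine function of $u$ that vanishes at the six inputs of weight $1,2$ but not at $0$ is impossible: vanishing at $e_1$ and $e_2$ forces $c_1=c_2=1$ relative to $L(v)=1$, and then $u=110$ gives $1$. This contradiction is the core mechanism, exactly as in the simple case.

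The main obstacle is that $M$ itself depends on $u$: flipping $Z_j$ can push the nearest zero outward, so different inputs are evaluated at different radii. I would handle this by ordering the variables according to the minimal radius of their zeros in each parity class. I would then argue case by case, following the structure of the proof of Lemma~\ref{lem:164-three-zeros}, with sets $Z_j$ in place of single positions. The case split is on whether the label realizing the innermost $I_t$-zero equals the label realizing the innermost $J_t$-zero.
\begin{itemize}
\item If the innermost zeros come from distinct labels, say $1$ and $2$, then $u=110$ or $u=011$ yields a word whose $M$ is either forced beyond $t$ with a violated parity, or equal to that of $u=100$ with a contradictory parity.
\item If they come from the same label, the argument is symmetric.
\end{itemize}
Throughout, I expect to rely on the parity comparisons quoted after the characterization: for $s<u$, either $\pm(s+1)\in J_u$ or $I_s\subseteq J_u$.

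Finally, the implication excludes exactly $126$ (value $0$ at $111$) and $254$ (value $1$ at $111$). For the equality $U^{(3)}_{164}=\mathcal B_3\setminus\{126,254\}$, I would exhibit, for each of the remaining $254$ functions, a finite witness word $w_g\in\Sigma_3^{2t_g+1}$ with $\Phi_{t_g}(w_g)=g$, found by direct finite search. Proposition~\ref{prop:generic-universality} then shows that they lie in $U^{(3)}_{164}$. Non-$3$-universality follows, and monotonicity in $k$ gives the statement for all $k\ge3$.
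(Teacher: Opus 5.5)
Your setup is correct. Labelling each zero of $v$ by $0$ or by the index of its projection gives $g(u)=H_t(v^{(S_u)})$, and your affine-parity contradiction at a fixed radius is exactly the first step of the paper's proof.

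\textbf{The gap} is in the part you postpone: how $M$ depends on $u$. It is not routine, and your case split uses the wrong data. Whether flipping the labels in $T\subseteq\{1,2,3\}$ changes $M$ is decided by two label sets.
\begin{itemize}
\item $K_X$, the labels of the zeros at radius exactly $M_0$ in the \emph{outer} class, i.e.\ the parity class realizing $M_0=\max(\rho_0,\rho_1)$ for $v$. These zeros lie among the sites $\pm M_0$ and may carry different labels. If $M_0=t+1$, one uses a dummy label lying in no $T$.
\item $K_Y$, the labels of \emph{all} zeros of the other class at radius $<M_0$, with constant zeros included under label $0$.
\end{itemize}
The radius and the outer class are unchanged exactly when $K_X\not\subseteq T$ and $K_Y\not\subseteq T$. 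The innermost zeros alone do not determine this. A constant zero does not freeze $M$ either, since it pins only one of $\rho_0,\rho_1$. With these sets, your affine argument shows that each pair $\{x,y\}$ contains $K_X$ or $K_Y$. Hence $\{K_X,K_Y\}=\{\{x\},S\}$ with $\varnothing\neq S\subseteq\{y,z\}$.

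The affine mechanism cannot finish from there. Take $K_X=\{1\}$ and $K_Y=\{2,3\}$. Only $u=000,010,001$ share the radius $M_0$, and the affine condition just says $L_{M_0-1}(v)=1$ and $c_2=c_3=1$, which is consistent.

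The paper closes this case with a counting fact absent from your plan. Put $s=M_0-1\ge1$ and $\Lambda=\{-s+2j:\ j\sqsubseteq s\}$, the support of $L_s$. Let $N_\ell$ be the number of sites of $\Lambda$ carrying the letter $\ell$. Since $\Lambda$ lies in the inner class within radius $s$, each of its sites carries $\mathtt 1$ or a letter with label in $K_Y$. Hence
\[
N_{\mathtt 1}+N_{\mathtt e^3_2}+N_{\mathtt e^3_3}=|\Lambda|=2^{\mathrm{popcount}(s)}
\]
is even, whereas the affine condition makes all three counts odd. The other cases ($K_Y=\{x\}$, or $K_X=\{x\}$ with $K_Y=\{y\}$) use the same observation. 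Some single label $w\notin K_Y$ leaves the radius unchanged, which forces $N_{\mathtt e^3_w}$ odd, yet no site of $\Lambda$ carries $\mathtt e^3_w$.

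Modelling the argument on Lemma~\ref{lem:164-three-zeros}, with the sets $Z_j$ in place of single sites, cannot replace this. That lemma's pairwise conclusion is false for non-simple seeds: function $86$ lies in $U^{(3)}_{164}$ but is excluded for simple seeds. The six conditions therefore have to be used jointly, as above.

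Your plan for the reverse inclusion, finite witness words for the other $254$ functions, is fine and equivalent to the paper's explicit seed.
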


\begin{proof}
Let $g=(\overline F^{\,t}_{\varphi_{164}}(P))_i$; for $t=0$ the claim is
clear, so let $t\ge1$ and let $h=(P_{i+p})_{|p|\le t}$. For
$T\subseteq\{1,2,3\}$ let $u_T$ be the indicator vector of $T$ and
$v^T=h(u_T)\in\mathbb B^{2t+1}$, so that $g(u_T)=H_t(v^T)$ by
Lemma~\ref{lem:eval}. The zeros of $v^\varnothing$ are the positions
carrying $\mathtt 0$ or a projection. Give such a position the
\emph{label} $0$ or $j$ (for $\mathtt e^3_j$). Then $v^T_p=0$ if and only
if $p$ is a zero of $v^\varnothing$ whose label is not in $T$. Assume, for
a contradiction, that $g(u_\varnothing)=0$ and $g(u_T)=1$ for
$1\le|T|\le2$.

Since $g(u_\varnothing)=0$, Lemma~\ref{lem:164-radius} gives
$\rho_0(v^\varnothing)<\infty$. Let $X$ be the class realizing
$M_0=\max(\rho_0,\rho_1)$ for $v^\varnothing$ (the \emph{outer} class) and
$Y$ the other one. Let $K_X$ be the set of labels of the zeros of $X$ at
radius $M_0$, and $K_Y$ the set of labels of the zeros of $Y$ at radius
$<M_0$; if $M_0=t+1>t$, put $K_X=\{\ast\}$ with a symbol $\ast$ contained
in no $T$. Both sets are nonempty. For $T$ with $\rho_0(v^T)<\infty$ let
$D(T)$ be the outer class together with its radius. Since zeros only
disappear as $T$ grows, $D(T)=D(\varnothing)$ if and only if
$K_X\not\subseteq T$ and $K_Y\not\subseteq T$.

\emph{Parity relation.} Put $s=M_0-1$ and let $\Lambda$ be the set of
positions $-s+2j$ with $j\sqsubseteq s$. These positions lie in $Y$ within
radius $s$; there are $2^{\mathrm{popcount}(s)}$ of them, and each of them
carries $\mathtt 1$ or a letter with label in $K_Y$. For a letter $\ell$
let $N_\ell$ be the number of positions of $\Lambda$ carrying $\ell$; then
$L_s(v^T)\equiv N_{\mathtt 1}+\sum_{j\in T}N_{\mathtt e^3_j}\pmod 2$.
Suppose $D(T)=D(\varnothing)$ for some $T\ne\varnothing$. 
Since $g(u_T)=1$, Lemma~\ref{lem:164-radius}
gives $M_0\ge2$ and $L_s(v^T)=0$.
If $M_0\le t$, the lemma also requires
$v^T_{-M_0}=v^T_{M_0}=0$, and hence
$v^\varnothing_{-M_0}=v^\varnothing_{M_0}=0$.
For $M_0=t+1$ no boundary condition is required.
In either case, all the conditions of the second
alternative of Lemma~\ref{lem:164-radius}, except
possibly the parity condition, hold for
$v^\varnothing$. Since $g(u_\varnothing)=0$,
necessarily $L_s(v^\varnothing)=1$. Therefore
\begin{equation}\label{eq:164-parity}
D(T)=D(\varnothing),\ T\ne\varnothing\ \Longrightarrow\
N_{\mathtt 1}\equiv1,\qquad \sum_{j\in T}N_{\mathtt e^3_j}\equiv1\pmod2 .
\end{equation}

\emph{Step 1.} Let $\{x,y\}\subseteq\{1,2,3\}$. If neither $K_X$ nor $K_Y$
is contained in $\{x,y\}$, then $D$ equals $D(\varnothing)$ at $\{x\}$,
$\{y\}$ and $\{x,y\}$, and~\eqref{eq:164-parity} gives
$N_{\mathtt e^3_x}\equiv N_{\mathtt e^3_y}\equiv1$ and
$N_{\mathtt e^3_x}+N_{\mathtt e^3_y}\equiv1$, a contradiction. Hence each
of the three pairs contains $K_X$ or $K_Y$. Some set is then contained in
two pairs, hence equals a singleton $\{x\}$, and the third pair $\{y,z\}$
contains the other set. Thus $\{K_X,K_Y\}=\{\{x\},S\}$ with
$\varnothing\ne S\subseteq\{y,z\}$.

\emph{Step 2.} If $K_Y=\{x\}$, choose $w\in\{y,z\}$ with
$K_X\not\subseteq\{w\}$; then $D(\{w\})=D(\varnothing)$, and
\eqref{eq:164-parity} gives $N_{\mathtt e^3_w}\equiv1$, although no position
of $\Lambda$ carries $\mathtt e^3_w$. If $K_X=\{x\}$ and $S=\{y\}$, then
$D(\{z\})=D(\varnothing)$ and the same contradiction arises with $z$; the
case $S=\{z\}$ is symmetric. Finally, if $K_X=\{x\}$ and $K_Y=\{y,z\}$,
then $D(\{y\})=D(\{z\})=D(\varnothing)$, and \eqref{eq:164-parity} gives
$N_{\mathtt 1}\equiv N_{\mathtt e^3_y}\equiv N_{\mathtt e^3_z}\equiv1$.
Since every position of $\Lambda$ carries $\mathtt 1$, $\mathtt e^3_y$ or
$\mathtt e^3_z$, we get
$N_{\mathtt 1}+N_{\mathtt e^3_y}+N_{\mathtt e^3_z}=2^{\mathrm{popcount}(s)}$,
which is even because $s\ge1$, a contradiction.

For the reverse inclusion, consider the  seed on the line
$P$ on the zero background
given in Appendix~\ref{app:164}.
It has length $2231$, and direct computation shows
that its evolution realizes all $254$ Boolean
functions other than $126$ and $254$ by time $7$.
Moreover, $\overline F_{\varphi_{164}}^8(P)
=\overline F_{\varphi_{164}}^7(P)$.
Since $\mathtt 0$ is quiescent and $\varphi_{164}(0,0,z)=\varphi_{164}(x,0,0)=0$,
the support of $\overline F^{\,t}_{\varphi_{164}}(P)$ never leaves that of
$P$, so the computation on the seed interval is exact; by
Lemma~\ref{lem:periodic-certificate}, no function appears after time $7$.
\end{proof}

\subsection{A rule that is $3$-universal but not $4$-universal}
\label{ss:156-k4}

The monotonicity of $k$-universality in $k$ cannot be reversed. For
instance, the rules $13$, $27$, $28$, $33$, $78$, $104$, $164$, $172$ and
$180$ are $2$-universal, as witnessed by the complete cyclic seeds
\[
\begin{gathered}
\mathtt{bbab1a11},\quad \mathtt{abba010ba00},\quad \mathtt{1a0bba00},\quad
\mathtt{a011a0b0},\quad \mathtt{aabbaa100},\\
\mathtt{ab011a011ab01},\quad \mathtt{a0101aaba1},\quad
\mathtt{aab10a10}\in\mathsf{Cyc}_2(N),\qquad
\mathtt{a1110b0111}\in\mathsf{Cyc}^\star_2(10),
\end{gathered}
\]
which realize all of $\mathcal B_2$ by time $5,7,7,4,5,5,6,7,7$,
respectively; yet none of them is $3$-universal
(Theorems~\ref{thm:closed-form-nonuniv}, \ref{thm:33-104}
and~\ref{thm:164-not-universal}, and Corollary~\ref{cor:r180}).
By Lemma~\ref{lem:symmetries} the same holds for every rule in their
symmetry classes.

We now exhibit a rule that is $3$-universal but not
$4$-universal.

\begin{lemma}[Left--right separation]
\label{lem:separation}
Consider a streaming description (Definition~\ref{def:streaming}) whose
blocks are pairs $(\beta^{L},\beta^{R})$ of cells and whose states are
triples $q=(\kappa,\lambda,\rho)$, such that
$\delta\bigl(c,(\kappa,\lambda,\rho),(\beta^L,\beta^R)\bigr)
=\bigl(\kappa,\delta_L(c,\lambda,\beta^L),\delta_R(c,\kappa,\rho,\beta^R)\bigr)$
and $\omega(c,q)=\omega_0(\kappa)+\omega_L(c,\kappa,\lambda)+\omega_R(c,\kappa,\rho)$.
For $h\in\Sigma_k^a$ let $\kappa(h),\lambda_0(h),\rho_0(h)$ be the
$\mathbb B^k$-indexed tuples of components of $\iota(h(u))$, and let
$L_r(h)$ (resp.\ $R_r(h)$) be the set of tuples obtained from
$\lambda_0(h)$ (resp.\ $\rho_0(h)$) by $r$ applications of $\delta_L$
(resp.\ $\delta_R$) with arbitrary letters, applied pointwise in $u$. Then
\[
S^{(k)}_{n(r)}=\bigcup_{h\in\Sigma_k^a}
\Bigl\{\,u\mapsto\omega_0(\kappa_u)+\omega_L(c(r),\kappa_u,\ell_u)+\omega_R(c(r),\kappa_u,p_u):
\ \ell\in L_r(h),\ p\in R_r(h)\Bigr\}.
\]
Moreover, if the sets of reachable left and right component tuples are
finite, then, for each $h$, the sequence $(L_r(h),R_r(h),c(r))_{r\ge0}$ is
eventually periodic.
\end{lemma}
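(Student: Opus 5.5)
We derive the displayed formula for $S^{(k)}_{n(r)}$ from Proposition~\ref{prop:closed-form}(a), which identifies $S^{(k)}_{n(r)}$ with the set of outputs of product-graph vertices reached from an initial vertex by paths of exactly $r$ edges. We then show that, under the separated form of $\delta$ and $\omega$, such a path decouples into an independent left path and right path.

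Fix an initial symbolic block $h\in\Sigma_k^a$. The initial vertex has state tuple $(\iota(h(u)))_u=(\kappa_u,\lambda_{0,u},\rho_{0,u})_u$. For the separated form we need the reduction to be the identity, or at least to respect the triple structure; I would assume the identity reduction, since the statement speaks only of $\delta$ and $\iota$.

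By the form of $\delta$, an edge labelled by a symbolic block $(\beta^L,\beta^R)\in\Sigma_k^2$ acts as follows. It leaves every $\kappa_u$ unchanged. It replaces $\lambda_u$ by $\delta_L(\varsigma(c),\lambda_u,\beta^L(u))$, which depends only on the left letter. It replaces $\rho_u$ by $\delta_R(\varsigma(c),\kappa_u,\rho_u,\beta^R(u))$, which depends only on the right letter and the frozen $\kappa$.

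An induction on $r$ then shows that the left and right tuples after $r$ edges are obtained independently. The left tuple after $r$ edges depends only on $\beta^L_1,\dots,\beta^L_r$. The right tuple depends only on $\kappa(h)$, $\rho_0(h)$ and $\beta^R_1,\dots,\beta^R_r$. The class is $c(r)$ in both cases.

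Since the left and right letters are chosen independently at each step, the set of reachable pairs of tuples after $r$ edges is exactly the product $L_r(h)\times R_r(h)$. Here "arbitrary letters applied pointwise in $u$" means that one symbolic letter is evaluated at every $u$ at each step, and the class argument at step $s$ is $c(s)$. Applying $\omega$ pointwise and using its additive form gives exactly the displayed output function. Taking the union over $h$ yields the formula.

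For eventual periodicity, fix $h$. Then $L_{r+1}(h)=\Delta_L^{c(r+1)}(L_r(h))$, where $\Delta_L^{c}(X)$ denotes the set of tuples obtained from some element of $X$ by one pointwise $\delta_L$-step with class $c$ and an arbitrary letter. Similarly $R_{r+1}(h)=\Delta_R^{c(r+1)}(R_r(h))$, where $\Delta_R^{c}$ is defined the same way from $\delta_R$ with the fixed $\kappa(h)$. Also $c(r+1)=\varsigma(c(r))$.

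Hence the triple $(L_r(h),R_r(h),c(r))$ evolves under a deterministic map $\Psi$ on a set we must show is finite. By hypothesis the sets $\mathcal L$ and $\mathcal R$ of reachable left and right tuples are finite, and $C$ is finite. So the triples lie in $2^{\mathcal L}\times 2^{\mathcal R}\times C$, a finite set. The pigeonhole argument of Remark~\ref{lem:eventper}(ii) then gives eventual periodicity.

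The main obstacle is the first part: checking that the product structure is exact, with no hidden coupling. Two points need care. First, the class passed at step $s$ must be $\varsigma^s(c_0)$ for both components simultaneously; this holds because classes are shared. Second, the reduction must not mix components. If a nontrivial reduction is used, I would verify that it acts componentwise, or else argue with the unreduced procedure, which still computes $S^{(k)}_{n(r)}$ by the proof of Proposition~\ref{prop:closed-form}(a).
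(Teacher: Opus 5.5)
Your proposal is correct and follows essentially the same route as the paper's proof. Both arguments invoke Proposition~\ref{prop:closed-form}(a) and use the shared class and independently chosen left and right letters to identify the vertices reached in $r$ steps from $h$ with the triples $(\kappa(h),\ell,p)$, $\ell\in L_r(h)$, $p\in R_r(h)$; both then obtain eventual periodicity from the deterministic update of $(L_r(h),R_r(h),c(r))$ on a finite set. Your caveat that the reduction must be the identity, or act componentwise, makes explicit a point the paper leaves implicit, and you handle it adequately.
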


\begin{proof}
By Proposition~\ref{prop:closed-form}, $S^{(k)}_{n(r)}$ is the set of outputs
of the vertices reached in $r$ steps. The left components of a vertex depend
only on the initial block and on the left letters of the blocks, the right
components only on the initial block and on the right letters. Since left and right
letters occupy disjoint cells and can be chosen independently, the vertices
reached in $r$ steps from the initial block $h$ are exactly the triples
$(\kappa(h),\ell,p)$ with $\ell\in L_r(h)$ and $p\in R_r(h)$. The sets at step $r+1$ are determined by those at step $r$ and by $c(r+1)$.
If the reachable component tuples form finite sets, the triples
$(L_r(h),R_r(h),c(r))$ range over a finite set, and the sequence is
eventually periodic. Without this finiteness, eventual periodicity may fail.
\end{proof}

\begin{theorem}
\label{thm:156-k4}
The lifted Rule~$156$ is $3$-universal but not
$4$-universal. More precisely, among the $32$ symmetric Boolean functions of
four variables, those realized from some seed 
are exactly the eight functions whose values at the inputs of weight
$1$, $2$ and $3$ coincide. In particular, neither
$x_1\oplus x_2\oplus x_3\oplus x_4$ nor the function ``exactly one of
$x_1,\dots,x_4$ equals $1$'' (Wolfram numbers $27030$ and $278$) ever occurs.
\end{theorem}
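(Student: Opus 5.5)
The plan has two parts: a positive computation for $k=3$, and a certificate for $k=4$ built from the machinery of Section~\ref{ss:closed-form} together with Lemma~\ref{lem:separation}.

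\textbf{Step 1: $3$-universality.} I would exhibit an explicit $3$-complete seed for Rule~156, or a rule in its symmetry class, taken from the survey of Section~\ref{sec:systematic}. Verifying it is a finite simulation: record the time by which all $256$ functions occur. Completeness at a finite time is conclusive, so no further argument is needed here.

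\textbf{Step 2: a separated streaming description.} Rule~156 has a known solution formula in \cite[Appendix~B]{Fuks2023}. I would rewrite it, as was done for Rule~13 in Theorem~\ref{thm:closed-form-nonuniv}, as a sum of three parts.
\begin{itemize}
\item A local term depending on a few central cells; this gives the component $\kappa$ and the output $\omega_0$.
\item Running products and alternating partial sums of cells read leftwards; this gives the component $\lambda$.
\item Analogous products and sums read rightwards, possibly gated by the central cells $\kappa$; this gives the component $\rho$.
\end{itemize}
Each block is then a pair $(\beta^L,\beta^R)$, one new cell on each side, and the update and output have exactly the separated form required by Lemma~\ref{lem:separation}. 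The class records the parities, and the time map is $n(r)=r+\beta$; any small exceptional times go into the finite set $E$.

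Establishing identity \eqref{eq:streaming} is an induction on the number of steps, as in Remark~\ref{rem:closed-form-validity}. Finiteness of the reachable component tuples for $k=4$ is then certified by breadth-first search, exactly as in Proposition~\ref{prop:closed-form}(c), but run separately on the left and right factors. This keeps the search small: the full product graph over $\mathbb B^4$ would be far too large to explore directly.

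\textbf{Step 3: enumerating symmetric targets.} By Lemma~\ref{lem:separation}, every function in $U^{(4)}_{156}$ has the form
\[
u\mapsto \omega_0(\kappa_u)+\omega_L(c,\kappa_u,\ell_u)+\omega_R(c,\kappa_u,p_u),
\]
with $h\in\Sigma_4^a$, $\ell\in L_r(h)$ and $p\in R_r(h)$. For each $h$, the triple $(L_r(h),R_r(h),c(r))$ is eventually periodic. So only finitely many values of $r$, up to the preperiod plus one period, need to be examined.

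For each $h$, each such $r$, and each of the $32$ symmetric targets $f$, I avoid forming the product $L_r\times R_r$. Instead, for each $\ell\in L_r(h)$ I compute the required right contribution
\[
u\mapsto f(u)-\omega_0(\kappa_u)-\omega_L(c,\kappa_u,\ell_u)
\]
and test whether some $p\in R_r(h)$ produces it. This makes the test linear rather than quadratic. Combined with the finite times in $E$, the search certifies that only the eight symmetric functions constant on weights $1,2,3$ can occur. In particular $27030$ and $278$ are excluded, so Rule~156 is not $4$-universal.

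\textbf{Step 4: the reverse inclusion.} I would exhibit seeds on the line (or a single seed, combining witnesses via Proposition~\ref{prop:generic-universality}) realizing the eight admissible symmetric functions. This is again a finite simulation.

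\textbf{Main obstacle.} The hardest step is Step~2: putting the Fuks formula into genuinely separated left/right form and certifying that the reachable tuples stay finite for $k=4$. If the right-hand sums are gated by left data, I would enlarge $\kappa$ to carry the shared gate, so that the separation hypothesis of Lemma~\ref{lem:separation} still holds.
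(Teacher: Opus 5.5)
Your plan follows the paper's proof: the paper takes the Rule-156 formula of \cite{Fuks2023} (the one displayed in the Rule-28 case, applied with $y=x$) as a separated streaming description with $a=5$, $b=2$, $n(r)=r+1$ and $E=\{0\}$, iterates $(L_r(h),R_r(h),c(r))$ until its first repetition over the $673$ initial blocks whose projections appear in first-occurrence order, and runs your membership test on the $32$ symmetric targets; since Lemma~\ref{lem:separation} gives $S^{(4)}_{n(r)}$ exactly, that same search already shows that the eight admissible functions occur, so your Step~4 is redundant. One correction to Step~1: the survey of Section~\ref{sec:systematic} has no complete seed for the class of Rule~$156$, because it covers only short simple seeds and this class has none among them, so your witness must instead be a non-simple seed such as (W156) of length $333$, or you can run your own streaming search with $k=3$, where it returns all $256$ functions, and conclude $3$-universality by Proposition~\ref{prop:generic-universality}.
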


\begin{proof}
\emph{$3$-universality.} The seed (W156) of Table~\ref{tab:k3-universality}
is a $3$-complete seed on the line.

\emph{Non-$4$-universality.} We use the solution formula for $F^N_{156}$
displayed in the Rule-$28$ case of Theorem~\ref{thm:closed-form-nonuniv},
applied directly to $y=x$. It gives a streaming description with $a=5$,
$b=2$, $n(r)=r+1$ and reading margin $1$: the initial block is
$(x_{-2},\dots,x_2)$ and block $s$ is $(x_{-2-s},x_{2+s})$. After $r$ blocks put $q=r+2$; then $N=n(r)=r+1=q-1$, the same
convention as in the Rule-$28$ case, where $N=n-1=q-1$. The state and
its updates are those described for Rule~$28$, without the stored letters
(which were only needed to compute $y=F_{28}(x)$). This state splits as in
Lemma~\ref{lem:separation}: $\kappa$ consists of the four local products
$\bar x_{-1}x_0$, $x_2$, $x_{-2}$, $x_{-1}x_0$; $\lambda$ of the left
components ($Z$, the partial sums of $B^{(0)}$ and $S^{(0)}$ with their
previous values, and the two last left cells); $\rho$ of the right
components ($O_{q-2},O_{q-1},O_q$, $Q$, the partial sums of $B^{(1)}$ and
$S^{(1)}$, and the last right cell). The output is
$\omega_0=\bar x_{-1}x_0$, plus the left contribution
$y_{-N}Z_{N+1}+B^{(0)}_N+S^{(0)}_N$, plus the right contribution
$O_N+B^{(1)}_N+S^{(1)}_N$. The description covers every $n\ge1$, and
$S^{(4)}_0=\Sigma_4$ contains no non-constant symmetric function.

A permutation $\varpi$ of the variables maps realized functions to realized
functions (Lemma~\ref{lem:symmetries}(i)) and fixes every symmetric
function. Hence it suffices to consider initial blocks $h\in\Sigma_4^5$ in which
the projections appear in the order $\mathtt e^4_1,\mathtt e^4_2,\dots$ of
first occurrence; there are $673$ of them. For each of them we computed the
sequence $(L_r(h),R_r(h),c(r))$ until its first repetition; for every initial block a repetition is reached, which certifies the finiteness
and the eventual periodicity required in Lemma~\ref{lem:separation} and, for each of
the $32$ symmetric targets $g$, checked whether
$g=\omega_0+\omega_L(\ell)+\omega_R(p)$ for some $\ell\in L_r(h)$,
$p\in R_r(h)$ (for each left contribution the required right contribution is
determined, so this is a membership test). Exactly the eight functions with
$g(\text{weight }1)=g(\text{weight }2)=g(\text{weight }3)$ occur.
This is a finite computation, entirely determined by the streaming
description above and by the $673$ initial blocks. For $k=3$ the same procedure yields all $256$ functions, in agreement
with the $3$-universality established above.
\end{proof}

\begin{remark}
The eight realized symmetric functions are the four with
$g(\text{weight }1)=g(\text{weight }2)=g(\text{weight }3)=0$ and the four
with these values equal to $1$. The pattern parallels Rule~$28$ for $k=3$,
which realizes exactly the symmetric functions with
$g(\text{weight }1)=g(\text{weight }2)$; recall that
$F^n_{28}=F^{\,n-1}_{156}\circ F_{28}$.
\end{remark}

\subsection{The classification for $k=3$}
\label{ss:k3-classification}

\begin{theorem}[Classification of $3$-universality]
\label{thm:k3-classification}
Among the $73$ symmetry classes of elementary rules that are neither
monotone nor affine, exactly $32$ are $3$-universal and $41$ are not
$3$-universal. For $6$ of the $32$ universal classes a complete simple seed
on the line is known, and for $10$ further classes a complete simple cyclic
seed is known.
\end{theorem}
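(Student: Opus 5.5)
The plan is to assemble the classification from the results already proved, splitting the $73$ classes into a positive and a negative list and checking that the two lists are disjoint and together exhaust all $73$. Lemma~\ref{lem:symmetries} lets us treat each class through its least-numbered representative; the monotone and affine classes were already excluded by Corollary~\ref{cor:clonetrap}.

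\emph{Positive part.} For each of the $32$ universal classes we exhibit a seed in $\mathsf{Lin}_3$ or in $\mathsf{Cyc}_3(N)$ and verify by direct simulation that all $256$ functions occur by some finite time. This is a finite computation and needs no further argument. Cyclic witnesses become seeds on the line by Proposition~\ref{prop:finite-cyclic-equivalence}, so each witness certifies $3$-universality. The witnesses are recorded in Table~\ref{tab:k3-universality}, with (W156) among them, and they come from the surveys of Section~\ref{sec:systematic}. For the refined counts, $6$ classes have complete simple seeds on the line: these are the classes of $22,30,45,73,106,110$ found in the length-$\le 9$ survey. A further $10$ classes have complete simple rings, namely the classes of $54,122,126$ from the ring survey of length $\le 12$, together with the other classes found in the extended searches. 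For these we only cite the recorded seeds, checking that none of them lies in one of the six line classes.

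\emph{Negative part.} Each of the $41$ remaining classes needs an exact certificate valid for all times and all seeds. Proposition~\ref{prop:generic-universality} reduces the question to showing $U^{(3)}_\varphi\ne\mathcal B_3$. The certificates already available are as follows. Theorem~\ref{thm:closed-form-nonuniv} covers $13,27,28,50,78,172$; Theorem~\ref{thm:33-104} covers $33,104$; Theorem~\ref{thm:164-not-universal} covers $164$; Corollary~\ref{cor:r180} covers $180$; and the generic-symbol computation for Rule~$63$ in Section~\ref{sec:generic} covers $63$.

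That leaves about thirty classes still to be treated. For these I would reuse the same three techniques:
\begin{enumerate}
\item[(i)] explicit formulas in the style of Rule~180, which show that every realized $g$ has the form $A\oplus x_jB$ or obeys a similar algebraic constraint;
\item[(ii)] streaming descriptions built from the solution formulae of~\cite{Fuks2023}, with finite product graphs explored by the breadth-first search of Proposition~\ref{prop:closed-form};
\item[(iii)] self-consistent outward automata checked via Proposition~\ref{prop:outward-automaton}.
\end{enumerate}
For rules whose $\Phi_t$ depends on a bounded window, as for Rule~63, I would use the direct finite substitution count. Each certificate should output the exact set $U^{(3)}_\varphi$, or at least one missing function.

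The main obstacle is the negative part for rules with no known solution formula and with growing automata, of the Rule~164 type. There, finite-state certificates fail, and one needs a bespoke combinatorial argument, such as the preimage characterization with parity bookkeeping. I would first try automaton learning. If that does not stabilize, I would look for a structural invariant of the preimages of $1$, and I would expect these cases to decide whether the count $32/41$ can be closed.
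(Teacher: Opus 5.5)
Your positive half matches the paper. The paper's proof cites the witnesses in Table~\ref{tab:k3-universality}, verified by simulation and extended to each class by Lemma~\ref{lem:symmetries}; cyclic witnesses are converted by Proposition~\ref{prop:finite-cyclic-equivalence}. One small inaccuracy: the $16$ non-simple witnesses marked U do not come from the surveys of Section~\ref{sec:systematic}, which enumerate only simple seeds. This does no harm, since simulating the recorded seeds is all that is needed.

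The genuine gap is the negative half. You certify $11$ of the $41$ classes and leave the other $30$ with a list of techniques you ``would reuse''. You also say you are unsure whether the count $32/41$ can be closed. The statement asserts exactly that each of these classes is not $3$-universal, so this is not a proof.

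You also misplace the difficulty. None of the $30$ needs a Rule-164-type argument, and the paper's table disposes of them with two cheap finite certificates.

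\emph{Eventual periodicity ($20$ classes, Rule~63's among them).} This is the mechanism of Section~\ref{sec:generic}. If $\Phi_t=\Phi_s^{(d)}$ for some $s<t$, the recurrence $\Phi_{n+1}=\varphi(\Phi_n^{(-1)},\Phi_n^{(0)},\Phi_n^{(+1)})$ gives $\Phi_{t+m}=\Phi_{s+m}^{(d)}$ for all $m\ge0$. Hence $S^{(3)}_{t+m}=S^{(3)}_{s+m}$, and $U^{(3)}_\varphi=\bigcup_{n<t}S^{(3)}_n$ is a finite computation. This check is what turns your ``bounded window'' remark into a statement valid for all times; as you state it, it covers only Rule~63.

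\emph{Non-$2$-universality ($11$ classes: those of $7,23,32,40,44,77,130,132,140,162,178$).} For these the table records a finite block certificate showing the rule is not even $2$-universal. Monotonicity of $k$-universality in $k$, noted after Proposition~\ref{prop:finite-cyclic-equivalence}, then excludes $3$-universality. Your plan never uses this reduction to $k=2$, and it is what makes these cases small.
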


\begin{proof}
For each of the $32$ universal classes, Table~\ref{tab:k3-universality} in
Appendix~\ref{app:k3} gives a seed that is complete for the least rule of the class; completeness has been verified by direct simulation, and it extends to the whole class by
Lemma~\ref{lem:symmetries}. By Proposition~\ref{prop:finite-cyclic-equivalence},
a complete cyclic seed also proves $3$-universality. For each of the
remaining $41$ classes, the same table indicates the result that excludes
$3$-universality.
\end{proof}

\section{Conclusions}

We have introduced a framework for studying the Boolean functions generated by a single spatially homogeneous evolution. In a lifted cellular automaton, the local rule acts by pointwise composition on function-valued cell states. Starting from projections and constants, the question is whether every member of $\mathcal B_k$ appears somewhere in the resulting space--time diagram. This notion of completeness differs both from Turing universality, which concerns the simulation of arbitrary computations under suitable encodings, and from functional completeness, which allows unrestricted composition of logical operations. Here the rule, update geometry, and seed are fixed, and the target function is specified only by the position and time at which it occurs.

The evaluation identity makes this problem accessible through ordinary Boolean dynamics: the lifted system is conjugate to the product of the evolutions obtained by evaluating its initial configuration at all $2^k$ Boolean inputs. Completeness consequently requires all $2^{2^k}$ joint evaluation patterns to occur at common space--time sites. We developed three complementary obstructions to this requirement. Boolean clones constrain the functions obtainable by composition; \emph{evaluation collapse} makes distinct inputs indistinguishable after a finite time; and persistent \emph{walls} in the light cone can prevent variables from interacting even when no global invariant-alphabet obstruction applies. The last mechanism is essential: we exhibited an incomplete Rule-30 seed for which no proper invariant alphabet containing an entire evolved row can certify incompleteness.

Our theoretical and computational results also show that completeness depends on the geometry of the seed, not merely on the algebraic power of the local rule. For Rule~110, we determined the exact injectivity range of the global map on finite binary words and used explicit collisions to rule out completeness for a family of block seeds. At the same time, Rule~110 has complete simple seeds for three variables, as well as a complete simple cyclic seed for four variables. More generally, the systematic survey produces complete simple seeds for several elementary rules, while exhibiting marked differences between individual seed arrangements and ring lengths. These results separate actual completeness witnesses, certified obstructions, and finite computations that remain inconclusive.

The principal positive evidence concerns Rule~30. The particularly simple seeds
\[
Q_k=\mathtt 1\,\mathtt e_1^k\mathtt e_2^k\cdots\mathtt e_k^k\,\mathtt 1
\]
on the zero background have been computationally verified to generate all of $\mathcal B_k$ for $1\le k\le5$. We conjecture that the same holds for every $k$. Left permutivity provides one structural reason why this may be possible: it guarantees that evaluations at distinct Boolean inputs never merge for simple seeds on the line. This preservation of information is necessary to avoid collapse obstructions, but it is not sufficient for completeness, as the wall example demonstrates. We have also established a triangular recurrence and a uniform Fibonacci-based period bound for rightward diagonals of Rule~30. The resulting counting argument shows that a right-edge strip capable, by itself, of containing all $2^{2^k}$ functions must have depth at least of order $2^k$. Thus a proof of the conjecture must explain how sufficiently rich joint patterns emerge beyond these constrained diagonal regions.

A complementary characterization comes from generic space--time functions. For unrestricted seeds on the line, $k$-universality is equivalent to the ability of generic space--time functions to generate each element of $\mathcal B_k$ through substitutions of projections and constants. Individual witnessing substitutions can be placed far apart in one seed. This argument fails for simple seeds, because different witnesses may require repeated occurrences of the same projection. In this sense, strong $k$-universality isolates the genuinely global geometric constraint of producing every function from a fixed, sparse encoding of the input variables.

A further contribution is the complete classification of
$3$-universality for elementary cellular automata. Among the 73
symmetry classes not already excluded by monotonicity or affinity,
we establish that 32 are $3$-universal and 41 are not. The negative
results go beyond the failure of particular seeds: they are valid for
arbitrary seeds and all evolution times. In several cases, finite-state
certificates, obtained from explicit evolution formulas or from
automata whose correctness is verified inductively, determine exactly
which ternary Boolean functions can occur. Rule~180 admits a direct
algebraic obstruction, and for Rule~164 a parity argument on its
preimage characterization excludes exactly two functions for every
seed, and fourteen for simple seeds. We also show that the monotonicity
of universality with respect to the number of variables cannot be
reversed: Rule~156 is $3$-universal but not $4$-universal. Indeed, only
eight of the 32 symmetric Boolean functions of four variables can occur
in its lifted evolution, irrespective of the choice of seed.

The main open problem remains to prove or disprove the
completeness of the Rule-30 seeds $Q_k$ for every $k$.
For elementary rules, $3$-universality is now completely classified,
but strong and strongly cyclic $3$-universality remain open for the
classes in which only non-simple witnesses are known. Other
natural directions include understanding the relation
between simple and unrestricted completeness, determining
how universality depends on the arity $k$, and
characterizing completeness through dynamical or
topological properties of the underlying cellular
automaton. The examples of incomplete seeds that escape
clone and evaluation-collapse obstructions indicate that
further spatial restrictions may be needed. More broadly,
the framework suggests investigating which families of
functions can be recovered by observing different
coordinates and times of a single orbit under a fixed
encoding of the inputs.

\appendix

\section{Survey of simple seeds and tables}
\label{app:survey}
This appendix contains the complete tables of the survey summarized in
Section~\ref{sec:systematic}.

\begin{table}[ht]
\centering\small
\begin{tabular}{rlrrrrrr}
\toprule
$k$ & family & size & $H$ & seeds & complete & certified & open \\
\midrule
2 & $\mathsf{Lin}^\star_2$ & 2--8 & $2^{12}$ & 75040 & 6146 & 56798 & 12096 \\
2 & $\mathsf{Cyc}^\star_2(N)$ & 3--10 & $2^{20}$ & 150599 & 13839 & 136760 & 0 \\
3 & $\mathsf{Lin}^\star_3$ & 3--9 & $2^{12}$ & 240128 & 6869 & 198726 & 34533 \\
3 & $\mathsf{Cyc}^\star_3(N)$ & 3--12 & $2^{20}$ & 1146903 & 9238 & 1137665 & 0 \\
4 & $\mathsf{Lin}^\star_4$ & 4--7 & $2^{14}$ & 14740 & 51 & 13215 & 1474 \\
4 & $\mathsf{Cyc}^\star_4(N)$ & 4--9 & $2^{20}$ & 47158 & 0 & 47158 & 0 \\
\bottomrule
\end{tabular}
\caption{Verdicts for all simple seeds of the given sizes, for the 73 class representatives of the elementary rules outside $M\cup L$ (quiescent backgrounds only); $H$ is the horizon.}
\label{tab:survey-q}
\end{table}

{\small
\setlength{\tabcolsep}{4pt}
\begin{longtable}{rlllrll}
\caption{Classes of elementary rules admitting a complete simple seed in the surveyed range (quiescent backgrounds only): number of complete seeds, least size, and a witness of least size with the least completion time $t$. Witnesses refer to the least rule number of each class.}
\label{tab:complete-q}\\
\toprule
$k$ & family & class & complete & least size & witness & $t$ \\
\midrule
\endfirsthead
\multicolumn{7}{c}{\tablename\ \thetable\ -- continued}\\
\toprule
$k$ & family & class & complete & least size & witness & $t$ \\
\midrule
\endhead
\midrule
\multicolumn{7}{r}{Continued on next page}\\
\endfoot
\bottomrule
\endlastfoot
2 & $\mathsf{Lin}^\star_2$ & 18,183 & 196 & 5 & \texttt{1a0b1} (bg 0) & 11 \\
 &  & 22,151 & 901 & 4 & \texttt{1ab1} (bg 0) & 11 \\
 &  & 26,82,167,181 & 42 & 6 & \texttt{1a0b11} (bg 0) & 7 \\
 &  & 30,86,135,149 & 855 & 4 & \texttt{ab11} (bg 0) & 6 \\
 &  & 54,147 & 452 & 4 & \texttt{10ab} (bg 0) & 9 \\
 &  & 62,118,131,145 & 155 & 5 & \texttt{10a0b} (bg 0) & 7 \\
 &  & 94,133 & 127 & 5 & \texttt{a111b} (bg 0) & 20 \\
 &  & 106,120,169,225 & 517 & 4 & \texttt{a11b} (bg 0) & 23 \\
 &  & 110,124,137,193 & 1036 & 3 & \texttt{1ab} (bg 0) & 8 \\
 &  & 122,161 & 359 & 5 & \texttt{1a1b1} (bg 0) & 4 \\
 &  & 126,129 & 590 & 5 & \texttt{100ab} (bg 0) & 10 \\
 &  & 146,182 & 916 & 4 & \texttt{11ab} (bg 0) & 14 \\
\midrule
2 & $\mathsf{Cyc}^\star_2$ & 6,20,159,215 & 6 & 10 & \texttt{a001100b10} & 9 \\
 &  & 9,65,111,125 & 114 & 6 & \texttt{a0100b} & 5 \\
 &  & 11,47,81,117 & 42 & 6 & \texttt{a1101b} & 5 \\
 &  & 14,84,143,213 & 85 & 5 & \texttt{a0b10} & 8 \\
 &  & 18,183 & 163 & 9 & \texttt{a00101b00} & 7 \\
 &  & 22,151 & 437 & 7 & \texttt{a01011b} & 5 \\
 &  & 25,61,67,103 & 462 & 5 & \texttt{a111b} & 6 \\
 &  & 26,82,167,181 & 277 & 5 & \texttt{a01b1} & 12 \\
 &  & 30,86,135,149 & 1915 & 6 & \texttt{a0011b} & 4 \\
 &  & 35,49,59,115 & 28 & 5 & \texttt{a0b10} & 9 \\
 &  & 37,91 & 319 & 6 & \texttt{a000b0} & 5 \\
 &  & 41,97,107,121 & 1263 & 5 & \texttt{a011b} & 8 \\
 &  & 43,113 & 88 & 7 & \texttt{a001b00} & 10 \\
 &  & 45,75,89,101 & 1977 & 6 & \texttt{a000b0} & 5 \\
 &  & 54,147 & 850 & 7 & \texttt{a001b01} & 5 \\
 &  & 56,98,185,227 & 14 & 7 & \texttt{a01b101} & 7 \\
 &  & 57,99 & 104 & 5 & \texttt{a0b10} & 6 \\
 &  & 58,114,163,177 & 73 & 7 & \texttt{a001b11} & 10 \\
 &  & 62,118,131,145 & 457 & 5 & \texttt{a0b11} & 12 \\
 &  & 73,109 & 592 & 7 & \texttt{a000b11} & 5 \\
 &  & 74,88,173,229 & 68 & 5 & \texttt{a011b} & 14 \\
 &  & 94,133 & 93 & 9 & \texttt{a0010011b} & 7 \\
 &  & 106,120,169,225 & 1494 & 7 & \texttt{a01011b} & 5 \\
 &  & 110,124,137,193 & 1241 & 7 & \texttt{a0010b0} & 5 \\
 &  & 122,161 & 651 & 8 & \texttt{a00101b1} & 4 \\
 &  & 126,129 & 466 & 8 & \texttt{a00100b1} & 5 \\
 &  & 134,148,158,214 & 41 & 9 & \texttt{a1001b101} & 6 \\
 &  & 142,212 & 86 & 5 & \texttt{a01b1} & 8 \\
 &  & 146,182 & 363 & 8 & \texttt{a011001b} & 7 \\
 &  & 154,166,180,210 & 70 & 10 & \texttt{a0001b1000} & 7 \\
\midrule
3 & $\mathsf{Lin}^\star_3$ & 22,151 & 1521 & 6 & \texttt{10a1bc} (bg 0) & 237 \\
 &  & 30,86,135,149 & 2271 & 5 & \texttt{1abc1} (bg 0) & 44 \\
 &  & 106,120,169,225 & 958 & 6 & \texttt{a1bc11} (bg 0) & 1618 \\
 &  & 110,124,137,193 & 2119 & 5 & \texttt{abc01} (bg 0) & 84 \\
\midrule
3 & $\mathsf{Cyc}^\star_3$ & 22,151 & 20 & 11 & \texttt{a001b0110c0} & 158 \\
 &  & 30,86,135,149 & 2426 & 7 & \texttt{a0bc100} & 240 \\
 &  & 45,75,89,101 & 6552 & 10 & \texttt{a010b0110c} & 113 \\
 &  & 54,147 & 20 & 11 & \texttt{a00111b11c1} & 338 \\
 &  & 110,124,137,193 & 74 & 11 & \texttt{a00b10101c0} & 226 \\
 &  & 122,161 & 86 & 12 & \texttt{a00001b0c101} & 162 \\
 &  & 126,129 & 60 & 12 & \texttt{a0000b011c01} & 175 \\
\midrule
4 & $\mathsf{Lin}^\star_4$ & 30,86,135,149 & 51 & 6 & \texttt{abc11d} (bg 0) & 1083 \\
\midrule
4 & $\mathsf{Cyc}^\star_4$ & \multicolumn{5}{l}{none} \\
\end{longtable}
}

\begin{table}[ht]
\centering\small
\begin{tabular}{rlrrrrrr}
\toprule
$k$ & family & size & $H$ & seeds & complete & certified & open \\
\midrule
2 & $\mathsf{Lin}^\star_2$ & 2--8 & $2^{12}$ & 163520 & 16798 & 126456 & 20266 \\
2 & $\mathsf{Cyc}^\star_2(N)$ & 3--10 & $2^{20}$ & 150599 & 13839 & 136760 & 0 \\
3 & $\mathsf{Lin}^\star_3$ & 3--9 & $2^{12}$ & 523264 & 22346 & 443073 & 57845 \\
3 & $\mathsf{Cyc}^\star_3(N)$ & 3--12 & $2^{20}$ & 1146903 & 9238 & 1137665 & 0 \\
4 & $\mathsf{Lin}^\star_4$ & 4--7 & $2^{14}$ & 32120 & 571 & 29090 & 2459 \\
4 & $\mathsf{Cyc}^\star_4(N)$ & 4--9 & $2^{20}$ & 47158 & 0 & 47158 & 0 \\
\bottomrule
\end{tabular}
\caption{Verdicts for all simple seeds of the given sizes, for the 73 class representatives of the elementary rules outside $M\cup L$ (both initial backgrounds); $H$ is the horizon.}
\label{tab:survey-all}
\end{table}

{\small
\setlength{\tabcolsep}{4pt}
\begin{longtable}{rlllrll}
\caption{Classes of elementary rules admitting a complete simple seed in the surveyed range (both initial backgrounds): number of complete seeds, least size, and a witness of least size with the least completion time $t$.}
\label{tab:complete-all}\\      
\toprule
$k$ & family & class & complete & least size & witness & $t$ \\
\midrule
\endfirsthead
\multicolumn{7}{c}{\tablename\ \thetable\ -- continued}\\
\toprule
$k$ & family & class & complete & least size & witness & $t$ \\
\midrule
\endhead
\midrule
\multicolumn{7}{r}{Continued on next page}\\
\endfoot
\bottomrule
\endlastfoot
2 & $\mathsf{Lin}^\star_2$ & 9,65,111,125 & 42 & 6 & \texttt{10ab11} (bg 0) & 15 \\
 &  & 18,183 & 291 & 5 & \texttt{1a0b1} (bg 0) & 11 \\
 &  & 22,151 & 1467 & 4 & \texttt{1ab1} (bg 0) & 11 \\
 &  & 25,61,67,103 & 109 & 4 & \texttt{0a1b} (bg 1) & 7 \\
 &  & 26,82,167,181 & 86 & 6 & \texttt{0a11b0} (bg 1) & 6 \\
 &  & 30,86,135,149 & 1875 & 4 & \texttt{0ab0} (bg 1) & 6 \\
 &  & 37,91 & 226 & 3 & \texttt{a0b} (bg 0) & 5 \\
 &  & 41,97,107,121 & 1157 & 4 & \texttt{11ab} (bg 0) & 7 \\
 &  & 45,75,89,101 & 2240 & 2 & \texttt{ab} (bg 1) & 6 \\
 &  & 54,147 & 841 & 4 & \texttt{a01b} (bg 1) & 7 \\
 &  & 57,99 & 40 & 3 & \texttt{a0b} (bg 0) & 7 \\
 &  & 62,118,131,145 & 276 & 5 & \texttt{10a0b} (bg 0) & 7 \\
 &  & 73,109 & 1342 & 2 & \texttt{ab} (bg 0) & 11 \\
 &  & 94,133 & 284 & 4 & \texttt{00ab} (bg 1) & 6 \\
 &  & 106,120,169,225 & 1282 & 4 & \texttt{ab10} (bg 1) & 9 \\
 &  & 110,124,137,193 & 2075 & 3 & \texttt{1ab} (bg 0) & 8 \\
 &  & 122,161 & 1065 & 4 & \texttt{01ab} (bg 1) & 6 \\
 &  & 126,129 & 1184 & 5 & \texttt{010ab} (bg 1) & 8 \\
 &  & 146,182 & 916 & 4 & \texttt{11ab} (bg 0) & 14 \\
\midrule
2 & $\mathsf{Cyc}^\star_2$ & 6,20,159,215 & 6 & 10 & \texttt{a001100b10} & 9 \\
 &  & 9,65,111,125 & 114 & 6 & \texttt{a0100b} & 5 \\
 &  & 11,47,81,117 & 42 & 6 & \texttt{a1101b} & 5 \\
 &  & 14,84,143,213 & 85 & 5 & \texttt{a0b10} & 8 \\
 &  & 18,183 & 163 & 9 & \texttt{a00101b00} & 7 \\
 &  & 22,151 & 437 & 7 & \texttt{a01011b} & 5 \\
 &  & 25,61,67,103 & 462 & 5 & \texttt{a111b} & 6 \\
 &  & 26,82,167,181 & 277 & 5 & \texttt{a01b1} & 12 \\
 &  & 30,86,135,149 & 1915 & 6 & \texttt{a0011b} & 4 \\
 &  & 35,49,59,115 & 28 & 5 & \texttt{a0b10} & 9 \\
 &  & 37,91 & 319 & 6 & \texttt{a000b0} & 5 \\
 &  & 41,97,107,121 & 1263 & 5 & \texttt{a011b} & 8 \\
 &  & 43,113 & 88 & 7 & \texttt{a001b00} & 10 \\
 &  & 45,75,89,101 & 1977 & 6 & \texttt{a000b0} & 5 \\
 &  & 54,147 & 850 & 7 & \texttt{a001b01} & 5 \\
 &  & 56,98,185,227 & 14 & 7 & \texttt{a01b101} & 7 \\
 &  & 57,99 & 104 & 5 & \texttt{a0b10} & 6 \\
 &  & 58,114,163,177 & 73 & 7 & \texttt{a001b11} & 10 \\
 &  & 62,118,131,145 & 457 & 5 & \texttt{a0b11} & 12 \\
 &  & 73,109 & 592 & 7 & \texttt{a000b11} & 5 \\
 &  & 74,88,173,229 & 68 & 5 & \texttt{a011b} & 14 \\
 &  & 94,133 & 93 & 9 & \texttt{a0010011b} & 7 \\
 &  & 106,120,169,225 & 1494 & 7 & \texttt{a01011b} & 5 \\
 &  & 110,124,137,193 & 1241 & 7 & \texttt{a0010b0} & 5 \\
 &  & 122,161 & 651 & 8 & \texttt{a00101b1} & 4 \\
 &  & 126,129 & 466 & 8 & \texttt{a00100b1} & 5 \\
 &  & 134,148,158,214 & 41 & 9 & \texttt{a1001b101} & 6 \\
 &  & 142,212 & 86 & 5 & \texttt{a01b1} & 8 \\
 &  & 146,182 & 363 & 8 & \texttt{a011001b} & 7 \\
 &  & 154,166,180,210 & 70 & 10 & \texttt{a0001b1000} & 7 \\
\midrule
3 & $\mathsf{Lin}^\star_3$ & 22,151 & 2148 & 6 & \texttt{10a1bc} (bg 0) & 237 \\
 &  & 30,86,135,149 & 5260 & 5 & \texttt{0ab0c} (bg 1) & 42 \\
 &  & 45,75,89,101 & 7167 & 3 & \texttt{abc} (bg 0) & 61 \\
 &  & 73,109 & 650 & 6 & \texttt{a00b0c} (bg 0) & 131 \\
 &  & 106,120,169,225 & 2671 & 5 & \texttt{01abc} (bg 1) & 1006 \\
 &  & 110,124,137,193 & 4450 & 4 & \texttt{0abc} (bg 1) & 2453 \\
\midrule
3 & $\mathsf{Cyc}^\star_3$ & 22,151 & 20 & 11 & \texttt{a001b0110c0} & 158 \\
 &  & 30,86,135,149 & 2426 & 7 & \texttt{a0bc100} & 240 \\
 &  & 45,75,89,101 & 6552 & 10 & \texttt{a010b0110c} & 113 \\
 &  & 54,147 & 20 & 11 & \texttt{a00111b11c1} & 338 \\
 &  & 110,124,137,193 & 74 & 11 & \texttt{a00b10101c0} & 226 \\
 &  & 122,161 & 86 & 12 & \texttt{a00001b0c101} & 162 \\
 &  & 126,129 & 60 & 12 & \texttt{a0000b011c01} & 175 \\
\midrule
4 & $\mathsf{Lin}^\star_4$ & 30,86,135,149 & 139 & 6 & \texttt{abc11d} (bg 0) & 1083 \\
 &  & 45,75,89,101 & 432 & 5 & \texttt{abcd1} (bg 0) & 1024 \\
\midrule
4 & $\mathsf{Cyc}^\star_4$ & \multicolumn{5}{l}{none} \\
\end{longtable}
}

\begin{figure}[ht]
\centering
\includegraphics[width=\textwidth]{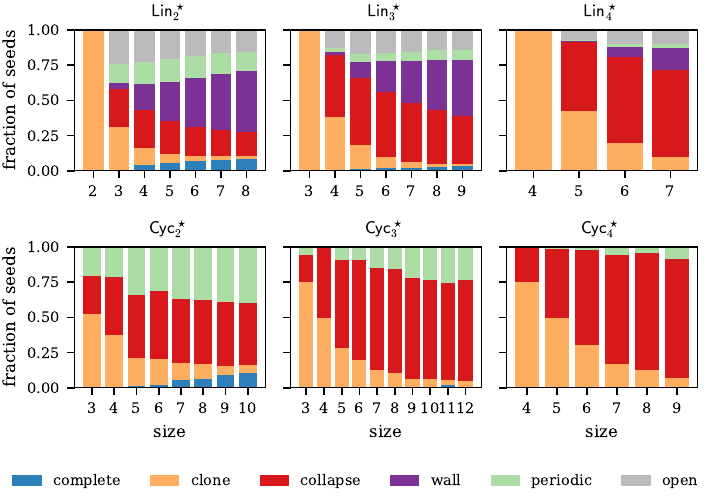}
\caption{The survey of Table~\ref{tab:survey-q} (quiescent backgrounds)
by size of the seed: fraction of seeds that are complete, open, or
certified by each of the obstructions (O1)--(O4). Top: seeds on the line;
bottom: rings of length $N$. Each bar aggregates the $73$ class
representatives of the elementary rules outside $M\cup L$.}
\label{fig:survey}
\end{figure}

\section{Streaming descriptions for Theorem~\ref{thm:closed-form-nonuniv}}
\label{app:closed-form}
This appendix completes the proof of Theorem~\ref{thm:closed-form-nonuniv}
for Rules~$27$, $28$, $50$, $78$ and $172$. We use the conventions fixed at
the beginning of that proof: positions are relative to the output site $0$,
$\bar x=1-x$, $\operatorname{ev}(n)$ is the parity indicator, empty products
equal $1$ and empty sums equal $0$.

\medskip\noindent\textbf{Rule 27.}
The solution formula has different forms at odd and at even times, so we
use two descriptions. In both cases we write $\ell=-m-1$, where $n=2m+1$ or
$n=2m$.

\emph{Odd times $n=2m+1$.} With the running products
\[
\Pi_0=x_\ell\,\bar x_{\ell+1},\qquad
\Pi_p=\Pi_{p-1}\,x_{\ell+3p}\,\bar x_{\ell+3p+1}\quad(p\ge1),
\]
the formula reads
\[
F^n_{27}(x)_0=\bar x_{\ell+1}+x_{\ell+2}(x_{\ell+1}-x_\ell)
+\sum_{p=1}^{m}\bigl(x_{\ell+3p-1}-x_{\ell+3p+2}\bigr)\,\Pi_p .
\]
To evaluate $F^n_{27}(x)_0$ for $n=2r+1$ (so $m=r$), the procedure reads the
consecutive triples $(x_{\ell+3p},x_{\ell+3p+1},x_{\ell+3p+2})$ from left
to right: the triple $p=0$ is the initial block, and step $p=1,\ldots,r$
reads the triple $p$. Thus $a=b=3$, $n(r)=2r+1$ and $d=0$; the positions
depend on $r$ through $\ell$. After step $p$ the state consists of the last
cell read, $\lambda=x_{\ell+3p+2}$, the product $\Pi_p$, and the partial
sum $\sigma$ of the formula up to the index $p$. Initially
$\lambda=x_{\ell+2}$, the product is $\Pi_0$ and
$\sigma=\bar x_{\ell+1}+x_{\ell+2}(x_{\ell+1}-x_\ell)$. When the procedure
reads a triple $(\beta_1,\beta_2,\beta_3)$, it first sets
$\Pi_p=\Pi_{p-1}\beta_1\bar\beta_2$, then adds $(\lambda-\beta_3)\Pi_p$ to
$\sigma$ (at this moment $\lambda=x_{\ell+3p-1}$), and finally sets
$\lambda=\beta_3$. The output is $\sigma$. A single class suffices.

\emph{Even times $n=2m\ge2$.} With
\[
\Pi'_0=x_{\ell+2}\,\bar x_{\ell+3},\qquad
\Pi'_p=\Pi'_{p-1}\,x_{\ell+3p+2}\,\bar x_{\ell+3p+3}\quad(p\ge1),
\]
\[
\sigma_0=x_{\ell+1}-x_{\ell+2}\bar x_{\ell+3}
\bigl(x_{\ell+4}x_\ell+x_{\ell+1}-x_{\ell+4}-x_\ell\bigr),
\]
the formula reads
\[
F^n_{27}(x)_0=\sigma_0-\bar x_\ell\sum_{p=1}^{m-1}
\bigl(x_{\ell+3p+1}-x_{\ell+3p+4}\bigr)\,\Pi'_p .
\]
To evaluate $F^n_{27}(x)_0$ for $n=2r+2$ (so $m=r+1$), the procedure reads the initial
block $(x_\ell,\ldots,x_{\ell+4})$ and, at step $p=1,\ldots,r$, the triple
$(x_{\ell+3p+2},x_{\ell+3p+3},x_{\ell+3p+4})$. Thus $a=5$, $b=3$,
$n(r)=2r+2$ and $d=0$. After step $p$ the state consists of
$\xi=x_\ell$, the last cell read $\phi=x_{\ell+3p+4}$, the product
$\Pi'_p$ and the partial sum $\sigma$. Initially $\phi=x_{\ell+4}$, the
product is $\Pi'_0$ and $\sigma=\sigma_0$. When the procedure reads a
triple $(\beta_1,\beta_2,\beta_3)$, it first sets
$\Pi'_p=\Pi'_{p-1}\beta_1\bar\beta_2$, then subtracts
$(\phi-\beta_3)\,\bar\xi\,\Pi'_p$ from $\sigma$, and finally sets
$\phi=\beta_3$. The output is $\sigma$. A single class suffices.

\emph{Result.} The two searches visit $4\,780$ and $14\,405$ vertices.
Their outputs, together with $S^{(3)}_0=\Sigma_3$, are $225$ distinct
functions.

\medskip\noindent\textbf{Rule 28.}
Rules $28$ and $156$ differ only on the neighbourhood $111$. If two
adjacent cells of $F_{28}(x)$ are both equal to $1$, the four cells above
them read $0100$; consequently $111$ never occurs in the image of
$F_{28}$. Therefore $F_{156}$ agrees with $F_{28}$ on this image, and
maps it into itself. By induction, $F^n_{28}=F^{\,n-1}_{156}\circ F_{28}$
for $n\ge1$ (see also~\cite[Appendix~B]{Fuks2023}). We use the solution
formula of Rule~156 from~\cite[Appendix~B]{Fuks2023}: for $N\ge1$ and
$y\in\mathbb B^{\mathbb Z}$,
\[
F^N_{156}(y)_0=\bar y_{-1}y_0+O_N+y_{-N}Z_{N+1}
+B^{(0)}_N+B^{(1)}_N+S^{(0)}_N+S^{(1)}_N ,
\]
where the running products are
\[
Z_r=\prod_{i=-r+2}^{1}\bar y_i,\qquad
O_m=\prod_{i=-1}^{m}y_i,\qquad
Q_q=\prod_{i=1}^{q}y_i,
\]
so that $Z_{r+1}=Z_r\,\bar y_{-r+1}$, $O_{m+1}=O_m\,y_{m+1}$ and
$Q_{q+1}=Q_q\,y_{q+1}$, and
\begin{align*}
B^{(0)}_N&=y_2\bigl(\operatorname{ev}(N)\,\Lambda_1(N)
 +\operatorname{ev}(N+1)\,\Lambda_2(N)\bigr), &
S^{(0)}_N&=\Lambda_3(N)+\Lambda_4(N),\\
B^{(1)}_N&=\bar y_{-2}\bigl(\operatorname{ev}(N+1)\,\Gamma_1(N+1)
 +\operatorname{ev}(N)\,\Gamma_2(N+1)\bigr), &
S^{(1)}_N&=y_{-2}\,\Gamma_3(N)-y_{-2}y_{-1}y_0\,\Gamma_4(N).
\end{align*}
The eight sums are
\begin{align*}
\Lambda_1(q)&=\sum_{r=2}^{q}\operatorname{ev}(r+1)\,y_{-r}y_{-r+1}Z_r, &
\Lambda_2(q)&=\sum_{r=2}^{q}\operatorname{ev}(r)\,\bar y_{-r}y_{-r+1}Z_r,\\
\Lambda_3(q)&=\sum_{r=1}^{q}\operatorname{ev}(r)\,y_{-r}y_{-r+1}Z_r, &
\Lambda_4(q)&=\sum_{r=2}^{q}\operatorname{ev}(r+1)\,\bar y_{-r}y_{-r+1}Z_r,\\
\Gamma_1(q)&=\sum_{m=0}^{q-2}\operatorname{ev}(m+1)\,\bar y_{m+1}\bar y_{m+2}O_m, &
\Gamma_2(q)&=\sum_{m=0}^{q-2}\operatorname{ev}(m)\,\bar y_{m+1}y_{m+2}O_m,\\
\Gamma_3(q)&=\sum_{k=2}^{q}(-1)^k\bigl(y_k+\operatorname{ev}(k+1)(y_{k-1}-1)\bigr)O_{k-2}, &
\Gamma_4(q)&=\sum_{k=2}^{q}(-1)^k\,Q_k .
\end{align*}
The terms of the \emph{left sums} $\Lambda_i$ involve a running product
$Z_r$ and two cells near $-r$; the terms of the \emph{right sums}
$\Gamma_i$ involve $O_m$ or $Q_k$ and two cells near $m$ or $k$.

We apply this formula with $y=F_{28}(x)$, that is,
$y_i=\varphi_{28}(x_{i-1},x_i,x_{i+1})$, and $N=n-1$. Each cell $y_i$ is
determined by three adjacent cells of $x$, so the procedure reads $x$
outwards on both sides and computes the new cells of $y$ as soon as they
are determined.

\emph{Reading order.} To evaluate $F^n_{28}(x)_0$ for $n\ge2$, the procedure first
reads the initial block $(x_{-3},\ldots,x_3)$, which determines
$y_{-2},\ldots,y_2$. At step $s=1,\ldots,n-2$ it reads the pair
$(x_{-3-s},x_{3+s})$. After $s$ steps we put $q=s+2$: the cells read so far
determine $y_{-q},\ldots,y_q$; the procedure is evaluating $F^n_{28}(x)_0$ with $n=q$, and
$N=q-1$. Thus $a=7$, $b=2$, $n(r)=r+2$ and $d=1$.

\emph{State.} After step $s$ (with $q=s+2$) the state consists of:
\begin{itemize}
\item the four local products $\bar y_{-1}y_0$, $y_2$, $y_{-2}$,
  $y_{-1}y_0$;
\item the \emph{left part}: the product $Z_q$; the values $\Lambda_i(q)$
  and the previous values $\Lambda_i(q-1)$, $i=1,\ldots,4$; the cells
  $y_{-q},y_{-q+1}$; the letters $x_{-q-1},x_{-q}$;
\item the \emph{right part}: the products $O_{q-2},O_{q-1},O_q$ and $Q_q$;
  the values $\Gamma_1(q)$, $\Gamma_2(q)$, $\Gamma_3(q)$, $\Gamma_4(q)$
  and the previous values $\Gamma_3(q-1)$, $\Gamma_4(q-1)$; the cell $y_q$;
  the letters $x_q,x_{q+1}$.
\end{itemize}
After the initial block ($q=2$) all these quantities are computed directly
from $y_{-2},\ldots,y_2$; the previous values $\Lambda_i(1)$,
$\Gamma_3(1)$, $\Gamma_4(1)$ are $0$.

\emph{Update.} When the procedure reads the pair $(x_{-q-2},x_{q+2})$, it
computes $y_{-q-1}=\varphi_{28}(x_{-q-2},x_{-q-1},x_{-q})$ and
$y_{q+1}=\varphi_{28}(x_q,x_{q+1},x_{q+2})$ from the stored and the new
letters. On the left it sets $Z_{q+1}=Z_q\,\bar y_{-q+1}$, keeps
$\Lambda_i(q)$ as the new previous value, and adds to it the term of index
$q+1$, which involves $y_{-q-1}$, $y_{-q}$ and $Z_{q+1}$. On the right it
adds to $\Gamma_1$ and $\Gamma_2$ the terms of index $m=q-1$, which involve
$y_q$, $y_{q+1}$ and $O_{q-1}$; it keeps $\Gamma_3(q)$, $\Gamma_4(q)$ as
the new previous values and adds the terms of index $k=q+1$, which involve
$y_q$, $y_{q+1}$, $O_{q-1}$ and $Q_{q+1}=Q_q\,y_{q+1}$; it replaces
$(O_{q-2},O_{q-1},O_q)$ by $(O_{q-1},O_q,O_q\,y_{q+1})$. Finally it shifts
the stored cells and letters.

\emph{Output.} For $n=q$ and $N=q-1$ the formula above is evaluated with
$\Lambda_i(N)$, $\Gamma_3(N)$, $\Gamma_4(N)$ (the previous values),
$\Gamma_1(N+1)$, $\Gamma_2(N+1)$ (the current values), $O_N=O_{q-1}$, and
$y_{-N}Z_{N+1}=y_{-q+1}Z_q$.

\emph{Classes.} The terms added at a step and the output depend on the step
only through the parity of $q$, so we take $c(s)=q\bmod2=s\bmod2$.

\emph{Reduction.} If $Z_q=0$, the reduction replaces the left cells and
letters by $0$; if $O_{q-2}=Q_q=0$, it replaces the right cell and letters
by $0$. These components influence the update and the output only through
products with $Z_q$, respectively with $O_{q-1}$, $O_q$ or $Q_q$; such
products are $0$ and remain $0$ at all later steps, so the conditions of
Definition~\ref{def:streaming} hold.

\emph{Result.} Since the description covers the times $n\ge2$, we take
$E=\{0,1\}$, with
$S^{(3)}_1=\{\overline\varphi_{28}(h_1,h_2,h_3):h_i\in\Sigma_3\}$. The
search visits $1\,998\,995$ vertices and yields $248$ functions. The eight
excluded functions are the symmetric functions whose values at the inputs
of Hamming weight $1$ and $2$ differ.

\medskip\noindent\textbf{Rule 50.}
The two sums of the formula involve products over the symmetric intervals
$[-i,i]$. With the running products
\[
P^+_0=x_0,\quad P^+_i=P^+_{i-1}\,x_{-i}x_i,\qquad
P^-_0=\bar x_0,\quad P^-_i=P^-_{i-1}\,\bar x_{-i}\bar x_i,
\]
the solution formula reads
\[
F^n_{50}(x)_0=\operatorname{ev}(n)+(-1)^n\Bigl(\sum_{i=1}^{n-1}(-1)^iP^+_i
-\sum_{i=0}^{n}(-1)^iP^-_i\Bigr).
\]
To evaluate $F^n_{50}(x)_0$, the procedure reads the initial block $(x_0)$
and, at step $i=1,\ldots,n$, the pair $(x_{-i},x_i)$, moving outwards.
Thus $a=1$, $b=2$, $n(r)=r$ and $d=0$. After step $i$ the state consists
of $P^+_i$, $P^-_i$, the partial sums
\[
\tau_i=\sum_{j=1}^{i}(-1)^jP^+_j,\qquad
\mu_i=-\sum_{j=0}^{i}(-1)^jP^-_j,
\]
and the previous value $\tau_{i-1}$, which is needed because the first sum
of the formula stops at $n-1$. After the initial block the state is
$(P^+_0,P^-_0,\tau_{-1},\tau_0,\mu_0)=(x_0,\bar x_0,0,0,-\bar x_0)$. When
the procedure reads $(x_{-i},x_i)$, it updates the two products, keeps
$\tau_{i-1}$ as the previous value, and sets
$\tau_i=\tau_{i-1}+(-1)^iP^+_i$ and $\mu_i=\mu_{i-1}-(-1)^iP^-_i$. After
$n$ steps the output is
$\operatorname{ev}(n)+(-1)^n(\tau_{n-1}+\mu_n)$. The signs and
$\operatorname{ev}(n)$ depend only on the parity of the step, so we take
$c(i)=i\bmod2$. The search visits $1\,110$ vertices and yields $90$
functions, all of them unate. Notice that a similar argument, with a smaller certificate, shows that Rule 50 is not even $2$-universal, with $U_{50}^{(2)}=\mathcal B_2\setminus\{\operatorname{NOR},\operatorname{XNOR}\}.$

\medskip\noindent\textbf{Rule 78.}
Put $q_4=x_{-2}x_{-1}x_0x_1$ and $q_5=x_{-3}q_4$, and introduce the
running products
\[
Z_1=\bar x_{-2}\bar x_{-1},\quad Z_j=Z_{j-1}\,\bar x_{j-2}\ (j\ge2),\qquad
W_m=\prod_{i=-1}^{m}\bar x_i,\qquad
O_m=\prod_{i=2}^{m}x_i ,
\]
the alternating sums $s_\ell=\sum_{i=2}^{\ell}(-1)^iO_i$ (so $s_\ell=0$ for
$\ell<2$), and, for $j\ge1$,
\[
A_j=Z_j\,x_{j-1}\bigl(\operatorname{ev}(j)\,(\bar x_j+x_jx_{j+1})
+\operatorname{ev}(j+1)\,x_j\bar x_{j+1}\bigr).
\]
Grouping the terms of the solution formula of~\cite[Appendix~B]{Fuks2023}
by their index $j$, we obtain
\[
F^n_{78}(x)_0=1-x_{-1}\bar x_0-x_{-1}x_0x_1+\bar x_{-3}q_4+[n\le3]\,q_5
-W_n-q_4\,O_n-\sum_{j=1}^{n-1}A_j+q_5\,s_{2\lfloor n/2\rfloor-2}.
\]

\emph{Reading order.} To evaluate $F^n_{78}(x)_0$ for $n\ge1$, the procedure reads the
initial block $(x_{-3},\ldots,x_1)$ and, at step $s=1,\ldots,n-1$, the
single cell $x_{s+1}$, moving to the right. Thus $a=5$, $b=1$,
$n(r)=r+1$, and $d=2$, since $x_{-3}$ is read also when $n=1$.

\emph{State.} After the procedure has read $x_m$ (that is, after step
$s=m-1$), the state consists of the five constants
\[
(\lambda_1,\ldots,\lambda_5)=
(x_{-1}\bar x_0,\ x_{-1}x_0x_1,\ q_5,\ q_4,\ \bar x_{-3}q_4),
\]
the last three cells $(w_3,w_2,w_1)=(x_{m-2},x_{m-1},x_m)$, the products
$Z=Z_{\max(m-1,1)}$, $W=W_m$ and $O=O_m$, the partial sum
$\sigma=\sum_{j=1}^{m-1}A_j$, and the four alternating sums
$(s_m,s_{m-1},s_{m-2},s_{m-3})$. After the initial block ($m=1$) the
cells are $(x_{-1},x_0,x_1)$, $Z=Z_1$, $W=W_1$, $O=1$, and $\sigma$ and the
alternating sums are $0$.

\emph{Update.} When the procedure reads $y=x_{m+1}$ (at step $s=m$), it
does the following. If $m\ge2$, it multiplies $Z$ by $\bar w_3=\bar x_{m-2}$,
so that $Z=Z_m$. It adds to $\sigma$ the term
$A_m=Z\,w_2\bigl(\operatorname{ev}(m)(\bar w_1+w_1y)
+\operatorname{ev}(m+1)\,w_1\bar y\bigr)$. It multiplies $W$ by $\bar y$
and $O$ by $y$, and it replaces $(s_m,s_{m-1},s_{m-2},s_{m-3})$ by
$(s_m+(-1)^{m+1}O,\ s_m,\ s_{m-1},\ s_{m-2})$, using the new value of $O$.
Finally it shifts the cells to $(w_2,w_1,y)$.

\emph{Output.} For $n=m$ the output is
\[
1-\lambda_1-\lambda_2+\lambda_5+[n\le3]\,\lambda_3-W-\lambda_4\,O-\sigma
+\lambda_3\,s_{2\lfloor n/2\rfloor-2},
\]
where $s_{2\lfloor n/2\rfloor-2}$ is $s_{n-2}$ if $n$ is even and $s_{n-3}$
if $n$ is odd; both are stored in the state.

\emph{Classes.} The update at step $s$ uses the parity of $s$ and the test
$s\ge2$; the output after $r$ steps uses the parity of $n=r+1$ and the test
$n\le3$. We therefore take $c(s)=(s\bmod2,\ \min(s,3))$.

\emph{Reduction.} The reduction replaces the three cells by $0$ if $Z=0$,
the four alternating sums by $0$ if $\lambda_3=0$, and $O$ by $0$ if
$\lambda_3=\lambda_4=0$. The cells influence the update only through
products with $Z$, which remains $0$ once it vanishes, and they do not
enter the output. The alternating sums enter only the output, multiplied by
$\lambda_3$. Finally, $O$ enters the output only multiplied by $\lambda_4$,
and the other components of the update only through the alternating sums.
Hence the conditions of Definition~\ref{def:streaming} hold.

\emph{Result.} The search visits $111\,115$ vertices and yields $240$
functions.

\medskip\noindent\textbf{Rule 172.}
For $n\ge1$ the solution formula is
\[
F^n_{172}(x)_0=\bar x_{-2}\bar x_{-1}x_0
+\bigl(\bar x_{n-2}x_{n-1}+x_{n-2}x_n\bigr)\prod_{i=-2}^{n-3}
\bigl(1-\bar x_i\bar x_{i+1}\bigr).
\]
Its two summands are never both equal to $1$: if
$\bar x_{-2}\bar x_{-1}=1$, the factor $i=-2$ of the product vanishes.
Moreover $1-\bar x_i\bar x_{i+1}=x_i\vee x_{i+1}$, and the two terms of
$\bar x_{n-2}x_{n-1}+x_{n-2}x_n$ are never both equal to $1$. Hence
\[
F^n_{172}(x)_0=T\vee(P_n\wedge M_n),\qquad
T=\bar x_{-2}\bar x_{-1}x_0,\quad
M_n=(\bar x_{n-2}\wedge x_{n-1})\vee(x_{n-2}\wedge x_n),
\]
with the running product
\[
P_1=x_{-2}\vee x_{-1},\qquad
P_{m+1}=P_m\wedge(x_{m-2}\vee x_{m-1}).
\]
To evaluate $F^n_{172}(x)_0$ for $n\ge1$, the procedure reads the initial block
$(x_{-2},x_{-1},x_0,x_1)$ and, at step $s=1,\ldots,n-1$, the single cell
$x_{s+1}$, moving to the right. Thus $a=4$, $b=1$, $n(r)=r+1$, and $d=1$,
since $x_{-2}$ is read also when $n=1$. After the procedure has read $x_m$,
the state is $(T,P_m,w_2,w_1,w_0)=(T,P_m,x_{m-2},x_{m-1},x_m)$; after the
initial block the cells are $(x_{-1},x_0,x_1)$. When the procedure reads
$y=x_{m+1}$, the new factor of the product is $x_{m-2}\vee x_{m-1}=w_2\vee
w_1$, a pair already stored, so it sets $P_{m+1}=P_m\wedge(w_2\vee w_1)$
and shifts the cells to $(w_1,w_0,y)$. For $n=m$ the output is
$T\vee\bigl(P_m\wedge((\bar w_2\wedge w_1)\vee(w_2\wedge w_0))\bigr)$. A
single class suffices. The search visits $19\,780$ vertices and yields
$178$ functions.

\begin{table}[ht]
\centering\small
\begin{tabular}{rlcccrr}
\toprule
rule & times covered & $a$ & $b$ & classes & vertices visited & $|U^{(3)}_\varphi|$\\
\midrule
13  & $n\ge0$      & 3 & 1 & 4 & $81\,450$    & 216\\
27  & odd $n\ge1$  & 3 & 3 & 1 & $4\,780$     & 225\\
    & even $n\ge2$ & 5 & 3 & 1 & $14\,405$    & \\
28  & $n\ge2$      & 7 & 2 & 2 & $1\,998\,995$ & 248\\
50  & $n\ge0$      & 1 & 2 & 2 & $1\,110$     & 90\\
78  & $n\ge1$      & 5 & 1 & 5 & $111\,115$   & 240\\
172 & $n\ge1$      & 4 & 1 & 1 & $19\,780$    & 178\\
\bottomrule
\end{tabular}
\caption{The streaming descriptions used in the proof of
Theorem~\ref{thm:closed-form-nonuniv}: times covered, size $a$ of the
initial block, size $b$ of each block, number of classes reached, number
of vertices visited by the search for $k=3$, and the size of
$U^{(3)}_\varphi$ (including $S^{(3)}_0$, and $S^{(3)}_1$ for Rule~28; for Rule~27 the two descriptions together).}
\label{tab:closed-form-summary}
\end{table}

\begin{table}[ht]
\centering\small
\begin{tabular}{rrp{11.2cm}}
\toprule
rule & $|\mathcal B_3\setminus U|$ & excluded functions (Wolfram numbers) \\
\midrule
13 & 40 & 6, 18, 20, 22, 23, 30, 31, 41, 54, 55, 73, 86, 87, 97, 104, 105, 126, 129, 134, 135, 146, 147, 148, 149, 150, 151, 158, 159, 169, 182, 183, 189, 201, 214, 215, 219, 225, 231, 232, 233 \\
27 & 31 & 22, 23, 24, 36, 41, 66, 73, 97, 104, 105, 107, 109, 121, 126, 129, 134, 142, 146, 148, 150, 151, 158, 178, 182, 189, 212, 214, 219, 231, 232, 233 \\
28 & 8 & 22, 23, 104, 105, 150, 151, 232, 233 \\
50 & 166 & 6, 7, 9, 18, 19, 20, 21, 22, 23, 24, 25, 26, 27, 28, 29, 30, 31, 33, 36, 37, 38, 39, 40, 41, 44, 45, 46, 52, 53, 54, 55, 56, 57, 58, 60, 61, 62, 65, 66, 67, 70, 71, 72, 73, 74, 75, 78, 82, 83, 86, 87, 88, 89, 90, 91, 92, 94, 96, 97, 98, 99, 100, 101, 102, 103, 104, 105, 106, 107, 108, 109, 110, 111, 114, 116, 118, 120, 121, 122, 123, 124, 125, 126, 129, 130, 131, 132, 133, 134, 135, 137, 139, 141, 144, 145, 146, 147, 148, 149, 150, 151, 152, 153, 154, 155, 156, 157, 158, 159, 161, 163, 164, 165, 166, 167, 168, 169, 172, 173, 177, 180, 181, 182, 183, 184, 185, 188, 189, 190, 193, 194, 195, 197, 198, 199, 200, 201, 202, 203, 209, 210, 211, 214, 215, 216, 217, 218, 219, 222, 224, 225, 226, 227, 228, 229, 230, 231, 232, 233, 234, 235, 236, 237, 246, 248, 249 \\
78 & 16 & 22, 23, 104, 105, 134, 135, 146, 147, 148, 149, 150, 151, 159, 183, 215, 233 \\
172 & 78 & 1, 6, 7, 9, 18, 19, 20, 21, 22, 23, 30, 33, 40, 41, 43, 45, 54, 57, 61, 62, 65, 72, 73, 75, 77, 86, 89, 91, 94, 96, 97, 99, 101, 103, 104, 105, 106, 107, 108, 109, 110, 111, 113, 118, 120, 121, 122, 123, 124, 125, 126, 127, 129, 134, 135, 137, 146, 147, 148, 149, 150, 151, 158, 159, 161, 169, 182, 183, 190, 193, 201, 214, 215, 222, 225, 233, 246, 254 \\
\bottomrule
\end{tabular}
\caption{Functions of $\mathcal B_3$ that are never realized by the lifted
rule, for any seed (Theorem~\ref{thm:closed-form-nonuniv}). For Rule~50 the
complement of the excluded set consists of $90$ unate functions.}
\label{tab:closed-form-excluded}
\end{table}

\section{Witnesses for the classification for $k=3$}
\label{app:k3}
This appendix lists, for each symmetry class of elementary rules that are
neither monotone nor affine, a complete seed or the result that excludes
$3$-universality (Theorem~\ref{thm:k3-classification}).

{\small
\setlength{\tabcolsep}{4pt}
\begin{longtable}{>{\raggedright\arraybackslash}p{2.9cm}>{\raggedright\arraybackslash}p{6.4cm}>{\raggedright\arraybackslash}p{5.0cm}}
\caption{Universality of the lifted elementary rules for $k=3$, one row per
symmetry class (monotone and affine classes excluded). S: strongly
$3$-universal (simple seed on the line); SC: strongly cyclic $3$-universal
(simple ring); U: $3$-universal (non-simple seed). The witness is the
strongest and, among these, the shortest seed currently known; it refers to
the least rule of the class. Seeds marked (W$r$) are listed after the table.
Totals: 6 S, 10 SC, 16 U (32 universal classes), 41 not universal.}
\label{tab:k3-universality}\\
\toprule
class & witness & status \\
\midrule
\endfirsthead
\multicolumn{3}{c}{\tablename\ \thetable\ -- continued}\\
\toprule
class & witness & status \\
\midrule
\endhead
\midrule
\multicolumn{3}{r}{Continued on next page}\\
\endfoot
\bottomrule
\endlastfoot
1, 127 & --- & not universal: eventually periodic \\
2, 16, 191, 247 & --- & not universal: eventually periodic \\
3, 17, 63, 119 & --- & not universal: eventually periodic \\
4, 223 & --- & not universal: eventually periodic \\
5, 95 & --- & not universal: eventually periodic \\
6, 20, 159, 215 & on the line, length $450$, see (W6) & U \\
7, 21, 31, 87 & --- & not universal: not $2$-universal (block certificate, $m=5$) \\
8, 64, 239, 253 & --- & not universal: eventually periodic \\
9, 65, 111, 125 & \texttt{10100110111010b11100a101110010c110} (cyclic, $N=34$) & SC \\
10, 80, 175, 245 & --- & not universal: eventually periodic \\
11, 47, 81, 117 & on the line, length $284$, see (W11) & U \\
12, 68, 207, 221 & --- & not universal: eventually periodic \\
13, 69, 79, 93 & --- & not universal: closed-form certificate, Theorem~\ref{thm:closed-form-nonuniv} \\
14, 84, 143, 213 & on the line, length $241$, see (W14) & U \\
18, 183 & \texttt{a000000000001b0c1001} (cyclic, $N=20$) & SC \\
19, 55 & --- & not universal: eventually periodic \\
22, 151 & \texttt{10a1bc} (on the line, background $\mathtt{0}$) & S \\
23 & --- & not universal: not $2$-universal (block certificate, $m=3$) \\
24, 66, 189, 231 & --- & not universal: eventually periodic \\
25, 61, 67, 103 & \texttt{a01010b10110c1101} (cyclic, $N=17$) & SC \\
26, 82, 167, 181 & \texttt{1011b01a1c11101111} (cyclic, $N=18$) & SC \\
27, 39, 53, 83 & --- & not universal: closed-form certificate, Theorem~\ref{thm:closed-form-nonuniv} \\
28, 70, 157, 199 & --- & not universal: closed-form certificate, Theorem~\ref{thm:closed-form-nonuniv} \\
29, 71 & --- & not universal: eventually periodic \\
30, 86, 135, 149 & \texttt{1abc1} (on the line, background $\mathtt{0}$) & S \\
32, 251 & --- & not universal: not $2$-universal (block certificate, $m=2$) \\
33, 123 & --- & not universal: self-consistent automaton, Theorem~\ref{thm:33-104} \\
34, 48, 187, 243 & --- & not universal: eventually periodic \\
35, 49, 59, 115 & on the line, length $276$, see (W35) & U \\
36, 219 & --- & not universal: eventually periodic \\
37, 91 & on the line, length $161$, see (W37) & U \\
38, 52, 155, 211 & --- & not universal: eventually periodic \\
40, 96, 235, 249 & --- & not universal: not $2$-universal (block certificate, $m=3$) \\
41, 97, 107, 121 & \texttt{a0000011b01c0} (cyclic, $N=13$) & SC \\
42, 112, 171, 241 & --- & not universal: eventually periodic \\
43, 113 & cyclic, $N=37$, see (W43) & U \\
44, 100, 203, 217 & --- & not universal: not $2$-universal (block certificate, $m=4$) \\
45, 75, 89, 101 & \texttt{abc} (on the line, background $\mathtt{0}$) & S \\
46, 116, 139, 209 & --- & not universal: eventually periodic \\
50, 179 & --- & not universal: closed-form certificate, Theorem~\ref{thm:closed-form-nonuniv} \\
54, 147 & \texttt{a00111b11c1} (cyclic, $N=11$) & SC \\
56, 98, 185, 227 & on the line, length $480$, see (W56) & U \\
57, 99 & cyclic, $N=51$, see (W57) & U \\
58, 114, 163, 177 & on the line, length $279$, see (W58) & U \\
62, 118, 131, 145 & \texttt{a0101100000bc1000000} (cyclic, $N=20$) & SC \\
72, 237 & --- & not universal: eventually periodic \\
73, 109 & \texttt{a00b0c} (on the line, background $\mathtt{0}$) & S \\
74, 88, 173, 229 & on the line, length $456$, see (W74) & U \\
76, 205 & --- & not universal: eventually periodic \\
77 & --- & not universal: not $2$-universal (block certificate, $m=3$) \\
78, 92, 141, 197 & --- & not universal: closed-form certificate, Theorem~\ref{thm:closed-form-nonuniv} \\
94, 133 & on the line, length $205$, see (W94) & U \\
104, 233 & --- & not universal: self-consistent automaton, Theorem~\ref{thm:33-104} \\
106, 120, 169, 225 & \texttt{01abc} (on the line, background $\mathtt{1}$) & S \\
108, 201 & --- & not universal: eventually periodic \\
110, 124, 137, 193 & \texttt{0abc} (on the line, background $\mathtt{1}$) & S \\
122, 161 & \texttt{a00001b0c101} (cyclic, $N=12$) & SC \\
126, 129 & \texttt{a0000b011c01} (cyclic, $N=12$) & SC \\
130, 144, 190, 246 & --- & not universal: not $2$-universal (block certificate, $m=3$) \\
132, 222 & --- & not universal: not $2$-universal (block certificate, $m=2$) \\
134, 148, 158, 214 & on the line, length $314$, see (W134) & U \\
138, 174, 208, 244 & --- & not universal: eventually periodic \\
140, 196, 206, 220 & --- & not universal: not $2$-universal (block certificate, $m=2$) \\
142, 212 & on the line, length $248$, see (W142) & U \\
146, 182 & \texttt{1b001c0100111010a101} (cyclic, $N=20$) & SC \\
152, 188, 194, 230 & on the line, length $669$, see (W152) & U \\
154, 166, 180, 210 & --- & not universal: Corollary~\ref{cor:r180} \\
156, 198 & on the line, length $333$, see (W156) & U \\
162, 176, 186, 242 & --- & not universal: not $2$-universal (block certificate, $m=2$) \\
164, 218 & --- & not universal: Theorem~\ref{thm:164-not-universal}; $|\mathcal R|\le242$ for simple seeds (Theorem~\ref{thm:164-not-strongly-universal}) \\
172, 202, 216, 228 & --- & not universal: closed-form certificate, Theorem~\ref{thm:closed-form-nonuniv} \\
178 & --- & not universal: not $2$-universal (block certificate, $m=3$) \\
184, 226 & on the line, length $385$, see (W184) & U \\
\end{longtable}
}

\noindent\textbf{Long witnesses.}
\begin{description}
\item[(W6)] rule $6$, on the line, background $\mathtt 0$, length $450$: \texttt{\seqsplit{1acac00abb0bcb0ba110b0c0baac1abac0011bcbcb0a01a1b1c0c01cacac0b0b0c0a0bb1010cb1a00ab0100acb001bcba001010ca1b01c10abc0ab00100ac010b1ca001cacb0010a00cb1bbac0100cab0b0ac100b1abc100ab1ac0b00bca00cabcb001babc01abba00ca0a0b0c11abc010cbab0100bca0a00cac00b0aca0cbbaca001010a0cb1acb0100bcb0ac0c0c0ba00ba01c0cacacb0aca00b0abc001ca100b1a1b10c00b00ca10cab00ab0caaab00c1bac01cacbab0ac1ab0cbcba1aca0b10c0bcb001ab001c0b0a1cbabab0c1ca001aba0c0bccaabc0bac00bcb0a1bc0a1}}
\item[(W11)] rule $11$, on the line, background $\mathtt 0$, length $284$: \texttt{\seqsplit{0bc0a1bba1cb1a0cab01ccaa00bccbaccabc0acbaa0ac001babca0cbcbac0abbacc000ca1c0baacba0bc0ca00abcaacbbb0bc00a00bc00ab0ac0abaa0baccaacb001a0ab001ccbbc0acbba00cbbb1a01cacb001cbccbca1a11c01bc01b0acaab00ca0cb001ac11a01bac1ab1cac1a01bb00ca1cab1b0cba1c0bacbb1abc0ba0bcb1cccca1b0cabcaacbabcacaa00}}
\item[(W14)] rule $14$, on the line, background $\mathtt 0$, length $241$: \texttt{\seqsplit{cab0a01cba0b0ca1c1babc1b0abcb1a10bcacb00101ac0ba0acbbac10ab010ba0cbac010acba1bc0a01bc10cb0accab0ccab00cabb1cb010ab0ca0bc1bbc0aba10cbca0101bcba1acbacb0ba1ca0bbc0c0bca0ca1c0b1abaaac0c0ca0ab0cba01cbab1baacb10caba010acbbc1ab1abc0100caababc0b0ac1}}
\item[(W35)] rule $35$, on the line, background $\mathtt 0$, length $276$: \texttt{\seqsplit{001bcba0101c0cbcbab1a1a1cb10ca1cb10a01bacac0bca1b1babcb0a0c0acabcbc0a0bcabc1ababab101bcb0c0acbac10a0b0cbcab1cab01a1ba1cbb1cba0c0cbacab0c0a1bc0cacbca1ba0c0abacb10a1cbcacaa1abc01aca0c0a1caba01abacb0abcaba0ca1bcba10cba01c1abcacab1acbbc0c0ac0c0baac0b1a10b1a10caa1cb0cab0acacb1b1ac}}
\item[(W37)] rule $37$, on the line, background $\mathtt 0$, length $161$: \texttt{\seqsplit{1cab1bccaabcc0ac1baac1abcb1ac00bcaababcb0ac0bcbca1cb0a0b0cabb1a0c0abcbabbcba1cb0b1c0aabacaabca0cb10acba0ca1bc0abc0ccbbccccaabc1bca00abc1ab1ccb00bba1abacabcbacca1}}
\item[(W43)] rule $43$, cyclic, length $37$: \texttt{\seqsplit{011b1c1a1c0c01b1a1abcaac1b1aa1c1aa10c}}
\item[(W56)] rule $56$, on the line, background $\mathtt 0$, length $480$: \texttt{\seqsplit{011ba1cb1cb1a1a01cb1cb1010b1cac101b10a1bc1ac1b1c0bcb1ab1c1b01ac1a1c10a1bcb1ca111ac0110cbca10b10cab1101c01a1c01ab1a11b0a1b1c01cabca0111c0a1cb1cb11c1a01b1b0a1ba1c1b01cbc1a0b1ac1b11ac1ab1c111c01bcbab1a0b1ab1c0b1c1abaac110b10aca11cb1c1a1b0a1bacba110bb01a1abca101bacb0ab0cbcb101a0cb01cba1caaba0c1ab1a0c1ab1c11c1101a01bc11ab101cc11ba1b1c1b0a1bc1bbcc0aa0cb1ba0aa0ac1b01a1c0bc1a11cb1a01a0c1babbcab1a11cb0ac1b01a1bcba1bac11c0ab1bc1a0bacb01cba1a0abc11c1a0babcab1ab1cbacba1cb1c1bca10cababcac}}
\item[(W57)] rule $57$, cyclic, length $51$: \texttt{\seqsplit{ccabccbabacb0baaacacbbcabbbbaababccbcaacbccabaacbac}}
\item[(W58)] rule $58$, on the line, background $\mathtt 0$, length $279$: \texttt{\seqsplit{cbbb0aa1bac0aa1c0baacaccc0bb0bab0cb0cbcaa1acccb111a0cbba0cc1bacab1caa1c0bb0ba1c1c0aaa0cbbbc0a1acbaaacb111caba1cb0aacccb1cbac1c0baab01ab0accba0c0abccaa0cb1bc01bbaccbcba11acbccca0b1ab011ab0cbbc0acbc0111caab1b0cacb0a01acbbbc0abbca1b0cbaaba0ca1cabbca0111ba0cab1caacbb0bacb1acb1a0bcbc}}
\item[(W74)] rule $74$, on the line, background $\mathtt 0$, length $456$: \texttt{\seqsplit{cb1cac1ab10ac1a0cc11b0a1caab1c1101a0a10ccb1ac1a11b0a1bc11110bc00a11c0110a1c1b0a1ba1bc101ab1c0aa1acbb110abcc110bc0ca01a00ac1b1b1ac10a0cb0c1bac1110bbac110a0aabbcc1ab1100cba110cba1110ba0c11bc1a011bbca11101c11ba01c11ab01bc011acb0011ab1ca1101a11cb0bbca110acb1110c1ba110b1c011aca1b110ba1c110cb0a11cb1ab1c0a1ba1c0ccba110cab1110bcca110cb00a1c1a1cb1aca1b0cabc0a11cb0ca01b0ca1ba0cc0ab01ab1a01ca1b00bac00bcabcab00acb1c1ab1abca0c1baa0bc0ab11ca0cbabbc1b0cc1bcab1ab11ca0}}
\item[(W94)] rule $94$, on the line, background $\mathtt 0$, length $205$: \texttt{\seqsplit{b1c11ac11acccbb1a1c00cbaaac0bbcca00abc00b1c0abac00aac1bbcaacbcab0a1c0bbbcabcbaabacbc0bca00c1a1bacb00a0abccba10100bcacb1abcac0bcabacba00a0ca0abbacbcb1aacbbc10bcac1a00a00bc1accba11ccb0ca0cbbabb1aacaccabcbbca}}
\item[(W134)] rule $134$, on the line, background $\mathtt 0$, length $314$: \texttt{\seqsplit{a1cb111b01cba1a0111cb0abcca10a11b1c0cabbb10bacbb1ccacbcb0caaacbb1cba1a0b111acacbb001caab11cbaa1bac10baa11ca0b1baccc10abbac11bcababb0ac1b1aac11ab1ca111cacb1bc0abcaab0aabc1acbb1abab0bacc1cbab10cab1abc1ccab10bbaccababcbbba1bcbcabcc1acb1a0b1acac0b1cab0c0cabbba0c1bc11ac11bacac0bccba1a0acbbbca1ac0bac0cbacb1a1aba0cbcba1}}
\item[(W142)] rule $142$, on the line, background $\mathtt 0$, length $248$: \texttt{\seqsplit{b1acab01babc01c0a0b1abcab1a1c1babc0a1bacacbaab1ac0bba1cbcabb10cbc1b1a0ccab0cb1c1ac1abbbac1a1cba110cabbca011ca1b0cbca0cb011ba0bacbb1ba1c1cbc1a0ac1b1ca0b1ac0acb1a1a1c0b0ba010bac11ca01cba100a0cb0baccbacb01abc00ca1b1abc10bc1a0babc0acab0b0ca0ca10bc0b010}}
\item[(W152)] rule $152$, on the line, background $\mathtt 0$, length $669$: \texttt{\seqsplit{1b1a1cc11abb0cbbcb1aaaaa1bac1c1a1b01cbb1aa110b10bcb11ab0ccbb011a11bccaaca1bbbbb1cb1baa1cca10bb1ccac0bcbb01a1a1b0aa1b1cccca01acb1b1c10a1baa1cc1ca01ac1cbbbaa110acabbaa0c1cab1ccacb0baaba1c11c1baa01c111b01accc11b1a01cba1aab10c1bba1cccc0babb011c11ab1ba1accc1b0accb11bbbaccccb1aacab01bbc1aaaa1bbabbcb01ccb1a01bccab0cabb1c01baca0bcaab10b1acb10a1cc0b1a1acbc0a1bcacb01aa0bbccbb01ab1cbab0c1abcb0acaabac1baac1ba1bb0ccaacbb10a1cab10c111c1ba1cc0bca1cc0acb1abc1a0abbabc0accbcaaacc0cabcbba1a1bacc1b0bcc0cab1cbbac01cba1a1abb0ba1baacab11caabc01baac0bcba1a1bacc1ac00bcaab1caaa1aacb1c0ab100bc1a0cbcbaac1ccab1abcc1babcbcba1abcba1bbcaacb1b1cca11ccbaa00bc1caab1ccb11cab1cab10a10ca0cccabca10b}}
\item[(W156)] rule $156$, on the line, background $\mathtt 0$, length $333$: \texttt{\seqsplit{1bcaaaaacbb1cbaaaaabc0bcaaab0a1bbac0abacc0b1abcc0bb1ccaba1accba0c1baacb01c1abb1acbb0acbcaa0b1bccaabacbbb0cc11ab0cc1bbabbccbaaa1cb11ca0bbcaa0bb1ac0bccaaaacb01ccbaabbbcbca001abccaccbbcba001abbc1caab1cabbabcba1acbbabc00c1abbcaaabccaabbaca01acb0bcccaccab01bc00aabc0accbacbba1cbc1a1bbc0cbab0bbca0c1bbaacbaaccba0a1bac0cababcbbabaaacbcaaa0c}}
\item[(W184)] rule $184$, on the line, background $\mathtt 0$, length $385$: \texttt{\seqsplit{1cbbacc0ba01ca1bbac01bcaa0bc0cb1aabcca1baca1c1b1abcccba01cabba11cca1bb0ac01ac0cba1b0abc0baacbba1ccb1a01cab1b0caa0cb01ca1bbcabcc0ba1accbac1caacb01cabbc01cb1acaab01abbacbbc01caab1c0ba1a0cb10ac01bacc0cba01bca0acbba1accbbcaab1acca1cbbacaabccb1a0aca1bc0ab0cbaaabaab0bbcac1bccaaacb11bac01cab0cbba0cba11bacca1acb01cbb01ca1bcccba1baacbac100bac01abbcb0ac01bcac01cbbac1cabca1a0baacb0caa1c0acbbc0}}
\end{description}

\section{Proofs of technical results}
\label{app:proofs}

\begin{proof}[Proof of Proposition~\ref{prop:fib-period}]
Fix an input $u\in\mathbb B^k$ and evaluate all the lifted symbols at
$u$.  Write
\[
 a_s(t):=d_s(t)(u)\in\mathbb F_2.
\]
By Proposition~\ref{prop:diagonals},
\[
 a_s(t+1)
 =
 a_s(t)+a_{s-1}(t)+a_{s-2}(t)
       +a_{s-1}(t)a_{s-2}(t)
 \pmod 2,
\]
where $a_{-1}=a_{-2}=0$.

We lift this recurrence to the integers.  Define integer-valued
sequences $A_s(t)$ recursively by $A_{-1}=A_{-2}=0$ and
\[
 A_s(t)
 =
 a_s(0)+
 \sum_{r=0}^{t-1}
 \Bigl(
 A_{s-1}(r)+A_{s-2}(r)
 +A_{s-1}(r)A_{s-2}(r)
 \Bigr).
 \tag{*}
\]
Reduction modulo $2$ gives
\[
 A_s(t)\equiv a_s(t)\pmod2
\]
for every $s$ and $t$.

We claim that $A_s(t)$ is an integer-valued polynomial in $t$ of degree
at most
\[
 F_{s+2}-1.
\]
For $s=0$, one has $A_0(t)=1$, so the degree is $0=F_2-1$.
For $s=1$, equation $(*)$ gives a polynomial of degree at most
$1=F_3-1$.

Assume now $s\geq2$.  If
\[
 D_j:=\deg A_j,
\]
then the summand in $(*)$ has degree at most
\[
 D_{s-1}+D_{s-2},
\]
and discrete summation raises the degree by at most one.  Hence
\[
 D_s\leq D_{s-1}+D_{s-2}+1.
\]
By induction,
\[
 D_s
 \leq
 (F_{s+1}-1)+(F_s-1)+1
 =
 F_{s+2}-1.
\]

Every integer-valued polynomial of degree at most $F_{s+2}-1$ has a
Newton expansion~\cite{CahenChabert1997}
\[
 A_s(t)
 =
 \sum_{j=0}^{F_{s+2}-1}
 c_j\binom{t}{j},
 \qquad c_j\in\mathbb Z.
\]
Let
\[
 q=q_s=2^{\lceil\log_2F_{s+2}\rceil}.
\]
Then $q\geq F_{s+2}$ and $q$ is a power of $2$.  Over $\mathbb F_2$,
\[
 (1+z)^q=1+z^q.
\]
Therefore, for every $j<q$,
\[
 \binom{t+q}{j}\equiv\binom{t}{j}\pmod2.
\]
Since every Newton index occurring above satisfies
$j<F_{s+2}\leq q$, we obtain
\[
 A_s(t+q)\equiv A_s(t)\pmod2.
\]
Thus
\[
 a_s(t+q)=a_s(t)
\]
for every $t\geq0$.

The integer $q_s$ depends only on $s$, not on the evaluation
$u\in\mathbb B^k$.  Hence the equality holds simultaneously for all
$u$, and therefore as an equality of Boolean functions,
\[
 d_s(t+q_s)=d_s(t).
\]
Thus the least period $\Theta_s$ divides $q_s$.
\end{proof}

\begin{proof}[Proof of Corollary~\ref{cor:edge}]
By Proposition~\ref{prop:fib-period}, the diagonal $d_s$ is periodic
with period dividing
\[
 q_s=2^{\lceil\log_2F_{s+2}\rceil}.
\]
It can therefore contain at most $q_s$ distinct symbols.  Summing over
$s=0,\ldots,S$ gives
\[
 \left|
 \bigcup_{s=0}^{S}\{d_s(t):t\geq0\}
 \right|
 \leq
 \sum_{s=0}^{S}q_s
 =
 M_S.
\]

For every positive integer $m$,
\[
 2^{\lceil\log_2m\rceil}<2m.
\]
Hence
\[
 M_S
 <
 2\sum_{s=0}^{S}F_{s+2}.
\]
Using the standard Fibonacci identity
\[
 \sum_{j=0}^{n}F_j=F_{n+2}-1,
\]
we obtain
\[
 \sum_{s=0}^{S}F_{s+2}
 =
 F_{S+4}-2,
\]
and therefore
\[
 M_S<2(F_{S+4}-2).
\]

If the first $S+1$ diagonals contained all of $\mathcal B_k$, then
\[
 2^{2^k}
 \leq M_S
 <2(F_{S+4}-2).
\]
Thus
\[
 F_{S+4}>2^{\,2^k-1}+2.
\]

Finally, Binet's formula gives
\[
 F_n=\frac{\phi^n}{\sqrt5}+O(\phi^{-n}),
 \qquad
 \phi=\frac{1+\sqrt5}{2}.
\]
Taking logarithms in the preceding necessary inequality yields
\[
 S\geq
 \frac{2^k}{\log_2\phi}+O(1)
 =
 1.440420\ldots\,2^k+O(1).
\]
\end{proof}

\begin{proof}[Proof of Lemma~\ref{lem:164-three-zeros}]
Call a zero position $p$ \emph{pivotal} if $H_t(w^{(p)})=1$.
We first classify the possible ways in which changing one zero to
$1$ can make one of $(A_t)$, $(B_t)$, or $(C_s)$ hold.

A \emph{repair change} supplies a missing $1$ in a condition of
the form ``all positions in $I_t$ (or in $J_s$) contain $1$''.
Such a change is necessarily at the \emph{only} zero in the specified
set. In particular, for each parity class there is at most one repair
position: the zero of that class with minimal $|p|$, provided it is
unique; if both $\pm a$ are zeros of minimal modulus, no single change
can repair a condition involving that class.
Thus there are at most two distinct repair positions.

Every remaining way of making a preimage condition true is a
\emph{parity change}. Necessarily, one of the following holds:
\begin{enumerate}
\item[(P\(_B\))] All entries in $J_t$ are already $1$ and
$L_t(w)=1$; the changed position belongs to $I_t$, has a nonzero
coefficient in $L_t$, and changes that parity to $0$.
\item[(P\(_{C_s}\))] For some $1\le s<t$,
$w_{-s-1}=w_{s+1}=0$, all entries in $J_s$ are already $1$,
and $L_s(w)=1$; the changed position belongs to $I_s$, has a nonzero
coefficient in $L_s$, and changes that parity to $0$ (moreover, a zero must remain in $I_s$ after the change).
\end{enumerate}
Indeed, if the required positions equal to $1$ are already all
correct, changing a $0$ into a $1$ cannot create $(A_t)$, nor can it
create the non-all-ones condition in $(C_s)$. It can only correct
the indicated parity. In both parity cases the requirement
$L_s(w)=1$ (with $s=t$ for (P\(_B\))) implies the existence of
a zero in $I_s$.

\emph{At most one parity mechanism is available for a fixed $w$.}
If (P\(_B\)) is available, then $w$ is $1$ on $J_t$.
For $s<t$, if $s\equiv t\pmod2$, the two sites $\pm(s+1)$ lie in
$J_t$, so (P\(_{C_s}\)) is impossible. If $s\not\equiv t\pmod2$,
then $I_s\subseteq J_t$, so $L_s(w)=0$ and again
(P\(_{C_s}\)) is impossible.
Next, suppose (P\(_{C_s}\)) and (P\(_{C_u}\)) are available,
with $s<u$. If $s\equiv u\pmod2$, the two zeros at
$\pm(s+1)$ lie in $J_u$, contradicting the requirement that
$w$ equal $1$ on $J_u$. If $s\not\equiv u\pmod2$, then
$I_s\subseteq J_u$, contradicting $L_s(w)=1$.
This proves uniqueness.

\emph{If a parity mechanism is available, there is at most
one further repair position.}
For (P\(_B\)), the parity class $J_t$ has no zeros and therefore
has no repair position. For (P\(_{C_s}\)), the parity class
of $J_s$ has its first zeros at the two positions $\pm(s+1)$:
all nearer positions of this parity equal $1$. Changing a single
zero cannot repair a condition requiring all positions of that
parity up to some radius to be $1$. Hence this parity class also
has no repair position. In either case, any repair position must
belong to the other parity class and is unique. Consequently,
since $w$ has at least three pivotal positions, the unique parity
mechanism must account for at least two of them.

We claim next that, \emph{as soon as a parity mechanism accounts
for two distinct pivotal positions, every pivotal position is
accounted for by that same mechanism}. We verify this directly
from the preimage conditions.

If the mechanism is (P\(_B\)), there are at least two zeros in
$I_t$, while all entries in $J_t$ equal $1$. Thus no single change
can make $(A_t)$ true. Nor can it make any $(C_s)$ true: when
$s\equiv t\pmod2$ its two required boundary zeros are actually
ones in $J_t$; when $s\not\equiv t\pmod2$ all entries of
$I_s\subseteq J_t$ are ones, contrary to $(C_s)$. Hence every
pivotal change must make $(B_t)$ true and must lie in $I_t$ at
a position of nonzero coefficient in $L_t$.

If the mechanism is (P\(_{C_s}\)), there are at least two zeros
in $I_s$, all positions in $J_s$ equal $1$, and both entries at
$\pm(s+1)$ equal $0$. No single change can make $(A_t)$ or
$(B_t)$ true: if $s\equiv t\pmod2$, $(A_t)$ has at least two
zeros in $I_s\subseteq I_t$, while $(B_t)$ has the two zeros
$\pm(s+1)$ in $J_t$; if $s\not\equiv t\pmod2$, these two
boundary zeros obstruct $(A_t)$, and the zeros in
$I_s\subseteq J_t$ obstruct $(B_t)$.
For any $u<s$, $(C_u)$ is impossible after one change: if
$u\equiv s\pmod2$, its required boundary zeros are ones in $J_s$;
otherwise $I_u\subseteq J_s$ consists entirely of ones, contrary
to $(C_u)$. For any $u>s$, if $u\equiv s\pmod2$, its set $J_u$
contains the two zeros at $\pm(s+1)$; if $u\not\equiv s\pmod2$,
its set $J_u$ contains $I_s$, where at least one of the two zeros
remains after a single change. Therefore no $(C_u)$ with $u\ne s$
can be made true by one change. Every pivotal change must hence
make $(C_s)$ true, and lies in $I_s$ at a position of nonzero
coefficient in $L_s$.

We have proved that $p,q,r$ all belong to the same parity
mechanism. If it is (P\(_B\)), changing any two of them leaves
$L_t$ equal to $1$, since each contributes $1$ to the parity.
Thus $(B_t)$ remains false. Also $(A_t)$ remains false because
at least one of the three pivotal zeros in $I_t$ remains.
Every $(C_s)$ remains impossible for the same reasons as above:
its boundary zeros would have to lie in $J_t$, which remains
all ones, or its set $I_s$ would lie in $J_t$ and hence contain
only ones.

If the mechanism is (P\(_{C_s}\)), changing any two pivotal
zeros leaves $L_s=1$, so $(C_s)$ remains false; at least one
zero also remains in $I_s$. Both zeros at $\pm(s+1)$ remain
unchanged. The arguments just given still exclude $(A_t)$,
$(B_t)$, and all $(C_u)$ with $u<s$ or with
$u>s$, $u\equiv s\pmod2$. If $u>s$ and
$u\not\equiv s\pmod2$, then $I_s\subseteq J_u$ contains at
least one of the three pivotal zeros, still unchanged after
only two flips; hence $(C_u)$ is also impossible.

Thus, in either case, no preimage condition can hold after
flipping any two of $p,q,r$. The three desired equalities follow.
\end{proof}


\section{Automata for Rules~33 and~104}
\label{app:automata}

Tables~\ref{tab:A33} and~\ref{tab:A104} list the automata used in the
proof of Theorem~\ref{thm:33-104}, in the notation of
Proposition~\ref{prop:outward-automaton}. The states are the integers
$0,\ldots,|Q|-1$, and in both automata the initial map is $\iota(0)=0$,
$\iota(1)=1$. An outward word $(x_0,a_1,\ldots,a_n)$ is read starting
from the state $\iota(x_0)$, and the letter $a_r=(x_{-r},x_r)$ is written
as the two-bit word $x_{-r}x_r$. Condition~(a) of
Proposition~\ref{prop:outward-automaton} can be read off the first two
rows of each table, since $o(0)=0$ and $o(1)=1$.

\begin{table}[ht]
\centering\footnotesize\setlength{\tabcolsep}{3pt}
\begin{tabular}[t]{r|c|cccc}
$q$ & $o$ & $00$ & $01$ & $10$ & $11$\\\hline
0 & 0 & 4 & 5 & 2 & 14\\
1 & 1 & 3 & 7 & 6 & 8\\
2 & 0 & 12 & 4 & 10 & 4\\
3 & 0 & 4 & 18 & 17 & 1\\
4 & 1 & 8 & 8 & 8 & 8\\
5 & 0 & 13 & 10 & 4 & 4\\
6 & 0 & 4 & 5 & 4 & 11\\
7 & 0 & 4 & 4 & 2 & 9\\
8 & 0 & 4 & 4 & 4 & 4\\
9 & 0 & 12 & 12 & 10 & 10\\
\end{tabular}
\qquad
\begin{tabular}[t]{r|c|cccc}
$q$ & $o$ & $00$ & $01$ & $10$ & $11$\\\hline
10 & 0 & 10 & 10 & 10 & 10\\
11 & 0 & 13 & 10 & 13 & 10\\
12 & 0 & 10 & 10 & 4 & 4\\
13 & 0 & 10 & 4 & 10 & 4\\
14 & 1 & 0 & 16 & 15 & 8\\
15 & 0 & 4 & 18 & 4 & 8\\
16 & 0 & 4 & 4 & 17 & 8\\
17 & 0 & 4 & 12 & 4 & 10\\
18 & 0 & 4 & 4 & 13 & 10\\
\end{tabular}
\caption{The outward automaton for Rule~$33$: $19$ states $0,\ldots,18$, initial map $\iota(0)=0$, $\iota(1)=1$. For each state $q$ the table gives the output $o(q)$ and the state $\delta(q,a)$ reached on the letter $a=(x_{-r},x_r)$, written as the two-bit word $x_{-r}x_r$.}
\label{tab:A33}
\end{table}

\begin{table}[ht]
\centering\footnotesize\setlength{\tabcolsep}{3pt}
\begin{tabular}[t]{r|c|cccc}
$q$ & $o$ & $00$ & $01$ & $10$ & $11$\\\hline
0 & 0 & 2 & 2 & 2 & 7\\
1 & 1 & 34 & 13 & 3 & 8\\
2 & 0 & 2 & 2 & 2 & 2\\
3 & 1 & 9 & 4 & 2 & 5\\
4 & 0 & 10 & 2 & 2 & 2\\
5 & 1 & 14 & 6 & 15 & 2\\
6 & 1 & 11 & 35 & 2 & 12\\
7 & 1 & 52 & 16 & 21 & 61\\
8 & 0 & 22 & 2 & 2 & 2\\
9 & 1 & 11 & 11 & 2 & 2\\
10 & 1 & 2 & 62 & 2 & 62\\
11 & 1 & 11 & 11 & 11 & 11\\
12 & 1 & 2 & 11 & 63 & 36\\
13 & 1 & 15 & 2 & 17 & 37\\
14 & 0 & 18 & 2 & 2 & 2\\
15 & 1 & 11 & 2 & 11 & 2\\
16 & 1 & 2 & 19 & 23 & 38\\
17 & 0 & 20 & 2 & 2 & 2\\
18 & 1 & 2 & 2 & 2 & 2\\
19 & 1 & 2 & 11 & 2 & 11\\
20 & 1 & 2 & 2 & 24 & 24\\
21 & 1 & 2 & 26 & 25 & 27\\
22 & 1 & 2 & 39 & 28 & 29\\
23 & 1 & 36 & 9 & 30 & 2\\
24 & 1 & 2 & 2 & 25 & 25\\
25 & 1 & 2 & 2 & 11 & 11\\
26 & 1 & 31 & 32 & 15 & 2\\
27 & 0 & 2 & 2 & 40 & 2\\
28 & 1 & 2 & 2 & 9 & 9\\
29 & 0 & 2 & 2 & 2 & 41\\
\end{tabular}
\qquad
\begin{tabular}[t]{r|c|cccc}
$q$ & $o$ & $00$ & $01$ & $10$ & $11$\\\hline
30 & 1 & 33 & 33 & 11 & 11\\
31 & 0 & 18 & 2 & 18 & 2\\
32 & 1 & 35 & 11 & 35 & 11\\
33 & 0 & 2 & 2 & 18 & 18\\
34 & 0 & 2 & 2 & 2 & 42\\
35 & 0 & 2 & 18 & 2 & 18\\
36 & 0 & 18 & 18 & 2 & 2\\
37 & 1 & 14 & 9 & 64 & 2\\
38 & 0 & 2 & 65 & 2 & 2\\
39 & 1 & 2 & 15 & 2 & 15\\
40 & 1 & 2 & 39 & 2 & 39\\
41 & 1 & 2 & 44 & 43 & 8\\
42 & 1 & 45 & 47 & 46 & 72\\
43 & 1 & 9 & 9 & 2 & 2\\
44 & 1 & 15 & 2 & 15 & 2\\
45 & 0 & 2 & 2 & 2 & 73\\
46 & 1 & 25 & 84 & 2 & 48\\
47 & 1 & 19 & 2 & 85 & 49\\
48 & 1 & 15 & 2 & 31 & 50\\
49 & 1 & 9 & 36 & 2 & 51\\
50 & 1 & 11 & 31 & 11 & 31\\
51 & 1 & 11 & 11 & 36 & 36\\
52 & 0 & 2 & 2 & 2 & 53\\
53 & 1 & 86 & 55 & 54 & 66\\
54 & 1 & 2 & 56 & 9 & 74\\
55 & 1 & 2 & 15 & 57 & 58\\
56 & 1 & 15 & 2 & 31 & 59\\
57 & 1 & 9 & 36 & 2 & 60\\
58 & 0 & 65 & 2 & 2 & 2\\
59 & 1 & 31 & 11 & 31 & 11\\
\end{tabular}
\qquad
\begin{tabular}[t]{r|c|cccc}
$q$ & $o$ & $00$ & $01$ & $10$ & $11$\\\hline
60 & 1 & 36 & 36 & 11 & 11\\
61 & 0 & 2 & 2 & 2 & 67\\
62 & 1 & 2 & 19 & 2 & 19\\
63 & 1 & 11 & 31 & 2 & 77\\
64 & 1 & 11 & 2 & 33 & 68\\
65 & 1 & 2 & 2 & 28 & 28\\
66 & 0 & 2 & 2 & 2 & 69\\
67 & 1 & 75 & 28 & 39 & 2\\
68 & 1 & 2 & 70 & 11 & 31\\
69 & 1 & 2 & 18 & 18 & 2\\
70 & 1 & 11 & 2 & 36 & 71\\
71 & 1 & 31 & 11 & 64 & 2\\
72 & 0 & 69 & 2 & 2 & 2\\
73 & 1 & 34 & 81 & 78 & 72\\
74 & 0 & 40 & 2 & 2 & 2\\
75 & 0 & 2 & 2 & 2 & 76\\
76 & 1 & 2 & 62 & 24 & 61\\
77 & 1 & 36 & 6 & 11 & 2\\
78 & 1 & 9 & 4 & 2 & 79\\
79 & 1 & 31 & 80 & 15 & 2\\
80 & 1 & 11 & 35 & 11 & 35\\
81 & 1 & 15 & 2 & 17 & 82\\
82 & 1 & 36 & 9 & 83 & 2\\
83 & 1 & 11 & 11 & 33 & 33\\
84 & 0 & 2 & 2 & 10 & 2\\
85 & 0 & 2 & 20 & 2 & 2\\
86 & 0 & 2 & 2 & 2 & 87\\
87 & 1 & 52 & 16 & 21 & 66\\
\end{tabular}
\caption{The outward automaton for Rule~$104$: $88$ states $0,\ldots,87$, initial map $\iota(0)=0$, $\iota(1)=1$. For each state $q$ the table gives the output $o(q)$ and the state $\delta(q,a)$ reached on the letter $a=(x_{-r},x_r)$, written as the two-bit word $x_{-r}x_r$.}
\label{tab:A104}
\end{table}

\section{A seed for Rule~164}
\label{app:164}

The following word of length $2231$ over $\Sigma_3$ is the seed $P$ on
the $\mathtt 0$ background used in the proof of
Theorem~\ref{thm:164-not-universal}. As in Section~\ref{ss:seeds}, the
letters $\mathtt a,\mathtt b,\mathtt c$ stand for $\mathtt e^3_1,\mathtt
e^3_2,\mathtt e^3_3$. The word is written on $35$ lines, which are to be
read consecutively; by Lemma~\ref{lem:symmetries}(iv) its position on the
line is irrelevant.

{\small
\begin{verbatim}
100a00b00c00a1000b1000c1000ba000ca000ab000cb000ac000bc000b0a00c0
a00b1a00c1a00aba00cba00aca00bca00c0b00c1b00cab00bcb00ba10000ca10
000cb10000cba0000cab0000bac0000cb0a000bc0a000cb1a000bc1a000caba0
00baca000ac0b000ac1b000abcb000cba1100bca1100acb1100cb0a1001cba10
01bca1001cab100ab10a00cb10a00ac10a00bc10a00c0b0a00acb0a00b0c0a00
abc0a00acbaa00abcaa00ac1ba00ba10b00ca10b00bc10b00c0a0b00bca0b00b
ac0b00bc1ab00bbcab00bcabb00ca10c00cb10c00cba0c00cab0c00cb1ac00cc
bac00ccabc00b0a1a0000c0a1a0000cba1a0000bca1a0000c0b1a00001cb1a00
00b0c1a0000b0aba0000c0aba0000bcaba0000b0cba0000b0aca0000c0aca000
0cbaca0000c0bca0000c0b1b0000cab1b0000a0c1b0000c0bab0000a0cab0000
a0bcb0000c0bcb0000cabcb0000b0cac0000a0cbc0000cba111000bca111000c
ab111000c1ba11000b1ca11000c1ab11000a1cb11000b1ac11000a1bc11000c1
b1a1000b1cba1000cba10a000bca10a000cab10a0001cb10a000bac10a000ca0
b0a000ba0c0a000cab1aa000bac1aa000cab1ba0001ac1ba000bac1ba000cac1
ba000cabcba000bac1ca000bacbca000cba10b000acb10b000cb0a0b000cbb1a
b0001bc1ab000abc1ab000cbc1ab000cbacab000abc1bb000abc1cb000abcacb
0001cb1ac000acb1ac000bcb1ac000bcc1ac000bcabac000acc1bc000acbabc0
00cba1a0100bca1a01001caba01001baca0100cab1b0100acb1b01001cbab010
01abcb0100bac1c0100abc1c01001bcac01001acbc0100c1b1a1100bcb1aa100
cab1ba100bcb1ba100ccb1ba100cac1ba100aca1ab100c0b1b0a00cbb1b0a00a
cb1b0a00c0bab0a00acbab0a00ccbab0a00c0bcb0a00cabcb0a00b0c1c0a00b0
cac0a00abcac0a00bbcac0a00b0cbc0a00acbab1a00cab1baa00acb1baa00bac
1caa00abc1caa00cba10ba00cbb10ba00cbc10ba00acb1bba00bbc1cba00cbc1
cba00acc1cba00bca10ca00bcb10ca00bcc10ca00c0a1a0b00caa1a0b00c0aba
0b00c0aca0b00bacbc0b00caa10ab00cab10ab00cac10ab00cab1bab00cac1ca
b00bbc1cab00bcc1cab00cbaa10a0000bcaa10a0000c0a1b1a0000c0aabaa000
0b0aacaa0000bcab1ba0000b0abcba0000cabb10b0000acbb10b0000bacc10c0
000abcc10c0000c0a1b1a0100b0a1c1a0100c0aabaa0100b0aacaa01001cab1b
a01001cabcba01001cab1ba11001bac1ca11001cba1ab1100ac1b1110a00ccba
1ab0a00abbc1ab0a00acba1abaa00acba1abba00acca1acba00bc1a1110b00cb
1a1110c0010c1b1a110000a0cab1ba10000a0bac1ca10000b0abc1cb10000c0b
cb1a1a0000b0cbc1a1a0000b0cbc1aba0000c0bcb1aca0000a0cac1b1b0000a0
cac1bab0000c1bab1ba11000a1cab1ba11000a1bac1ca110000ac1b1110a0000
ab1c1110a0000ba1c1110b000a0bc1a1110b0a00a0cb1a1110c0a00bacb1a111
0c0b00aaac01a1b11ccac00cc1b11a1b1cb0cc000caac111b1aaccc
\end{verbatim}
}

\clearpage
\bibliographystyle{plain}
\bibliography{references}

@article{Wolfram1983,
  author = {Wolfram, Stephen},
  title = {Statistical Mechanics of Cellular Automata},
  journal = {Reviews of Modern Physics},
  volume = {55},
  number = {3},
  pages = {601--644},
  year = {1983},
  doi = {10.1103/RevModPhys.55.601}
}

@book{Wolfram2002,
  author = {Wolfram, Stephen},
  title = {A New Kind of Science},
  publisher = {Wolfram Media},
  address = {Champaign, IL},
  year = {2002},
  isbn = {1-57955-008-8}
}

@book{Post1941,
  author = {Post, Emil L.},
  title = {The Two-Valued Iterative Systems of Mathematical Logic},
  series = {Annals of Mathematics Studies},
  number = {5},
  publisher = {Princeton University Press},
  address = {Princeton, NJ},
  year = {1941}
}

@book{Lau2006,
  author = {Lau, Dietlinde},
  title = {Function Algebras on Finite Sets: Basic Course on Many-Valued Logic and Clone Theory},
  series = {Springer Monographs in Mathematics},
  publisher = {Springer},
  address = {Berlin, Heidelberg},
  year = {2006},
  doi = {10.1007/3-540-36023-9}
}

@article{Cook2004,
  author = {Cook, Matthew},
  title = {Universality in Elementary Cellular Automata},
  journal = {Complex Systems},
  volume = {15},
  number = {1},
  pages = {1--40},
  year = {2004},
  doi = {10.25088/complexsystems.15.1.1}
}

@book{CahenChabert1997,
  author = {Cahen, Paul-Jean and Chabert, Jean-Luc},
  title = {Integer-Valued Polynomials},
  series = {Mathematical Surveys and Monographs},
  volume = {48},
  publisher = {American Mathematical Society},
  address = {Providence, RI},
  year = {1997},
  isbn = {978-0-8218-0388-2}
}

@article{Jen1986,
  author = {Jen, Erica},
  title = {Global Properties of Cellular Automata},
  journal = {Journal of Statistical Physics},
  volume = {43},
  number = {1--2},
  pages = {219--242},
  year = {1986},
  doi = {10.1007/BF01010579}
}

@article{Jen1990,
  author = {Jen, Erica},
  title = {Aperiodicity in One-Dimensional Cellular Automata},
  journal = {Physica D: Nonlinear Phenomena},
  volume = {45},
  number = {1--3},
  pages = {3--18},
  year = {1990},
  doi = {10.1016/0167-2789(90)90169-P}
}

@article{Rowland2006,
  author = {Rowland, Eric S.},
  title = {Local Nested Structure in Rule 30},
  journal = {Complex Systems},
  volume = {16},
  number = {3},
  pages = {239--258},
  year = {2006},
  doi = {10.25088/ComplexSystems.16.3.239}
}

@misc{Nersissian2026Periods,
  author = {Nersissian, Tigran},
  title = {Diagonal Periods and {Newton} Supports of Rules 30, 86 and 135},
  year = {2026},
  eprint = {2609.25077},
  archivePrefix = {arXiv}
}

@misc{Nersissian2026Bases,
  author = {Nersissian, Tigran},
  title = {Diagonal Bases and Diagonal Periods of Elementary Cellular Automata},
  year = {2026},
  eprint = {2609.25078},
  archivePrefix = {arXiv}
}

@article{GolesEtAl2021,
  author = {Goles, Eric and Adamatzky, Andrew and Montealegre, Pedro and R{\'{i}}os-Wilson, Mart{\'{i}}n},
  title = {Generating {Boolean} Functions on Totalistic Automata Networks},
  journal = {International Journal of Unconventional Computing},
  volume = {16}, number = {4}, pages = {343--391}, year = {2021},
  eprint = {2107.14799}, archivePrefix = {arXiv}
}

@article{Kauffman1969,
  author = {Kauffman, Stuart A.},
  title = {Metabolic Stability and Epigenesis in Randomly Constructed Genetic Nets},
  journal = {Journal of Theoretical Biology},
  volume = {22}, number = {3}, pages = {437--467}, year = {1969},
  doi = {10.1016/0022-5193(69)90015-0}
}

@article{Bennett1995,
  author = {Bennett, Charles H.},
  title = {Universal Computation and Physical Dynamics},
  journal = {Physica D: Nonlinear Phenomena},
  volume = {86}, number = {1--2}, pages = {268--273}, year = {1995},
  doi = {10.1016/0167-2789(95)00107-F}
}

@article{Moore1991,
  author = {Moore, Cristopher},
  title = {Generalized Shifts: Unpredictability and Undecidability in Dynamical Systems},
  journal = {Nonlinearity}, volume = {4}, number = {2}, pages = {199--230},
  year = {1991}, doi = {10.1088/0951-7715/4/2/002}
}

@article{DelvenneKurkaBlondel2006,
  author = {Delvenne, Jean-Charles and K{\r{u}}rka, Petr and Blondel, Vincent D.},
  title = {Decidability and Universality in Symbolic Dynamical Systems},
  journal = {Fundamenta Informaticae}, volume = {74}, number = {4},
  pages = {463--490}, year = {2006}, doi = {10.3233/FUN-2006-74406}
}

@incollection{Theyssier2022,
  author        = {Theyssier, Guillaume},
  title         = {The Mirage of Universality in Cellular Automata},
  booktitle     = {Automata and Complexity},
  editor        = {Adamatzky, Andrew},
  series        = {Emergence, Complexity and Computation},
  volume        = {42},
  pages         = {57--70},
  publisher     = {Springer},
  address       = {Cham},
  year          = {2022},
  doi           = {10.1007/978-3-030-92551-2_5},
}

@article{BournezPouly2020,
  author        = {Bournez, Olivier and Pouly, Amaury},
  title         = {A Universal Ordinary Differential Equation},
  journal       = {Logical Methods in Computer Science},
  volume        = {16},
  number        = {1},
  pages         = {28:1--28:37},
  year          = {2020},
  doi           = {10.23638/LMCS-16(1:28)2020},
}

@misc{Janzing2010,
  author = {Janzing, Dominik},
  title = {Is There a Physically Universal Cellular Automaton or {Hamiltonian}?},
  year = {2010}, eprint = {1009.1720}, archivePrefix = {arXiv},
  url = {https://arxiv.org/abs/1009.1720}
}

@inproceedings{SaloTorma2015,
  author    = {Salo, Ville and T{\"o}rm{\"a}, Ilkka},
  title     = {A One-Dimensional Physically Universal Cellular Automaton},
  booktitle = {Unveiling Dynamics and Complexity},
  editor    = {Kari, Jarkko and Manea, Florin and Petre, Ion},
  series    = {Lecture Notes in Computer Science},
  volume    = {10307},
  pages     = {375--386},
  publisher = {Springer},
  address   = {Cham},
  year      = {2017},
  doi       = {10.1007/978-3-319-58741-7_35}
}

@inproceedings{Valiant1976,
  author = {Valiant, Leslie G.},
  title = {Universal Circuits (Preliminary Report)},
  booktitle = {Proceedings of the Eighth Annual ACM Symposium on Theory of Computing},
  pages = {196--203}, year = {1976}, doi = {10.1145/800113.803649}
}

@article{PreparataMuller1970,
  author = {Preparata, Franco P. and Muller, David E.},
  title = {Generation of Near-Optimal Universal {Boolean} Functions},
  journal = {Journal of Computer and System Sciences},
  volume = {4}, number = {2}, pages = {93--102}, year = {1970},
  doi = {10.1016/S0022-0000(70)80002-2}
}

@article{Salo2022,
  author = {Salo, Ville},
  title = {Universal Gates with Wires in a Row},
  journal = {Journal of Algebraic Combinatorics},
  volume = {55}, pages = {335--353}, year = {2022},
  doi = {10.1007/s10801-021-01053-7}
}

@article{MaassNatschlaegerMarkram2002,
  author = {Maass, Wolfgang and Natschl{\"a}ger, Thomas and Markram, Henry},
  title = {Real-Time Computing Without Stable States: A New Framework for Neural Computation Based on Perturbations},
  journal = {Neural Computation},
  volume = {14}, number = {11}, pages = {2531--2560}, year = {2002},
  doi = {10.1162/089976602760407955}
}

@book{Fuks2023,
  author    = {Fuk{\'s}, Henryk},
  title     = {Solvable Cellular Automata: Methods and Applications},
  series    = {Understanding Complex Systems},
  publisher = {Springer},
  year      = {2023},
  doi       = {10.1007/978-3-031-38700-5},
  isbn      = {978-3-031-38700-5},
  address   = {Cham},
}

@article{Angluin1987,
  author  = {Angluin, Dana},
  title   = {Learning Regular Sets from Queries and Counterexamples},
  journal = {Information and Computation},
  volume  = {75},
  pages   = {87--106},
  year    = {1987},
  doi     = {10.1016/0890-5401(87)90052-6}
}

@article{BridouxCastilloRamirezGadouleau2020,
  author  = {Bridoux, Florian and Castillo-Ramirez, Alonso and Gadouleau, Maximilien},
  title   = {Complete Simulation of Automata Networks},
  journal = {Journal of Computer and System Sciences},
  volume  = {109},
  pages   = {1--21},
  year    = {2020},
  doi     = {10.1016/j.jcss.2019.12.001}
}

@article{PrevostMartin2024,
  author  = {Pr{\'e}vost, Thomas and Martin, Bruno},
  title   = {Testing a Cellular Automata Construction Method to Obtain
             9-variable Cryptographic {B}oolean Functions},
  journal = {Journal of Cellular Automata},
  volume  = {18},
  number  = {2--3},
  pages   = {187--205},
  year    = {2024},
  doi     = {10.32908/jca.v18.100924}
}

@article{IshidaInokuchiKawahara2016,
  author  = {Ishida, Toshikazu and Inokuchi, Shuichi and Kawahara, Yasuo},
  title   = {Propositional Logic and Cellular Automata on Monoids},
  journal = {Journal of Cellular Automata},
  volume  = {12},
  number  = {1--2},
  pages   = {27--45},
  year    = {2017}
}

\end{document}